\newtheorem{lemma}{Lemma}
\newtheorem{mainlemma}{Main Lemma}
\newtheorem{theorem}{Theorem}
\newtheorem*{theorem*}{Theorem}
\newtheorem{fact}{Fact}
\newtheorem{definition}{Definition}
\newtheorem{corollary}{Corollary}
\newtheorem{example}{Example}
\title{Tight Approximation Ratio of Anonymous Pricing}
\author{
Yaonan Jin\thanks{Department of IEDA, Hong Kong University of Science and Technology. {\tt yjinan@ust.hk}}
\and Pinyan Lu\thanks{ITCS, Shanghai University of Finance and Economics. {\tt lu.pinyan@mail.shufe.edu.cn}}
\and
Qi Qi\thanks{Department of IEDA, Hong Kong University of Science and Technology. {\tt kaylaqi@ust.hk}}
\and
Zhihao Gavin Tang\thanks{Department of Computer Science, the University of Hong Kong. {\tt zhtang@cs.hku.hk}}
\and
Tao Xiao\thanks{Department of Computer Science, Shanghai Jiao Tong University. {\tt xt\_1992@sjtu.edu.cn}}
}
\date{}
\newcommand{\opt}{\textsf{OPT}}
\newcommand{\upm}{\textsf{UP}}
\newcommand{\ap}{\textsf{AP}}
\newcommand{\R}{\mathcal{R}}
\newcommand{\Q}{\mathcal{Q}}
\newcommand{\C}{\mathcal{C}^*}
\newcommand{\bupp}{\textsf{BUPP}}
\newcommand{\bmumd}{\textsf{BMUMD}}
\newcommand{\reg}{\textsc{Reg}}
\newcommand{\tri}{\textsc{Tri}}
\newcommand{\cont}{\textsc{Cont}}
\newcommand{\FF}{\overline{F}}
\newcommand{\ff}{\overline{f}}
\newcommand{\DD}{\overline{D}}
\newcommand{\rr}{\overline{r}}
\newcommand{\uu}{\overline{u}}
\newcommand{\vv}{\overline{v}}
\newcommand{\qq}{\overline{q}}
\newcommand{\hp}{\widehat{p}}
\newcommand{\ggamma}{\overline{\gamma}}
\newcommand{\hgamma}{\widehat{\gamma}}
\newcommand{\PPhi}{\overline{\Phi}}
\newcommand{\eqdef}{\stackrel{\textrm{def}}{=}}
\newcommand{\E}{\mathbb{E}}
\begin{document}
\maketitle

\begin{abstract}
We consider two canonical Bayesian mechanism design settings. In the \emph{single-item} setting, we prove \emph{tight} approximation ratio for anonymous pricing: compared with Myerson Auction, it extracts at least $\frac{1}{2.62}$-fraction of revenue; there is a matching lower-bound example.

In the \emph{unit-demand single-buyer} setting, we prove \emph{tight} approximation ratio between the simplest and optimal deterministic mechanisms: in terms of revenue, uniform pricing admits a $2.62$-approximation of item pricing; we further validate the tightness of this ratio.

These results settle two open problems asked in~\cite{H13,CD15,AHNPY15,L17,JLTX18}. As an implication, in the \emph{single-item} setting: we improve the approximation ratio of the second-price auction with anonymous reserve to $2.62$, which breaks the state-of-the-art upper bound of $e \approx 2.72$.
\end{abstract}


\setcounter{page}{0}
\thispagestyle{empty}

\newpage

\section{Introduction}
\label{sec:intro}
Consider two simple revenue-maximization scenarios: (1)~one seller has $n$ items to sell to a unit-demand buyer; (2)~one seller has a single item to sell to $n$ potential buyers. In both cases, the seller only knows the buyers' valuation distributions $\{F_i\}_{i=1}^n$ (rather than their exact valuations for the items), and thus would like to maximize his expected revenue.

The simplest mechanisms are to post a fixed price for the item(s). In the former case, the seller can post a uniform price for all items, after which the unit-demand buyer will take his favorite item (given that its valuation exceeds the price). In the second case, the seller can post an anonymous price for the single item, and then the first buyer (in arbitrary orders), who values the item no less than the price, will take the item.

In practice, the above simple pricing schemes are widely used. In terms of revenue, however, they are not necessarily optimal. In the first scenario, the seller can increase his expected revenue by posting item-specific prices\footnote{It is noteworthy, in the unit-demand single-buyer setting, that finding optimal item pricing is NP-hard~\cite{CDPSY18}.}, hence the \emph{item pricing} mechanisms in the literature~\cite{CHK07,CHMS10,CD15,CMS15,CDW16,CDPSY18}. In the second scenario, the seller can even organize an auction, and therefore increase his revenue by leveraging the buyers' competition. To this end, it is well-known that the optimal mechanism is the remarkable \emph{Myerson Auction}~\cite{M81}.

Compared with the simple pricing schemes, how much extra revenues can the complicated mechanisms derive? This is a central question in the theory of Bayesian mechanism design, compiled as the ``\emph{simple versus optimal}'' program. As we quote from the survey by Lucier~\cite{L17}: ``\emph{an interesting question is how well one can approximate the optimal revenue using an anonymous price, rather than personalized prices.}''
 In this paper, we settle the approximation ratios of both simple pricing schemes mentioned above.
\vspace{5.5pt} \\
\fbox{\begin{minipage}{\textwidth}
\begin{theorem}
\label{thm:opt_ap}
To sell a single item among multiple buyers with independent regular distributions, the supremum of the ratio of Myerson Auction to anonymous pricing equals to $\C \approx 2.6202$.
\end{theorem}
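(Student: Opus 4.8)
The plan is to establish matching upper and lower bounds, both of which fall out of one extremal optimization over revenue curves. For the upper bound I must show that Myerson's revenue $\opt(\mathbf F)$ is at most $\C\cdot\ap(\mathbf F)$ for every product $\mathbf F=(F_1,\dots,F_n)$ of regular distributions, where the best anonymous price earns $\ap(\mathbf F)=\max_{p\ge 0}p\,\big(1-\prod_i F_i(p)\big)$; the lower bound then exhibits a sequence of instances whose ratio tends to $\C$. The first step is to remove the auction from the picture: by Myerson's characterization together with the ex-ante relaxation, $\opt(\mathbf F)\le \ea(\mathbf F):=\max\{\sum_i R_i(q_i):q_i\in[0,1],\ \sum_i q_i\le 1\}$, where $R_i(q)=q\cdot F_i^{-1}(1-q)$ is bidder $i$'s revenue curve, which is concave because $F_i$ is regular. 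So it suffices to bound $\ea(\mathbf F)/\ap(\mathbf F)$.

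Next I would reduce to triangular distributions. Let $q_i^\star$ be the quantile that the optimal $\ea$ solution assigns to bidder $i$, put $\hat q_i:=q_i^\star$ and $\hat r_i:=R_i(q_i^\star)$, and replace $R_i$ by the triangle $R_i'$ with vertices $(0,0)$, $(\hat q_i,\hat r_i)$, $(1,0)$. Since $R_i$ is concave with $R_i(0)=0$ and $R_i(1)\ge 0$, we get $R_i'\le R_i$ pointwise; hence each $F_i'$ is first-order stochastically dominated by $F_i$, so $\prod_i F_i'\ge \prod_i F_i$ and $\ap(\mathbf F')\le \ap(\mathbf F)$. On the other hand $R_i'$ agrees with $R_i$ at $\hat q_i$, so $\{\hat q_i\}$ remains feasible and $\ea(\mathbf F')\ge\sum_i R_i'(\hat q_i)=\ea(\mathbf F)$. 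Thus the ratio does not decrease, and because $\sum_i\hat q_i\le 1$ we in fact have $\ea(\mathbf F')=\sum_i\hat r_i$. So it is enough to bound $\big(\sum_i\hat r_i\big)/\ap$ over collections of triangular distributions, where each $F_i$ has the closed form $F_i(p)=\frac{p(1-\hat q_i)}{\hat r_i+p(1-\hat q_i)}$ below its atom at $\hat r_i/\hat q_i$ and $F_i(p)=1$ above it.

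What remains is to solve this clean problem: after normalizing $\ap=1$ --- equivalently requiring $\prod_i F_i(p)\ge 1-1/p$ for all $p$ --- maximize $\sum_i\hat r_i$ over triangular instances subject to this pointwise family of constraints. Passing to the limit of infinitesimal triangles (so that $\ln F_i\approx F_i-1$ and the sums become integrals) turns this into a calculus-of-variations problem: choose a ``density of triangles'' over their atom-locations so as to maximize total revenue subject to $\sum_i\ln F_i(p)\ge\ln(1-1/p)$ for every $p>1$. The Lagrangian / Euler--Lagrange optimality conditions pin down the extremal density, and substituting it back expresses the optimal value as the root of a transcendental equation; that root is exactly $\C\approx 2.6202$. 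For the lower bound, truncating this extremal configuration to $n$ bidders and tuning the parameters so that the ex-ante relaxation becomes asymptotically tight produces instances with $\opt/\ap\to\C$, matching the upper bound.

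The main obstacle is the last optimization. The anonymous-pricing constraint is an infinite family --- one inequality per price $p$ --- so its Lagrange multiplier is a measure in $p$ and the stationarity analysis is genuinely variational; showing that the configuration one finds is a \emph{global} maximum, and transporting the continuum heuristic back to a rigorous bound for finite $n$ with controlled error, is where most of the effort goes. Step~2 also needs care, since the triangular replacement must be simultaneously valid for the $\ea$ term and for every price entering the anonymous-pricing constraint, and the atoms of the triangular distributions require a little attention near the boundary. Finally, the lower bound is not automatic: because the upper bound is routed through the possibly lossy inequality $\opt\le\ea$, the construction must be chosen so that Myerson's \emph{actual} revenue --- not just its ex-ante relaxation --- approaches $\C$ times the best anonymous price.
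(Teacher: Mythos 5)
Your upper bound breaks at the very first step. You replace $\opt$ by the ex-ante relaxation $\ea(\mathbf F)=\max\{\sum_i R_i(q_i):\sum_i q_i\le 1\}$ and then optimize $\ea/\ap$. But the supremum of $\ea/\ap$ over regular instances is exactly $e\approx 2.718$ (this is the main result of Alaei et al.~\cite{AHNPY15}, and the paper stresses that this relaxation ``is no longer tight for the original programming''). Since $e>\C\approx 2.6202$, the extremal problem you set up --- triangular reduction, continuum limit, Euler--Lagrange --- has optimal value $e$, not $\C$; the transcendental root you would extract is the wrong constant. You flag the lossiness of $\opt\le\ea$ only as an issue for matching the lower bound, but it already invalidates the claimed upper bound of $\C$.

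Fixing this is not a local repair: one must work with the exact Myerson revenue, which the paper writes as $\opt=\sum_i r_i(0)+\int_0^\infty\bigl(1-\prod_i D_i(x)\bigr)\,dx$ in terms of the virtual value CDFs $D_i$ (Fact~\ref{fact:opt_rev}). This objective depends on the entire shape of each revenue curve, not merely on one point $(\hat q_i,\hat r_i)$ per bidder, so your second step --- collapsing each $R_i$ to a triangle while preserving the objective --- also fails: the triangular replacement preserves $\ea$ but can strictly decrease $\opt$. This is precisely why the paper cannot reduce to triangular instances and instead develops the potential-function machinery (the $\Psi$ potential, the recursive \textsf{DIMINISH}/\textsf{SUBROUTINE} transformation toward a continuous instance $\cont(1)$, and the Abel-inequality analysis of the objective's monotonicity). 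Your lower-bound sketch is fine in spirit --- the paper indeed uses the triangular instance of~\cite{JLTX18} (Example~\ref{exp:opt-ap-lower}), whose \emph{actual} Myerson revenue approaches $\C$ --- but the upper bound requires an argument that never passes through the ex-ante relaxation.
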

\end{minipage}}
\vspace{5.5pt} \\
\fbox{\begin{minipage}{\textwidth}
\begin{theorem}
\label{thm:bupp_upm}
To sell heterogeneous items to a unit-demand buyer with independent regular distributions for different items, the supremum of the ratio of optimal item pricing to uniform pricing equals to $\C \approx 2.6202$.
\end{theorem}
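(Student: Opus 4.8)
The plan is to route everything through the classical ``copies'' correspondence and then invoke Theorem~\ref{thm:opt_ap}. Fix item distributions $\{F_i\}_{i=1}^n$ for the unit-demand buyer and let the associated \emph{copies instance} be the single-item instance with $n$ independent buyers of value distributions $F_1,\dots,F_n$; write $\ap$ and $\opt$ for the optimal anonymous-pricing revenue and the Myerson revenue of that copies instance. The first observation is that uniform pricing and anonymous pricing are literally the same object: at any price $p$, the unit-demand buyer buys some item iff $\max_i v_i\ge p$ and then pays $p$, so the uniform-pricing revenue at $p$ equals $p\cdot\Pr[\max_i v_i\ge p]$, which is exactly the anonymous-pricing revenue at $p$ on the copies instance. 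Hence $\upm=\ap$.

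\textbf{Step~2 (upper bound).} Given item prices $(p_1,\dots,p_n)$ for the unit-demand buyer, I would consider the single-item mechanism on the copies instance that, on a value profile $\mathbf v$, awards the item to the buyer of largest nonnegative surplus $v_i-p_i$ (selling nothing if all surpluses are negative) and charges the winner his critical bid, $p_i+\max_{j\ne i}(v_j-p_j)^+\ge p_i$. Each buyer's winning probability is nondecreasing in his own value, so this mechanism is feasible and truthful, and its revenue is at least $\sum_i p_i\cdot\Pr[\text{item }i\text{ is purchased}]$, i.e.\ at least the item-pricing revenue. Optimizing over $(p_i)$ and applying Myerson's optimality on the copies instance gives $\opm\le\opt$; together with Theorem~\ref{thm:opt_ap} ($\opt\le\C\cdot\ap$) and Step~1 ($\ap=\upm$) this yields $\opm\le\C\cdot\upm$. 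This part is just the reduction of~\cite{CHK07,CHMS10} fed into Theorem~\ref{thm:opt_ap}.

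\textbf{Step~3 (tightness).} For a matching lower bound I would begin from the extremal family witnessing Theorem~\ref{thm:opt_ap}, on which $\opt/\ap\to\C$; by Step~1 the uniform-pricing revenue of the corresponding unit-demand instance is exactly $\ap$, so it only remains to show that \emph{item pricing} --- not merely Myerson --- extracts $(1-o(1))\,\opt$ on this family. I expect the extremal instances to be structurally simple, so that Myerson's optimal auction allocates each item $i$ precisely on a designated value range (equivalently, reduces to a sequential posted price); then posting the left endpoints of those ranges (the per-item monopoly reserves) as item prices should recover the Myerson revenue up to lower-order terms, giving $\opm/\upm\to\C$. Alternatively one could design a dedicated unit-demand family on which a natural item pricing is near-optimal and whose copies instance simultaneously realizes the lower bound of Theorem~\ref{thm:opt_ap}; I would expect the two constructions to agree up to reparametrization.

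\textbf{Main obstacle.} Given Theorem~\ref{thm:opt_ap}, the upper bound is essentially bookkeeping through the copies reduction. The real work is the tightness direction: unlike uniform pricing, optimal item pricing has no exact single-item surrogate, so transferring the hard instance forces one to open up the extremal family of Theorem~\ref{thm:opt_ap}, understand the structure of its optimal auction, and verify that the surplus-maximizing choice of the unit-demand buyer does not ``crowd out'' the item carrying the revenue --- that is, that some explicit item pricing stays within $1+o(1)$ of $\opt$ there.
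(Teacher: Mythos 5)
Your Steps~1 and~2 are exactly the paper's upper-bound argument: $\upm(p,\{F_i\})=\ap(p,\{F_i\})$ (Lemma~\ref{lem:copy_upm_ap}), the copies reduction $\bupp\le\opt$ of~\cite{CHK07} (Lemma~\ref{lem:copy_bupp}), and Theorem~\ref{thm:opt_ap}. That half is fine.

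The gap is in Step~3, which you correctly flag as the real work but do not carry out, and the one concrete device you propose there does not work as stated. The paper's lower-bound instance (Example~\ref{exp:opt-ap-lower}) consists of triangular distributions $\tri(v_i,q_i)$, each of which places an atom of mass $q_i$ at its monopoly price $v_i$, plus the special item $\tri(\infty)$ whose ``monopoly reserve'' is $+\infty$. If you post the per-item monopoly reserves $p_i=v_i$ as you suggest, then whenever the buyer's value for item $i$ hits the atom his utility from buying it is exactly $0$: he is indifferent between buying and walking away, and indifferent among \emph{all} items whose atoms are hit, so the revenue is determined by adversarial tie-breaking and nothing forces him toward the high-$v_i$ item that carries the Myerson revenue $\sum_i v_iq_i\prod_{j<i}(1-q_j)$ of Fact~\ref{fact:revenue_spm_opt}. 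The paper's fix (Main Lemma~\ref{lem:bupp_upm_lower}) is to post the \emph{discounted} prices $p_i=(1-h)v_i$: then buying item $i$ at its atom yields utility exactly $h\,v_i$, which is strictly positive and strictly decreasing in $i$ (since $v_1>v_2>\cdots$), so the surplus-maximizing buyer picks precisely the highest-$v$ item among those at their atoms --- the Myerson allocation order --- while each value below the atom yields utility $<h\,v_i$ and cannot crowd out a hit atom of a higher-indexed... rather, of a larger-$v$ item. The discount costs a factor $(1-h)^i\ge 1-(n+1)h$, absorbed by taking $h=\epsilon'/(3(n+1))$. Separately, the $\tri(\infty)$ item must be priced at $H\to\infty$ and one checks $H\cdot\Pr\{w_\infty\ge H\}\to 1$ while $\Pr\{w_\infty\ge H\}\to 0$, so it contributes its full revenue of $1$ without ever being the buyer's choice against the other items. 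Your sketch mentions neither the discount (which is what makes the buyer's choice align with the seller's) nor the unbounded item, so as written the tightness direction is not established.
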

\end{minipage}}
\vspace{5.5pt}

In statements of both theorems, we introduce constant
\[
\C \eqdef 2 + \displaystyle{\int_1^{\infty}} \left(1 - e^{-\Q(x)}\right) dx \approx 2.6202,
\]
where $\Q(p) \eqdef \ln\left(\frac{p^2}{p^2 - 1}\right) - \frac{1}{2} \cdot \sum\limits_{k = 1}^{\infty} \frac{1}{k^2} \cdot p^{-2k}$ for all $p \in (1, \infty)$. Noticeably, the imposed \emph{regularity} assumption on distributions (whose formal definition is deferred to Section~\ref{sec:prelim}) is very standard in economics, and is used in previous works as well. When we allow arbitrarily weird distributions, by contrast, both gaps can be as large as $n$ (see~\cite{AHNPY15} for more details). Since such weird distributions are uncommon in practice, these results might not be that informative.

The main body of this paper is devoted to proving Theorem~\ref{thm:opt_ap}. Given that, the upper-bound part of Theorem~\ref{thm:bupp_upm} comes immediately, after we apply the so-called \emph{copying} technique developed in~\cite{CHK07,CHMS10,CMS15}. Further, the lower bound example for Theorem~\ref{thm:bupp_upm} follows from that for Theorem~\ref{thm:opt_ap}, after re-interpretation and fine-turning. All analyses of Theorem~\ref{thm:bupp_upm} can be found in Section~\ref{sec:extension}. As another implication of Theorem~\ref{thm:opt_ap}, the ratio of Myerson Auction to the \emph{second-price auction with anonymous reserve}~\cite{M81,HR09,AFHH13,CGM15,JLTX18} is improved to $\C \approx 2.62$. This is a main open problem left by Hartline and Roughgarden in~\cite{HR09}, who proved an upper bound of $4$, and a lower bound of $2$. Prior to our work, the known best upper bound is $e \approx 2.72$~\cite{AHNPY15}, and the known best lower bound is $2.15$~\cite{JLTX18}. Whether our upper bound of $\C \approx 2.62$ is tight or not remains unknown.

\subsection{Our Techniques}

To conquer Theorem~\ref{thm:opt_ap}, like \cite{AHNPY15,JLTX18}, we represent the ratio by mathematical programming, and then manually solve the optimal objective value. The variables of this programming are an instance of the underlying mechanism design problem, which consists of $n$ different distributions $\{F_i\}_{i=1}^n$ and the number of $n$ itself. Given a certain instance, (1)~the constraint is, for any posted-price, that the revenue from anonymous pricing is at most $1$; and (2)~the objective is to maximize the revenue from Myerson Auction.

The revenue from anonymous pricing is relatively apparent (in terms of distributions $\{F_i\}_{i=1}^n$), but the revenue from Myerson Auction is far more complicated. To escape from this dilemma, Alaei et al.~\cite{AHNPY15} relaxed Myerson Auction to the so-called \emph{ex-ante relaxation}, and thus obtain a ``handy'' upper bound of objective function. However, this relaxation incurs an inevitable cost: although the exact ratio of $e$ was caught (between ex-ante relaxation and anonymous pricing), this bound is no longer tight for the original programming.

One can infer from the above, we must investigate Myerson Auction directly, and acquire its revenue (in terms of distributions $\{F_i\}_{i=1}^n$). To do so, a main obstacle is that Myerson Auction is stated in terms of \emph{virtual valuations}~\cite{M81}. We introduce the virtual value distributions $\{D_i\}_{i = 1}^n$ in Section~\ref{sec:prelim}, hence the desired objective function.
Throughout our analysis, three different formats for a certain distribution are involved: value CDF $F_i$, virtual value CDF $D_i$ and \emph{revenue-quantile curve} $r_i(q)$. There are one-to-one correspondences among these formats. In different parts of the proof, we use the format that is best for our analyses. To this end, another challenge occur -- how to relate these three formats -- which is quite non-trivial. For this, we observe several interesting identities (among the formats), which enables our proof.

In both~\cite{AHNPY15,JLTX18}, a very first step (towards their programming) is to show, that a worst-case instance is reached by the so-called \emph{triangular distributions} -- a subset of regular distributions. Regularity of a distribution requires its revenue-quantile curve to be a \emph{concave} function. By contrast, that curve of a triangular distribution is basically a triangle, which lies in the boundary between concavity and convexity. In other words, triangular distributions are in the boundary of regular distributions.
As a salient feature, describing a triangular distribution requires merely two real numbers. In this, the programming in both~\cite{AHNPY15,JLTX18} is greatly simplified, after the reductions (to triangular distributions) mentioned above.

Unfortunately, our programming seems not to admit such a reduction. In fact, what enables reductions in~\cite{AHNPY15,JLTX18} is that both objective functions\footnote{In specific, the objective function in~\cite{AHNPY15} is the aforementioned \emph{ex-ante relaxation}, and the objective function in~\cite{JLTX18} is the so-called \emph{sequential posted-pricing} mechanism.} therein contain a fantastic structure: given an instance $\{F_i\}_{i = 1}^n$, they only depend on a specific set of value-probability pairs $\big\{\big(v_i, F_i(v_i)\big)\big\}_{i = 1}^n$, rather than any other details. Therefore, we can push the instance to ``extreme''\footnote{This ``extreme'' is indeed a triangular instance: as mentioned before, triangular distributions are in the boundary of regular distributions; besides, each pair $\big(v_i, F_i(v_i)\big)$ refers to the two real number needed to specify the distribution.}, while keeping the pairs $\big\{\big(v_i, F_i(v_i)\big)\big\}_{i = 1}^n$ fixed, and keeping the distributions regular. This is exactly the reductions in~\cite{AHNPY15,JLTX18}. Back to our objective function, the revenue from Myerson Auction depends on all details about instance $\{F_i\}_{i = 1}^n$; this is why the desired reduction seems hopeless.

Despite of annoyance from regularity, we can still calibrate a worst-case instance from different angles. We notice that a (worst-case) revenue-quantile curve must be a triangle/quadrangle with one \emph{curved-edge}. It is this curved-edge that makes the revenue-quantile curve difficult to handle. With a concave revenue-quantile curve (namely a regular distribution), the curved-edge is upper-bounded by each tangent line of it, and is lower-bounded by the line connecting its two ends. In our analysis, we often exploit these two lines (rather than the original curved-edge) to measure a revenue-quantile curve. These two lines respectively serve as upper and lower bounds, which results in certain properties of a worst-case distribution.


For all programming in~\cite{AHNPY15,JLTX18} and in this paper, a worst-case instance is reached when there are infinite buyers (namely when $n \rightarrow \infty$), and each buyer's buying-probability is infinitesimal. Such instances refer to what we call \emph{continuous instance}. In such a instance, conceivably, the revenue from Myerson Auction is given by an integration. This explains why constant $\C \approx 2.62$ (in our main theorems) comes from an improper integration.
To settle our programming, it remains to show that a continuous instance does give the best revenue. We adopt a direct proof plan: given a fixed number $n \in \mathbb{N}$ and a feasible instance $\{F_i\}_{i = 1}^n$, \emph{we gradually transform it to a targeted continuous instance, without hurting the object value.}

In both~\cite{AHNPY15,JLTX18}, a similar philosophy is adopted. Another salient feature of triangular instance simplifies those proofs once again: there is a natural total-order\footnote{As mentioned before, a triangular distribution (1)~has a triangular revenue-quantile curve; and (2)~can be entirely defined by using two parameters. One of the two parameter is exactly the \emph{slope} of a specific edge (namely the \emph{tangent} of a specific angle) of the triangle. In both~\cite{AHNPY15,JLTX18}, the total-order refers to the decreasing-order of the slopes.} in a triangular instance. In other words, a set of triangular distributions can be transformed one by one in that order (to a target continuous instance), without introducing interface in each step.

However, generic regular distributions do not follow such total-order. Therefore, (locally) each distribution has to be modified piece by piece; (globally) all distributions have to be modified simultaneously; which incur many technical challenges. To conquer these issues, \emph{potential function} comes to the rescue: we find a natural potential to indicate status of the current instance; in each step of our transformation, the potential declines by a pre-fixed amount. Hence, the transformation will terminate after finite steps, left with a continuous instance (as desired). In some sense, this potential function is the fourth representation of a distribution; in this term, the constraint of the programming becomes simple enough to deal with.

\subsection{Related Works}

Both of anonymous pricing and uniform pricing mechanisms are widely studied in literature~\cite{GHKKKM05,B08,BH08,H13,AHNPY15,CD15,HH15,DFK16,JLTX18}. In the single-item setting, another important family of pricing schemes is the \emph{sequential posted-pricing} mechanisms~\cite{CHK07,HKS07,CEGMM10a,CHMS10,Y11,H13,A14,DFK16,AEEHK17,CFHOV17,L17,ACK18,BGLPS18,JLTX18}, which allows buyer-specific pricing strategies, and thus dominates anonymous pricing in revenue. Tight ratio between
sequential posted-pricing and anonymous pricing is known to be $2.62$~\cite{JLTX18}.

In the single-buyer unit-demand setting, item pricing schemes cover all deterministic mechanisms, among which finding the optimum is NP-hard~\cite{CDPSY18}. When randomization is allowed, the seller can further improve revenue by employing \emph{lottery}~\cite{HN13,CDOPSY15,CMS15,CDW16}. Chen et al.~\cite{CDOPSY15} settled the complexity of finding/describing optimal randomized mechanism.

In broader multi-item settings (with single or multiple buyers having unit-demand and other utility functions), optimal mechanisms are even far more complicated. In these setting, the last two decades have seen a number of works on proving intractability of optimal mechanisms, and even more abundant works on proving that simple mechanisms approximate the optima within constant factors. In this amount of space, evaluating so extensive literature would be impossible.
As a guideline, readers can turn to hardness results in~\cite{HN12,DDT14,CDOPSY15,R16:b,BGN17,CDPSY18,CMPY18}, and approximation results in~\cite{CHK07,CHMS10,CD15,BILW14,CMS15,Y15,CDW16,CZ17}, and the references therein.


\section{Notations and Preliminaries}
\label{sec:prelim}
In most part of the paper, we mainly focus on the simplest single-item environment. One seller has a single item to sell, among $n$ potential buyers. Each buyer $i \in [n]$ has valuation drawn from distribution $F_i$. Therefore, an instance is described by a $n$ distributions $\{F_i\}_{i = 1}^{n}$. We would study these distributions in great detail. In particular, we make use of three mathematically equivalent formats to describe the distributions. In the following three subsections, we will introduce them and some related notations one-by-one.

\subsection{CDF and PDF}
The most natural way to describe distributions is their cumulative density function (CDF) and  probability density function (PDF).
\begin{itemize}
\item $F_i$: cumulative density function (CDF) of a distribution. For convenience, we would assume that cumulative density function $F_i$ is left-continuous\footnote{That is, with random variable $x$ drawn from the underlying distribution, $F_i(p) = \Pr\big\{x < p\big\}$.} (rather than right-continuous, as usual). When without ambiguity, we also use $F_i$ to denote the corresponding distribution.
\item $f_i$: probability density function (PDF) of a distribution. By assuming that distribution $F_i$ is regular, soon after we will see (1)~the support is a \emph{single} closed interval; and (2)~there exists at most one probability-mass (which must be the support-supremum, if exists). Hence, we can safely assume that probability density function $f_i$ is well-defined and left-continuous.
\end{itemize}
We introduce some more useful notions for distribution $F_i$ as follows. (see Figure~\ref{fig:prelim:cdf} for demonstration).
\begin{itemize}
\item $u_i \in (0, \infty]$: support-supremum of distribution $F_i$. Note that $u_i$ can be infinity.
\item $v_i \in \arg\max \big\{p \cdot \big(1 - F_i(p)\big):\,\, p \in (0, \infty]\big\}\in (0, \infty]$. If there are multiple alternative $v_i$'s, we break ties by choosing the greatest one. Clearly, $v_i$ can also be infinity, yet cannot exceed $u_i$.
\item $q_i \eqdef 1 - F_i(v_i) \in [0, 1]$.
\end{itemize}
To interpret these notions intuitively, consider a single-item auction: the seller posts price $p$ for the item, and a single buyer with value drawn from distribution $F_i$ decides whether to buy. Since then, $u_i$ is the greatest possible value (of the buyer); $v_i$ is the revenue-optimal price (posted by the seller), termed the \emph{monopoly-price}; $q_i$ is the selling probability corresponding to monopoly-price $v_i$, termed \emph{monopoly-quantile}. We can easily check that $q_i = 0$ iff $v_i = u_i = \infty$.

\subsection{Regularity and Virtual Value}
\label{subsec:prelim-reg-vv}

Conventionally, regularity is elaborated by the notion of \emph{virtual value}: given CDF $F_i$, suppose that its PDF $f_i$ is well-defined on $p \in (0, \infty)$, and further define virtual value function\footnote{When $F_i(p) = 1$ and $f_i(p) = 0$, we let $\frac{1 - F_i(p)}{f_i(p)} = 0$. When $F_i(p) < 1$ and $f_i(p) = 0$, we let $\frac{1 - F_i(p)}{f_i(p)} = \infty$.}
\[
\Phi_i(p) \eqdef p - \frac{1 - F_i(p)}{f_i(p)} \quad\quad\quad\quad \forall p \in (0, \infty].
\]
Denote by $\reg$ the set of regular distributions. By standard notion~\cite{M81}, distribution $F_i$ is regular (namely $F_i \in \reg$) iff its virtual value function $\Phi_i$ is \emph{non-decreasing} on $p \in (0, \infty]$. As a result, for a regular distribution $F_i \in \reg$, we know\footnote{For Fact~\textbf{(a)}, assume to the contrary that the support contains two closed intervals $[p_1, p_2]$ and $[p_3, p_4]$, where $0 < p_1 < p_2 < p_3 < p_4 < \infty$. When $p \in (p_1, p_2)$, definitely $\Phi_i(p)$ is finite. When $p \in (p_2, p_3)$, we have $F_i(p) < 1$ and $f_i(p) = 0$, which means $\Phi_i(p)$ is negative infinity. This violates the monotonicity of $\Phi_i$. Similarly, Fact~\textbf{(b)} can be verified by contradiction (respect the monotonicity of $\Phi_i$).}
\begin{description}
\item [(a):] The support must be a \emph{single} closed interval;
\item [(b):] There exists at most one probability-mass (which must be the support-supremum, if exists).
\end{description}

Given the monotonicity of virtual value function $\Phi_i$, we define its inverse function as follows:
\[
\Phi_i^{-1}(p) \eqdef \arg\max \big\{x \in (0, \infty]:\,\, \Phi_i(x) \leq p\big\} \quad\quad\quad\quad \forall p \in (0, \infty].
\]
One may notice that this function is supported on positive virtual values. Indeed, to compute the revenue from Myerson Auction, only positive virtual values are involved (see Fact~\ref{fact:opt_rev} in Section~\ref{subsec:prelim-mechanism}), due to which we adopt this way of definition.

Further, we introduce virtual value CDF $D_i$ of this distribution:
\[
D_i(p) = F_i\big(\Phi_i^{-1}(p)\big) \quad\quad\quad\quad \forall p \in (0, \infty].
\]
By definition, we know that\footnote{Throughout the paper, for any function $g$, we use $g(p^+)$ to denote the right limit of $g$ at $p$, i.e., $\lim\limits_{x \to p^+} g(x)$.} $\Phi_i(v_i^+) \geq 0$ and $D_i(p) = 1 - q_i$ for all $p \in \big(0, \Phi_i(v_i^+)\big]$, which are illustrated in Figure~\ref{fig:prelim:virtaul_cdf}. Clearly, an equivalent condition for regularity is that the virtual value CDF is well-defined.

Essentially, value CDF $F_i$ can be transformed to virtual value CDF $D_i$, and vice versa. The above arguments capture one-side transformation (since virtual value function $\Phi_i$ is given by $F_i$). The other-side transformation is relatively unintuitive, and is deferred to Fact~\ref{fact2}.2 in Appendix~\ref{app:prelim}.

\begin{figure}[H]
\centering
\subfigure[value CDF]{
\begin{tikzpicture}[thick, smooth, scale = 2.5]
\node[above] at (0, 1.1) {\small $F_i(p)$};
\draw[->] (0, 0) -- (0, 1.1);
\node[right] at (1.6, 0) {\small $p$};
\draw[->] (0, 0) -- (1.6, 0);
\draw[thick, color = blue, domain = 0.8333: 1.6] plot (\x, {\x - sqrt((\x - 1)^2 + 0.16)});
\draw[thick, color = blue, domain = 0: 0.8333] plot (\x, {(2 * \x + 5 - sqrt(4 * \x^2 - 30 * \x + 25)) / 12.5});
\node[below] at (0.8333, 0) {\small $v_i$};
\node[left] at (0, 1) {\small $1$};
\node[below] at (0, 0) {\small $0$};
\node[left] at (0, 0.4) {\small $1 - q_i$};
\draw[style = dashed] (0, 1) -- (1.6, 1);
\draw[style = dashed] (0.8333, 0) -- (0.8333, 0.4);
\draw[style = dashed] (0, 0.4) -- (0.8333, 0.4);
\draw[very thick] (0pt, 1) -- (1.25pt, 1);
\draw[very thick] (0pt, 0.4) -- (1.25pt, 0.4);
\draw[very thick] (0.8333, 0pt) -- (0.8333, 1.25pt);
\label{fig:prelim:cdf}
\end{tikzpicture}
}
\quad
\subfigure[virtual value CDF]{
\begin{tikzpicture}[thick, smooth, scale = 2.5]
\node[above] at (0, 1.1) {\small $D_i(p)$};
\draw[->] (0, 0) -- (1.1, 0);
\draw[->] (0, 0) -- (0, 1.1);
\node[right] at (1.1, 0) {\small $p$};
\node[below] at (0.4, 1.15pt) {\scriptsize $\Phi_i(v_i^+)$};
\node[below] at (1, 0.5pt) {\scriptsize $\Phi_i(u_i)$};
\node[left] at (0, 1) {\small $1$};
\node[below] at (0, 0) {\small $0$};
\node[left] at (0, 0.4) {\small $1 - q_i$};
\draw[style = dashed] (0, 1) -- (1, 1);
\draw[style = dashed] (0.4, 0) -- (0.4, 0.4);
\draw[style = dashed] (1, 0) -- (1, 1);
\draw[color = blue] (0, 0.4) -- (0.4, 0.4) -- (1, 1) -- (1.1, 1);

\draw[very thick] (0pt, 1) -- (1.25pt, 1);
\draw[very thick] (0pt, 0.4) -- (1.25pt, 0.4);
\draw[very thick] (0.4, 0pt) -- (0.4, 1.25pt);
\draw[very thick] (1, 0pt) -- (1, 1.25pt);
\label{fig:prelim:virtaul_cdf}
\end{tikzpicture}
}
\quad
\subfigure[revenue-quantile curve]{
\begin{tikzpicture}[thick, smooth, scale = 2.5]
\draw[->] (0, 0) -- (1.1, 0);
\draw[->] (0, 0) -- (0, 1.1);
\node[above] at (0, 1.1) {\small $r_i(q)$};
\node[right] at (1.1, 0) {\small $q$};
\node[below] at (0, 0) {\small $0$};
\draw[thick, color = blue, domain = 0: 0.6] plot (\x, {1.16 - (\x - 1)^2});
\draw[thick, color = blue, domain = 0.6: 1] plot (\x, {-6.25 * (\x - 0.2) * (\x - 1)});
\draw[dashed] (0.6, 0) -- (0.6, 1);
\draw[dashed] (0, 1) -- (0.6, 1);
\draw[very thick] (1, 0) -- (1, 1.25pt);
\node[below] at (1, 0) {\small $1$};
\draw[very thick] (0.6, 0) -- (0.6, 1.25pt);
\node[below] at (0.6, 0) {\small $q_i$};
\draw[very thick] (0, 1) -- (1.25pt, 1);
\draw[very thick] (0, 0.16) -- (1.25pt, 0.16);
\node[left] at (0, 1) {\small $v_i q_i$};
\node[left] at (0, 0.16) {\small $r_i(0)$};
\draw[color = blue, fill = red] (0, 0.16) circle(0.625pt);
\draw[color = blue, fill = red] (0.6, 1) circle(0.625pt);
\end{tikzpicture}
\label{fig:prelim:rq-curve}
}
\caption{Demonstrations for value CDF, virtual value CDF and revenue-quantile curve.}
\end{figure}
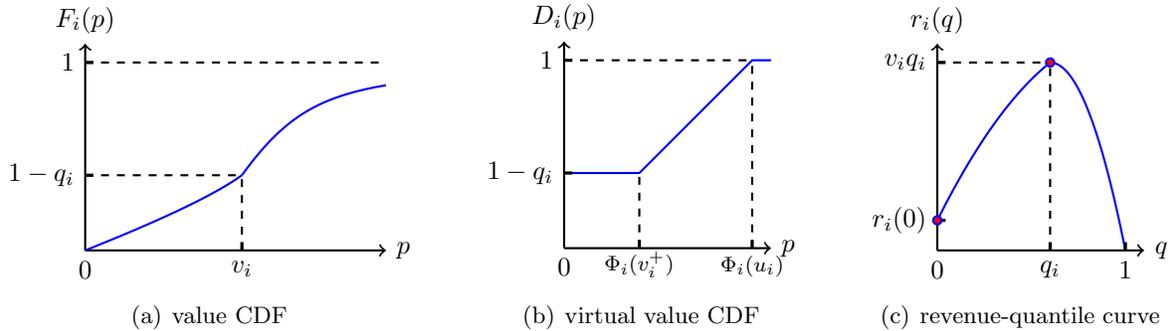

\subsection{Revenue-Quantile Curve}
\label{subsec:prelim-rq-curve}

Equivalently, to depict the whole distribution, one can use revenue-quantile curve $r_i(q)$ that takes a quantile $q \in [0, 1]$ as input, and outputs corresponding revenue (see Figure~\ref{fig:prelim:rq-curve} for illustration). Formally, we have $r_i(0) = \lim\limits_{p \rightarrow \infty} p \cdot \big(1 - F_i(p)\big)$ and
\[
r_i(q) = q \cdot \max\big\{p \in (0, \infty]:\,\, F_i(p) \leq 1 - q\big\} \quad\quad\quad\quad  \forall q \in (0, 1].
\]
For brevity, we might interchange notations $F_i$, $D_i$ and $r_i(q)$ to denote an underlying distribution.

By standard notion, an equivalent condition for regularity is that revenue-quantile curve $r_i(q)$ is \emph{continuous} and \emph{concave} on $q \in [0, 1]$. Clearly, both properties together imply that
\begin{itemize}
\item $r_i(q)$ is left-/right-differentiable at any $q \in (0, 1)$, and is right-differentiable at $q = 0$, and is left-differentiable at $q = 1$;
\item By standard notion, the virtual value at $p = \big.r_i(q)\big/q$ is precisely the right-derivative of $r_i(q)$. Formally, $\Phi_i\big(\big.r_i(q)\big/q\big) = \partial_+ r_i(q)$, for all $q \in [0, 1)$;
\item Respecting the semi-differentiability of $r_i(q)$, we can safely assume that PDF $f_i$ is well-defined (recall that $\Phi_i(p) = p - \frac{1 - F_i(p)}{f_i(p)}$).
\end{itemize}
Later, we will interchange all equivalent definitions of regularity, whenever one is more convenient for our use. In terms of revenue-quantile curve $r_i(q)$, support-supremum $u_i$, monopoly-price $v_i$ and monopoly-quantile $q_i$ can be re-defined as follows:
\begin{itemize}
\item $u_i \eqdef \lim\limits_{q \rightarrow 0^+} \frac{r_i(q)}{q}$: clearly, support-supremum $u_i = \infty$ when $r_i(0) > 0$.
\item $q_i \in \arg\max \big\{r_i(q):\,\, q \in [0, 1]\big\}$: recall that $q_i$ is the quantile at monopoly-price $v_i$. If there are multiple alternative $q_i$'s, we break ties by choosing the smallest one.
\item $v_i \eqdef \frac{r_i(q_i)}{q_i}$: we know monopoly-price $v_i$ is the greatest one among the alternatives, because $r_i(q)$ is concave, and $q_i$ is the smallest alternative one.
\end{itemize}
Clearly, a concave revenue-quantile curve $r_i(q)$ can be converted into value CDF $F_i$ and virtual value CDF $D_i$, and vice versa. In Appendix~\ref{app:prelim}, we formalize this statement as Fact~\ref{fact1} and Fact~\ref{fact2}, and then obtain a structural result (i.e., Main Lemma~\ref{lem:virtual_value}) that will be useful in Section~\ref{subsec:upper3-opt}.


\subsection{Mechanisms}
\label{subsec:prelim-mechanism}

Until Section~\ref{sec:extension}, we always concentrate on single-item environment, where $n$ buyers draw bids\footnote{The posted-price strategy (concerned here) and all other mechanisms studied in this paper are truthful, and therefore each buyer always bids his true value.} $\{b_i\}_{i = 1}^{n}$ independently from distributions $\{F_i\}_{i = 1}^{n} \in \reg^n$. Furthermore, we explore revenue gap between the following two families of mechanisms.

\paragraph{Anonymous Posted-Pricing ($\ap$).}
In such a mechanism, the seller posts the same price of $p \in [0, \infty]$ for all buyers. The item is allocated if there exists a buyer $i$ bidding $b_i \geq p$ (and remains unsold if no such buyers exist), and the buyer who gets the item pays the posted-price $p$. It is easy to see that the expected revenue from such mechanism is
\[
\ap\big(p, \{F_i\}_{i = 1}^n\big) \eqdef p \cdot \left(1- \prod\limits_{i = 1}^{n} F_i(p)\right).
\]
Among all such mechanisms, we abuse $\ap\big(\{F_i\}_{i = 1}^n\big) \eqdef \max \big\{\ap\big(p, \{F_i\}_{i = 1}^n\big): \,\, p \in [0, \infty]\big\}$ to denote the optimal revenue. When term $\{F_i\}_{i = 1}^{n}$ is clear from the context, we would drop that for notational simplicity (the same below for $\opt$).

\paragraph{Myerson Auction ($\opt$).}
Given a regular instance $\{F_i\}_{i = 1}^n \in \reg^n$, Myerson Auction runs as follows\footnote{For a general instance, Myerson Auction can be more complicated, and randomization can be necessary.}: After receiving bids $\{b_i\}_{i = 1}^n$, the seller first calculates virtual values $\{\Phi_i(b_i)\}_{i = 1}^n$, and then allocates the item to the buyer with highest virtual value that is at least $0$, and charges this buyer with a threshold bid (for the buyer to keep winning). If no buyer has non-negative virtual values, the seller would withhold the item, and charge nothing.

Indeed, an allocation rule and a payment rule compose a mechanism. For a truthful single-item mechanism (e.g., Myerson Auction), it is well-known~\cite{M81} that the allocation rule is monotone\footnote{That is, in the specific mechanism, the winner keeps to win after unilaterally increasing his bid.}, and the payment rule satisfies the property from Lemma~\ref{lem:myerson} in the below.

\begin{lemma}[Myerson Lemma~\cite{M81}]
\label{lem:myerson}
Given a monotone allocation rule $\bm{x}(\cdot)$, there exists a unique payment rule $\bm{\pi}(\cdot)$ such that the mechanism is truthful:
\[
\pi_i(\bm{b}) = \displaystyle{\int}_0^{b_i} z \cdot dx_i(z, \bm{b}_{-i}) \quad\quad \forall i \in [n] \text{, } \forall \bm{b} \in \mathbb{R}_+^n.
\]
\end{lemma}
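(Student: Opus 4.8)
The statement is the classical Myerson characterization, and I would prove it by the standard single-parameter argument, carried out coordinate by coordinate. Fix a buyer $i \in [n]$ and a profile $\bm{b}_{-i}$ of the others' bids, and abbreviate $x(z) \eqdef x_i(z, \bm{b}_{-i})$ and $\pi(z) \eqdef \pi_i(z, \bm{b}_{-i})$, viewed as functions of buyer $i$'s own report $z \in \mathbb{R}_+$. By definition, truthfulness of the mechanism is the family of inequalities $v \cdot x(v) - \pi(v) \ge v \cdot x(b) - \pi(b)$ for all $v, b \in \mathbb{R}_+$ (true value $v$, deviating report $b$); the goal is to show this holds if and only if $\pi$ equals the claimed integral, given that $x$ is monotone.

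\emph{Necessity of the formula.} Writing the truthfulness inequality once at $(v,b)$ and once at $(b,v)$ and adding them gives $(v-b)\big(x(v) - x(b)\big) \ge 0$, so $x$ is non-decreasing and the Riemann--Stieltjes integral $\int_0^{b_i} z \, dx(z)$ is well defined. Rearranging the same two inequalities yields, for $0 \le b < v$, the ``payment sandwich''
\[
b \cdot \big(x(v) - x(b)\big) \;\le\; \pi(v) - \pi(b) \;\le\; v \cdot \big(x(v) - x(b)\big).
\]
Partitioning $[0, v]$ as $0 = z_0 < z_1 < \cdots < z_m = v$, summing the sandwich over consecutive pairs, and sending the mesh to $0$: the left and right sums are precisely the lower and upper Riemann--Stieltjes sums of $\int_0^v z \, dx(z)$, which coincide since $z \mapsto z$ is continuous and $x$ is monotone. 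Hence $\pi(v) - \pi(0) = \int_0^v z \, dx(z)$, and normalizing by $\pi(0) = 0$ (the standard individual-rationality convention that a report of value $0$ is charged nothing, which the claimed formula respects) pins $\pi$ down uniquely as the stated expression.

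\emph{Sufficiency.} Conversely, assume $x$ is monotone and set $\pi(b) \eqdef \int_0^b z \, dx(z)$. Fix the true value $v$ and, for a deviating report $b$, consider the utility gain from truth-telling, $\big(v x(v) - \pi(v)\big) - \big(v x(b) - \pi(b)\big)$. A short computation with $dx \ge 0$ shows it equals $\int_b^v (v - z)\, dx(z)$ when $b \le v$ and $\int_v^b (z - v)\, dx(z)$ when $b \ge v$; the integrand is non-negative in both cases, so truthful reporting is optimal. Combined with the necessity direction, this proves the lemma.

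\emph{Main obstacle.} The only delicate point is the limiting/integration-by-parts step when $x$ has jump discontinuities and when $b_i$ or the support-supremum is $+\infty$: one must check that the Riemann--Stieltjes sums genuinely converge and that manipulations of $\int_b^v z \, dx(z)$ across atoms are legitimate. For the allocation rules relevant here (those induced by Myerson Auction on regular instances), $x(\cdot)$ is piecewise well-behaved, so this is handled by a routine approximation argument; everything else is elementary algebra.
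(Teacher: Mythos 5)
Your proof is correct: it is the standard single-parameter argument (IC inequalities at $(v,b)$ and $(b,v)$ give monotonicity and the payment sandwich, Riemann--Stieltjes limits give the payment identity, and the converse computation $\int_b^v (v-z)\,dx(z)\ge 0$ gives sufficiency), with the uniqueness correctly understood up to the normalization $\pi(0)=0$. The paper does not prove this lemma itself --- it imports it as a classical result from Myerson~\cite{M81} --- and your argument is precisely the classical proof being cited, so there is nothing to compare beyond that.
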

Equipped with this structural result, we capture the revenue from Myerson Auction in Fact~\ref{fact:opt_rev} (whose proof is deferred to Appendix~\ref{subapp:prelim:opt_rev}), in the concerned setting with regular distributions\footnote{For general distributions, all $D_i$'s in Fact~\ref{fact:opt_rev} should be replaced by the virtual value CDF's after ironing~\cite{M81}.}.

\begin{fact}[Characterization]
\label{fact:opt_rev}
Given a regular instance $\{F_i\}_{i = 1}^n \in \reg^n$,
\[
\opt\big(\{F_i\}_{i = 1}^n\big) = \sum\limits_{i = 1}^n r_i(0) + \displaystyle{\int_0^{\infty}} \left(1 - \prod\limits_{i = 1}^n D_i(x)\right) dx.
\]
\end{fact}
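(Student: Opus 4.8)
\textbf{Proof proposal for Fact~\ref{fact:opt_rev}.}

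The plan is to compute the expected revenue of Myerson Auction by invoking Myerson Lemma (Lemma~\ref{lem:myerson}) and then reorganizing the resulting integral in terms of virtual value CDFs. Concretely, recall that Myerson Auction allocates to the buyer with the highest virtual value provided that value is nonnegative. I would first fix a buyer $i$ and the other buyers' bids $\bm{b}_{-i}$, and describe the induced (ironed-free, since distributions are regular) allocation rule $x_i(\cdot, \bm{b}_{-i})$: buyer $i$ wins at bid $b_i$ exactly when $\Phi_i(b_i) \geq 0$ and $\Phi_i(b_i) > \Phi_j(b_j)$ for all $j \neq i$ (ties broken in some fixed way, which is immaterial since they occur with probability zero under the continuity afforded by regularity). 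Plugging into Myerson Lemma, $\pi_i(\bm{b}) = \int_0^{b_i} z \, dx_i(z, \bm{b}_{-i})$, and taking expectation over $\bm{b}$ and summing over $i$, the standard integration-by-parts identity $\E_{b_i}[\pi_i(\bm b)] = \E_{\bm b}[\Phi_i(b_i) \cdot x_i(\bm b)]$ converts the total expected payment into $\E_{\bm b}\big[\sum_i \Phi_i(b_i) \cdot x_i(\bm b)\big] = \E_{\bm b}\big[\max_i \max\{\Phi_i(b_i), 0\}\big]$, i.e., the expected value of the largest nonnegative virtual value among the $n$ buyers.

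Next I would translate this expectation into the claimed closed form. Write $Y = \max_i \max\{\Phi_i(b_i), 0\}$, a nonnegative random variable. Then $\E[Y] = \int_0^\infty \Pr\{Y > x\}\, dx = \int_0^\infty \big(1 - \Pr\{Y \leq x\}\big) dx$. For $x \geq 0$ we have $\Pr\{Y \leq x\} = \prod_i \Pr\{\Phi_i(b_i) \leq x\}$ by independence, and here I need the identity $\Pr\{\Phi_i(b_i) \leq x\} = F_i\big(\Phi_i^{-1}(x)\big) = D_i(x)$ for $x > 0$; this is where the definition of $\Phi_i^{-1}$ as $\arg\max\{y : \Phi_i(y) \leq x\}$ and the monotonicity of $\Phi_i$ enter — one has to check that, because $\Phi_i$ is nondecreasing, the event $\{\Phi_i(b_i) \leq x\}$ coincides up to a null set with $\{b_i \leq \Phi_i^{-1}(x)\}$, whose probability is $F_i(\Phi_i^{-1}(x))$ by left-continuity of $F_i$. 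This yields $\E[Y] = \int_0^\infty \big(1 - \prod_i D_i(x)\big) dx$. Finally, the term $\sum_i r_i(0)$ accounts for the contribution of buyers with $u_i = \infty$ (equivalently $r_i(0) > 0$), i.e., the revenue the auction extracts "at infinity" from unbounded distributions; I would handle this by noting $r_i(0) = \lim_{p\to\infty} p(1 - F_i(p))$ and showing that in the limit the above payment identity picks up exactly this additive constant per such buyer, either by a careful treatment of the upper limit of the bid integral in Myerson Lemma or by a truncation-and-limit argument on the distributions.

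I expect the main obstacle to be the careful bookkeeping around infinite support-suprema and probability masses: when $u_i = \infty$ the virtual value may stay bounded while $b_i \to \infty$, so the event $\{\Phi_i(b_i) \le x\}$ can have positive probability "coming from infinity", and this is precisely what produces the $r_i(0)$ terms; conversely when there is a probability mass at $u_i < \infty$ the map between $\{\Phi_i(b_i)\le x\}$ and $\{b_i \le \Phi_i^{-1}(x)\}$ needs the tie-breaking in the definition of $\Phi_i^{-1}$ to be handled correctly. The cleanest route is probably to first prove the identity for distributions with bounded support and no mass (where everything is classical), then extend by a limiting argument, checking that both sides are continuous under the relevant approximation; the revenue-quantile curve representation and the facts collected in Appendix~\ref{app:prelim} should make the requisite convergence transparent. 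The integration-by-parts step and the independence/product step are routine.
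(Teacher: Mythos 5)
Your proposal follows essentially the same route as the paper's proof: apply Myerson Lemma, integrate by parts to convert expected payment into expected virtual surplus plus a boundary term at infinity (which evaluates to $r_i(0)$ precisely because the allocation probability tends to $1$ as $b_i \to \infty$), and then rewrite $\E\big[\big(\max_i \Phi_i(b_i)\big)_+\big]$ via the tail-probability formula and independence as $\int_0^{\infty}\big(1 - \prod_i D_i(x)\big)dx$. The only caveat is that the identity you first quote, $\E[\pi_i] = \E[\Phi_i(b_i)\cdot x_i]$, is itself off by exactly the $r_i(0)$ term; you correctly flag this afterward, and the paper resolves it exactly as you anticipate, by retaining the boundary term $\lim_{z\to\infty}(1-F_k(z))\cdot z\cdot x_k(z,\bm{b}_{-k}) = r_k(0)$ in the integration by parts.
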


Thus far, we can easily formulate the revenue gap between $\opt$ and $\ap$ as Program~(\ref{prog:0}) in the below: the constraint on $p \in [0, 1]$ always holds, and thus is dropped.
\vspace{5.5pt} \\
\fcolorbox{black}{lightgray}{\begin{minipage}{\textwidth}
\begin{align}
\label{prog:0}\tag{P0}
\max\limits_{\{F_i\}_{i = 1}^n \in \reg^n} &\quad\quad \opt = \sum\limits_{i = 1}^n r_i(0) + \displaystyle{\int_0^{\infty}} \left(1 - \prod\limits_{i = 1}^n D_i(x)\right) dx \\
\label{cstr:ap0}\tag{C0}
\text{subject to:} &\quad\quad \ap(p) = p \cdot \left(1 - \prod\limits_{i = 1}^n F_i(p)\right) \leq 1 \quad\quad \forall p \in (1, \infty]
\end{align}
\end{minipage}}

\subsection{Lower-Bound Instance}
\label{subsec:extension:opt_ap_lower}

As a subset of regular distributions $\reg$, the family of \emph{triangular distributions} $\tri$ (introduced by Alaei et al.~\cite{AHNPY15}) is also used in this paper, serving to construct worst-case instances. Previously, with a triangular instance (i.e., Example~\ref{exp:opt-ap-lower}), Jin et al.~\cite{JLTX18} gave a lower bound of Program~(\ref{prog:0}) that matches to our Theorem~\ref{thm:opt_ap}. In the following, we present these notion and example.

\begin{figure}[H]
\centering
\subfigure[Revenue-quantile curve]{
\begin{tikzpicture}[thick, smooth, domain = 0: 1, scale = 2.5]
\draw[->, thick] (0, 0) -- (1.1, 0);
\draw[->, thick] (0, 0) -- (0, 1.1);
\node[above] at (0, 1.1) {\small $r_i(q)$};
\node[right] at (1.1, 0) {\small $q$};
\draw[dashed] (0.3333, 0) -- (0.3333, 1);
\draw[dashed] (0, 1) -- (0.3333, 1);
\draw[very thick] (1, 0pt) -- (1, 1.25pt);
\node[below] at (1, 0) {\small $1$};
\draw[very thick] (0.3333, 0pt) -- (0.3333, 1.25pt);
\node[below] at (0.3333, 0) {\small $q_i$};
\draw[very thick] (0pt, 1) -- (1.25pt, 1);
\node[left] at (0, 1) {\small $v_i q_i$};
\node[below] at (0, 0) {\small $0$};

\draw[color = blue, very thick] (0, 0) -- (0.3333, 1) -- (1, 0);
\draw [color = blue, fill = red] (0.3333, 1) circle(0.625pt);
\end{tikzpicture}
\label{fig:tri_revenue_quntile}
}
\quad\quad\quad\quad\quad\quad
\subfigure[Value CDF]{
\begin{tikzpicture}[thick, smooth, domain = 0: 1.5, scale = 2.5]
\draw[->] (0, 0) -- (2, 0);
\draw[->] (0, 0) -- (0, 1.1);
\node[above] at (0, 1.1) {\small $F_i(p)$};
\node[right] at (2, 0) {\small $p$};
\node[below] at (1.5, 0) {\small $v_i$};
\node[left] at (0, 1) {\small $1$};
\node[below] at (0, 0) {\small $0$};
\node[left] at (0, 0.6667) {\small $1 - q_i$};
\draw[style = dashed] (0, 1) -- (1.5, 1);
\draw[style = dashed] (1.5, 0) -- (1.5, 1);

\draw[color = blue] plot (\x, {\x / (\x + 0.75)});
\draw[style = dashed] (0, 0.6667) -- (1.5, 0.6667);
\draw[blue, fill = white] (1.5, 0.6667) circle (0.625pt);
\draw[color = blue] (1.5, 1) -- (2, 1);
\draw [color = blue, fill = red] (1.5, 1) circle(0.625pt);

\draw[very thick] (0pt, 1) -- (1.25pt, 1);
\draw[very thick] (0pt, 0.6667) -- (1.25pt, 0.6667);
\draw[very thick] (1.5, 0pt) -- (1.5, 1.25pt);
\end{tikzpicture}
\label{fig:tri_CDF}
}
\caption{Demonstration of triangular distributions}
\label{fig:tri}
\end{figure}
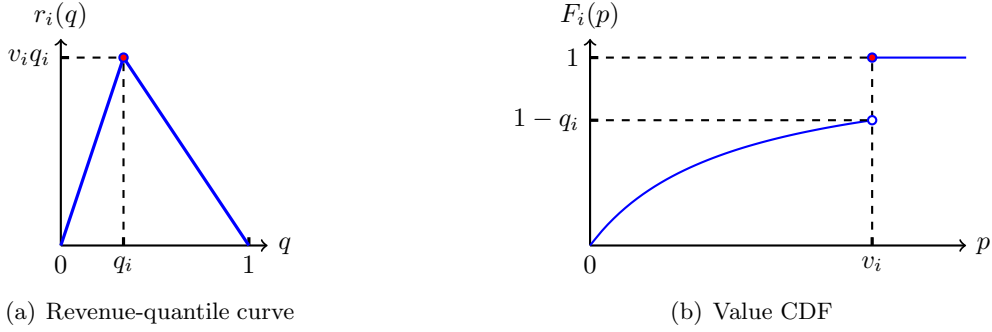

\paragraph{Triangular Distributions.}
Parameterized by $u_i = v_i \in (0, \infty]$ and $q_i \in [0, 1]$, a triangular distribution $\tri(v_i, q_i)$'s CDF is defined as:
\[
\label{eq:cdf_tri}
F_i(p) =
\begin{cases}
\frac{p \cdot (1 - q_i)}{p \cdot (1 - q_i) + v_i q_i} & \forall p \in [0, v_i] \\
1 & \forall p \in (v_i, \infty)
\end{cases}
\]
Intuitively, as Figure~\ref{fig:tri_revenue_quntile} conveys, the revenue-quantile curve of $\tri(v_i, q_i)$ looks like a triangle. Notice that the lower-bound instance by Jin et al.~\cite{JLTX18} involves a special triangular distribution\footnote{CDF $F_0(p) = \frac{p}{p + 1}$ corresponds to revenue-quantile curve $r_0(q) = 1 - q$, for all $q \in [0, 1]$.} $F_0(p) = \frac{p}{p + 1}$, for all $p \in (0, \infty)$. For brevity, we denote this distribution by $\tri(\infty)$.

\paragraph{Lower Bound.}
For triangular instances, Jin et al.~\cite{JLTX18} obtained an alternative formula for computing the revenue from Myerson Auction (see Fact~\ref{fact:revenue_spm_opt}). With a sophisticated triangular instance (i.e., Example~\ref{exp:opt-ap-lower}), they established a lower bound of $C^* \approx 2.62$ for Program~(\ref{prog:0}) (see Theorem~\ref{thm:opt_ap_lower}).

\begin{fact}[\cite{JLTX18}]
\label{fact:revenue_spm_opt}
For any triangular instance $\{\tri(v_i, q_i)\}_{i = 1}^n$ that $v_1 \geq v_2 \geq \cdots \geq v_n$,
\[
\opt\big(\{\tri(v_i, q_i)\}_{i = 1}^n\big) = \sum\limits_{i = 1}^n v_i q_i \cdot \prod\limits_{j = 1}^{i - 1} (1 - q_i).
\]
\end{fact}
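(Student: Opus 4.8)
The plan is to derive the formula from the general revenue characterization in Fact~\ref{fact:opt_rev}; the only ingredient it needs beyond what is already in the excerpt is the virtual value CDF $D_i$ of a triangular distribution, so that is the first step. Fix $\tri(v_i, q_i)$ with $v_i \in (0, \infty)$ and $q_i \in [0, 1]$, whose revenue-quantile curve is the genuine triangle $r_i(q) = v_i q$ on $[0, q_i]$ and $r_i(q) = v_i q_i \cdot \frac{1 - q}{1 - q_i}$ on $[q_i, 1]$. A short computation — either straight from the CDF via $\Phi_i(p) = p - \frac{1 - F_i(p)}{f_i(p)}$, or from the triangle via $\Phi_i(r_i(q)/q) = \partial_+ r_i(q)$ — shows the virtual value equals the negative constant $-\frac{v_i q_i}{1 - q_i}$ throughout $(0, v_i)$ and jumps up to $v_i$ at the probability-$q_i$ atom located at $p = v_i$. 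Hence, using left-continuity,
\[
D_i(x) = (1 - q_i)\cdot\mathbf{1}[\,x \le v_i\,] + \mathbf{1}[\,x > v_i\,] \qquad \forall x \in (0, \infty),
\]
while the support being bounded gives $r_i(0) = 0$.

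The second step substitutes this into Fact~\ref{fact:opt_rev}. Since every $r_i(0) = 0$ we get $\opt = \int_0^\infty \bigl(1 - \prod_{i=1}^n D_i(x)\bigr)\, dx$, and the hypothesis $v_1 \ge v_2 \ge \cdots \ge v_n$ makes the integrand piecewise constant: putting $v_{n+1} \eqdef 0$, on the slab $x \in (v_{k+1}, v_k]$ we have $D_i(x) = 1 - q_i$ exactly for $i \le k$ and $D_i(x) = 1$ for $i > k$, so $\prod_{i=1}^n D_i(x) = \prod_{i=1}^k (1 - q_i)$ (the slab $x > v_1$ contributing nothing). Integrating slab by slab,
\[
\opt = \sum_{k=1}^n (v_k - v_{k+1})\Bigl(1 - \prod_{i=1}^k (1 - q_i)\Bigr),
\]
and a one-line summation-by-parts finishes: with $P_k \eqdef \prod_{i=1}^k (1 - q_i)$, $P_0 \eqdef 1$, the identity $P_{k-1} - P_k = q_k P_{k-1}$ together with $v_{n+1} = 0$ and $P_0 = 1$ lets this regroup as $\sum_{k=1}^n v_k (P_{k-1} - P_k) = \sum_{k=1}^n v_k q_k \prod_{j=1}^{k-1}(1 - q_j)$, as claimed.

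I expect the only genuinely delicate point — the main (mild) obstacle — to be the degenerate case where some $v_i = \infty$, i.e.\ the distribution $\tri(\infty)$ (which forces $q_i = 0$): there the ``atom'' escapes to infinity, $r_i(0) > 0$, and the step description of $D_i$ degenerates to $D_i \equiv 1 - q_i$ on all of $(0, \infty)$, so such a buyer leaves the product in Fact~\ref{fact:opt_rev} untouched and its entire contribution enters through the $\sum_i r_i(0)$ term — which is precisely the factor ``$v_i q_i$'' of the target formula once that symbol is read as $r_i(0)$, the finite limit of $v_i q_i$. One either folds this directly into the slab analysis or recovers it by a limiting argument from finite truncations $v_i \to \infty$; ties $v_k = v_{k+1}$ are harmless since the integral ignores the measure-zero slab boundaries. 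As an independent check, the formula also drops out of running Myerson's auction directly on the $v$-ordered instance: a buyer wins exactly when he is the lowest-indexed buyer to land on his atom, in which case Myerson's payment rule charges him exactly $v_k$, giving $\opt = \sum_k v_k \cdot q_k \prod_{j<k}(1 - q_j)$.
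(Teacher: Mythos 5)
Your proof is correct. Note that the paper itself gives no proof of Fact~\ref{fact:revenue_spm_opt} --- it is imported verbatim from~\cite{JLTX18} --- so there is nothing internal to compare against; your derivation from Fact~\ref{fact:opt_rev} is a clean, self-contained substitute. The three ingredients all check out: for $\tri(v_i,q_i)$ with finite $v_i$ the virtual value is the constant $-\frac{v_iq_i}{1-q_i}$ below the atom and $v_i$ at the atom, so under the paper's convention $D_i(x)=1-q_i$ for $x\in(0,v_i]$ and $D_i(x)=1$ for $x>v_i$, and $r_i(0)=0$; the ordering $v_1\geq\cdots\geq v_n$ makes $\prod_i D_i$ constant on each slab $(v_{k+1},v_k]$; and the Abel regrouping $\sum_k (v_k-v_{k+1})(1-P_k)=\sum_k v_k(P_{k-1}-P_k)=\sum_k v_kq_kP_{k-1}$ is exact given $v_{n+1}=0$ and $P_0=1$. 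Your derivation in fact lands on $\prod_{j=1}^{i-1}(1-q_j)$, which silently corrects what is evidently a typo in the paper's displayed formula (the subscript $q_i$ inside the product should be $q_j$; the same typo recurs in Theorem~\ref{thm:opt_ap_lower}). Your handling of $\tri(\infty)$ --- where the atom escapes to infinity, $q_i=0$, $D_i\equiv 1-q_i$, and the contribution migrates into the $\sum_i r_i(0)$ term --- is exactly how the paper itself accounts for that buyer (the ``$1+$'' in Theorem~\ref{thm:opt_ap_lower}), and the direct Myerson-payment sanity check at the end is also sound, since a triangular buyer's threshold bid for winning is precisely $v_k$.
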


\begin{example}[\cite{JLTX18}]
\label{exp:opt-ap-lower}
Given $\epsilon \in (0, 1)$, let $a \eqdef \Q^{-1}\left(\ln\frac{8}{\epsilon}\right) > 1$, $b \eqdef \frac{8}{\epsilon}$ and $\delta \eqdef \frac{b - a}{n}$. Define triangular instance $\{\tri(\infty)\} \cup \{\tri(v_i, q_i)\}_{i = 1}^{n + 1}$ as, for each $i \in [n + 1]$,
\[
v_i = b - (i - 1) \cdot \delta \quad\quad q_i = \frac{\R(v_i) - \R(v_{i - 1})}{v_i + \R(v_i) - \R(v_{i - 1})}.
\]
\end{example}

\begin{theorem}[\cite{JLTX18}]
\label{thm:opt_ap_lower}
Consider the triangular instance in Example~\ref{exp:opt-ap-lower}. When $n \in \mathbb{N}_+$ is sufficiently large, $\ap\big(\{\tri(\infty)\} \cup \{\tri(v_i, q_i)\}_{i = 1}^{n + 1}\big) \leq 1$ and further,
\[
\C \geq \opt\big(\{\tri(\infty)\} \cup \{\tri(v_i, q_i)\}_{i = 1}^{n + 1}\big) = 1 + \sum\limits_{i = 1}^{n + 1} v_i q_i \cdot \prod\limits_{j = 1}^{i - 1} (1 - q_i) \geq \C - \epsilon.
\]
\end{theorem}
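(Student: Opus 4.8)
The plan is to verify the two quantitative claims in isolation: that the instance is feasible for Program~(\ref{prog:0}), i.e.\ $\ap(\cdot)\le 1$, and that its Myerson revenue approaches $\C$ from below. The companion inequality $\opt(\cdot)\le\C$ in the displayed chain I would not reprove here: once feasibility is established it is immediate from the upper-bound half of Theorem~\ref{thm:opt_ap} (the content of the main body), and alternatively it drops out of the same limit computation below. The one structural input I would lean on is the closed form $\R(v)=v\ln\frac{v^2}{v^2-1}$ of the function appearing in Example~\ref{exp:opt-ap-lower}, together with the elementary facts that $\R$ is positive and strictly decreasing on $(1,\infty)$ and that $\R(p)/p=\ln\frac{p^2}{p^2-1}$. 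Writing $R_i\eqdef\R(v_i)-\R(v_{i-1})>0$, the first thing to record is that the choice of $q_i$ in Example~\ref{exp:opt-ap-lower} rearranges exactly to $v_iq_i=(1-q_i)R_i$, which in turn makes the triangular CDF satisfy $\tfrac1{F_i(p)}=1+\tfrac{R_i}{p}$ for every $p\in(0,v_i]$; likewise $\tfrac1{F_0(p)}=1+\tfrac1p$ for $\tri(\infty)$.

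\emph{Feasibility.} Fix $p>1$; the range $p\le 1$ is vacuous, and $p\ge b$ gives $\ap(p)=\tfrac{p}{p+1}<1$. By the reciprocal identities, $\ap(p)\le 1$ is equivalent after taking logarithms to $\ln\tfrac{p+1}{p}+\sum_{i:\,v_i\ge p}\ln\!\left(1+\tfrac{R_i}{p}\right)\le\ln\tfrac{p}{p-1}$. Bounding $\ln(1+x)\le x$ and then telescoping, the middle sum is at most $\tfrac1p\sum_{i:\,v_i\ge p}R_i=\tfrac1p\bigl(\R(v_{i^\ast})-\R(v_0)\bigr)$, where $v_{i^\ast}$ is the least $v_i$ still $\ge p$; since $\R$ is positive and decreasing this is at most $\tfrac{\R(p)}{p}=\ln\tfrac{p^2}{p^2-1}$, and $\ln\tfrac{p+1}{p}+\ln\tfrac{p^2}{p^2-1}=\ln\tfrac{p}{p-1}$ identically. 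This argument is exact and needs no asymptotics, so the instance is feasible for every $n$.

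\emph{The revenue and its limit.} By Fact~\ref{fact:revenue_spm_opt}, reading $\tri(\infty)$ as the degenerate triangle that contributes ``$v_0q_0$''$=1$ and leaves the rest unattenuated since ``$1-q_0$''$=1$, one gets $\opt(\cdot)=1+\sum_{i=1}^{n+1}v_iq_i\prod_{j=1}^{i-1}(1-q_j)$. Substituting $v_iq_i=(1-q_i)R_i$ and $1-q_j=\tfrac{v_j}{v_j+R_j}$ turns the sum into $\sum_{i=1}^{n+1}R_i\prod_{j=1}^{i}\tfrac{v_j}{v_j+R_j}$. Since $\delta\to0$ forces $R_i=-\delta\,\R'(v_i)+O(\delta^2)$ and each $q_j=O(\delta)$, we have $\prod_{j\le i}\tfrac{v_j}{v_j+R_j}=\exp\!\bigl(-\sum_{j\le i}\tfrac{R_j}{v_j}\bigr)\cdot(1+O(\delta))$, while $\sum_{j\le i}\tfrac{R_j}{v_j}$ is a mesh-$\delta$ Riemann sum of $h_b(v_i)\eqdef\int_{v_i}^{b}\tfrac{-\R'(t)}{t}\,dt$. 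Hence the whole sum is a Riemann sum of $\int_a^b(-\R'(s))\,e^{-h_b(s)}\,ds$, and one integration by parts (using $\tfrac{d}{ds}e^{-h_b(s)}=\tfrac{\R'(s)}{s}e^{-h_b(s)}$ and $h_b(b)=0$) rewrites this as $a\bigl(1-e^{-h_b(a)}\bigr)+\int_a^b\bigl(1-e^{-h_b(s)}\bigr)\,ds$. Finally I would establish the identity $\int_s^\infty\tfrac1t\ln\tfrac{t^2}{t^2-1}\,dt=\tfrac12\sum_{k\ge1}\tfrac1{k^2}s^{-2k}$ (substitute $w=1/t^2$, expand $-\ln(1-w)=\sum_{k\ge1}w^k/k$, integrate termwise), which gives $\int_s^\infty\tfrac{-\R'(t)}{t}\,dt=\Q(s)$ and therefore $h_b(s)=\Q(s)-\Q(b)$. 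Plugging in $\Q(a)=\ln\tfrac8\epsilon$ (so $e^{-\Q(a)}=\epsilon/8$), the estimate $\Q(b)\le\tfrac1{b^2-1}=O(\epsilon^2)$ forced by $b=\tfrac8\epsilon$, and $a=1+O(\epsilon)$, I would conclude that as $n\to\infty$ the revenue tends to $1+a\bigl(1-e^{-(\Q(a)-\Q(b))}\bigr)+\int_a^b\bigl(1-e^{-(\Q(s)-\Q(b))}\bigr)\,ds$, a quantity that lies within $O(\epsilon)$ of $2+\int_1^\infty\bigl(1-e^{-\Q(s)}\bigr)\,ds=\C$ (and, for the stated choice of $a,b$, does not exceed $\C$); taking $n$ large enough then yields a value at least $\C-\epsilon$.

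\emph{Main obstacle.} The feasibility part is purely algebraic once the reciprocal identity is noticed. The crux is the limit computation: making the passage from the finite product/sum to the improper integral genuinely quantitative — controlling the substitutions $\ln(1+R_j/p)\leftrightarrow R_j/p$ and $\tfrac{v_j}{v_j+R_j}\leftrightarrow 1-\tfrac{R_j}{v_j}$ across $\Theta(n)$ factors (the errors aggregate to $O(n\delta^2)=O(\delta)$), bounding the Riemann-sum error for an integrand with an integrable singularity at $s=1$, and then carefully tracking the $O(\epsilon)$ contributions coming from $a\ne 1$, $b<\infty$ and $\Q(b)\ne 0$ so that the limiting value is pinned inside $[\C-\epsilon,\C]$ — which is precisely why Example~\ref{exp:opt-ap-lower} is tuned with the cutoffs $a=\Q^{-1}(\ln\tfrac8\epsilon)$ and $b=\tfrac8\epsilon$. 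A minor additional point is legitimizing the use of Fact~\ref{fact:revenue_spm_opt} in the presence of the improper triangle $\tri(\infty)$, e.g.\ by approximating it with finite-$v_0$ triangular distributions.
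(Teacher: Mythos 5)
The paper offers no proof of this statement to compare against: Theorem~\ref{thm:opt_ap_lower} (like Example~\ref{exp:opt-ap-lower} and Fact~\ref{fact:revenue_spm_opt}) is imported verbatim from~\cite{JLTX18} and is only \emph{used} here, in Section~\ref{subsec:extension:bupp_upm_lower}. Judged on its own, your reconstruction is sound. The feasibility half is exact: the identity $1/F_i(p)=1+R_i/p$ on $(0,v_i]$, the bound $\ln(1+x)\le x$, the telescoping $\sum_{i\le i^*}R_i=\R(v_{i^*})\le\R(p)$ (this needs the convention $v_0=\infty$, $\R(v_0)=0$, which is the intended reading of Example~\ref{exp:opt-ap-lower}), and the identity $\ln\tfrac{p+1}{p}+\ln\tfrac{p^2}{p^2-1}=\ln\tfrac{p}{p-1}$ give $\ap(p)\le1$ for every $n$. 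Deferring $\C\ge\opt(\cdot)$ to the upper-bound half of Theorem~\ref{thm:opt_ap} is legitimate and non-circular within this paper's logic. The limit computation is also correct in outline: the substitution $v_iq_i=(1-q_i)R_i$, the passage to $\int_a^b(-\R'(s))e^{-h_b(s)}ds$, the integration by parts (mind the sign, $\tfrac{d}{ds}e^{-h_b(s)}=-\tfrac{\R'(s)}{s}e^{-h_b(s)}>0$, though your final formula is right), and $h_b(s)=\Q(s)-\Q(b)$ via $\Q'=\R'/p$ all check out. Two points to make explicit when writing it up: the $i=1$ increment $R_1=\R(b)=O(\epsilon)$ is not $O(\delta)$ and sits outside the Riemann sum, contributing an extra $\approx\R(b)$ that must be folded into the $\epsilon$-budget; and that budget does close with the stated constants, e.g.\ $a-1\le a\epsilon/8\le\epsilon/4$ from $\Q(a)\le\ln\tfrac{a}{a-1}$, $a\,e^{-\Q(a)}e^{\Q(b)}\le\epsilon/2$, $(e^{\Q(b)}-1)(b-a)\le 2\epsilon/7$, and $\int_b^\infty(1-e^{-\Q})\le\tfrac{1}{b-1}$, which sum to strictly less than $\epsilon$ and leave room to absorb the finite-$n$ discretization error.
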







\section{Proof Overview: Reductions and Optimizations}
\label{sec:overview}
Our main result is obtained by solving Program~(\ref{prog:0}). In this section, we would outline the proof of Theorem~\ref{thm:opt_ap}, and more details are deferred to Section~\ref{sec:upper1}, Section~\ref{sec:upper3} and appendices.

\subsection{Transforming Program~(\ref{prog:0}) into Program~(\ref{prog:1})}
\label{subsec:upper1-program}

We first propose several reductions among feasible instances of Program~(\ref{prog:0}). As a consequence, \emph{the optimal objective value of Program~(\ref{prog:0}) is upper-bounded by that of Program~(\ref{prog:1})}.
\vspace{5.5pt} \\
\fcolorbox{black}{lightgray}{\begin{minipage}{\textwidth}
\begin{align}
\label{prog:1}\tag{P1}
\underset{\{F_i\}_{i = 1}^n \in \reg^n}{\max} &\quad\quad \opt = 2 + \displaystyle{\int_1^{\infty}} \left(1 - \prod\limits_{i = 1}^n D_i(x)\right) dx \\
\label{cstr:value}\tag{C1}
\text{subject to:} &\quad\quad \text{$v_i > 1$ and $\Phi_i(v_i^+) \geq 1$ \quad\quad\quad\quad\quad\quad\quad\quad\quad\quad\quad\quad\quad\quad\, $\forall i \in [n]$} \\
\label{cstr:ap2}\tag{C2}
&\quad\quad \sum\limits_{i = 1}^n \ln\left(1 + p \cdot \frac{1 - F_i(p)}{F_i(p)}\right) \leq \R(p) \eqdef p\ln\left(\frac{p^2}{p^2 - 1}\right) \quad\quad \forall p \in (1, \infty)
\end{align}
\end{minipage}}

%

\paragraph{Special Buyer $\{\tri(1, 1)\}$.}
Given a feasible instance $\{F_i\}_{i = 1}^n$ of Program~(\ref{prog:0}), we would enroll a special buyer $\tri(1, 1)$, that is,
$F_{n + 1}(p) = D_{n + 1}(p) =
\begin{cases}
0 & \forall p \in (0, 1] \\
1 & \forall p \in (1, \infty)
\end{cases}$.
Afterwards,
\vspace{5.5pt} \\
\fcolorbox{white}{lightgray}{\begin{minipage}{\textwidth}
\begin{itemize}
\item The objective value of Program~(\ref{prog:0}) increases to $1 + \sum\limits_{i = 1}^n r_i(0) + \displaystyle{\int_1^{\infty}} \left(1 - \prod\limits_{i = 1}^n D_i(x)\right) dx$;
\item Constraint~(\ref{cstr:ap0}) still holds, since adding $F_{n + 1}$ has no effect on constraint~(\ref{cstr:ap0}).
\end{itemize}
\end{minipage}}
\vspace{5.5pt}

With this special buyer being recruited, each other buyer can contribute to the objective of Program~(\ref{prog:0}) only if his virtual value strictly exceeds $1$. Using the idea developed in~\cite{AHNPY15} to ``tailor'' distributions, we formalize this statement as Lemma~\ref{lem:shape} (whose proof is deferred to Section~\ref{subapp:upper1-shape}). Clearly, Lemma~\ref{lem:shape}.1 corresponds to constraint~(\ref{cstr:value}). For notational brevity, we will not explicitly mention this special buyer $\tri(1, 1)$ from now on.

\begin{lemma}
\label{lem:shape}
In a feasible instance $\{\tri(1, 1)\} \bigcup \{F_i\}_{i = 1}^n$ of Program~(\ref{prog:0}), w.l.o.g. for all $i \in [n]$,
\begin{enumerate}
\item If $q_i > 0$, then $\partial_+ r_i(q) > 1$ for all $q \in [0, q_i)$, $v_i = \frac{r_i(q_i)}{q_i} > 1$ and $\Phi_i(v_i^+) = \lim\limits_{q \rightarrow q_i^-} \partial_+ r_i(q) \geq 1$;
\item $r_i(q)$ is increasing on $q \in [0, q_i]$, and $r_i(q) = \frac{r_i(q_i)}{1 - q_i} \cdot (1 - q)$ when $q \in [q_i, 1]$.
\end{enumerate}
\end{lemma}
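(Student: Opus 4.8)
The plan is to show that, after enrolling the special buyer $\tri(1,1)$, every other distribution can be reshaped so that (i) its revenue-quantile curve becomes a (possibly degenerate) quadrangle whose right portion is exactly the straight segment from $\big(q_i, r_i(q_i)\big)$ to $(1,0)$, and (ii) on the left portion the right-derivative stays strictly above $1$. Both modifications should weakly increase the objective of Program~(\ref{prog:0}) while preserving constraint~(\ref{cstr:ap0}) and regularity. I would treat the two items in the opposite order from how they are stated: first establish the linear tail (item~2), then use it to deduce the virtual-value bound (item~1).

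For item~2, fix a buyer $i$ and consider its concave curve $r_i$ on $[q_i,1]$. Since $q_i$ is the monopoly-quantile, $r_i$ is non-increasing there, and concavity forces $r_i(q)\le \frac{r_i(q_i)}{1-q_i}(1-q)$ is \emph{false} in general — rather the chord lies \emph{below} the curve, so the curve dominates the line connecting $\big(q_i,r_i(q_i)\big)$ and $(1,0)$. The key observation is that replacing $r_i$ on $[q_i,1]$ by this \emph{lower} line (a) only \emph{shrinks} the virtual value CDF $D_i$ pointwise at low virtual values — wait, one must be careful here: lowering $r_i$ on the tail \emph{steepens} the right-derivative $\partial_+ r_i$ near $q_i$, hence \emph{raises} $\Phi_i$ on the relevant range, which pushes probability mass of $D_i$ toward larger virtual values, i.e.\ makes $1-\prod_i D_i(x)$ pointwise larger — so $\opt$ weakly increases. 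Simultaneously, on the value side, lowering the revenue-quantile tail corresponds to lowering $1-F_i(p)$ for $p$ below the new monopoly price, hence $F_i(p)$ increases pointwise, so $\ap(p)$ does not increase and constraint~(\ref{cstr:ap0}) is preserved. The monopoly quantile and monopoly price are unchanged because we only touched $q\ge q_i$ and kept the value $r_i(q_i)$. I would also absorb into this step the degenerate case $q_i=0$, where $v_i=u_i=\infty$; there the linear-tail claim is vacuous and only item~1's first assertion needs checking via the definition of $\partial_+ r_i(0)=\Phi_i(u_i)$ together with the special buyer.

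For item~1, with the tail already linear, I would argue by a similar ``tailoring'' move on $[0,q_i)$: wherever $\partial_+ r_i(q)\le 1$ on this interval, the curve is too flat, and because the special buyer $\tri(1,1)$ already contributes the constant $1$ from virtual value exactly $1$, any virtual value of buyer $i$ that is $\le 1$ contributes nothing extra to the integrand $1-\prod_j D_j(x)$ for $x\le 1$ once we restrict (as constraint~(\ref{cstr:ap0}) lets us, since $p\le1$ is dropped) to $x>1$. Formally, flattening $r_i$ so that $\partial_+ r_i\equiv$ its value at the first quantile where it drops to $1$ — i.e.\ cutting the initial, too-shallow part of the curve and rescaling — does not decrease $\int_1^\infty(1-\prod D_i)$, and again only increases $F_i$ pointwise, so constraint~(\ref{cstr:ap0}) survives. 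After this, $\partial_+ r_i(q)>1$ for all $q\in[0,q_i)$; letting $q\to q_i^-$ and using $\Phi_i(v_i^+)=\lim_{q\to q_i^-}\partial_+r_i(q)$ (the semi-differentiability bullet in Section~\ref{subsec:prelim-rq-curve}) gives $\Phi_i(v_i^+)\ge1$, and evaluating the straightened curve at $q_i$ gives $v_i=r_i(q_i)/q_i>1$ since the curve started at $r_i(0)\ge0$ with slope exceeding $1$.

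The main obstacle I anticipate is verifying that these local surgeries can be performed \emph{consistently for all buyers at once} without breaking the global constraint~(\ref{cstr:ap0}): each individual reshaping only increases $F_i(p)$ pointwise, and since $\ap(p)=p(1-\prod_i F_i(p))$ is monotone decreasing in each $F_i(p)$, doing them simultaneously is still safe — but one must also check that regularity (concavity of each $r_i$) is genuinely preserved by the flattening-plus-linear-tail construction, that the new curve is still continuous at the junction $q_i$, and that the ``w.l.o.g.'' is legitimate, i.e.\ the transformed instance is still feasible and has objective value at least that of the original. Handling the boundary/infinity cases ($v_i=\infty$, probability masses at $u_i$, $q_i=0$) cleanly, and making the monotonicity-of-$D_i$-under-tail-lowering argument rigorous rather than merely plausible, is where the real work lies.
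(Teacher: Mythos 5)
Your overall strategy is the paper's: enroll $\tri(1,1)$, then surgically lower each $r_i$ so that the portion of the curve with $\partial_+ r_i \le 1$ is replaced by a straight segment ending at $(1,0)$, using (i) that a lowered revenue-quantile curve means a pointwise larger $F_i$, so $\ap(p)$ only decreases and constraint~(\ref{cstr:ap0}) survives, and (ii) that the special buyer makes virtual values $\le 1$ irrelevant to the objective. The paper does both of your steps in a single move: set $\qq_i := \min\{q \in [0,q_i] :\, \partial_+ r_i(q) \le 1\}$ (well-defined since $\partial_+ r_i(q_i) \le 0$), keep $r_i$ on $[0,\qq_i]$, and replace everything beyond by the segment from $\big(\qq_i, r_i(\qq_i)\big)$ to $(1,0)$; concavity places this segment below the original curve, and $D_i$ is unchanged on $(1,\infty)$ because only derivatives $\le 1$ were altered.

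Two points in your write-up do not survive scrutiny. First, your item-1 ``flattening'' --- setting $\partial_+ r_i$ equal to its value at the first quantile where it drops to $1$ --- as written extends the curve \emph{upward} on $(\qq_i, q_i]$ (the original derivative there is strictly smaller by concavity), which \emph{decreases} $F_i(p)$ on the corresponding price range and can violate constraint~(\ref{cstr:ap0}); this is the opposite of the ``only increases $F_i$ pointwise'' you assert. The legitimate move is to cut at $\qq_i$ and go straight down to $(1,0)$, which makes $\qq_i$ the new monopoly quantile; your two-stage order (first put the linear tail at the old $q_i$, then fix $[0,q_i)$) forces you to revisit the tail anyway. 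Second, your justification in item 2 that lowering the tail ``pushes mass of $D_i$ toward larger virtual values'' and hence increases $\opt$ is both unnecessary and unreliable: the derivatives on $[q_i,1]$ are all $\le 0$, and under the chord replacement some become more negative and some less, so $D_i$ can move either way at non-positive arguments. The clean observation is that $D_i(x)$ for $x>1$ is determined only by the quantiles where $\partial_+ r_i(q) > 1$, i.e.\ by the untouched portion $[0,\qq_i)$, so the objective is exactly preserved rather than increased. Your remaining checks --- concavity and continuity of the spliced curve, simultaneous application over all buyers via monotonicity of $\prod_i F_i(p)$, and $v_i > 1$ from slope exceeding $1$ starting at $r_i(0)\ge 0$ --- are correct and match the paper.
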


\paragraph{Special Buyer $\{\tri(\infty)\}$.}
We continue to explore worst-case instances of Program~(\ref{prog:0}), and thus obtain the following lemma (for which we will give a sketched proof in Section~\ref{subsec:upper1-overview}). From now on, we will not explicitly mention this special buyer $\tri(\infty)$ anywhere except for Section~\ref{sec:upper1}.
\vspace{5.5pt} \\
\fbox{\begin{minipage}{\textwidth}
\begin{mainlemma}
\label{lem:p/(p+1)}
In a worst-case instance $\{F_i\}_{i = 0}^n$ of Program~(\ref{prog:0}), w.l.o.g.
\begin{enumerate}
\item $F_0(p) = \frac{p}{p + 1}$ for all $p \in (0, \infty)$, that is, $r_0(q) = 1 - q$ for all $q \in [0, 1]$;
\item $r_i(0) = 0$ for all $i \in [n]$.
\end{enumerate}
\end{mainlemma}
\end{minipage}}

\paragraph{Getting Program~(\ref{prog:1}).}
Equipped with Main Lemma~\ref{lem:p/(p+1)}, to investigate a worst-case instance $\{F_i\}_{i = 0}^n$, we can safely rewrite Program~(\ref{prog:0}) as follows:
\vspace{5.5pt} \\
\fcolorbox{white}{lightgray}{\begin{minipage}{\textwidth}
\begin{itemize}
\item The objective value of Program~(\ref{prog:0}) becomes $2 + \displaystyle{\int_1^{\infty}} \left(1 - \prod\limits_{i = 1}^n D_i(x)\right) dx$, which is exactly same to that of Program~(\ref{prog:1});
\item Constraint~(\ref{cstr:ap0}) becomes $p \cdot \left(1 - \frac{p}{p + 1} \cdot \prod\limits_{i = 1}^n F_i(p)\right) \leq 1$, or equivalently,
    \begin{equation}
    \label{cstr:ap1}\tag{C3}
    -\sum\limits_{i = 1}^n \ln F_i(p) = \sum\limits_{i = 1}^n \ln\left(1 + \frac{1 - F_i(p)}{F_i(p)}\right) \leq \ln\left(\frac{p^2}{p^2 - 1}\right) \quad\quad \forall p \in (1, \infty).
    \end{equation}
\end{itemize}
\end{minipage}}
\vspace{5.5pt}

We would use a trick exploited in~\cite{AHNPY15} to relax constraint~(\ref{cstr:ap1}). Clearly, $\ln\left(1 + \frac{x}{p}\right) \geq \frac{1}{p} \cdot \ln(1 + x)$ for all $p > 1$ and $x \geq 0$. Applying this to constraint~(\ref{cstr:ap1}), we acquire constraint~(\ref{cstr:ap2}). Hitherto, it suffices to conquer Program~(\ref{prog:1}) instead of Program~(\ref{prog:0}).

\subsection{Continuous Instance}
\label{subsec:upper2-cont}

Our next step is to solve Program~(\ref{prog:1}). In fact, we have an optimal solution for that. We first introduce this solution and then prove its optimality. Besides, a guiding notion here refers to what we call \emph{continuous instance}: the worst-case instance of Program~(\ref{prog:1}) also falls into this family.

Before defining continuous instance, let us recall two functions (appeared in Program~(\ref{prog:1}) and Theorem~\ref{thm:opt_ap}, respectively): $\R(p) = p\ln\left(\frac{p^2}{p^2 - 1}\right)$ and $\Q(p) = \ln\left(\frac{p^2}{p^2 - 1}\right) - \frac{1}{2}Li_2\left(\frac{1}{p^2}\right)$. It turns out that these two functions are naturally correlated, based on the following mathematical facts (the proofs are deferred in Appendix~\ref{subapp:math_facts:RQ}).

\begin{lemma}[partially proved in~\cite{JLTX18}]
\label{lem:RQ}
For $\R(p) = p\ln\left(\frac{p^2}{p^2 - 1}\right)$ and $\Q(p) = \ln\left(\frac{p^2}{p^2 - 1}\right) - \frac{1}{2}Li_2\left(\frac{1}{p^2}\right)$,
\begin{enumerate}
\item $\R'(p) = p \cdot \Q'(p)$, $\R'(p) < \Q'(p) < 0$ and $\R''(p) > 0$, for all $p \in (1, \infty)$;
\item $\R(\infty) = \Q(\infty) = 0$ and $\R(1^+) = \Q(1^+) = \infty$;
\item $\frac{1}{p} \leq \R(p) \leq \frac{1}{p - 1}$ and $\frac{1}{p^2} \leq \left|\R'(p)\right| \leq \frac{1}{p^2 - p}$, for all $p \in (1, \infty)$.
\end{enumerate}
\end{lemma}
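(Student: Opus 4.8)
The plan is to prove everything by elementary differentiation together with two standard logarithmic inequalities, $\ln(1+x)\le x$ and $\ln(1+x)\ge\frac{x}{1+x}$ for $x\ge0$ (the latter being $\ln\frac{t+1}{t}\ge\frac1{t+1}$). Write $L(p)\eqdef\ln\!\big(\frac{p^2}{p^2-1}\big)=2\ln p-\ln(p^2-1)$, so that $\R(p)=p\,L(p)$ and, by the defining series, $\Q(p)=L(p)-\tfrac12\sum_{k\ge1}\frac{p^{-2k}}{k^2}$. A one-line computation gives $L'(p)=\frac2p-\frac{2p}{p^2-1}=-\frac{2}{p(p^2-1)}$, hence $\R'(p)=L(p)+pL'(p)=L(p)-\frac{2}{p^2-1}$. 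Differentiating the series term by term (legitimate on $(1,\infty)$, where it converges locally uniformly) and using $\sum_{k\ge1}\frac{y^k}{k}=-\ln(1-y)$ at $y=p^{-2}$ gives $\frac{d}{dp}\big(\tfrac12\sum_k\frac{p^{-2k}}{k^2}\big)=-\sum_k\frac{p^{-2k-1}}{k}=\frac1p\ln(1-p^{-2})=-\tfrac1p L(p)$; therefore $\Q'(p)=L'(p)+\tfrac1p L(p)=\tfrac1p\big(pL'(p)+L(p)\big)=\tfrac1p\,\R'(p)$, which is the identity $\R'(p)=p\,\Q'(p)$ in part~1. (Equivalently, one may invoke $\frac{d}{dp}Li_2(p^{-2})=-\tfrac2p L(p)$ via $Li_2'(z)=-\ln(1-z)/z$ and the chain rule.)

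For the remaining claims of part~1, set $t\eqdef p^2-1>0$, so $\R'(p)=\ln(1+\tfrac1t)-\tfrac2t\le\tfrac1t-\tfrac2t<0$ by $\ln(1+x)\le x$; thus $\R'(p)<0$ on $(1,\infty)$, whence $\Q'(p)=\R'(p)/p<0$, and since $p>1$ multiplying a negative number by $p$ strictly decreases it, $\R'(p)=p\Q'(p)<\Q'(p)$, giving the chain $\R'(p)<\Q'(p)<0$. Differentiating $\R'$ once more, $\R''(p)=L'(p)+\frac{4p}{(p^2-1)^2}=\frac{-2(p^2-1)+4p^2}{p(p^2-1)^2}=\frac{2(p^2+1)}{p(p^2-1)^2}>0$. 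Part~2 then follows directly from the closed forms: as $p\to\infty$ we have $L(p)\to0$ and every $p^{-2k}\to0$, so $\R(\infty)=\Q(\infty)=0$ (with $0<\R(p)\le p/(p^2-1)\to0$ making the first limit rigorous), while as $p\to1^+$ we have $L(p)\to+\infty$ whereas $\sum_k p^{-2k}/k^2\to\pi^2/6$ stays bounded, so $\R(1^+)=\Q(1^+)=+\infty$.

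For part~3, all four estimates come from the same two log inequalities applied to $\R(p)=p\ln(1+\tfrac1t)$ and $|\R'(p)|=\tfrac2t-\ln(1+\tfrac1t)$ with $t=p^2-1$. From $\ln(1+x)\le x$: $\R(p)\le\frac{p}{p^2-1}\le\frac1{p-1}$ (since $p+1>p$) and $|\R'(p)|\ge\tfrac2t-\tfrac1t=\tfrac1t=\frac1{p^2-1}>\frac1{p^2}$. From $\ln(1+\tfrac1t)\ge\frac1{t+1}=\frac1{p^2}$: $\R(p)\ge p\cdot\frac1{p^2}=\frac1p$ and $|\R'(p)|\le\tfrac2t-\frac1{t+1}=\frac{t+2}{t(t+1)}=\frac{p^2+1}{p^2(p^2-1)}$, which is $\le\frac1{p(p-1)}=\frac1{p^2-p}$ because cross-multiplying (both sides positive) and cancelling the positive factor $p(p-1)$ reduces it to $p^2+1\le p^2+p$, true for $p\ge1$.

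I do not anticipate a genuine obstacle; the two points needing slight care are (i) justifying the term-by-term differentiation of the dilogarithm series, or equivalently recalling $Li_2'(z)=-\ln(1-z)/z$ and applying the chain rule correctly at $z=p^{-2}$, and (ii) in part~3, using the sharp lower bound $\ln(1+x)\ge x/(1+x)$ rather than a Taylor-type estimate, since a crude quadratic bound for $\ln(1+x)$ is not strong enough to yield $|\R'(p)|\le\frac1{p^2-p}$ near $p=1$.
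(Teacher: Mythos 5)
Your proof is correct and follows essentially the same route as the paper: the identity $\R'(p)=p\,\Q'(p)$ via term-by-term differentiation of the dilogarithm series, the sign and convexity claims from the explicit formulas $\R'(p)=\ln\bigl(1+\tfrac{1}{p^2-1}\bigr)-\tfrac{2}{p^2-1}$ and $\R''(p)=\tfrac{2(p^2+1)}{p(p^2-1)^2}$, and all four bounds in part~3 from the elementary inequalities $\ln(1+x)\le x$ and $\ln(1+x)\ge \tfrac{x}{1+x}$ (the latter being the paper's $\ln(1-x)\le -x$ in disguise), landing on the same intermediate quantity $\tfrac{p^2+1}{p^2(p^2-1)}\le\tfrac{1}{p^2-p}$. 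The only cosmetic difference is part~2, where you read the limits directly off the closed forms (noting the series is bounded near $p=1$ and vanishes at infinity) while the paper routes $\Q$ through the integral representation $\int_p^\infty\frac{-\R'(x)}{x}\,dx$; both are valid.
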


Next, let us define a continuous instance $\cont(\gamma)$, with parameter $\gamma \in [1, \infty)$. Intuitively, it is defined to be an instance with infinitely many triangular distributions where the constraint is tight for all choices of $p\geq \gamma$. A formal definition is shown in Definition~\ref{def:cont}, from which we know $\cont(\gamma)$ satisfies both constraints in Program~(\ref{prog:1}). In particular, constraint~(\ref{cstr:ap2}) is tight in the range of $p \in (\gamma, \infty)$.
\vspace{5.5pt} \\
\fbox{\begin{minipage}{\textwidth}
\begin{definition}
\label{def:cont}
Given $\gamma \in [1, \infty)$ and $m \in \mathbb{N}_+$, triangular instance $\{\tri(v_i, q_i)\}_{i = 1}^{n^2}$ satisfies that
\begin{itemize}
\item $v_i = \gamma + n - \frac{i - 1}{n}$, for all $i \in [n^2]$. For notational simplicity, let $v_0 \eqdef \infty$;
\item $\sum\limits_{j = 1}^i \ln\left(1 + \frac{v_i q_i}{1 - q_i}\right) = \R(v_i)$, that is, $q_i = \frac{e^{\R(v_i) - \R(v_{i - 1})} - 1}{v_i + e^{\R(v_i) - \R(v_{i - 1})} - 1}$, for all $i \in [n^2]$.
\end{itemize}
When $n$ approaches to infinity, $\{\tri(v_i, q_i)\}_{i = 1}^{n^2}$ converges to $\cont(\gamma)$.
\end{definition}
\end{minipage}}
\vspace{5.5pt}

The following lemma calibrates the objective function (i.e., the revenue from Myerson Auction) for continuous instances, and the proof is deferred in Appendix~\ref{subapp:upper2-cont}.

\begin{lemma}
\label{lem:opt_cont}
$\opt(\cont(\gamma)) = 2 + \displaystyle{\int_1^{\infty}} \left(1 - e^{-\Q\big(\max\{x, \gamma\}\big)}\right) dx$ is decreasing for all $\gamma \in [1, \infty)$.
\end{lemma}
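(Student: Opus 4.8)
\textbf{Proof proposal for Lemma~\ref{lem:opt_cont}.}

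The plan is to compute $\opt(\cont(\gamma))$ directly from Fact~\ref{fact:revenue_spm_opt} by taking the limit $n \to \infty$ of the finite triangular approximation $\{\tri(v_i, q_i)\}_{i=1}^{n^2}$ in Definition~\ref{def:cont}, and then to read off monotonicity in $\gamma$ from the resulting closed form. Since the $v_i$ are already listed in decreasing order, Fact~\ref{fact:revenue_spm_opt} gives $\opt = 2 + \sum_{i=1}^{n^2} v_i q_i \prod_{j=1}^{i-1}(1-q_i)$ (the additive $2$ coming from the special buyers $\tri(\infty)$ and $\tri(1,1)$ that have been absorbed into the program, exactly as in the passage from Program~(\ref{prog:0}) to Program~(\ref{prog:1})). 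The first key step is to rewrite the product $\prod_{j=1}^{i-1}(1-q_j)$ using the defining relation $\sum_{j=1}^{i}\ln\bigl(1+\frac{v_i q_i}{1-q_i}\bigr)=\R(v_i)$: I would show $\prod_{j=1}^{i}(1-q_j) = \exp\bigl(-\sum_{j=1}^i \ln(1 + \frac{v_j q_j}{1-q_j}) + \text{correction}\bigr)$, tracking how $\ln\frac{1}{1-q_j}$ relates to $\ln(1+\frac{v_j q_j}{1-q_j})$; since $q_j = \Theta(1/n)$ in the limit, these two quantities differ only at second order, so both sums converge to $\R(v_i) \to \R(x)$ in the limit. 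Hence $\prod_{j=1}^{i-1}(1-q_j) \to e^{-\R(x)}$ where $x$ is the limiting value of $v_i$.

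The second key step is to identify the increments $v_i q_i \prod_{j=1}^{i-1}(1-q_i)$ as a Riemann sum. From the explicit formula $q_i = \frac{e^{\R(v_i)-\R(v_{i-1})}-1}{v_i + e^{\R(v_i)-\R(v_{i-1})}-1}$ and $v_{i-1}-v_i = 1/n \to 0$, a first-order expansion gives $e^{\R(v_i)-\R(v_{i-1})}-1 \approx \R'(v_i)(v_i - v_{i-1}) = -\R'(v_i)/n \cdot n \cdot (\ldots)$; more carefully $v_i q_i \approx -\frac{v_i}{v_i}\cdot\ldots$, and I expect $v_i q_i \to $ (something like) $-\frac{v_i}{v_i}\R'(v_i)\,dx$ up to lower-order terms, so that the sum becomes $\int \bigl(-v \cdot \frac{\R'(v)}{v}\bigr) e^{-\R(v)}\,dv = \int -\R'(v) e^{-\R(v)}\,dv$ over the appropriate range. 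Here is where Lemma~\ref{lem:RQ}.1, namely $\R'(p) = p\,\Q'(p)$, enters: it converts $-\R'(v)\,dv$ into $-v\,\Q'(v)\,dv$, but actually the cleaner route is to observe that the cumulative weight $\prod(1-q_j) \to e^{-\R(x)}$ must be replaced — because of the second-order correction between $\ln\frac{1}{1-q}$ and $\ln(1+\frac{vq}{1-q})$ — by $e^{-\Q(x)}$; indeed $\Q(p) = \R(p)/p$ only in spirit, but the precise statement is that the $\frac12 Li_2(1/p^2)$ term in $\Q$ is exactly the accumulated second-order discrepancy $\sum \frac12 q_j^2 \cdot(\ldots)$ in the limit. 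I would make this rigorous by writing $\ln(1-q_j) = -\ln(1+\tfrac{v_jq_j}{1-q_j}) + [\ln(1-q_j)+\ln(1+\tfrac{v_jq_j}{1-q_j})]$ and checking the bracketed telescoping-type correction sums to $-\tfrac12 Li_2(1/x^2)$ (using $q_j \approx \R'(v_j)/((v_j)(\ldots))$ and Lemma~\ref{lem:RQ}.3 to control tails). This yields $\opt(\cont(\gamma)) = 2 + \int_{?}^{\infty}(\ldots)$; matching the integrand to $1 - e^{-\Q(\max\{x,\gamma\})}$ requires noting that for $x < \gamma$ the constraint is slack, no triangular distribution is placed there, so the product $\prod D_i(x) = 1$ on $(1,\gamma)$ — equivalently the integrand is $0$ there, which is precisely $1 - e^{-\Q(\gamma)}$ after one checks... wait, no: for $x<\gamma$ the virtual-value CDFs all equal $1$ so $1-\prod D_i(x)=0$, and on $(1,\gamma)$ we should get integrand $0$; the formula $1-e^{-\Q(\max\{x,\gamma\})}$ gives $1-e^{-\Q(\gamma)}\neq 0$ there, so the correct reading is that the $\int_1^\infty$ in the statement already incorporates the $\max$, and on $(1,\gamma)$ the value $1-e^{-\Q(\gamma)}$ is a constant contribution — I would reconcile this by recomputing $\opt$ as $2 + \int_\gamma^\infty(1-e^{-\Q(x)})dx + (\gamma-1)(1-e^{-\Q(\gamma)})$ and verifying this equals the stated $2+\int_1^\infty(1-e^{-\Q(\max\{x,\gamma\})})dx$, which is immediate since the integrand is constant $=1-e^{-\Q(\gamma)}$ on $[1,\gamma]$ and equals $1-e^{-\Q(x)}$ on $[\gamma,\infty)$.

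Finally, monotonicity in $\gamma$ is a short calculus step: differentiate $g(\gamma) \eqdef \int_1^\infty (1 - e^{-\Q(\max\{x,\gamma\})})\,dx = (\gamma-1)(1-e^{-\Q(\gamma)}) + \int_\gamma^\infty (1-e^{-\Q(x)})\,dx$ with respect to $\gamma$. The derivative is $(1-e^{-\Q(\gamma)}) + (\gamma-1)\Q'(\gamma)e^{-\Q(\gamma)} - (1-e^{-\Q(\gamma)}) = (\gamma-1)\,\Q'(\gamma)\,e^{-\Q(\gamma)}$, which is $\le 0$ because $\gamma \ge 1$ and $\Q'(\gamma) < 0$ by Lemma~\ref{lem:RQ}.1 (and strictly negative for $\gamma > 1$). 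Hence $\opt(\cont(\gamma))$ is decreasing on $[1,\infty)$, as claimed.

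\textbf{Main obstacle.} The delicate part is the limit computation — specifically, rigorously justifying why the accumulated weight is $e^{-\Q(x)}$ rather than the naive $e^{-\R(x)}$. The dilogarithm term $\tfrac12 Li_2(1/p^2)$ is precisely the second-order Euler–Maclaurin-type correction between $\sum \ln\frac{1}{1-q_j}$ (which controls the product appearing in Fact~\ref{fact:revenue_spm_opt}) and $\sum \ln(1+\frac{v_jq_j}{1-q_j})$ (which is pinned to $\R(v_i)$ by Definition~\ref{def:cont}); getting the constant in front of $Li_2$ exactly right, and uniformly controlling the error terms near $p = 1$ (where both $\R$ and $\Q$ blow up) and near $p = \infty$ (using the tail bounds in Lemma~\ref{lem:RQ}.3), is where the real work lies. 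Everything after the closed form is elementary.
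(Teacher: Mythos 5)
Your overall architecture (take the $n\to\infty$ limit of the triangular approximation, then differentiate in $\gamma$) matches the paper's, and your closing calculus step for monotonicity --- $g'(\gamma)=(\gamma-1)\,\Q'(\gamma)\,e^{-\Q(\gamma)}\le 0$ by Lemma~\ref{lem:RQ}.1 --- is correct and in fact more explicit than the paper's picture-based remark. However, the heart of your argument, the limit of the cumulative product, contains a genuine error. You claim $\prod_{j<i}(1-q_j)\to e^{-\R(x)}$ on the grounds that $\ln\frac{1}{1-q_j}$ and $\ln\bigl(1+\frac{v_jq_j}{1-q_j}\bigr)$ ``differ only at second order.'' They do not: to first order the former is $q_j$ while the latter is $v_jq_j$, a multiplicative discrepancy of $v_j$ that survives in the limit. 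From Definition~\ref{def:cont}, $1-q_j=\bigl(1+\frac{e^{\R(v_j)-\R(v_{j-1})}-1}{v_j}\bigr)^{-1}$, so $-\ln(1-q_j)\approx q_j\approx\frac{|\R'(v_j)|}{v_j}\,(v_{j-1}-v_j)$, and the Riemann sum gives
\[
\prod_{j:\,v_j>p}(1-q_j)\;\longrightarrow\;\exp\left(-\int_p^\infty\frac{|\R'(x)|}{x}\,dx\right)=e^{-\Q(p)},
\]
where the last equality is \emph{exact} because $\Q'(p)=\R'(p)/p$ (Lemma~\ref{lem:RQ}.1) and $\Q(\infty)=0$. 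This is precisely how the paper proceeds. Consequently your attempt to recover $\Q$ from $\R$ via an ``accumulated second-order $\tfrac12 Li_2(1/p^2)$ correction'' is chasing a phantom: $\Q$ is not $\R$ plus a dilogarithm correction to be assembled from $\sum q_j^2$; the entire function $\Q$ already arises at first order from the $1/v_j$ factor, and no Euler--Maclaurin analysis or tail-matching of the $Li_2$ term is needed.

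A secondary, fixable gap: if you start from Fact~\ref{fact:revenue_spm_opt} rather than from the integral formula of~\cite{JLTX18} that the paper invokes, the additive constant is not simply $2$. The buyer $\tri(1,1)$ comes last in the decreasing-$v$ order and contributes $1\cdot\prod_{i=1}^{n^2}(1-q_i)\to e^{-\Q(\gamma)}$, not $1$; only after combining this with $\sum_i v_iq_i\prod_{j<i}(1-q_j)\to \gamma\bigl(1-e^{-\Q(\gamma)}\bigr)+\int_\gamma^\infty\bigl(1-e^{-\Q(x)}\bigr)dx$ (an integration by parts, using $|\R'(v)|e^{-\Q(v)}=v\,|\Q'(v)|\,e^{-\Q(v)}$) does one recover $2+(\gamma-1)(1-e^{-\Q(\gamma)})+\int_\gamma^\infty(1-e^{-\Q(x)})dx$, which is the stated formula. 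The paper sidesteps this bookkeeping by writing $\opt=2+\int_1^\infty\bigl[1-\prod_{i:v_i>x}(1-q_i)\bigr]dx$ and passing the limit inside the integral, which requires only the single product limit above.
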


Notably, by assigning $\gamma \leftarrow 1$ in Lemma~\ref{lem:opt_cont}, we capture the formula in Theorem~\ref{thm:opt_ap}.

\subsection{Catching Optimal Solution}
\label{subsec:overview:opt}

A high level idea to prove the optimality of $\cont(1)$ is, that we can covert any a feasible instance of Program~(\ref{prog:1}) into some continuous instance $\cont(\hgamma)$, without hurting the objective function.

\paragraph{Potential Function.}
A crucial ingredient used in deriving optimal solution is the potential function defined as follows. For a regular distribution $F_i \in \reg$, define its potential function
\[
\Psi\big(p, \{F_i\}\big) \eqdef \ln\left(1 + p \cdot \frac{1 - F_i(p)}{F_i(p)}\right) \quad\quad\quad\quad \forall p \in (0, \infty).
\]
For simplicity, we abuse notations as $\Psi\big(\{F_i\}\big) \eqdef \ln\left(1 + \frac{v_i q_i}{1 - q_i}\right)$ and $\Psi\big(p, \{F_i\}_{i = 1}^n\big) \eqdef \sum\limits_{i = 1}^n \Psi\big(p, \{F_i\}\big)$.

As shown in Lemma~\ref{lem:potential}, potential function $\Psi\big(p, \{F_i\}\big)$ is strictly decreasing on $p \in [v_i, u_i]$, and remains maximum $\Psi\big(\{F_i\}\big)$ when $p \in (0, v_i]$, and inherently remains $0$ when $p \in (u_i, \infty)$.

\begin{lemma}
\label{lem:potential}
Potential function $\Psi\big(p, \{F_i\}\big) \eqdef \ln\left(1 + p \cdot \frac{1 - F_i(p)}{F_i(p)}\right)$ is strictly decreasing on $p \in [v_i, u_i]$, and remains $\ln\left(1 + \frac{v_i q_i}{1 - q_i}\right)$ when $p \in (0, v_i]$.
\end{lemma}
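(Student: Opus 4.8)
The plan is to collapse both claims onto the single quantity $h_i(p) \eqdef p \cdot \frac{1 - F_i(p)}{F_i(p)}$, for which $\Psi\big(p, \{F_i\}\big) = \ln\big(1 + h_i(p)\big)$. Since $t \mapsto \ln(1 + t)$ is strictly increasing on $[0, \infty)$ and $h_i \geq 0$, the lemma is equivalent to: (i) $h_i(p) \equiv \frac{v_i q_i}{1 - q_i}$ for $p \in (0, v_i]$; and (ii) $h_i$ is strictly decreasing on $[v_i, u_i]$. As a free byproduct $h_i(p) = 0$ for $p > u_i$, so the constant in (i) is the global maximum of $\Psi\big(p, \{F_i\}\big)$, which matches the surrounding discussion.

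For (i), I would invoke the normalization already secured (w.l.o.g.) in Lemma~\ref{lem:shape}: on the quantile range $q \in [q_i, 1]$ the revenue-quantile curve is the line segment $r_i(q) = \frac{r_i(q_i)}{1 - q_i}(1 - q)$ joining $(q_i, r_i(q_i))$ to $(1, 0)$, which is exactly the revenue-quantile curve of the triangular distribution $\tri(v_i, q_i)$ on that range. The map $q \mapsto r_i(q)/q$ is a decreasing bijection from $[q_i, 1)$ onto $(0, v_i]$, so this segment pins down the value CDF on the corresponding price range: $F_i(p) = \frac{p(1 - q_i)}{p(1 - q_i) + v_i q_i} = \frac{p}{p + c}$ for all $p \in (0, v_i]$, where $c \eqdef \frac{v_i q_i}{1 - q_i}$ (this is the triangular CDF formula of Section~\ref{subsec:extension:opt_ap_lower}). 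Substituting, $h_i(p) = p \cdot \frac{c/(p + c)}{p/(p + c)} = c$, a constant, so $\Psi\big(p, \{F_i\}\big) = \ln(1 + c) = \ln\big(1 + \frac{v_i q_i}{1 - q_i}\big) = \Psi\big(\{F_i\}\big)$ throughout $(0, v_i]$.

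For (ii), on $(v_i, u_i)$ regularity ensures $F_i$ is continuous and strictly increasing with a well-defined positive density $f_i$ (single closed-interval support, a mass only possibly at the supremum, as recorded in Section~\ref{subsec:prelim-rq-curve}), so I would differentiate directly: $h_i'(p) = \frac{F_i(p)\big(1 - F_i(p)\big) - p\, f_i(p)}{F_i(p)^2}$. Because $v_i$ is the monopoly price we have the standard fact $\Phi_i(v_i^+) \geq 0$, and monotonicity of the virtual value function then gives $\Phi_i(p) \geq 0$ — i.e. $p\, f_i(p) \geq 1 - F_i(p)$ — for every $p \in (v_i, u_i)$; since also $0 < F_i(p) < 1$ there, $p\, f_i(p) \geq 1 - F_i(p) > F_i(p)\big(1 - F_i(p)\big)$, so the numerator is strictly negative and $h_i' < 0$ on all of $(v_i, u_i)$. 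Hence $h_i$ is strictly decreasing there; by continuity of $h_i$ at the endpoints — with $h_i(v_i) = v_i \cdot \frac{q_i}{1 - q_i} = c$, consistent with (i) — it is strictly decreasing on the closed interval $[v_i, u_i]$. For $p > u_i$, $F_i(p) = 1$ gives $h_i(p) = 0$ and $\Psi\big(p, \{F_i\}\big) = 0 < \ln(1 + c)$; the degenerate cases $v_i = u_i$ or $v_i = \infty$ make (ii) vacuous, and $u_i = \infty$ is covered by running the same computation on $(v_i, \infty)$.

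The derivative computation and the substitution are routine; the step that needs genuine care is (i), where one must move correctly between the revenue-quantile and value-CDF representations to reach the form $F_i(p) = \frac{p}{p + c}$, and must honestly rely on the reduction of Lemma~\ref{lem:shape} — a regular distribution whose revenue-quantile tail on $[q_i, 1]$ is not this linear segment (e.g. an untruncated equal-revenue tail) does not satisfy $\Psi\big(p, \{F_i\}\big) \equiv \ln\big(1 + \frac{v_i q_i}{1 - q_i}\big)$ on $(0, v_i]$, so that normalization is precisely where the statement is living. A lesser nuisance is bookkeeping the boundary behaviour (a possible mass at $u_i$, continuity at $p = v_i$ where the constant and strictly-decreasing regimes meet, and the infinite/degenerate cases), but none of it requires a new idea.
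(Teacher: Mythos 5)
Your proof is correct. The paper does not actually supply a proof of Lemma~\ref{lem:potential} (it is stated and then used as self-evident), so there is nothing to diverge from; your argument---writing $\Psi\big(p,\{F_i\}\big)=\ln\big(1+h_i(p)\big)$ with $h_i(p)=p\cdot\frac{1-F_i(p)}{F_i(p)}$, computing $h_i'=\frac{F_i(1-F_i)-pf_i}{F_i^2}$ and killing the numerator via $\Phi_i(p)\geq\Phi_i(v_i^+)\geq 0$ together with $F_i<1$---is the natural one and is exactly what the authors presumably intend. You also correctly flag the one genuine subtlety: the constancy of $\Psi\big(p,\{F_i\}\big)$ on $(0,v_i]$ is \emph{not} a consequence of regularity alone but of the w.l.o.g.\ linear tail $r_i(q)=\frac{r_i(q_i)}{1-q_i}(1-q)$ on $[q_i,1]$ from Lemma~\ref{lem:shape}.2 (equivalently $F_i(p)=\frac{p}{p+c}$ on $(0,v_i]$), which the paper leaves implicit in the statement.
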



\begin{figure}[H]
\centering
\subfigure[Potential function of the existing $F_i$]{
\begin{tikzpicture}[thick, smooth, scale = 2]
\draw[->] (0, 0) -- (2, 0);
\draw[->] (0, 0) -- (0, 1.1);
\node[above] at (0, 1.1) {\small $\Psi\big(p, \{F_i\}\big)$};
\node[right] at (2, 0) {\small $p$};
\node[below] at (1.8, 0) {\small $u_i$};
\node[below] at (1.4727, 0.8pt) {\small $\uu_i$};
\node[below] at (0.6, 0) {\small $v_i$};
\node[below] at (0, 0) {\small $0$};
\node[left] at (0, 1) {\small $\Psi\big(\{F_i\}\big)$};
\node[left] at (0, 0.2) {\small $\Delta_i$};

\draw[thin, dashed] (1.8, 0.1) -- (1.8, 0);
\draw[thin, dashed] (0.6, 1) -- (0.6, 0);
\draw[thin, dashed] (1.4727, 0) -- (1.4727, 0.2);
\draw[thin, dashed] (0, 0.2) -- (1.4727, 0.2);
\draw[color = blue] (0, 1) -- (0.6, 1);
\draw[color = blue, domain = 0.6: 1.8] plot (\x, {-0.35 + 0.81 / \x});

\draw[very thick] (0pt, 1) -- (1.25pt, 1);
\draw[very thick] (0pt, 0.2) -- (1.25pt, 0.2);
\draw[very thick] (1.8, 0pt) -- (1.8, 1.25pt);
\draw[very thick] (0.6, 0pt) -- (0.6, 1.25pt);
\draw[very thick] (1.4727, 0pt) -- (1.4727, 1.25pt);
\draw[thin, blue, fill = red] (1.8, 0.1) circle (0.5pt);
\draw[thin, blue, fill = red] (0.6, 1) circle (0.5pt);
\draw[thin, blue, fill = green] (1.4727, 0.2) circle (0.5pt);
\end{tikzpicture}
\label{fig:potential1}
}
\quad\quad\quad\quad\quad\quad
\subfigure[Potential function of the new $\FF_i$]{
\begin{tikzpicture}[thick, smooth, scale = 2]
\draw[->] (0, 0) -- (1.7, 0);
\draw[->] (0, 0) -- (0, 1.1);
\node[above] at (0, 1.1) {\small $\Psi\big(p, \{\FF_i\}\big)$};
\node[right] at (1.7, 0) {\small $p$};
\node[below] at (1.4727, 0) {\small $\uu_i \leq u_i$};
\node[below] at (0.6, 0) {\small $\vv_i = v_i$};
\node[below] at (0, 0) {\small $0$};
\node[left] at (0, 0.8) {\small $\Psi\big(\{\FF_i\}\big)$};

\draw[thin, dashed] (0.6, 0.8) -- (0.6, 0);
\draw[color = blue] (0, 0.8) -- (0.6, 0.8);
\draw[color = blue, domain = 0.6: 1.4727] plot (\x, {-0.55 + 0.81 / \x});

\draw[very thick] (0pt, 0.8) -- (1.25pt, 0.8);
\draw[very thick] (1.4727, 0pt) -- (1.4727, 1.25pt);
\draw[very thick] (0.6, 0pt) -- (0.6, 1.25pt);
\draw[thin, blue, fill = red] (1.4727, 0) circle (0.5pt);
\draw[thin, blue, fill = red] (0.6, 0.8) circle (0.5pt);
\end{tikzpicture}
\label{fig:potential2}
}
\caption{Demonstration for potential-based construction of $\{\FF_i\}_{i = 1}^n$ from $\{F_i\}_{i = 1}^n$}
\label{fig:potential}
\end{figure}
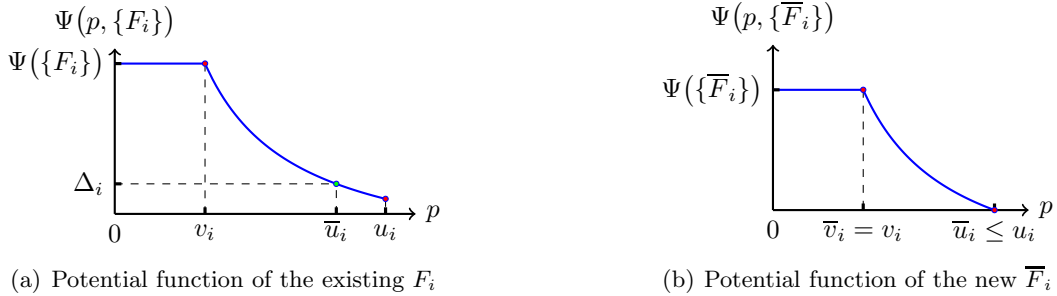
We note that the potential function for a triangular distribution is simply degenerated step function, and thus the potential argument was not needed in previous papers~\cite{AHNPY15,JLTX18}.

\paragraph{Recursive Construction.}
We adopt the most natural proof plan: \emph{construct a \textbf{list} (finite sequence) of instances, where the \textbf{head} is the given feasible instance $\{F_i\}_{i = 1}^n$, and the \textbf{tail} is a continuous instance $\cont(\hgamma)$}.
In the middle of the list, we have ``hybrid'' instances with both continuous and discrete components. For brevity, in an iteration of the recursive construction, re-denote by $\cont(\gamma) \cup \{F_i\}_{i = 1}^n$ and $\cont(\ggamma) \cup \{\FF_i\}_{i = 1}^n$ the existing and new instances, respectively. The new instance $\cont(\ggamma) \cup \{\FF_i\}_{i = 1}^n$ is acquired as follows:
\begin{description}
\item \framebox{\emph{Discrete-component}}: The new discrete-component $\{\FF_i\}_{i = 1}^n$ is derived from the existing discrete-component $\{F_i\}_{i = 1}^n$ in a ``\emph{uniform}'' fashion. \\
    \framebox{\emph{Local-uniformity}}: As Figure~\ref{fig:potential} demonstrates, for each $i \in [n]$, the new distribution $\FF_i$ is constructed by uniformly ``\emph{diminishing}'' the existing distribution $F_i$ as follows:
    \[
    \Psi\big(p, \{\FF_i\}\big) = \Psi\big(p, \{F_i\}\big) - \Delta_i \quad\quad \forall p \in (0, \uu_i],
    \]
    where $\Delta_i \geq 0$ is \emph{potential-decrease} contributed by this distribution, and $\uu_i$ is support-supremum of the new distribution $\FF_i$; \\
    \framebox{\emph{Global-uniformity}}: Let $u$ and $\uu$ be support-supremum of $\{F_i\}_{i = 1}^n$ and that of $\{\FF_i\}_{i = 1}^n$, respectively. In fact, the above $\Delta_i$'s and $\uu_i$'s are carefully chosen:
    \begin{itemize}
    \item Each distribution $F_i$ is ``\emph{strictly diminish}'' (that is, $\Delta_i > 0$) when $u_i = u$; and if so, each such new distribution $\FF_i$ conformably satisfies that $\uu_i = \uu$;
    \item Support-supremum $\uu$ of $\{\FF_i\}_{i = 1}^n$ is carefully chosen, so that total potential-decrease $\Delta \eqdef \sum\limits_{i = 1}^n \Delta_i$ ``\emph{usually}''\footnote{When potential-decrease $\Delta$ (in a specific iteration) is less than step-size $\Delta^*$, at least one $F_i$ will become vanishing (that is, $\Psi\big(\{\FF_i\}\big) = 0$). Clearly, this special case happens at most $n$ times. More details will be provided soon after.} equals to pre-fixed step-size $\Delta^* > 0$.
    \end{itemize}
    The \emph{local-uniformity} w.r.t. support-supremum will be formalized as Lemma~\ref{lem:alg_facts}.1 in Section~\ref{subsec:upper3-subroutine}. Moreover, the \emph{global-uniformity} w.r.t. potential-decrease means that the iteration terminates in finite iterations. Soon after, we will see these uniformities are crucial for proving the feasibility and optimality.
\item \framebox{\emph{Continuous-component}}: The new continuous-component $\cont(\ggamma)$ is obtained by ``\emph{augmenting}'' the continuous-component existing $\cont(\gamma)$, so as to ``\emph{offset}'' influences (on the constraints and objective value of Program~(\ref{prog:1})) caused by ``\emph{diminishing}'' the discrete-component.
\end{description}

\paragraph{Main Remarks.}
The benefit from such recursive and potential-based construction (refer to the involved ``\emph{uniformities}'', in particular) is twofold. To see so, let us first rewrite constraint~(\ref{cstr:ap2}) for the existing instance $\cont(\gamma) \cup \{F_i\}_{i = 1}^n$:
\vspace{5.5pt} \\
\fcolorbox{white}{lightgray}{\begin{minipage}{\textwidth}
\begin{equation}
\tag{\ref{cstr:ap2}}
\R\big(\max\{p, \gamma\}\big) + \Psi\big(p, \{F_i\}_{i = 1}^n\big) \leq \R(p) \quad\quad \forall p \in (1, \infty)
\end{equation}
\end{minipage}}
\vspace{0pt} \\
Observe that ``\emph{augmenting}'' continuous-component is just to ``\emph{offset}'' the slack of constraint~(\ref{cstr:ap2}) caused by ``\emph{diminishing}'' discrete-component. Based on the \emph{local-uniformity} and \emph{global-uniformity}, it is not hard to see that the new instance $\cont(\ggamma) \cup \{\FF_i\}_{i = 1}^n$ also satisfies constraint~(\ref{cstr:ap2}).

Additionally, the \emph{global-uniformity} w.r.t. support-supremum (cooperated with other details) allows us to decompose an iteration  into $n$ \emph{simpler processes}: In the $k$-th process,
\vspace{5.5pt} \\
\fcolorbox{white}{lightgray}{\begin{minipage}{\textwidth}
We only transform one existing distribution $F_k$ into another new distribution $\FF_k$, and then ``\emph{augment}'' the continuous-component correspondingly.
\end{minipage}}
\vspace{0pt} \\
Afterwards, we can exploit standard tools from mathematical analysis to verify, in each of these $n$ processes, that the objective value of Program~(\ref{prog:1}) monotonically increases. By induction, we conclude that $\opt\big(\cont(\ggamma) \cup \{\FF_i\}_{i = 1}^n\big) \geq \opt\big(\cont(\gamma) \cup \{F_i\}_{i = 1}^n\big)$.

\section{Main Proof: Reduction Part}
\label{sec:upper1}
In this and the next section, we will implement the aforementioned proof plan. We will describe most proof ideas and proof steps with sufficient details. To improve the readability, however, verifications of a few lemmas (those extremely technically involved ones) are postponed to appendices. This is why we call these two sections ``Main Proof''.

\subsection{Proof of Lemma~\ref{lem:shape}}
\label{subapp:upper1-shape}

\paragraph{[Lemma~\ref{lem:shape}].}
\emph{In a feasible instance $\{F_i\}_{i = 1}^{n + 1}$ of Program~(\ref{prog:0}), w.l.o.g. for all $i \in [n]$,
\begin{enumerate}
\item If $q_i > 0$, then $\partial_+ r_i(q) > 1$ for all $q \in [0, q_i)$, $v_i = \frac{r_i(q_i)}{q_i} > 1$ and $\Phi_i(v_i^+) = \lim\limits_{q \rightarrow q_i^-} \partial_+ r_i(q) \geq 1$;
\item $r_i(q)$ is increasing on $q \in [0, q_i]$, and $r_i(q) = \frac{r_i(q_i)}{1 - q_i} \cdot (1 - q)$ when $q \in [q_i, 1]$.
\end{enumerate}}

\begin{proof}
For each $i \in [n]$, the regularity of $F_i$ implies that revenue-quantile curve $r_i(q)$ is concave on $q \in [0, 1]$ (see Section~\ref{subsec:prelim-reg-vv}). As Figure~\ref{fig:lem:shape} demonstrates, we would ``tailor'' $F_i$ as follows: In term of revenue-quantile curve, let\footnote{We know $\partial_+ r_i(q_i) \leq 0 < 1$, in that $q_i = \mathop{\arg\max} \{r_i(q): \,\, q \in [0, 1]\}$. Therefore, $\qq_i = \min \{q \in [0, q_i]: \,\, \partial_+ r_i(q) \leq 1\}$ must be well-defined in $[0, q_i]$.} $\qq_i \eqdef \min \{q \in [0, q_i]: \, \partial_+ r_i(q) \leq 1\}$, and define
\[
\rr_i(q) \eqdef
\begin{cases}
r_i(q) & \forall q \in [0, \qq_i] \\
\frac{r_i(\qq_i)}{1 - \qq_i} \cdot (1 - q) & \forall q \in (\qq_i, 1]
\end{cases}.
\]
Clearly, $\rr_i(q)$ is also concave on $q \in [0, 1]$, and satisfies both claims in the lemma.

\begin{figure}[H]
\centering
\subfigure[Original $r_i(q)$]{
\begin{tikzpicture}[thick, smooth, scale = 2.5]
\draw[->] (0, 0) -- (1.1, 0);
\draw[->] (0, 0) -- (0, 1.1);
\node[above] at (0, 1.1) {\small $r_i(q)$};
\node[right] at (1.1, 0) {\small $q$};
\node[below] at (0, 0) {\small $0$};
\draw[very thick, color = blue, domain = 0: 0.6] plot (\x, {1.16 - (\x - 1)^2});
\draw[very thick, color = blue, domain = 0.6: 1] plot (\x, {-6.25 * (\x - 0.2) * (\x - 1)});
\draw[dashed] (0, 1) -- (0.6, 1);
\draw[dashed] (0.4, 0) -- (0.4, 0.8);
\draw[dashed] (0, 0.8) -- (0.4, 0.8);
\draw[dashed] (0.4, 0.8) -- (1, 0);
\draw[color = blue, fill = red] (0.4, 0.8) circle(0.625pt);
\draw[very thick] (1, 0) -- (1, 1.25pt);
\node[below] at (1, 0) {\small $1$};
\node[below] at (0.4, 0.025) {\small $\qq_i$};
\draw[very thick] (0, 1) -- (1.25pt, 1);
\draw[very thick] (0, 0.16) -- (1.25pt, 0.16);
\node[left] at (0, 1) {\small $r_i(q_i) = v_i q_i$};
\node[left] at (0, 0.8) {\small $r_i(\qq_i)$};
\node[left] at (0, 0.16) {\small $r_i(0)$};
\draw[very thick] (0.4, 0) -- (0.4, 1.25pt);
\draw[very thick] (0, 0.8) -- (1.25pt, 0.8);
\end{tikzpicture}
\label{fig:lem:shape.1}
}
\quad\quad\quad\quad\quad\quad
\subfigure[$\rr_i(q)$ after tailoring]{
\begin{tikzpicture}[thick, smooth, scale = 2.5]
\draw[->] (0, 0) -- (1.1, 0);
\draw[->] (0, 0) -- (0, 1.1);
\node[above] at (0, 1.1) {\small $\rr_i(q)$};
\node[right] at (1.1, 0) {\small $q$};
\node[below] at (0, 0) {\small $0$};
\draw[very thick, color = blue, domain = 0: 0.4] plot (\x, {1.16 - (\x - 1)^2});
\draw[very thick, color = blue] (0.4, 0.8) -- (1, 0);
\draw[dashed] (0.4, 0) -- (0.4, 0.8);
\draw[dashed] (0, 0.8) -- (0.4, 0.8);

\node[left] at (0, 0.16) {\small $\rr_i(0)$};

\draw[color = blue, fill = red] (0.4, 0.8) circle(0.625pt);
\draw[very thick] (1, 0) -- (1, 1.25pt);
\node[below] at (1, 0) {\small $1$};
\node[below] at (0.4, 0.025) {\small $\qq_i$};
\draw[very thick] (0, 0.16) -- (1.25pt, 0.16);
\node[left] at (0, 0.8) {\small $\rr_i(\qq_i) = \vv_i \qq_i$};
\draw[very thick] (0.4, 0) -- (0.4, 1.25pt);
\draw[very thick] (0, 0.8) -- (1.25pt, 0.8);
\end{tikzpicture}
\label{fig:lem:shape.2}
}
\caption{Demonstration for reduction in Lemma~\ref{lem:shape}.}
\label{fig:lem:shape}
\end{figure}
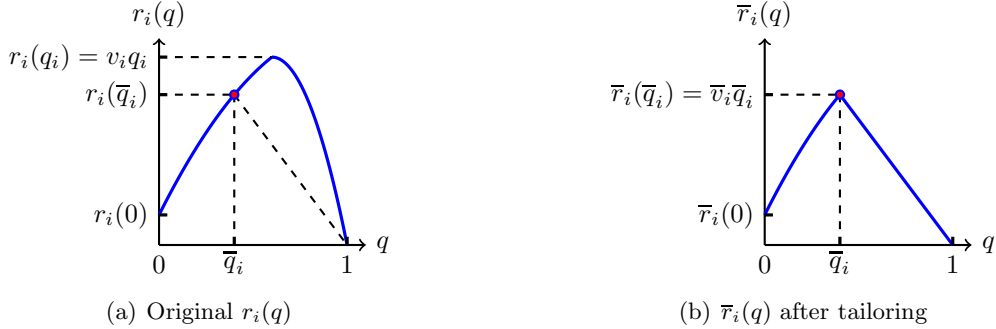

It remains to confirm the feasibility and optimality of $\{\FF_i\}_{i = 1}^n \bigcup \{F_{n + 1}\}$. Recall the reductions between a CDF and its revenue-quantile curve (see Fact~\ref{fact1}). For each $i \in [n]$, the concavity of $r_i(q)$ implies that $\rr_i(q) \leq r_i(q)$ for all $q \in [0, 1]$ and thus, $\FF_i(p) \geq F_i(p)$ for all $p \in (0, \infty)$. Therefore,
\[
\ap(p, \{\FF_i\}_{i = 1}^n \bigcup \{F_{n + 1}\}) \leq \ap(p, \{F_i\}_{i = 1}^{n + 1}) \leq 1,
\]
for all $p \in (0, \infty)$. On the other hand, we surely have $\rr_i(0) = r_i(0)$, and can infer from Fact~\ref{fact1}.2 that $\DD_i(p) = D_i(p)$ for all $p \in (1, \infty)$. Hence,
\[
\begin{aligned}
\opt\big(\{\FF_i\}_{i = 1}^n \bigcup \{F_{n + 1}\}\big)
= & 1 + \sum\limits_{i = 1}^n \rr_i(0) + \displaystyle{\int_1^{\infty}} \left(1 - \prod\limits_{i = 1}^n \DD_i(x)\right) dx \\
= & 1 + \sum\limits_{i = 1}^n r_i(0) + \displaystyle{\int_1^{\infty}} \left(1 - \prod\limits_{i = 1}^n D_i(x)\right) dx
= \opt\big(\{F_i\}_{i = 1}^{n + 1}\big).
\end{aligned}
\]
This completes the proof of the lemma.
\end{proof}

\subsection{Proof Plan of Main Lemma~\ref{lem:p/(p+1)}}
\label{subsec:upper1-overview}

The proof of Main Lemma~\ref{lem:p/(p+1)} consists of the following three parts.
\vspace{5.5pt} \\
\fbox{\begin{minipage}{\textwidth}
\paragraph{[Main Lemma~\ref{lem:p/(p+1)}].}
\emph{In a worst-case instance $\{F_i\}_{i = 0}^n$ of Program~(\ref{prog:0}), w.l.o.g.
\begin{enumerate}
\item $F_0(p) = \frac{p}{p + 1}$ for all $p \in (0, \infty)$, that is, $r_0(q) = 1 - q$ for all $q \in [0, 1]$;
\item $r_i(0) = 0$ for all $i \in [n]$.
\end{enumerate}}
\end{minipage}}

\paragraph{Step I: Special Buyer $\{F_0\}$.}
Our first step is to show, in the worst-case instance, that w.l.o.g. there is at most one buyer with monopoly-quantile of $0$.

Provided with a feasible instance $\{F_i\}_{i = 1}^n$, let $\Omega \eqdef \{i \in [n]: q_i = 0\}$. Based on Lemma~\ref{lem:shape}, for each $i \in \Omega$, we can assume w.l.o.g. that
\[
r_i(q) = r_i(0) \cdot (1 - q) \quad\quad \forall q \in [0, 1] \quad\quad\quad\quad\quad\quad\quad\quad F_i(p) = \frac{p}{p + r_i(0)} \quad\quad \forall p \in (0, \infty).
\]
Define $r_0(0) \eqdef \sum\limits_{i \in \Omega} r_i(0)$, and $F_0(p) \eqdef \frac{p}{p + r_0(0)}$ for all $p \in (0, \infty)$. Now consider $\{F_0\} \cup \{F_i\}_{i \in [n] \setminus \Omega}$:
\begin{itemize}
\item \emph{Optimality}: $\opt\big(\{F_0\} \cup \{F_i\}_{i \in [n] \setminus \Omega}\big) = \opt\big(\{F_i\}_{i = 1}^n\big)$, since $r_0(0) = \sum\limits_{i \in \Omega} r_i(0)$, and
    \[
    D_i(p) = 1 \quad\quad\quad\quad \forall p \in (0, \infty), \,\, \forall i \in \{0\} \cup \Omega;
    \]
\item \emph{Feasibility}: $\ap\big(p, \{F_0\} \cup \{F_i\}_{i \in [n] \setminus \Omega}\big) \leq \ap\big(p, \{F_i\}_{i = 1}^n\big)$ for all $p \in (0, \infty)$, in that
    \[
    F_0(p) = \left(1 + \frac{r_0(0)}{p}\right)^{-1} = \left(1 + \frac{1}{p} \cdot \sum\limits_{i \in \Omega} r_i(0)\right)^{-1} \geq \prod\limits_{i \in \Omega} \left(1 + \frac{1}{p} \cdot r_i(0)\right)^{-1} = \prod\limits_{i \in \Omega} F_i(p).
    \]
\end{itemize}
By Lemma~\ref{lem:shape} and the above reduction, w.l.o.g. in a worst-case instance $\{F_i\}_{i = 0}^n$,
\begin{enumerate}
\item With some $r_0(0) \in [0, 1]$, we have $r_0(q) = r_0(0) \cdot (1 - q)$ for all $q \in [0, 1]$;
\item For each $i \in [n]$, we have $q_i > 0$, $\partial_+ r_i(q) > 1$ for all $q \in [0, q_i)$, $v_i = \frac{r_i(q_i)}{q_i} > 1$.
\end{enumerate}

\paragraph{Step II: Auxiliary Property.}
The second and third steps are devoted to showing that $r_0(0) = 1$ (i.e., $F_0 = \tri(\infty)$) in a worst-case instance. Towards that, we prove certain technical properties of a worst-case instance (in a straightforward way): Assuming a given instance violates a certain property, we explicitly construct a new instance that (1)~satisfies the property; and (2)~keeps the revenue gap between $\opt$ and $\ap$.

Alaei et al.~\cite{AHNPY15} studied the revenue gap between \emph{ex-ante relaxation} and \emph{anonymous pricing}, and the special buyer $F_0 = \tri(\infty)$ also appears in a worst-case instance of that problem. Compared with their proof, however, ours is significantly more involved and technical\footnote{Alaei et al.~\cite{AHNPY15} also modeled the ``ex-ante relaxation vs. anonymous pricing'' problem as a program. Observing special structures of that program (i.e., special structures of ex-ante relaxation), they proved that the worst case must be a triangular instance. As mentioned before, provided with two parameters $v_i$ and $q_i$, it suffices to present a triangular distribution $\tri(v_i, q_i)$. By contrast, to define a regular distribution $F_i$ pointwisely (interpreted as an abstract function), infinity parameters are required.}.

To offer intuitions, we illustrate several most important constructions here, and leave rigorous proofs to Appendix~\ref{app:upper1}. The following property is a major ingredient of our proof.
\vspace{5.5pt} \\
\fbox{\begin{minipage}{\textwidth}
\begin{lemma}
\label{lem:p/(p+1)-3}
There exists a worst-case instance $\{F_i\}_{i = 0}^n$ of Program~(\ref{prog:0}), such that for all $i \in [n]$ with $r_i(0) > 0$,
\[
\partial_+ r_i(0) > 1 + r_0(0).
\]
\end{lemma}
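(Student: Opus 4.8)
The statement asserts that in a worst-case instance we may assume each non-degenerate buyer $i$ (i.e. $r_i(0)>0$, which by Main Lemma~\ref{lem:p/(p+1)} together with earlier reductions means $q_i>0$ and the curve $r_i$ hits a genuine interior peak) has right-derivative at $0$ strictly larger than $1+r_0(0)$. Since $\partial_+ r_i(0)=\Phi_i(u_i)=u_i$ when $r_i(0)>0$ (the support-supremum is infinite and equals the initial slope), this is really a statement about pushing the support-supremum $u_i$ above the threshold $1+r_0(0)$. The natural approach is the same ``construct-a-better-instance'' template already used for Lemma~\ref{lem:shape} and announced in Step~II: assume buyer $i$ violates the bound, i.e. $1<\partial_+ r_i(0)=u_i\le 1+r_0(0)$, and replace $F_i$ (and possibly adjust $F_0$) by a new distribution that satisfies the bound, keeps constraint~(\ref{cstr:ap0}), and does not decrease $\opt$.

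First I would reformulate everything on revenue-quantile curves, where $F_0$ is the line $r_0(q)=r_0(0)(1-q)$ and $r_i$ is concave with $r_i(0)>0$, $r_i(q_i)=v_iq_i$, linear of slope $-r_i(q_i)/(1-q_i)$ past $q_i$. The key elementary observation is that a buyer with small support-supremum contributes ``inefficiently'': because $r_i(0)>0$ forces $u_i=\infty$-type behaviour only if the slope at $0$ is large, a shallow initial slope means the mass near $q=0$ sits at low virtual value and is essentially wasted once the special buyer $F_0$ and the $\tri(1,1)$ buyer are present (they already dominate at small quantiles). Concretely, I expect the construction to ``trade'' some of the low-value mass of buyer $i$ into buyer $F_0$: increase $r_0(0)$ a bit and simultaneously steepen $r_i$ near $0$ (raising $\partial_+ r_i(0)$) while lowering $r_i(0)$, chosen so that the product $\prod_i F_i(p)$ is pointwise non-decreasing (feasibility via an inequality of the same flavour as $\prod(1+c_i/p)^{-1}\le(1+\sum c_i/p)^{-1}$ used in Step~I), and so that the virtual-value CDFs $D_i$ on $(1,\infty)$ are unchanged or improved, giving $\opt$ non-decreasing via Fact~\ref{fact:opt_rev}. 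One then iterates over all violating $i$; because each step is a discrete improvement of a bounded quantity (total $r_i(0)$ mass is bounded by the constraint), or can be argued by a limiting/compactness argument, the process terminates at an instance satisfying the claimed inequality for every $i$.

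The main obstacle I anticipate is coupling the modification of $F_i$ with $F_0$ while preserving \emph{both} feasibility and optimality simultaneously — the two pull in opposite directions (steepening $r_i$ at $0$ raises virtual values, good for $\opt$, but changing $r_i(0)$ and $r_0(0)$ perturbs the AP-constraint at every price $p$, and the constraint is the binding object). The delicate point is to verify the pointwise inequality $\prod_i \FF_i(p)\ge \prod_i F_i(p)$ for all $p\in(1,\infty)$ after the swap; this is where I would look for an exact identity (the paper hints at ``several interesting identities among the three formats'') that makes the bookkeeping on $r_i(0)$, $u_i=\partial_+ r_i(0)$, and $D_i$ transparent. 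A secondary subtlety is ensuring the new $r_i$ stays concave (regular) after steepening its left end — one must re-route the curve, e.g. replace the portion on $[0,\qq]$ by the chord/tangent of the correct slope, much as in Lemma~\ref{lem:shape}, and check this only \emph{raises} the curve (hence only \emph{lowers} the CDF, helping feasibility) while the $D_i$ on $(1,\infty)$ is controlled. I would expect the rigorous details to be deferred to Appendix~\ref{app:upper1}, with the body giving the construction picture and the two one-line feasibility/optimality checks.
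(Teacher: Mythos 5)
Your high-level template is the right one, and you correctly identify the core move: transfer the tail mass $r_i(0)$ of a violating buyer into the special buyer $F_0$. But the proposal diverges from a workable argument in two places, one cosmetic and one essential. The cosmetic one: you write $\partial_+ r_i(0)=\Phi_i(u_i)=u_i$, but when $r_i(0)>0$ the support-supremum is $u_i=\infty$ while $\partial_+ r_i(0)=\lim_{p\to\infty}\Phi_i(p)$ is finite; the lemma controls the asymptotic virtual value, not the support-supremum. More importantly, you propose to \emph{steepen} $r_i$ near $0$ so that the new distribution ``satisfies the bound.'' The paper does something simpler and cleaner: it sets $\rr_0(q)=(r_0(0)+r_k(0))(1-q)$ and $\rr_k(q)=r_k(q)-r_k(0)$ on $[0,q_k]$ (extended linearly beyond), i.e.\ it shifts the curve down by the constant $r_k(0)$ \emph{without changing any slope}. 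This makes $\rr_k(0)=0$, so buyer $k$ satisfies the conclusion vacuously, and it keeps $\DD_k=D_k$ exactly and keeps $\sum_i r_i(0)$ fixed, so $\opt$ is unchanged by Fact~\ref{fact:opt_rev} --- no ``unchanged or improved'' hedging needed, and no new regularity check beyond concavity of a vertically shifted concave curve. Steepening the curve, by contrast, changes $D_k$ and raises $1-F_k$ at large prices, which works against the $\ap$ constraint; you note this tension but do not resolve it.

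The genuine gap is the feasibility verification, which you defer to ``an exact identity,'' and in particular you never say where the contradiction hypothesis $\partial_+ r_k(0)\le 1+r_0(0)$ enters. It must enter somewhere: if the transfer of $r_k(0)$ into $F_0$ were always feasible, the lemma would be vacuous (one could zero out every buyer's tail). In the paper's proof the hypothesis is used precisely in the regime $1<p\le\vv_k\eqdef\frac{r_k(q_k)-r_k(0)}{q_k}$, where all four CDFs have explicit rational forms and the difference $\big(\FF_0(p)\FF_k(p)\big)^{-1}-\big(F_0(p)F_k(p)\big)^{-1}$ computes to
\[
-\left(p+\frac{r_0(0)}{q_k}-\frac{r_k(q_k)-r_k(0)}{q_k}\right)\cdot\frac{r_k(0)}{1-q_k}\cdot\frac{q_k}{p^2},
\]
whose sign is controlled only because concavity gives $\frac{r_k(q_k)-r_k(0)}{q_k}\le\partial_+ r_k(0)\le 1+r_0(0)$ and $p>1\ge q_k$. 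For $p>\vv_k$ a separate chord argument (comparing $r_k$ to the secant through $(y,r_k(y))$ and $(1,0)$, together with $\big(1+\tfrac{a+b}{p}\big)\le\big(1+\tfrac{a}{p}\big)\big(1+\tfrac{b}{p}\big)$) is needed. Without this two-case analysis and the explicit use of the hypothesis, the proposal does not yet constitute a proof. A smaller issue: termination of your iteration is not automatic if you only transfer mass ``a bit''; the paper transfers all of $r_k(0)$ in one step, so each application permanently removes one buyer from the violating set (note $r_0(0)$ strictly increases, which can create new violators, but each is then zeroed out in turn, so at most $n$ steps suffice).
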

\end{minipage}}

\begin{proof}[Proof (Sketch)]
Assume to the contrary: $\exists k \in [n]$ such that $r_k(0) > 0$ and $\partial_+ r_k(0) \leq 1 + r_0(0)$. We would construct two new distributions $\FF_0$ and $\FF_k$ (in terms of revenue-quantile curves $\rr_0$ and $\rr_k$), to replace the original distributions $F_0$ and $F_k$.
\begin{itemize}
\item Define $\rr_0(q) \eqdef \big(r_0(0) + r_k(0)\big) \cdot (1 - q)$ for all $q \in [0, 1]$;
\item Define $\qq_k \eqdef q_k$ and $\rr_k(q) \eqdef
    \begin{cases}
    r_k(q) - r_k(0) & \forall q \in [0, \qq_k] \\
    \frac{r_k(q_k) - r_k(0)}{1 - q_k} \cdot (1 - q) & \forall q \in (\qq_k, 1]
    \end{cases}$, as Figure~\ref{fig:lem:p/(p+1)-5} illustrates.
\end{itemize}

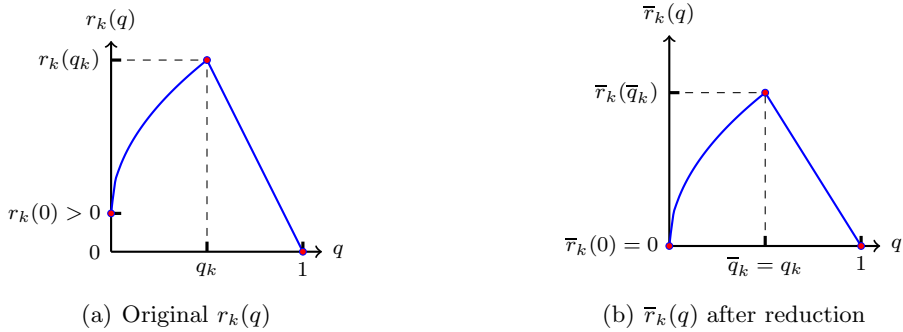
\begin{figure}[H]
\centering
\subfigure[Original $r_k(q)$]{
\begin{tikzpicture}[thick, smooth, domain = 0: 0.5, scale = 2.55]
\draw[->] (0, 0) -- (1.1, 0);
\draw[->] (0, 0) -- (0, 1.1);
\node[above] at (0, 1.1) {\scriptsize $r_k(q)$};
\node[right] at (1.1, 0) {\scriptsize $q$};
\node[left] at (0, 0) {\scriptsize $0$};
\draw[color = blue] plot (\x, {0.2 + 1.6 * sqrt(\x / 2)});
\draw[thin, dashed] (0, 1) -- (0.5, 1);
\draw[very thick] (0, 1) -- (1.5pt, 1);
\node[left] at (0, 1) {\scriptsize $r_k(q_k)$};
\draw[thin, dashed] (0.5, 0) -- (0.5, 1);
\draw[very thick] (0.5, 0) -- (0.5, 1.5pt);
\node[below] at (0.5, 0) {\scriptsize $q_k$};

\draw[very thick] (0, 0.2) -- (1.5pt, 0.2);
\node[left] at (0, 0.2) {\scriptsize $r_k(0) > 0$};

\draw[color = blue] (0.5, 1) -- (1, 0);
\draw[very thick] (1, 0) -- (1, 1.5pt);
\node[below] at (1, 0) {\scriptsize $1$};
\draw[thin, color = blue, fill = red] (1, 0) circle(0.5pt);
\draw[thin, color = blue, fill = red] (0.5, 1) circle(0.5pt);
\draw[thin, color = blue, fill = red] (0, 0.2) circle(0.5pt);
\end{tikzpicture}
}
\quad\quad\quad\quad\quad\quad
\subfigure[$\rr_k(q)$ after reduction]{
\begin{tikzpicture}[thick, smooth, domain = 0: 0.5, scale = 2.55]
\draw[->] (0, 0) -- (1.1, 0);
\draw[->] (0, 0) -- (0, 1.1);
\node[above] at (0, 1.1) {\scriptsize $\rr_k(q)$};
\node[right] at (1.1, 0) {\scriptsize $q$};
\node[left] at (0, 0) {\scriptsize $\rr_k(0) = 0$};
\draw[color = blue] plot (\x, {1.6 * sqrt(\x / 2)});
\draw[thin, dashed] (0, 0.8) -- (0.5, 0.8);
\draw[very thick] (0, 0.8) -- (1.5pt, 0.8);
\node[anchor = 0] at (0, 0.8) {\scriptsize $\rr_k(\qq_k)$};
\draw[thin, dashed] (0.5, 0) -- (0.5, 0.8);
\draw[very thick] (0.5, 0) -- (0.5, 1.5pt);
\node[below] at (0.5, 0) {\scriptsize $\qq_k = q_k$};

\draw[color = blue] (0.5, 0.8) -- (1, 0);
\draw[very thick] (1, 0) -- (1, 1.5pt);
\node[below] at (1, 0) {\scriptsize $1$};
\draw[thin, color = blue, fill = red] (1, 0) circle(0.5pt);
\draw[thin, color = blue, fill = red] (0.5, 0.8) circle(0.5pt);
\draw[thin, color = blue, fill = red] (0, 0) circle(0.5pt);
\end{tikzpicture}
}
\caption{Demonstrations for the construction in Lemma~\ref{lem:p/(p+1)-3}}
\label{fig:lem:p/(p+1)-5}
\end{figure}

The feasibility of the new instance, i.e., $\ap\big(p, \{\FF_0\} \bigcup \{\FF_k\} \bigcup \{F_i\}_{i \in [n] \setminus \{k\}}\big) \leq 1$ when $p \in (1, \infty]$, is provided in Appendix~\ref{subapp:upper1-p/(p+1)-3}. Besides, we have $\opt\left(\{\FF_0\} \bigcup \{\FF_k\} \bigcup \{F_i\}_{i \in [n] \setminus \{k\}}\right) = \opt\big(\{F_i\}_{i = 0}^n\big)$.
\begin{itemize}
\item The integration part of $\opt$ remains the same: ``Chop'' $r_k(0)$ from the $k$-th revenue-quantile curve -- this has no affect on the $k$-th virtual value CDF, in whole integrating range of $(1, \infty)$ (refer to all positive virtual values, as Figure~\ref{fig:lem:p/(p+1)-5} suggests);
\item The summation part of $\opt$ also remains the same: The loss of $r_k(0)$ incurred by ``chopping'' the $k$-th distribution is precisely compensated by ``filling up'' the special $0$-th distribution.
\end{itemize}
We shall emphasize, after such reduction, that $\sum\limits_{i = 0}^n r_i(0)$ remains the same, whereas $r_0(0)$ increases strictly. These facts will be useful for later proofs.
\end{proof}

\paragraph{Step III: Case Analyses.}
In the rest of Section~\ref{sec:upper1}, we always assume the property from Lemma~\ref{lem:p/(p+1)-3}. Towards Main Lemma~\ref{lem:p/(p+1)}, our final step refers to case analyses, based on the value of $r_0(0)$. Specifically, we provide different construction schemes when $r_0(0) \leq \frac{1}{\sqrt{3}}$ and when $r_0(0) \ge \frac{1}{2}$.
\vspace{5.5pt} \\
\fbox{\begin{minipage}{\textwidth}
\begin{lemma}
\label{lem:p/(p+1)-1}
There exists a worst-case instance $\{F_i\}_{i = 0}^n$ of Program~(\ref{prog:0}) such that $r_0(0) > \frac{1}{\sqrt{3}}$.
\end{lemma}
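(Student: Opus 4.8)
The plan is not really a contradiction argument but a \emph{repair} construction: fix a worst-case instance $\{F_i\}_{i=0}^n$ already put in the normal form of Step~I and Step~II (so $r_0(q)=r_0(0)\cdot(1-q)$, every buyer $i\in[n]$ has $q_i>0$, $\partial_+ r_i(q)>1$ on $[0,q_i)$ and $v_i>1$, and the property of Lemma~\ref{lem:p/(p+1)-3} holds), and suppose $r_0(0)\le\frac1{\sqrt3}$. The goal is to build from it a \emph{feasible} instance with weakly larger objective value and with $r_0(0)>\frac1{\sqrt3}$; that instance is then the worst-case instance asserted by the lemma. The move is to simultaneously \emph{enlarge} the special buyer $F_0$ and \emph{diminish} the buyers $\{F_i\}_{i=1}^n$.

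Concretely, for a parameter $\delta>0$ I would replace $F_0$ by $\FF_0(p)=\frac{p}{p+r_0(0)+\delta}$, i.e.\ push the scaled triangle up to $\rr_0(q)=(r_0(0)+\delta)(1-q)$. Since the virtual value of $F_0$ is everywhere negative we have $D_0\equiv\DD_0\equiv 1$, so by Fact~\ref{fact:opt_rev} this step alone increases $\opt$ by exactly $\delta$ and affects constraint~(\ref{cstr:ap0}) only through the smaller factor $\FF_0\le F_0$. To compensate I would replace each $r_i$ ($i\ge1$) by a concave $\rr_i\le r_i$ with $\rr_i(0)=r_i(0)$ (equivalently, a pointwise-larger CDF $\FF_i\ge F_i$), chosen so that $\prod_{i=0}^n\FF_i(p)\ge\prod_{i=0}^n F_i(p)$ for all $p>1$; this preserves $\ap(p,\{\FF_i\})\le\ap(p,\{F_i\})\le1$, and since $\rr_i(0)=r_i(0)$ the summation term of Fact~\ref{fact:opt_rev} is untouched. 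Using Fact~\ref{fact1} to pass back to $\DD_i\ge D_i$, the only loss in $\opt$ is the drop in the integral term, which I would bound by $\int_1^\infty\!\big(\prod_i\DD_i(x)-\prod_i D_i(x)\big)dx\le\sum_{i\ge1}\int_1^\infty\!\big(\DD_i(x)-D_i(x)\big)dx$ and then control each summand through the concavity of $r_i$, squeezing $D_i$ between the tangent line and the chord of the curve and invoking the elementary estimates on $\R,\Q$ in Lemma~\ref{lem:RQ}.

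The crux, and the step I expect to cost the most work, is the \emph{exchange-rate computation}: enlarging $F_0$ from $r_0(0)=c$ to $c+\delta$ tightens the logarithmic constraint~(\ref{cstr:ap1}) by $\ln\frac{p+c}{p+c+\delta}\sim-\frac{\delta}{p+c}$ at each price $p$, so the diminished buyers must shed at least this much of $-\sum_{i\ge1}\ln F_i(p)$ wherever the old constraint was binding; translating that pointwise budget into the worst-case integral loss via the tangent/chord bounds above, the requirement ``gain $\delta$ $\ge$ integral loss'' should reduce to a clean quadratic inequality in $c$ --- I expect precisely $3c^2\le1$, i.e.\ $c\le\frac1{\sqrt3}$. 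Granting this, one picks $\delta$ large enough that $r_0(0)+\delta>\frac1{\sqrt3}$ --- feasibility of the diminishing step is never an obstruction, since in the limit each buyer can be taken all the way to $F_i\equiv1$ --- and the constructed instance is the desired one. The fiddly points I anticipate are: buyers with unbounded support ($r_i(0)>0$), where the property of Lemma~\ref{lem:p/(p+1)-3} is exactly what keeps their initial slopes bounded away from $1+c$ so that the tangent bound bites; and choosing the family $\{\rr_i\}_{i\ge1}$ ``uniformly'' enough that the per-price accounting of released constraint-slack against lost integral is tight rather than lossy.
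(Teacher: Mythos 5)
Your plan is a genuinely different route from the paper's, but as written it has a real gap at its self-declared crux, and that gap is not just ``work to be filled in'' --- in one important regime the exchange you propose cannot be favorable at all. The problem is the case where constraint~(\ref{cstr:ap0}) is tight as $p\to\infty$, i.e.\ $\sum_{i=0}^n r_i(0)=\ap(\infty)=1$ (the paper's $\beta=\infty$ case). There, $-\ln F_i(p)\sim r_i(0)/p$ for large $p$, so your choice $\rr_i(0)=r_i(0)$ for $i\ge 1$ releases \emph{no} constraint slack asymptotically, while replacing $r_0(0)$ by $r_0(0)+\delta$ tightens the constraint at infinity to $1+\delta>1$. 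You would be forced to lower some $\rr_i(0)$ by a total of $\delta$, which cancels your gain in the summation term of Fact~\ref{fact:opt_rev} exactly; the surplus you were counting on to pay for the integral loss is zero, and the trade strictly loses. The paper does not attempt a trade here at all: Lemma~\ref{lem:p/(p+1)-2} shows that a feasible instance with $\sum_i r_i(0)=1$, the slope property of Lemma~\ref{lem:p/(p+1)-3}, and $r_0(0)\le\frac{1}{\sqrt3}$ simply does not exist, via a second-order expansion of the constraint at $p=\infty$ (this is where $1-3r_0(0)^2\ge0$ appears). Your conjectured ``clean quadratic $3c^2\le1$'' is thus the threshold of a \emph{feasibility} obstruction in the paper, not of an objective-value exchange rate, and your proposal supplies no mechanism that would reproduce it.

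The finite-$\beta$ regime has a related, if less fatal, difficulty: the constraint budget you must repay lives at the binding price(s) $\beta$, but regularity (concavity of $r_i$) does not let you diminish a distribution locally at $\beta$ without propagating the change, so ``per-price accounting'' of slack against integral loss is not well-posed without extra structure. The paper supplies that structure by first reducing, via a convergent sequence of tail-raising and Lemma~\ref{lem:p/(p+1)-3}-type reductions (each of which weakly increases both $\sum_i r_i(0)$ and $r_0(0)$), to a \emph{critical} instance with linear tails beyond $\beta$ and the stationarity condition $\partial_+\ap(\beta)=0$; Lemma~\ref{lem:p/(p+1)-1.1} then derives $r_0(0)>\frac{1}{\sqrt3}$ from that stationarity together with $\partial_+r_i(0)>1+r_0(0)$, again as a feasibility statement (the inequality $H(s,p)<0$). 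If you want to salvage your perturbation approach, you would at minimum need to (i) prove your exchange inequality rather than conjecture it, (ii) handle the tight-at-infinity case by a separate infeasibility argument as the paper does, and (iii) justify the local diminishing against regularity --- at which point you will have essentially rebuilt the paper's case analysis.
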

\end{minipage}}
\vspace{5.5pt} \\
\fbox{\begin{minipage}{\textwidth}
\begin{lemma}
\label{lem:p/(p+1)-4}
Given a feasible instance $\{F_i\}_{i = 0}^n$ of Program~(\ref{prog:0}) that $r_0(0) \geq \frac{1}{2}$, there exists another feasible instance $\{\FF_i\}_{i = 0}^n$ such that $\opt\big(\{\FF_i\}_{i = 0}^n\big) \geq \opt\big(\{F_i\}_{i = 0}^n\big)$, and
\[
\rr_0(q) = 1 - q \quad\quad \forall q \in [0, 1] \quad\quad\quad\quad\quad\quad\quad\quad \rr_i(0) = 0 \quad\quad \forall i \in [n].
\]
\end{lemma}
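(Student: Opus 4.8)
The plan is a single ``chop-and-lift'' step: push the special buyer $F_0$ all the way to $\tri(\infty)$ while simultaneously ``chopping off'' the mass $r_i(0)$ from every other buyer, reusing the construction in the proof of Lemma~\ref{lem:p/(p+1)-3}. Concretely, set $\FF_0 := \tri(\infty)$, i.e.\ $\rr_0(q) = 1-q$; and for each $i \in [n]$ with $r_i(0) > 0$ let $\FF_i$ be the regular distribution with revenue-quantile curve $\rr_i(q) = r_i(q) - r_i(0)$ on $[0, q_i]$ and $\rr_i(q) = \frac{r_i(q_i) - r_i(0)}{1 - q_i}(1 - q)$ on $[q_i, 1]$ (taking $\FF_i := F_i$ when $r_i(0) = 0$). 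The target shape $\rr_0(q) = 1-q$, $\rr_i(0) = 0$ then holds by construction, so it remains to verify that $\opt$ does not decrease and that $\{\FF_i\}_{i=0}^n$ is feasible.

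The objective is the easy part. Since $r_0 = \tri(r_0(0))$ has $\Phi_0 \equiv -r_0(0) < 0$ we get $D_0 \equiv 1$, and likewise $\DD_0 \equiv 1$ for $\tri(\infty)$; and chopping $r_i(0)$ leaves $\partial_+ r_i$, hence (as in Lemma~\ref{lem:p/(p+1)-3}) the virtual value CDF $D_i$ on $(1,\infty)$, unchanged. So by Fact~\ref{fact:opt_rev} the integral $\int_1^\infty \big(1 - \prod_i D_i(x)\big)\,dx$ is unaffected, and $\opt(\{\FF_i\}_{i=0}^n) - \opt(\{F_i\}_{i=0}^n) = \big(1 + \sum_{i=1}^n \rr_i(0)\big) - \big(r_0(0) + \sum_{i=1}^n r_i(0)\big) = 1 - \sum_{i=0}^n r_i(0) \geq 0$, where the last step is constraint~(\ref{cstr:ap0}) as $p \to \infty$, namely $\ap(\infty) = \sum_{i=0}^n r_i(0) \leq 1$.

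Feasibility is where the work lies. Writing $\ap(p, \cdot) \leq 1$ out, $\{\FF_i\}_{i=0}^n$ is feasible iff $\prod_{i=1}^n \FF_i(p) \geq 1 - p^{-2}$ for all $p \in (1,\infty)$ (at $p = \infty$ this reads $\ap(\infty) = 1$), whereas feasibility of the \emph{old} instance gives only $\prod_{i=1}^n F_i(p) \geq \frac{(p-1)(p + r_0(0))}{p^2} = 1 - \frac{1-r_0(0)}{p} - \frac{r_0(0)}{p^2}$ after dividing out the $\tri(r_0(0))$ factor $F_0(p) = \frac{p}{p + r_0(0)}$. Since chopping only raises the CDF ($\FF_i(p) \geq F_i(p)$, by concavity of $r_i$), it suffices to show the chopping ``gain'' $\prod_i \FF_i(p) - \prod_i F_i(p)$ absorbs the deficit $(1 - p^{-2}) - \big(1 - \frac{1-r_0(0)}{p} - \frac{r_0(0)}{p^2}\big) = (1-r_0(0))\frac{p-1}{p^2}$. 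I would quantify the gain through the quantile identities $r_i(q_i(p)) = p\,q_i(p)$ (old, $q_i(p) = 1 - F_i(p)$) and $r_i(\qq_i(p)) = r_i(0) + p\,\qq_i(p)$ (new, $\qq_i(p) = 1 - \FF_i(p)$, with $\qq_i(p) = 0$ once $p \geq \partial_+ r_i(0)$, which is why each $\FF_i$ has finite support-supremum $\partial_+ r_i(0)$), together with the structural facts carried in from Step~I and Lemma~\ref{lem:p/(p+1)-3}: $\partial_+ r_i(q) > 1$ on $[0, q_i)$ — yielding $q_i(p) - \qq_i(p) > \frac{r_i(0)}{p-1}$ whenever $p \geq v_i$ — and $\partial_+ r_i(0) > 1 + r_0(0)$ whenever $r_i(0) > 0$. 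The hypothesis $r_0(0) \geq \frac12$ enters exactly here, through $\frac{1-r_0(0)}{p+r_0(0)} \leq \frac{1}{2p+1}$. I would split the estimate by the range of $p$: for $p$ near $1$ the constraint $\ap(p) \leq 1$ is vacuous; for $p$ large the standalone $\tri(\infty)$ already satisfies $\ap(p) = \frac{p}{p+1} < 1$; only intermediate $p$ needs a real argument.

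I expect that intermediate regime — especially when the revenue ``budget'' $\sum_{i=0}^n r_i(0)$ is well below $1$ — to be the genuine obstacle: there the pointwise comparison $\FF_i(p) \geq F_i(p)$ alone is too weak, and one must use $\partial_+ r_i(0) > 1 + r_0(0)$ and $r_0(0) \geq \frac12$ in earnest, possibly after an auxiliary normalization (first raising $r_0(0)$ so the budget is tight at infinity, or modifying one finite-support buyer in tandem with the lift of $F_0$). Finally, since $\frac{1}{\sqrt 3} > \frac12$, combining Lemma~\ref{lem:p/(p+1)-1} with the present lemma yields Main Lemma~\ref{lem:p/(p+1)}.
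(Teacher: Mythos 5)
Your construction has a genuine gap at the feasibility step, and it is not one that can be patched by the estimates you sketch: the single ``chop $r_i(0)$ and lift $F_0$ to $\tri(\infty)$'' step is simply infeasible whenever $\sum_{i = 0}^n r_i(0) < 1$. Concretely, take $n = 1$, $r_0(q) = \frac{1}{2}(1 - q)$ and $F_1 = \tri(2, 0.375)$ (so $r_1(0) = 0$, $\partial_+ r_1(q) = 2 > 1$ on $[0, q_1)$, and the property from Lemma~\ref{lem:p/(p+1)-3} holds vacuously). One checks $\ap(2) = 2\left(1 - \frac{2}{2.5}\cdot\frac{1.25}{2}\right) = 1$, so the instance is feasible with the constraint tight at $p = 2$. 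Your transformation leaves $F_1$ untouched (there is nothing to chop) but replaces $F_0(p) = \frac{p}{p + 1/2}$ by $\frac{p}{p+1}$, giving $\ap(2) = 2\left(1 - \frac{2}{3}\cdot 0.625\right) = \frac{7}{6} > 1$. In your own accounting, the required gain $(1 - r_0(0))\frac{p-1}{p^2}$ is strictly positive while the available chopping gain is zero, so no estimate in the ``intermediate regime'' can close this; the parenthetical ``auxiliary normalization'' you mention is not an auxiliary step but the entire content of the lemma.

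The paper's proof (Appendix~\ref{subapp:upper1-p/(p+1)-4}) resolves exactly this by chopping \emph{more} than $r_i(0)$: it picks thresholds $z_i \in [0, q_i]$ by a waterfilling argument at the point $\theta$ where the ex-ante revenue $r_0(0) + \sum_i r_i\big(\max\{q_i, 1 - F_i(\theta)\}\big)$ crosses $1$, so that $r_0(0) + \sum_{i=1}^n r_i(z_i) = 1$ holds \emph{exactly}, and sets $\rr_i(q) = r_i(q + z_i) - r_i(z_i)$. This changes your optimality argument as well: with $z_i > 0$ the virtual value CDFs are no longer unchanged but satisfy $\DD_i \geq D_i$, so the integral term genuinely decreases; the objective is preserved only because that decrease is bounded (via Main Lemma~\ref{lem:virtual_value}) by $\sum_k\big(r_k(z_k) - r_k(0)\big) = 1 - \sum_{k=0}^n r_k(0)$, which is precisely the gain in the summation term. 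Feasibility is then proved by an induction that swaps the buyers in one at a time (inequality~(\ref{eq:lem:p/(p+1)-4.0})), with a case analysis on whether $p \leq \vv_k$, and the hypothesis $r_0(0) \geq \frac{1}{2}$ enters through Lemma~\ref{lem:p/(p+1)-4.1} (showing $v_k - s_{k-1}/q_k \leq 1$), not through the bound $\frac{1 - r_0(0)}{p + r_0(0)} \leq \frac{1}{2p+1}$ you propose. Your computation of the objective change and the deficit $(1 - r_0(0))\frac{p-1}{p^2}$ is correct as far as it goes, but it applies to a construction that does not work.
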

\end{minipage}}
\vspace{5.5pt}

Note that $\frac{1}{\sqrt{3}} > \frac{1}{2}$, combining both lemmas together leads to Main Lemma~\ref{lem:p/(p+1)} immediately. Assuming the property from Lemma~\ref{lem:p/(p+1)-3}, we will outline how to settle the first case (i.e., Lemma~\ref{lem:p/(p+1)-1}) in Section~\ref{subsec:upper1-lem:p/(p+1)-1}. Furthermore, the proof of the second case (i.e., Lemma~\ref{lem:p/(p+1)-4}) is of the same spirit as that of Lemma~\ref{lem:p/(p+1)-3}, and is deferred to Appendix~\ref{subapp:upper1-p/(p+1)-4}.

\subsection{Proof Plan of Lemma~\ref{lem:p/(p+1)-1}}
\label{subsec:upper1-lem:p/(p+1)-1}

To conquer Lemma~\ref{lem:p/(p+1)-1}, we shall define a useful parameter\footnote{Each regular distribution $F_i$ has at most one probability-mass (which must be support-supremum $u_i$ if exists). Therefore, function $\ap(p)$ is continuous when $p$ is sufficiently large. We define $\beta \eqdef \infty$ when $\lim\limits_{p \to \infty} \ap(p) = 1$.}:
\[
\beta\big(\{F_i\}_{i = 0}^n\big) \eqdef \max \big\{p \in (1, \infty]: \,\, \ap\big(p, \{F_i\}_{i = 0}^n\big) = 1\big\}.
\]
For brevity, we often drop the term $\{F_i\}_{i = 0}^{n}$, when there is no ambiguity from the context. Recall Lemma~\ref{lem:shape}.1 that $v_i > 1$ for all $i \in [n]$. The following lemmas are proved in Appendix~\ref{subapp:upper1-p/(p+1)-1}.

\begin{lemma}
\label{lem:ap:increase}
$\ap\big(p, \{F_i\}_{i = 0}^n\big)$ is strictly increasing, when $1 < p < \min \big\{v_i: \,\, i \in [n]\big\}$.
\end{lemma}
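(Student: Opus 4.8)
\textbf{Proof plan for Lemma~\ref{lem:ap:increase}.}

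The plan is to compute the logarithmic derivative of $\ap(p, \{F_i\}_{i=0}^n) = p \cdot \bigl(1 - \prod_{i=0}^n F_i(p)\bigr)$ on the interval $1 < p < \min\{v_i : i \in [n]\}$ and show it is strictly positive. First I would invoke Main Lemma~\ref{lem:p/(p+1)}, so that $F_0(p) = \frac{p}{p+1}$, and invoke Lemma~\ref{lem:shape}.1, so that each $F_i$ with $i \in [n]$ has monopoly-price $v_i > 1$; on the range $p < v_i$ the function $p \cdot (1 - F_i(p))$ is strictly increasing (since $v_i$ is, by definition, the greatest maximizer of $p \cdot (1-F_i(p))$, and regularity/concavity of $r_i$ forces this quantity to be increasing strictly before reaching the monopoly-price — this is essentially Lemma~\ref{lem:shape}.2 read through $r_i(1-F_i(p))$). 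Equivalently, writing $\Phi_i(p) = p - \frac{1 - F_i(p)}{f_i(p)}$, on $p < v_i$ we have $\Phi_i(p) < 0$, i.e. $p f_i(p) < 1 - F_i(p)$.

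Next I would differentiate. We have
\[
\ap(p) = p - p\prod_{i=0}^n F_i(p), \qquad
\ap'(p) = 1 - \prod_{i=0}^n F_i(p) - p \sum_{j=0}^n f_j(p) \prod_{i \neq j} F_i(p).
\]
Factor out $\prod_{i=0}^n F_i(p)$ from the last two terms after dividing appropriately; the cleaner route is to bound the term $p \sum_{j} f_j(p)\prod_{i\neq j}F_i(p)$. For $j = 0$, since $F_0(p) = \frac{p}{p+1}$ we get $f_0(p) = \frac{1}{(p+1)^2}$, so $p f_0(p) = \frac{p}{(p+1)^2} = \frac{1-F_0(p)}{1}\cdot\frac{p}{p+1}\cdot\ldots$; more directly $p f_0(p)/F_0(p) = \frac{1}{p+1} = \frac{1-F_0(p)}{p}$. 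For $j \in [n]$, using $\Phi_j(p) < 0$ on $p < v_j$ gives $p f_j(p) < 1 - F_j(p)$, hence $\frac{p f_j(p)}{F_j(p)} < \frac{1 - F_j(p)}{F_j(p)}$. Therefore
\[
p \sum_{j=0}^n f_j(p)\prod_{i\neq j}F_i(p) \;=\; \Bigl(\prod_{i=0}^n F_i(p)\Bigr)\sum_{j=0}^n \frac{p f_j(p)}{F_j(p)} \;<\; \Bigl(\prod_{i=0}^n F_i(p)\Bigr)\sum_{j=0}^n \frac{1 - F_j(p)}{F_j(p)}.
\]
It remains to show $1 - \prod_{i=0}^n F_i(p) \ge \bigl(\prod_{i=0}^n F_i(p)\bigr)\sum_{j=0}^n \frac{1-F_j(p)}{F_j(p)}$, i.e. $\frac{1}{\prod_i F_i(p)} - 1 \ge \sum_j\bigl(\frac{1}{F_j(p)} - 1\bigr)$, which is exactly the elementary inequality $\prod_j(1+x_j) - 1 \ge \sum_j x_j$ for nonnegative $x_j := \frac{1-F_j(p)}{F_j(p)} \ge 0$. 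This closes the argument: $\ap'(p) > 0$ strictly, because at least the $j=0$ term in the strict inequality above contributes (we have $\frac{p f_0(p)}{F_0(p)} < \frac{1-F_0(p)}{F_0(p)}$ strictly, since $\Phi_0(p) = p - (p+1) < 0$ for $p$ finite, indeed $\Phi_0(p) = -1$).

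The only delicate point — the ``main obstacle'' — is justifying that $\Phi_j(p) < 0$ on the entire open interval $p \in (1, v_j)$, including handling probability-masses and the endpoints. By Lemma~\ref{lem:shape}.1, if $q_j > 0$ then $\partial_+ r_j(q) > 1$ on $[0,q_j)$ with $v_j = r_j(q_j)/q_j > 1$; translating to value-space, for $p \in (1, v_j)$ the quantile $q = 1 - F_j(p)$ lies in $(0, q_j)$ (using left-continuity of $F_j$ and that $r_j$ is increasing on $[0,q_j]$), so the virtual value $\Phi_j(p) = \partial_+ r_j(q)\big/ ... $ — more precisely we use that $p(1-F_j(p))$ is strictly increasing on $(0,v_j)$ per Lemma~\ref{lem:shape}.2, which already gives $\partial_p\bigl(p(1-F_j(p))\bigr) > 0$, i.e. $(1-F_j(p)) - p f_j(p) > 0$, i.e. $\Phi_j(p) < 0$, directly, wherever $f_j$ is defined. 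Since differentiability of $\ap$ holds a.e. and $\ap$ is continuous, pointwise positivity of the derivative a.e. combined with monotonicity of each factor $p(1-F_i(p))$ (non-decreasing, with at least one — namely $i=0$ — strictly increasing) yields strict monotonicity of the product-complement, hence of $\ap$, on the whole interval. I would phrase the final step via monotonicity of each $p \cdot (1 - F_i(p))$ rather than pointwise derivatives to sidestep any measure-zero subtleties.
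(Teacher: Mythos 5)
Your proof is correct and follows essentially the same route as the paper's: differentiate $\ap(p)=p\bigl(1-\prod_i F_i(p)\bigr)$, bound each term $p f_j(p)/F_j(p)$ by $\bigl(1-F_j(p)\bigr)/F_j(p)$ using the fact that $p\cdot\bigl(1-F_j(p)\bigr)$ is still increasing below the monopoly price, and close with the elementary inequality $1+\sum_j x_j\le\prod_j(1+x_j)$; the paper does the identical computation after first substituting the explicit linear-segment CDF $F_i(p)=\frac{(1-q_i)p}{(1-q_i)p+r_i(q_i)}$ valid for $p\le v_i$ under the w.l.o.g.\ reductions. One caveat: you should not invoke Main Lemma~\ref{lem:p/(p+1)} to get $F_0(p)=\frac{p}{p+1}$, since Lemma~\ref{lem:ap:increase} is itself an ingredient of that lemma's proof; at this stage only $F_0(p)=\frac{p}{p+r_0(0)}$ with $r_0(0)\ge 0$ is available, which your computation handles just as well (and note the paper's own argument only establishes $\ap'(p)\ge 0$, with strictness then supplied by the terms $i\in[n]$, for which $r_i(q_i)>0$).
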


\begin{lemma}
\label{lem:ap_infinity}
$\lim\limits_{p \to \infty} \ap\big(p, \{F_i\}_{i = 0}^n\big) = \sum\limits_{i = 0}^n r_i(0)$.
\end{lemma}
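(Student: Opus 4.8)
\textbf{Proof plan for Lemma~\ref{lem:ap_infinity}.}

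The plan is to compute the limit directly from the definition of $\ap$ and the revenue-quantile curves, using the characterization of $r_i(0)$ as a support-supremum-scaled mass. Recall that $\ap\big(p, \{F_i\}_{i = 0}^n\big) = p \cdot \bigl(1 - \prod_{i = 0}^n F_i(p)\bigr)$, and that by the definitions in Section~\ref{subsec:prelim-cdf} and Section~\ref{subsec:prelim-rq-curve} we have $r_i(0) = \lim_{p \rightarrow \infty} p \cdot \bigl(1 - F_i(p)\bigr)$ for each $i$. First I would write $1 - F_i(p) = \frac{1}{p}\bigl(r_i(0) + \epsilon_i(p)\bigr)$ where $\epsilon_i(p) \to 0$ as $p \to \infty$ (this is just a restatement of the limit defining $r_i(0)$; note that when $u_i < \infty$ the quantity $1 - F_i(p)$ is eventually $0$, i.e.\ $r_i(0) = 0$ and $\epsilon_i(p) = 0$ for large $p$, which is consistent). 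Then $F_i(p) = 1 - \frac{1}{p}(r_i(0) + \epsilon_i(p))$, so
\[
\prod_{i = 0}^n F_i(p) = \prod_{i = 0}^n \left(1 - \frac{r_i(0) + \epsilon_i(p)}{p}\right) = 1 - \frac{1}{p}\sum_{i = 0}^n \bigl(r_i(0) + \epsilon_i(p)\bigr) + O\!\left(\frac{1}{p^2}\right),
\]
since $n$ is fixed and the cross terms in expanding the product are each $O(1/p^2)$. Multiplying by $p$ and letting $p \to \infty$ kills the $\epsilon_i(p)$ terms and the $O(1/p)$ remainder, leaving $\lim_{p\to\infty}\ap(p) = \sum_{i=0}^n r_i(0)$.

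To make this rigorous rather than heuristic, I would avoid the $O(\cdot)$ notation in the final write-up and instead use the elementary two-sided bound: for $x_0, \dots, x_n \in [0,1]$ one has $1 - \sum_{i} x_i \le \prod_i (1 - x_i) \le 1 - \sum_i x_i + \binom{n+1}{2}\max_i x_i^2$ (the lower bound is the union-bound / Weierstrass inequality, the upper bound from expanding and discarding higher-order terms). Applying this with $x_i = \frac{r_i(0) + \epsilon_i(p)}{p}$, which lies in $[0,1]$ for $p$ large enough (each $r_i(0) \le 1$ by Lemma~\ref{lem:shape} after the reductions, and $\epsilon_i(p) \to 0$), and multiplying through by $p$, gives
\[
\sum_{i=0}^n \bigl(r_i(0) + \epsilon_i(p)\bigr) - \frac{n(n+1)}{2}\cdot\frac{(\max_i(r_i(0)+\epsilon_i(p)))^2}{p} \;\le\; \ap(p) \;\le\; \sum_{i=0}^n \bigl(r_i(0) + \epsilon_i(p)\bigr).
\]
Both outer expressions converge to $\sum_{i=0}^n r_i(0)$ as $p \to \infty$, so the squeeze theorem finishes the proof.

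I do not expect any real obstacle here; this lemma is essentially bookkeeping about the asymptotics of CDFs near their right tail. The only point requiring a moment of care is the case $u_i < \infty$ for some $i$ (so that $F_i(p) = 1$ and the $i$-th factor drops out of the product for $p > u_i$) — but this is harmless since then $r_i(0) = 0$ and $\epsilon_i(p) = 0$ eventually, so it fits the same formula. A second minor point is that $\ap(p)$ may have jumps at the $u_i$'s (from point masses), but since we are taking $p \to \infty$ and each distribution has at most one point mass located at its support-supremum $u_i < \infty$, the function $\ap(p)$ is continuous for $p$ larger than $\max\{u_i : u_i < \infty\}$, so the limit is over a continuous regime and is well-defined (this is exactly the footnote remark accompanying the definition of $\beta$).
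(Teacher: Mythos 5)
Your proof is correct and follows essentially the same route as the paper's: both expand each $F_i(p)$ as $1 - \big(r_i(0) + o(1)\big)\big/p$ using the definition $r_i(0) = \lim_{p \to \infty} p \cdot \big(1 - F_i(p)\big)$, multiply out the product for fixed $n$, and let $p \to \infty$; your two-sided Weierstrass bound merely makes explicit the $o(\cdot)$ bookkeeping that the paper leaves implicit. (One cosmetic point: the label you cite for the CDF subsection does not exist in the paper, so that cross-reference would not resolve.)
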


In a worst-case instance $\{F_i\}_{i = 0}^n$ of Program~(\ref{prog:0}), definitely $\max \big\{\ap(p):\,\, p \in (1, \infty]\big\} = 1$, as otherwise we can always scale up all distributions. Based on Lemma~\ref{lem:ap:increase}, $\max \big\{\ap(p):\,\, p \in (1, \infty]\big\}$ is reached when $p = \beta \geq \min \big\{v_i: \,\, i \in [n]\big\} > 1$ (notice that the maximizer $\beta$ can be infinity). Hence, $\beta$ must be well-defined (can be infinity). By definition we have $\ap(\beta) = 1$ and further,
\begin{itemize}
\item When $\beta \in (1, \infty)$, the feasibility to constraint~(\ref{cstr:ap0}) ensures $\partial_+ \ap(\beta) \leq 0$ and $\partial_- \ap(\beta) \geq 0$;
\item When $\beta = \infty$, it follows from Lemma~\ref{lem:ap_infinity} that $\sum\limits_{i = 0}^n r_i(0) = \ap(\infty) = \ap(\beta) = 1$.
\end{itemize}

\paragraph{Case I (when $\beta < \infty$):}
This case is a bit strange since the constraint of $\ap(p)\leq 1$ is not tight for $p>\beta$. Intuitively, one can slightly increase the distribution for $p>\beta$ so that the constraint of $\ap(p)\leq 1$ is still satisfied while the objective function is strictly increased. This seems always possible. The only reason that we cannot do this is due to the  hidden constraint that the distributions are regular. This motivates the following notion of \emph{critical instance} in Definition~\ref{def:critical}. The \emph{first} condition for criticality means, for each $i \in [n]$ with $F_i(\beta) < 1$, that revenue-quantile curve $r_i(q)$ is a line segment\footnote{In terms of notations in Definition~\ref{def:critical}, we have $r_i(q) = a_i \cdot q + b_i$ for all $q \in [0, 1 - F_i(\beta)]$.} on $q \in [0, 1 - F_i(\beta)]$ (this range refers to all values $p \in [\beta,\infty]$).
\vspace{5.5pt} \\
\fbox{\begin{minipage}{\textwidth}
\begin{definition}
\label{def:critical}
A critical instance $\{F_i\}_{i = 0}^n$ is feasible, has well-defined and finite $\beta$, and satisfies:
\begin{enumerate}
\item $\forall \big(i \in [n] \text{: } F_i(\beta) < 1\big)$, $\exists a_i > 1$, $\exists b_i \geq 0$: $F_i(p) = 1 - \frac{b_i}{p - a_i}$ for all $p \in (\beta, \infty)$;
\item $\partial_+ \ap\big(\beta\big) = 0$.
\end{enumerate}
\end{definition}
\end{minipage}}

\paragraph{Case I.1 (when $\{F_i\}_{i = 0}^n$ is non-critical):}
Given that $\beta \in (1, \infty)$, indeed we can transform any a \emph{non-critical} instance into another \emph{critical} instance, or an instance with $\beta = \infty$, while preserve the property from Lemma~\ref{lem:p/(p+1)-3}.

\begin{itemize}
\item Suppose the \emph{first} condition for criticality fails: There exists $k \in [n]$ with $q_{\beta} \eqdef 1 - F_k(\beta) > 0$, so that revenue-quantile curve $r_k(q)$ is not a line segment (yet still concave) on $q \in \left[0, q_{\beta}\right]$.

    Given $\Delta_r > 0$, as Figure~\ref{fig:non-critical:1} demonstrates, there is a \emph{unique} line segment $\rr_k(q)$ such that (1)~$\rr_k(0) = r_k(0) + \Delta_r$; and (2)~$\rr_k(q)$ is \emph{tangent} to $r_k(q)$ at $\big(\qq_{\beta}, r_k(\qq_{\beta})\big)$, for some $\qq_{\beta} > 0$. In the range of $q \in (\qq_{\beta}, 1]$, we further define $\rr_k(q) \eqdef r_k(q)$. Interpreted as a revenue-quantile curve, $\rr_k(q)$ corresponds to some distribution $\FF_k$.

    We shall find the \emph{maximum} $\Delta_r > 0$ such that $\ap\big(p, \{F_0\} \cup \{\FF_k\} \cup \{F_i\}_{i \in [n] \setminus \{k\}}\big) \leq 1$ for all $p \in (1, \infty]$. Such $\Delta_r$ and $\rr_k(q)$ always exist, which can be inferred from the following facts\footnote{Combining facts~\textbf{(a,b)} together implies the existence of candidates $\Delta_r > 0$. Also, combine facts~\textbf{(a,c)} together, the maximum $\Delta_r = \rr_k(0) - r_k(0)$ is no more than $r_k(q_{\beta}) - \partial_+ r_k(q_{\beta}) \cdot q_{\beta} - r_k(0)$. Note that line segment $\rr_k(q)$ is tangent to $r_k(q)$ at $\big(q_{\beta}, r_k(q_{\beta})\big)$, when $\rr_k(0) = r_k(q_{\beta}) - \partial_+ r_k(q_{\beta}) \cdot q_{\beta}$.}.
    \begin{description}
    \item [(a):] $r_k(q)$ is not a line segment (yet still concave) on $q \in \left[0, q_{\beta}\right]$.
    \item [(b):] Given $p \in (\beta, \infty]$, $\ap\big(p, \{F_i\}_{i = 0}^n\big)$ is \emph{strictly} smaller than $1$, due to the definition of $\beta$.
    \item [(c):] $\ap\big(\beta, \{F_i\}_{i = 0}^n\big) = 1$.
    \end{description}
    Since $\rr_k(q) \geq r_k(q)$ for all $q \in [0, 1]$, distribution $\FF_k$ stochastically dominates distribution $F_k$. Hence, after replacing $F_k$ by $\FF_k$, we know (1)~the revenue from $\opt$ can never decrease; and (2)~$\sum\limits_{i = 0}^n r_i(0)$ \emph{strictly} increases by $\Delta_r > 0$.
\item Otherwise, the \emph{second} condition for criticality fails, yet the \emph{first} holds: $\partial_+ \ap\big(\beta\big) \neq 0$, and for all $i \in [n]$ with $F_k(\beta) < 1$, revenue-quantile curve $r_k(q)$ is a line segment on $q \in [0, 1 - F_k(\beta)]$. Once again, we can find a distribution $\FF_k$ such that, after replacing $F_k$ by $\FF_k$, (1)~the revenue from $\opt$ can never decrease; and (2)~$\sum\limits_{i = 0}^n r_i(0)$ \emph{strictly} increases by $\Delta_r > 0$.

    To see so, we shall notice that\footnote{As mentioned before, $\partial_+ \ap(\beta) \leq 0$ and $\partial_+ \ap(\beta) \geq 0$, based on the feasibility to constraint~(\ref{cstr:ap0}). Moreover, since the \emph{second} condition for criticality fails, we have $\partial_+ \ap(\beta) \neq 0$.} $\partial_+ \ap(\beta) < 0$ and $\partial_+ \ap(\beta) \geq 0$. In other words, there exists $k \in [n]$ such that $q_{\beta} \eqdef 1 - F_k(\beta) > 0$ and $\partial_+ F_k(\beta) > \partial_- F_k(\beta)$.

    In terms of revenue-quantile curve, we have $\partial_+ r_k(q_{\beta}) < \partial_- r_k(q_{\beta})$, as Figure~\ref{fig:non-critical:2} suggests. As per this, the existence of desired distribution $\FF_k$ comes from same arguments as before.
\end{itemize}

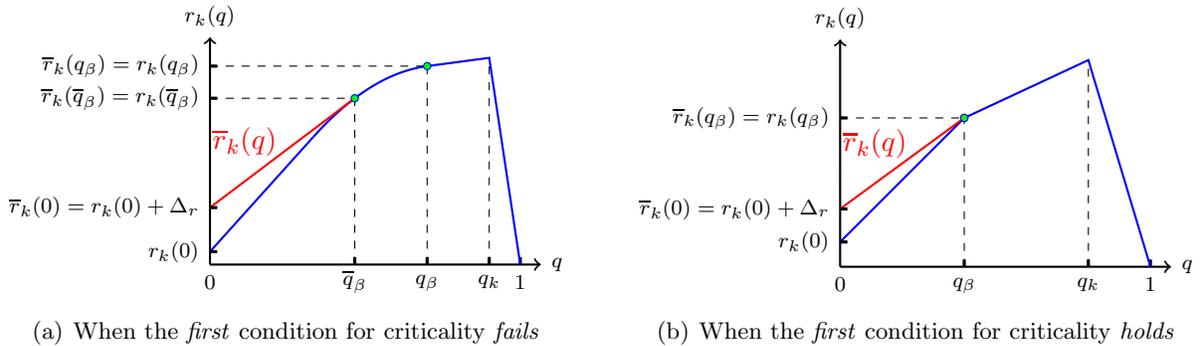
\begin{figure}[H]
\centering
\subfigure[When the \emph{first} condition for criticality \emph{fails}]{
\begin{tikzpicture}[thick, smooth, scale = 2.75]
\draw[->] (0, 0) -- (1.6, 0);
\draw[->] (0, 0) -- (0, 1.1);
\node[above] at (0, 1.1) {\scriptsize $r_k(q)$};
\node[right] at (1.6, 0) {\scriptsize $q$};
\node[below] at (0, 0) {\scriptsize $0$};
\draw[very thick] (1.5, 0) -- (1.5, 1pt);
\node[below] at (1.5, 0) {\scriptsize $1$};

\draw[color = blue] plot (0, 0.0622) -- (0.5, 0.6178) (1.05, 0.96) -- (1.35, 1) -- (1.5, 0);
\draw[color = blue, domain = 0.5: 1.05] plot (\x, {0.965 - 8 / 9 * (\x - 1.125)^2});
\draw[color = red] (0, 0.2756) -- (0.7, 0.8044);
\node[color = red, anchor = -15] at (0.375, 0.54) {$\rr_k(q)$};
\draw[thin, dashed] (1.35, 0) -- (1.35, 1);
\draw[very thick] (1.35, 0) -- (1.35, 1pt);
\node[below] at (1.35, 0) {\scriptsize $q_k$};

\draw[thin, dashed] (0, 0.8044) -- (0.7, 0.8044);
\draw[very thick] (0, 0.8044) -- (1pt, 0.8044);
\draw[thin, dashed] (0.7, 0) -- (0.7, 0.8044);
\draw[very thick] (0.7, 0) -- (0.7, 1pt);
\draw[thin, color = blue, fill = green] (0.7, 0.8044) circle(0.5pt);
\node[left] at (0, 0.8044) {\scriptsize $\rr_k(\qq_{\beta}) = r_k(\qq_{\beta})$};
\node[below] at (0.7, 0.6pt) {\scriptsize $\qq_{\beta}$};

\draw[very thick] (0, 0.0622) -- (1pt, 0.0622);
\node[left] at (0, 0.0622) {\scriptsize $r_k(0)$};

\draw[thin, dashed] (0, 0.96) -- (1.05, 0.96);
\draw[very thick] (0, 0.96) -- (1pt, 0.96);
\node[left] at (0, 0.96) {\scriptsize $\rr_k(q_{\beta}) = r_k(q_{\beta})$};

\draw[thin, dashed] (1.05, 0) -- (1.05, 0.96);
\draw[very thick] (1.05, 0) -- (1.05, 1pt);
\node[below] at (1.05, 0) {\scriptsize $q_{\beta}$};
\draw[thin, color = blue, fill = green] (1.05, 0.96) circle(0.5pt);

\draw[very thick] (0, 0.2756) -- (1pt, 0.2756);
\node[left] at (0, 0.2756) {\scriptsize $\rr_k(0) = r_k(0) + \Delta_r$};
\end{tikzpicture}
\label{fig:non-critical:1}
}
\quad
\subfigure[When the \emph{first} condition for criticality \emph{holds}]{
\begin{tikzpicture}[thick, smooth, scale = 2.75]
\draw[->] (0, 0) -- (1.6, 0);
\draw[->] (0, 0) -- (0, 1.1);
\node[above] at (0, 1.1) {\scriptsize $r_k(q)$};
\node[right] at (1.6, 0) {\scriptsize $q$};
\node[below] at (0, 0) {\scriptsize $0$};
\draw[very thick] (1.5, 0) -- (1.5, 1pt);
\node[below] at (1.5, 0) {\scriptsize $1$};
\draw[color = blue] plot (0, 0.12) -- (0.6, 0.72) -- (1.2, 1) -- (1.5, 0);
\draw[color = red] (0, 0.28) -- (0.6, 0.72);
\node[color = red, anchor = -25] at (0.375, 0.5) {$\rr_k(q)$};
\draw[thin, dashed] (1.2, 0) -- (1.2, 1);
\draw[very thick] (1.2, 0) -- (1.2, 1pt);
\node[below] at (1.2, 0) {\scriptsize $q_k$};

\draw[very thick] (0, 0.12) -- (1pt, 0.12);
\node[left] at (0, 0.12) {\scriptsize $r_k(0)$};

\draw[thin, dashed] (0, 0.72) -- (0.6, 0.72);
\draw[very thick] (0, 0.72) -- (1pt, 0.72);
\draw[thin, color = blue, fill = green] (0.6, 0.72) circle(0.5pt);
\node[left] at (0, 0.72) {\scriptsize $\rr_k(q_{\beta}) = r_k(q_{\beta})$};

\draw[thin, dashed] (0.6, 0) -- (0.6, 0.72);
\draw[very thick] (0.6, 0) -- (0.6, 1pt);
\node[below] at (0.6, 0) {\scriptsize $q_{\beta}$};
\draw[thin, color = blue, fill = green] (0.6, 0.72) circle(0.5pt);

\draw[very thick] (0, 0.28) -- (1pt, 0.28);
\node[left] at (0, 0.28) {\scriptsize $\rr_k(0) = r_k(0) + \Delta_r$};
\end{tikzpicture}
\label{fig:non-critical:2}
}
\caption{Demonstration for the reduction from a \emph{non-critical} instance to a \emph{critical} instance}
\label{fig:non-critical}
\end{figure}

After replacing $F_k$ by the above $\FF_k$, possible $\partial_+ \rr_k(0) \leq 1 + r_0(0)$, which violates the property from Lemma~\ref{lem:p/(p+1)-3}. In this case, we further apply the reduction involved in Lemma~\ref{lem:p/(p+1)-3} (to the $k$-th distribution). Afterwards, as mentioned before, $\sum\limits_{i = 0}^n r_i(0)$ remains same, yet $r_0(0)$ increases \emph{strictly}.

To sum up, each of the above reductions brings greater $\sum\limits_{i = 0}^n r_i(0)$ and greater $r_0(0)$, where one quantity increases \emph{strictly}. On the other hand, $\sum\limits_{i = 0}^n r_i(0) = \ap\big(\infty, \{F_i\}_{i = 0}^n\big) \leq 1$ and $r_0(0) \leq \frac{1}{\sqrt{3}}$\footnote{$\sum\limits_{i = 0}^n r_i(0) = \ap\big(\infty, \{F_i\}_{i = 0}^n\big) \leq 1$ is due to Lemma~\ref{lem:ap_infinity} and feasibility to constraint~(\ref{cstr:ap0}). $r_0(0) \leq \frac{1}{\sqrt{3}}$ respects the statement of Lemma~\ref{lem:p/(p+1)-1}.}. Thus, the procedure (invoking the above reductions) will converge, and bring either (1)~a critical instance $\{F_i\}_{i = 0}^n$ (see \textbf{Case I.2}); or (2)~an instance $\{F_i\}_{i = 0}^n$ with $\beta = \infty$ (see \textbf{Case II}).

\paragraph{Case I.2 (when $\{F_i\}_{i = 0}^n$ is critical):}
Given that $\beta \in (1, \infty)$, we are left to investigate all \emph{critical} instances. When a critical instance $\{F_i\}_{i = 0}^n$ satisfies the property from Lemma~\ref{lem:p/(p+1)-3}, the following lemma (whose proof is deferred to Appendix~\ref{subapp:upper1-p/(p+1)-1.1}) implies that $r_0(0) > \frac{1}{\sqrt{3}}$ in the worst case.
\vspace{5.5pt} \\
\fbox{\begin{minipage}{\textwidth}
\begin{lemma}
\label{lem:p/(p+1)-1.1}
Given a critical instance $\{F_i\}_{i = 0}^n$ of Program~(\ref{prog:0}) that satisfies the property from Lemma~\ref{lem:p/(p+1)-3} (i.e., $\partial_+ r_i(0) > 1 + r_0(0)$ for all $i \in [n]$ with $r_i(0) > 0$), then $r_0(0) > \frac{1}{\sqrt{3}}$.
\end{lemma}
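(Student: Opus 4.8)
The plan is to argue by contradiction: assume $c := r_0(0) \le \tfrac1{\sqrt3}$ and derive a violation of feasibility. By the reductions already in place $F_0(p) = \tfrac{p}{p+c}$, so $1-F_0(p) = \tfrac{c}{p+c}$ and $(\ln F_0)'(p) = \tfrac{c}{p(p+c)}$. Write $S := \{i\in[n] : F_i(\beta) < 1\}$. By criticality, for $i\in S$ we may write $F_i(p) = 1 - \tfrac{b_i}{p-a_i}$ on $(\beta,\infty)$ with $a_i > 1$ and $b_i = r_i(0) \ge 0$; setting $x_i := 1 - F_i(\beta) \in (0,1)$ the tail law forces $\beta - a_i = b_i/x_i$, i.e. $b_i = x_i(\beta - a_i)$. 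After a harmless rounding of atoms we may assume no $F_i$ has an atom at $\beta$, so that $b_i > 0$; then the property of Lemma~\ref{lem:p/(p+1)-3} gives $a_i > 1+c$, hence $b_i < x_i(\beta - 1 - c)$. A short argument rules out $S = \emptyset$ (otherwise $\ap$ would be strictly increasing past $\beta$, contradicting $\ap(\beta) = \max_p \ap(p) = 1$), and then $\beta \ge a_i + b_i > 1+c$ for every $i\in S$.

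The next step is to extract the two identities that criticality imposes at $p = \beta$ and collapse them into a single scalar inequality. From $\ap(\beta) = 1$, i.e. $F_0(\beta)\prod_{i\in S}F_i(\beta) = \tfrac{\beta-1}{\beta}$, one gets $\prod_{i\in S}(1-x_i) = \tfrac{(\beta-1)(\beta+c)}{\beta^2}$. Differentiating $\ap(p) = p\bigl(1 - \prod_i F_i(p)\bigr)$ and using $\partial_+\ap(\beta) = 0$ with $\ap(\beta) = 1$ gives $\sum_{i\in S}(\ln F_i)'(\beta^+) = \tfrac1{\beta(\beta-1)} - \tfrac{c}{\beta(\beta+c)} = \tfrac{\beta(1-c)+2c}{\beta(\beta-1)(\beta+c)}$; since $(\ln F_i)'(\beta^+) = \tfrac{b_i}{(\beta-a_i)(\beta-a_i-b_i)} = \tfrac{x_i^2}{b_i(1-x_i)}$, this becomes $\sum_{i\in S}\tfrac{x_i^2}{b_i(1-x_i)} = \tfrac{\beta(1-c)+2c}{\beta(\beta-1)(\beta+c)}$. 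Plugging $b_i < x_i(\beta-1-c)$ into the left side yields $\sum_{i\in S}\tfrac{x_i}{1-x_i} < \tfrac{(\beta-1-c)(\beta(1-c)+2c)}{\beta(\beta-1)(\beta+c)}$, while the elementary bound $\tfrac{t}{1-t} \ge -\ln(1-t)$ on $[0,1)$ together with the product identity gives $\sum_{i\in S}\tfrac{x_i}{1-x_i} \ge -\ln\prod_{i\in S}(1-x_i) = \ln\tfrac{\beta^2}{(\beta-1)(\beta+c)}$. Chaining the two inequalities produces
\[
\ln\frac{\beta^2}{(\beta-1)(\beta+c)} \;<\; \frac{(\beta-1-c)\bigl(\beta(1-c)+2c\bigr)}{\beta(\beta-1)(\beta+c)} , \qquad \beta \in (1+c,\infty).
\]

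The final --- and main --- step is to show that this inequality cannot hold for any $c \in [0,\tfrac1{\sqrt3}]$ and $\beta > 1+c$; this is exactly where the constant $\tfrac1{\sqrt3}$ is pinned down. I would put $g(\beta) := \ln\tfrac{\beta^2}{(\beta-1)(\beta+c)} - \tfrac{(\beta-1-c)(\beta(1-c)+2c)}{\beta(\beta-1)(\beta+c)}$ and prove $g \ge 0$ on $(1+c,\infty)$. At the left endpoint $g(1+c) = \ln\tfrac{(1+c)^2}{c(1+2c)} > 0$; as $\beta\to\infty$, a Taylor expansion gives $g(\beta) = \tfrac{1-3c^2}{2\beta^2} + \tfrac{\frac13 + c + c^2 + \frac53 c^3}{\beta^3} + O(\beta^{-4})$, which is $\ge 0$ precisely because $3c^2 \le 1$. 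To cover intermediate $\beta$ I expect to clear denominators, turn $g \ge 0$ into a polynomial-versus-logarithm comparison, split $(1+c,\infty)$ into a bounded part and a large-$\beta$ tail, and apply a sufficiently sharp logarithmic estimate on each (a second-order bound $\ln(1+y) \ge y - \tfrac{y^2}2$ suffices once $\beta$ is large but is far too lossy near $\beta = 1+c$, where the argument of the logarithm is $\Theta(1)$). This last piece is the real obstacle: $g$ carries a logarithm, is not obviously signed, and --- since the inequality is tight at $c = \tfrac1{\sqrt3}$ --- affords no slack, so the estimates must be tuned to each regime; a secondary technical nuisance is the preliminary reduction ruling out atoms of the $F_i$ at $\beta$.
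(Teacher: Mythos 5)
Your reduction is essentially the paper's. The two identities you extract at $p=\beta$ from $\ap(\beta)=1$ and $\partial_+\ap(\beta)=0$ are exactly the paper's facts \textbf{(c)} and \textbf{(d)} in Appendix~\ref{subapp:upper1-p/(p+1)-1.1}, your bound $b_i < x_i(\beta-1-c)$ is the same use of $a_i=\Phi_i(\beta^+)>1+r_0(0)$, and your elementary inequality $\tfrac{t}{1-t}\ge -\ln(1-t)$ is the substitution $x=\tfrac{t}{1-t}$ in the paper's $\ln(1+x)\le x$. The scalar inequality you arrive at is precisely $H\big(c,\beta\big)\ge 0$ for the paper's function $H(s,p)$ from Lemma~\ref{lem:ineq:p/(p+1)}, so the contradiction you need is exactly that lemma: $H(s,p)<0$ for all $s\in\big[0,\tfrac{1}{\sqrt3}\big]$ and $p\in(1,\infty)$. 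Two small points on the setup: the exclusion of atoms at $\beta$ is not a ``harmless rounding'' of the given instance --- the paper derives it as a consequence of criticality, since an atom at $\beta$ would force $\ap(\beta^+)<\ap(\beta)$ and violate $\partial_+\ap(\beta)=0$; and $S\neq\emptyset$ follows most directly from $1=\ap(\beta)=\beta\big(1-\tfrac{\beta}{\beta+c}\big)\le c\le\tfrac{1}{\sqrt3}$, not from a monotonicity argument past $\beta$.

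The genuine gap is the final two-variable inequality, which you yourself flag as the real obstacle and only sketch. Your endpoint check and the asymptotic expansion $g(\beta)=\tfrac{1-3c^2}{2\beta^2}+O(\beta^{-3})$ correctly locate where $\tfrac{1}{\sqrt3}$ is pinned down, but the plan of fixing $c$ and fighting $\beta$ over a bounded regime plus a tail, with regime-tuned logarithmic estimates, is not carried out and is likely to be painful precisely because the inequality has no slack at $c=\tfrac{1}{\sqrt3}$. The paper's route (Lemma~\ref{lem:ineq:p/(p+1)-1}) is the reverse slicing and is much cleaner: for each fixed $p$ it shows $\partial H/\partial s$ changes sign at most once on $\big[0,\tfrac{1}{\sqrt3}\big]$ (from $\le 0$ to $>0$), so $\max_s H(s,p)=\max\{H(0,p),\,H(1/\sqrt3,p)\}$; then $H(0,p)=\ln(1-1/p)+1/p=-\sum_{k\ge 2}\tfrac{1}{k}p^{-k}<0$ by Taylor series, and $H(1/\sqrt3,p)$ is shown to be strictly increasing in $p$ with limit $0$ at infinity because the numerator of its derivative is a convex quadratic that is positive on $(1,\infty)$. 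If you want to complete your write-up, replacing your per-$c$ analysis of $g(\beta)$ with this monotonicity-in-$s$ reduction to the two boundary curves is the missing ingredient.
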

\end{minipage}}

\paragraph{Case II (when $\beta = \infty$):}
In this case, as mentioned before, constraint~(\ref{cstr:ap0}) is tight when $p$ goes to infinity: $\sum\limits_{i = 0}^n r_i(0) = \ap(\infty) = \ap(\beta) = 1$. Given this, we testify Lemma~\ref{lem:p/(p+1)-2} in Appendix~\ref{subapp:upper1-p/(p+1)-2}, and thus narrow down the feasible space of Program~(\ref{prog:0}). Apparently, respect the property from Lemma~\ref{lem:p/(p+1)-3}, an instance $\{F_i\}_{i = 0}^n$ with $r_0(0) \leq \frac{1}{\sqrt{3}}$ and $\sum\limits_{i = 0}^n r_i(0) = 1$ can \emph{never} be feasible.
\vspace{5.5pt} \\
\fbox{\begin{minipage}{\textwidth}
\begin{lemma}
\label{lem:p/(p+1)-2}
Given a feasible instance $\{F_i\}_{i = 0}^n$ of Program~(\ref{prog:0}) that $\sum\limits_{i = 0}^n r_i(0) = 1$ and satisfies the property from Lemma~\ref{lem:p/(p+1)-3} (i.e., $\partial_+ r_i(0) > 1 + r_0(0)$ for all $i \in [n]$ with $r_i(0) > 0$), then $r_0(0) > \frac{1}{\sqrt{3}}$.
\end{lemma}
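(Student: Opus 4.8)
The plan is to read the bound off the second-order behavior of constraint~(\ref{cstr:ap0}) as $p \to \infty$. After the Step~I reductions we have $F_0(p) = \frac{p}{p + r_0(0)}$ with $r_0(0) \in [0,1]$; if $r_0(0) = 1$ there is nothing to prove, so assume $r_0(0) < 1$. Then $\sum_{i=1}^n r_i(0) = 1 - r_0(0) > 0$, so at least one index $i \in [n]$ has $r_i(0) > 0$, and for every such index the hypothesis (the property from Lemma~\ref{lem:p/(p+1)-3}) gives $\partial_+ r_i(0) > 1 + r_0(0)$. Substituting $F_0$ into $\ap(p) = p\big(1 - F_0(p)\prod_{i=1}^n F_i(p)\big) \le 1$ and simplifying, constraint~(\ref{cstr:ap0}) becomes $\prod_{i=1}^n F_i(p) \ge 1 - \frac{1-r_0(0)}{p} - \frac{r_0(0)}{p^2}$ for all $p \in (1,\infty)$. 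Writing $q_i(p) \eqdef 1 - F_i(p) \ge 0$ and $Q(p) \eqdef \sum_{i=1}^n q_i(p)$, the elementary bounds $\prod_{i=1}^n (1 - q_i(p)) \le e^{-Q(p)} \le 1 - Q(p) + \tfrac12 Q(p)^2$ (the last holding for $Q(p) \ge 0$) reduce this further to
\[
Q(p) - \tfrac12 Q(p)^2 \;\le\; \frac{1-r_0(0)}{p} + \frac{r_0(0)}{p^2} \qquad \forall p \in (1,\infty).
\]

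Next I would compute the asymptotics of $Q(p)$ as $p \to \infty$. For $i$ with $r_i(0) = 0$ we have $u_i < \infty$, so $q_i(p) = 0$ for $p$ large. For $i$ with $r_i(0) > 0$ feasibility forbids a probability mass at infinity, so the support is a half-line $[\ell_i,\infty)$ on which $F_i$ is strictly increasing; hence for large $p$ the price $p$ picks out a quantile $q_i(p)$ satisfying $p\cdot q_i(p) = r_i\big(q_i(p)\big)$. Since $r_i$ is concave and right-differentiable at $0$, $r_i(q) = r_i(0) + \partial_+ r_i(0)\cdot q + o(q)$ as $q \to 0^+$; bootstrapping the relation $p\,q_i(p) = r_i(q_i(p))$ — first $q_i(p) \sim r_i(0)/p$, then feeding this back in — gives $q_i(p) = \frac{r_i(0)}{p} + \frac{\partial_+ r_i(0)\cdot r_i(0)}{p^2} + o\!\big(\tfrac{1}{p^2}\big)$. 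Summing over $i \in [n]$ and using $\sum_i r_i(0) = 1 - r_0(0)$,
\[
Q(p) = \frac{1-r_0(0)}{p} + \frac{S}{p^2} + o\!\Big(\frac{1}{p^2}\Big), \qquad S \eqdef \sum_{i:\, r_i(0) > 0} \partial_+ r_i(0)\cdot r_i(0).
\]

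Plugging this expansion into the displayed constraint, the $1/p$ terms cancel; multiplying by $p^2$ and letting $p \to \infty$ leaves $S - \tfrac12(1-r_0(0))^2 \le r_0(0)$. On the other hand, since every $\partial_+ r_i(0) > 1 + r_0(0)$ and the index set in the definition of $S$ is nonempty, $S > (1 + r_0(0))\sum_i r_i(0) = (1+r_0(0))(1-r_0(0)) = 1 - r_0(0)^2$. Combining the two inequalities yields $1 - r_0(0)^2 - \tfrac12(1-r_0(0))^2 < r_0(0)$, which simplifies to $3 r_0(0)^2 > 1$, i.e. $r_0(0) > \tfrac{1}{\sqrt{3}}$, as claimed.

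The main obstacle is the rigorous second-order expansion $q_i(p) = \frac{r_i(0)}{p} + \frac{\partial_+ r_i(0)\cdot r_i(0)}{p^2} + o(1/p^2)$: it has to be obtained from nothing more than concavity and right-differentiability of $r_i$ at $0$, by carefully inverting the fixed-point equation $p\,q_i(p) = r_i(q_i(p))$, and one must control the error well enough that $p^2$ times it still vanishes. A secondary point is to confirm that $p\,q_i(p) = r_i(q_i(p))$ genuinely holds for all large $p$ — this uses that a regular distribution has no flat piece of its CDF in the interior of its (single-interval) support, together with the absence, by feasibility, of a probability mass at infinity when $u_i = \infty$.
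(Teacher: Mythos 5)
Your proof is correct in outline and is essentially the paper's argument seen from a slightly different angle: both extract the lemma from the \emph{second-order} behaviour of constraint~(\ref{cstr:ap0}) as $p \to \infty$, using $\sum_{i=0}^n r_i(0) = 1$ to cancel the $1/p$ term and $\partial_+ r_i(0) > 1 + r_0(0)$ to make the $1/p^2$ coefficient too large unless $r_0(0) > 1/\sqrt{3}$ (the paper phrases this as $G_{\ap}(0) = G_{\ap}'(0) = 0$ and $G_{\ap}''(0) > 0$ for a suitable log-transform of the constraint, yielding a contradiction with feasibility). The difference is purely in implementation: you invert the fixed-point relation $p\,q_i(p) = r_i(q_i(p))$ to get a two-term expansion of each quantile, whereas the paper replaces each $r_i$ by the \emph{chord} $a_i q + r_i(0)$ with a finite $a_i \in \big(1 + r_0(0), \partial_+ r_i(0)\big)$, which gives the closed-form lower bound $F_i(p) \geq 1 - \frac{r_i(0)}{p - a_i}$ and reduces everything to an explicit rational function of $t = 1/p$.

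Two technical points in your write-up need repair, and the paper's chord trick is precisely what avoids both. First, your claim that $r_i(0) = 0$ forces $u_i < \infty$ (hence $q_i(p) = 0$ for large $p$) is false: e.g.\ $r_i(q) = \sqrt{q}$ is concave with $r_i(0) = 0$ but unbounded support. This is harmless only because those quantiles are nonnegative and $o(1/p)$, and your argument really only needs a \emph{lower} bound on $Q(p)$ (for the linear term) together with the crude upper bound $Q(p) \leq (1+o(1))\frac{1-r_0(0)}{p}$ (for the $Q^2$ term); you should say this rather than assert the terms vanish. Second, your expansion presumes $\partial_+ r_i(0) < \infty$, which concavity does not guarantee; if $\partial_+ r_k(0) = \infty$ for some $k$ with $r_k(0) > 0$, the relation $r_k(q) = r_k(0) + \partial_+ r_k(0) q + o(q)$ is meaningless. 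The fix is the paper's: pick any finite $a_i$ with $1 + r_0(0) < a_i < \partial_+ r_i(0)$, use concavity to get the one-sided bound $q_i(p) \geq \frac{r_i(0)}{p - a_i} \geq \frac{r_i(0)}{p} + \frac{a_i\, r_i(0)}{p^2}$, and run your comparison with $S$ replaced by $\sum_i a_i r_i(0) > \big(1 + r_0(0)\big)\big(1 - r_0(0)\big)$, which is all the contradiction requires. With those two adjustments your derivation of $3 r_0(0)^2 > 1$ is sound.
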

\end{minipage}}
\vspace{5.5pt}

To sum up, putting the above \textbf{Case I} and \textbf{Case II} (Lemma~\ref{lem:p/(p+1)-1.1} and Lemma~\ref{lem:p/(p+1)-2} essentially) together completes the proof of Lemma~\ref{lem:p/(p+1)-1}.

\section{Main Proof: Optimization Part}
\label{sec:upper3}
In the optimization part of the main proof, we are provided with Program~(\ref{prog:1}), which is rewritten as fellow. For convenience, constraint~(\ref{cstr:ap2}) is reformulated in terms of potential functions. Recall that $\Psi\big(p, \{F_i\}\big) = \ln\left(1 + p \cdot \frac{1 - F_i(p)}{F_i(p)}\right)$ and $\R(p) = p\ln\left(\frac{p^2}{p^2 - 1}\right)$, for all $p \in (0, \infty)$.
\vspace{5.5pt} \\
\fcolorbox{black}{lightgray}{\begin{minipage}{\textwidth}
\begin{align}
\tag{\ref{prog:1}}
\underset{\{F_i\}_{i = 1}^n \in \reg^n}{\max} &\quad\quad \opt = 2 + \displaystyle{\int_1^{\infty}} \left(1 - \prod\limits_{i = 1}^n D_i(x)\right) dx \\
\tag{\ref{cstr:value}}
\text{subject to:} &\quad\quad \text{$v_i > 1$ and $\Phi_i(v_i^+) \geq 1$ \quad\quad\quad\quad\quad\quad\quad\quad\,\,\, $\forall i \in [n]$} \\
\tag{\ref{cstr:ap2}}
&\quad\quad \Psi\big(p, \{F_i\}_{i = 1}^n\big) = \sum\limits_{i = 1}^n \Psi\big(p, \{F_i\}\big) \leq \R(p) \quad\quad \forall p \in (1, \infty)
\end{align}
\end{minipage}}
\vspace{5.5pt}


As mentioned in Section~\ref{subsec:overview:opt}, we adopt the most natural proof plan for solving Program~(\ref{prog:1}): \emph{construct a \textbf{list} (finite sequence) of instances, where the \textbf{head} is the given feasible instance $\{F_i\}_{i = 1}^n$, and the \textbf{tail} is a continuous instance $\cont(\hgamma)$}. This proof plan will be implemented as \textsf{MAIN} (namely Algorithm~\ref{alg1}) in Section~\ref{subsec:upper2-main}. To convey basic ideas, we would outline the main steps before introducing \textsf{MAIN}: with given precision $\epsilon \in (0, 1)$,
\begin{description}
\item [(a):] For the \textbf{head} instance $\{F_i\}_{i = 1}^n$, we would capture its \textbf{next} instance $\cont(\gamma^*) \cup \{F_i^*\}_{i = 1}^n$ (see line~(\ref{alg1-preprocess}) to line~(\ref{alg1:gamma}) of \textsf{MAIN}). While being feasible to Program~(\ref{prog:1}) (see Section~\ref{subsec:upper2-cstr-opt}), the \textbf{next} instance $\cont(\gamma^*) \cup \{F_i^*\}_{i = 1}^n$ further possesses three properties:
    \begin{enumerate}
    \item \emph{Hybridization}: Different from the (discrete instance) \textbf{head} $\{F_i\}_{i = 1}^n$, the \textbf{next} is a mix of continuous-component $\cont(\gamma^*)$ and discrete-component $\{F_i^*\}_{i = 1}^n$, and thus is ``closer'' to the (continuous instance) \textbf{tail} $\cont(\hgamma)$. Notably, $\gamma^* \leq \frac{33}{\epsilon^2}$ is finite (see Lemma~\ref{lem:constant});
    \item \emph{Constraint-slack} (see inequality~(\ref{eq:cstr:ap.0.2}) in Section~\ref{subsec:upper2-cstr-opt}): For discrete-component $\{F_i^*\}_{i = 1}^n$ of the \textbf{next} instance, let $u^*$ denote its support-supremum. W.r.t. the \textbf{next} instance, there is a fixed parameter $\delta^* \geq \frac{1}{32}\epsilon^2$ (see Lemma~\ref{lem:constant}) such that, for all $p \in (1, u^*]$,
        \[
        \text{LHS of~constraint~(\ref{cstr:ap2})} \leq \text{RHS of~constraint~(\ref{cstr:ap2})} - \delta^*;
        \]
    \item \emph{Near-optimality} (see Section~\ref{subsec:upper2-cstr-opt}): $\opt\big(\textbf{next}\big) \geq \opt\big(\textbf{head}\big) - 5\epsilon$.
    \end{enumerate}
\item [(b):] We acquire the remaining \textbf{list} iteratively, by invoking \textsf{SUBROUTINE} (namely Algorithm~\ref{alg2} in Section~\ref{subsec:upper3-subroutine}) in line~(\ref{alg1:construct}) of \textsf{MAIN}. Each remaining instance \textbf{current} is constructed from its \textbf{previous} instance in a tailor-made way (again, see Section~\ref{subsec:upper3-subroutine}). Here, we shall stress that a specific \textbf{current} is feasible to Program~(\ref{prog:1}), and preserves analogous properties:
    \begin{enumerate}
    \item \emph{Hybridization} (see Section~\ref{subsec:upper3-subroutine}): The \textbf{current} is constructed from its \textbf{previous} through ``augmenting'' the continuous-component, and ``diminishing'' the discrete-component. As a consequence, the \textbf{current} is ``closer'' to the (continuous instance) \textbf{tail} $\cont(\hgamma)$, compared with its \textbf{previous};
    \item \emph{Constraint-slack} (see Section~\ref{subsec:upper3-cstr}): For discrete-component of the \textbf{current}, denote by $\uu$ its support-supremum. Our tailor-made construction (see Section~\ref{subsec:upper3-subroutine}) automatically respects this property: given the \textbf{current}, in the range of $p \in (1, \uu]$, we have
        \[
        \text{LHS of~constraint~(\ref{cstr:ap2})} \leq \text{RHS of~constraint~(\ref{cstr:ap2})} - \delta^*;
        \]
    \item \emph{Optimality} (see Section~\ref{subsec:upper3-opt}): $\opt\big(\textbf{current}\big) \geq \opt\big(\textbf{previous}\big)$.
    \end{enumerate}
\end{description}
\fbox{\begin{minipage}{\textwidth}
Suppose that \emph{the \textbf{list} of instances is indeed finite} (to be affirmed soon after), then \textsf{SUBROUTINE} returns a (continuous instance) \textbf{tail} $\cont(\hgamma)$. Accordingly, applying proof plan~\textbf{(b.3)} inductively gives $\opt\big(\textbf{tail}\big) \geq \opt\big(\textbf{next}\big) \overset{(\textbf{a.3})}{\geq} \opt\big(\textbf{head}\big) - 5\epsilon$. In other words,
\[
\opt\big(\cont(\hgamma)\big) \geq \opt\big(\{F_i\}_{i = 1}^n\big) - 5\epsilon.
\]
\end{minipage}}
\begin{description}
\item [(c):] Now, we would briefly explain \emph{why the \textbf{list} is finite}, and defer more details to Section~\ref{subsec:upper3-subroutine}.
    \begin{itemize}
    \item In each iteration of \textsf{SUBROUTINE}, how we construct a \textbf{current} instance (from its \textbf{previous} instance) is \emph{potential-based} (recall Section~\ref{subsec:upper2-cont} for more details);
    \item As input of \textsf{SUBROUTINE}, the \textbf{next} instance $\cont(\gamma^*) \cup \{F_i^*\}_{i = 1}^n$ has a finite potential. On the other hand, the \textbf{tail} instance (if attainable) is of $0$-potential. To see the \textbf{list} is a \emph{finite} sequence of instances, it suffices to find a suitable and fixed\footnote{Here, ``fixed'' means the step-size $\Delta^* > 0$ only depends on the \textbf{head} instance $\{F_i\}_{i = 1}^n$ and precision $\epsilon > 0$.} \emph{step-size} $\Delta^* > 0$, which prescribes the \emph{potential-decrease}\footnote{That is, how ``closer'' (to the \textbf{tail} instance) a \textbf{current} instance is, compared with its \textbf{previous} instance.} per iteration;
    \item A workable step-size $\Delta^* > 0$ is defined in line~(\ref{alg1:Delta}) of \textsf{MAIN}. Such $\Delta^*$ is carefully chosen, to enable our proof plan~\textbf{(b.3)} that $\opt\big(\textbf{current}\big) \geq \opt\big(\textbf{previous}\big)$.
    \item Notably, such $\Delta^*$ is proportional to $\delta^*$, and is inversely proportional to $\gamma^{*2}$. Hence, proof plan~\textbf{(a.1)} and~\textbf{(a.2)} are both necessary: to make $\Delta^*$ bounded away from $0^+$, introduce continuous-component $\cont(\gamma^*)$, where $\gamma^* \leq \frac{33}{\epsilon^2}$, and constraint-slack $\delta^* \geq \frac{1}{32}\epsilon^2$.
    \end{itemize}
    To sum up, the desired \textbf{list} is finite, ending up with the (continuous instance) \textbf{tail} $\cont(\hgamma)$.
\end{description}
The above ideas will be implemented as algorithm \textsf{MAIN} (namely Algorithm~\ref{alg1}). Basically, algorithm \textsf{MAIN} consists of two parts: \textsf{PREPROCESSING} and \textsf{SUBROUTINE}.
\begin{itemize}
\item \emph{Input}: \textsf{MAIN} takes a feasible instance $\{F_i\}_{i = 1}^n$ of Program~(\ref{prog:1}) and a fixed $\epsilon > 0$ as input;
\item \emph{Output}: \textsf{MAIN} outputs a continuous instance $\cont(\hgamma)$ that is provably better than the input instance $\{F_i\}_{i = 1}^n$ (in terms of the objective value, within $5\epsilon$-loss);
\item \textsf{PREPROCESSING}: with precision $\epsilon > 0$ and input instance $\{F_i\}_{i = 1}^n$, we construct a ``\emph{hybrid}'' instance $\cont(\gamma^*) \cup \{F_i^*\}_{i = 1}^n$;
\item \textsf{SUBROUTINE}: we construct a finite sequence of instances, and therefore gradually transform $\cont(\gamma^*) \cup \{F_i^*\}_{i = 1}^n$ into $\cont(\hgamma)$. \textsf{SUBROUTINE} is an iterative algorithm: in each iteration, by invoking algorithm \textsf{DIMINISH} (see Section~\ref{subsec:upper3-subroutine}), we get a new regular instance that is feasible to Program~(\ref{prog:1}), and generates a greater objective value than previous instances;
\end{itemize}
In Section~\ref{subsec:upper2-cstr-opt}, we will prove that $\cont(\gamma^*) \cup \{F_i^*\}_{i = 1}^n$ is regular, feasible to Program~(\ref{prog:1}), and gives an objective value at least $\opt\big(\{F_i\}_{i = 1}^n\big) - 5\epsilon$. Further, all proofs about \textsf{SUBROUTINE} are given in Section~\ref{subsec:upper3-subroutine}, Section~\ref{subsec:upper3-reg-cstr} and Section~\ref{subsec:upper3-opt}. Particularly, we will show that \textsf{SUBROUTINE} terminates after finite iterations, and then outputs our final continuous instance $\cont(\hgamma)$.

\subsection{\textsf{PREPROCESSING}}
\label{subsec:upper2-main}

We first present algorithm \textsf{MAIN}, which invokes algorithm \textsf{SUBROUTINE} at the end. Noticeably, benefits from the \textsf{PREPROCESSING} part (namely line~(\ref{alg1:preprocessing}) to line~(\ref{alg1:Delta}) in the below) are threefold: (1)~it introduces a continuous instance $\cont(\gamma^*)$, incurring only a small loss of objective value; (2)~it gives a \emph{uniform} upper bound $u^*$ (depending on $\epsilon$) for supports of all distributions $F_i^*$'s; and (3)~it specifies a \emph{fixed} step-size $\Delta^*$ for algorithm \textsf{SUBROUTINE}, which is helpful in proving that algorithm \textsf{SUBROUTINE} (and thus algorithm \textsf{MAIN}) terminates in finite time.

\begin{algorithm}[H]
\caption{$\textsf{MAIN}(\epsilon, \{F_i\}_{i = 1}^n)$}
\begin{algorithmic}[1]
\Require fixed-precision $\epsilon \in (0, 1)$; feasible instance $\{F_i\}_{i = 1}^n$.
\Ensure continuous instance $\cont(\hgamma)$.
\State Let $u^* \eqdef \R^{-1}(\epsilon)$.
\label{alg1:preprocessing}
\Comment{\emph{Support-supremum of $\{F_i^*\}_{i = 1}^n$.}}
\State Define $D_i^*(p) \eqdef
\begin{cases}
D_i\big(1 + \epsilon\big) & \forall p \in (0, 1] \\
D_i\big((1 + \epsilon) \cdot p\big) & \forall p \in (1, u^*] \\
1 & \forall p \in (u^*, \infty)
\end{cases}$  for all $i \in [n]$.
\label{alg1-preprocess}
\Comment{\emph{Discrete-component $\{F_i^*\}_{i = 1}^n$.}}
\State Let $v^* \eqdef \min \big\{v_i^*: \,\, i \in [n]\big\}$ and $\kappa^* \eqdef \prod\limits_{i = 1}^n (1 - q_i^*)$.
\Comment{\emph{Fixed parameters.}}
\State Let $\delta^* \eqdef \frac{1}{2} \cdot \min \left\{\R(p) - \Psi\big(p, \{F_i^*\}_{i = 1}^n\big): \,\, p \in (1, u^*]\right\}$.
\Comment{\emph{Constraint-slack parameter (fixed).}}
\State Define $\gamma^* \eqdef \R^{-1}(\delta^*)$.
\label{alg1:gamma}
\Comment{\emph{Continuous-component $\cont(\gamma^*)$.}}
\State Let $\Delta^* \eqdef \frac{v^* - 1}{3\gamma^{*2}} \cdot \kappa^* \cdot \delta^*$.
\label{alg1:Delta}
\Comment{\emph{Step-size parameter (fixed).}}
\State \Return $\cont(\hgamma) \eqdef \textsf{SUBROUTINE}\big(\Delta^*, u^*, \cont(\gamma^*) \cup \{F_i^*\}_{i = 1}^n\big)$.
\label{alg1:construct}
\end{algorithmic}
\label{alg1}
\end{algorithm}

From Lemma~\ref{lem:terminate} in Section~\ref{subsec:upper3-subroutine}, we know $\hgamma \in (1, \infty)$ is well-defined. Here is our overall method:
\begin{flalign*}
\opt\big(\{F_i\}_{i = 1}^n\big) - 5\epsilon
\leq & \opt\big(\cont(\gamma^*) \cup \{F_i^*\}_{i = 1}^n\big) & \text{(See Lemma~\ref{lem:obj_approx1} and~\ref{lem:obj_approx2} in Section~\ref{subsec:upper2-cstr-opt})} \\
\leq & \opt\big(\cont(\hgamma)\big) & \text{(See Lemma~\ref{lem:contopt} in Section~\ref{subsec:upper3-opt})} \\
\leq & \opt\big(\cont(1)\big). & \text{(By Lemma~\ref{lem:opt_cont}, and since $\hgamma \in (1, \infty)$)}
\end{flalign*}
Since $\epsilon \in (0, 1)$ is fixed before launching algorithm \textsf{MAIN}, we can choose an arbitrary small $\epsilon \in (0, 1)$, hence the upper-bound part of Theorem~\ref{thm:opt_ap}: for all feasible instance $\{F_i\}_{i = 1}^n$ of Program~(\ref{prog:1}),
\[
\opt\big(\{F_i\}_{i = 1}^n\big) \leq \opt\big(\cont(1)\big) = 2 + \displaystyle{\int_1^{\infty}} \left(1 - e^{-\Q(x)}\right) dx.
\]
Prior to characterizing the near-optimal hybrid instance $\cont(\gamma^*) \cup \{F_i^*\}_{i = 1}^n$, we introduce the following lemma (whose proof is deferred to Appendix~\ref{subapp:upper2-alg}) to measure the quantities defined in algorithm \textsf{MAIN}, which turns out to be crucial for proving the near-optimality.
\begin{lemma}
\label{lem:constant}
$\frac{1}{32}\epsilon^2 \leq \delta^* \leq \frac{1}{2}\epsilon$, and $1 < v^* \leq u^* \leq \gamma^* \leq \frac{33}{\epsilon^2}$, and $\kappa^* \in (0, 1]$, and $\Delta^* \in (0, \delta^*)$.
\end{lemma}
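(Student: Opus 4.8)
\emph{Plan.} The statement bundles seven inequalities; I verify them in an order in which the easy ones feed the one delicate estimate. The order is: the two-sided bound on $u^*$, then $\delta^* \le \tfrac12\epsilon$, then the lower bound $\delta^* \ge \tfrac1{32}\epsilon^2$ (the only non-routine step), and finally the chain $v^* \le u^* \le \gamma^* \le \tfrac{33}{\epsilon^2}$, the claim $\kappa^* \in (0,1]$, and $\Delta^* \in (0,\delta^*)$, each of which is a short consequence of the earlier facts and of Lemma~\ref{lem:RQ}.

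\emph{The elementary bounds.} From Lemma~\ref{lem:RQ}.3, $\tfrac1{u^*} \le \R(u^*) = \epsilon \le \tfrac1{u^*-1}$, so $\tfrac1\epsilon \le u^* \le 1+\tfrac1\epsilon$; in particular $u^* > 1$ since $\epsilon < 1$. Evaluating the minimand defining $\delta^*$ at $p = u^*$ and using $\Psi(p,\{F_i^*\}_{i=1}^n)=\sum_i\ln\!\big(1+p\tfrac{1-F_i^*(p)}{F_i^*(p)}\big)\ge 0$ gives $\delta^*\le\tfrac12(\R(u^*)-\Psi(u^*,\{F_i^*\}_{i=1}^n))\le\tfrac12\epsilon$. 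Granting $\tfrac1{32}\epsilon^2 \le \delta^*$: since $\R$ is strictly decreasing (Lemma~\ref{lem:RQ}.1--2), $\gamma^* = \R^{-1}(\delta^*) > \R^{-1}(\epsilon) = u^*$, while Lemma~\ref{lem:RQ}.3 gives $\gamma^* \le 1+\tfrac1{\delta^*} \le 1+\tfrac{32}{\epsilon^2} \le \tfrac{33}{\epsilon^2}$ (using $1\le\epsilon^{-2}$). Each $F_i^*$ has support-supremum at most $u^*$ (as $D_i^*(p)=1$ for $p>u^*$), and a monopoly price never exceeds the support-supremum, so $v^*=\min_i v_i^*\le u^*$. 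Feasibility of $\{F_i\}$ to constraint~(\ref{cstr:ap2}) forces the infimum of the support of every $F_i$ to be at most $1$ (otherwise the $i$-th potential is infinite on a subinterval of $(1,\infty)$), hence $F_i(1+\epsilon)>0$; combined with $\Phi_i^{-1}(1+\epsilon)\ge 1+\epsilon$ (which holds because $\Phi_i(1+\epsilon)\le 1+\epsilon$), this yields $D_i(1+\epsilon)=F_i(\Phi_i^{-1}(1+\epsilon))\ge F_i(1+\epsilon)>0$, so $q_i^*=1-D_i(1+\epsilon)<1$ and $\kappa^*=\prod_i(1-q_i^*)\in(0,1]$. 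Since $D_i^*$ is constant on $(0,1]$, the monopoly virtual value of $F_i^*$ is at least $1$, so $v_i^*\ge\Phi_i^*(v_i^{*+})\ge1$ (using $\Phi_i^*(x)\le x$); a buyer with $v_i^*=1$ necessarily has $F_i^*(p)=1$ for all $p>1$, hence contributes nothing to the objective and nothing to the left-hand side of constraint~(\ref{cstr:ap2}) on $(1,u^*]$, and may be discarded, so w.l.o.g.\ $v^*>1$. Finally $\Delta^*=\tfrac{v^*-1}{3\gamma^{*2}}\kappa^*\delta^*>0$ by the above, and $\tfrac{v^*-1}{3\gamma^{*2}}\kappa^*<\tfrac{\gamma^*}{3\gamma^{*2}}\cdot1=\tfrac1{3\gamma^*}<1$ gives $\Delta^*<\delta^*$.

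\emph{The main step.} It remains to show $\R(p)-\Psi(p,\{F_i^*\}_{i=1}^n)\ge\tfrac1{16}\epsilon^2$ for all $p\in(1,u^*]$. The key point is that replacing a value CDF $F(\cdot)$ by $F((1+\epsilon)\cdot)$ replaces its virtual-value CDF $D(\cdot)$ by $D((1+\epsilon)\cdot)$; hence, writing $\widehat{F}_i(p):=F_i((1+\epsilon)p)$, the definition of $D_i^*$ shows that $D_i^*$ agrees with the virtual-value CDF of $\widehat{F}_i$ on $(1,u^*]$ and is pointwise at least it everywhere (the floor on $(0,1]$ and the truncation on $(u^*,\infty)$ only raise it). By the monotonicity ``a larger virtual-value CDF corresponds to a stochastically smaller value distribution'', $F_i^*(p)\ge\widehat{F}_i(p)=F_i((1+\epsilon)p)$ for $p\in(1,u^*]$; since $t\mapsto\tfrac{1-t}{t}$ is decreasing, for such $p$
\begin{align*}
\Psi\big(p,\{F_i^*\}_{i=1}^n\big) &\le \sum_{i=1}^n \ln\!\Big(1+p\cdot\tfrac{1-F_i((1+\epsilon)p)}{F_i((1+\epsilon)p)}\Big) \le \sum_{i=1}^n \ln\!\Big(1+(1+\epsilon)p\cdot\tfrac{1-F_i((1+\epsilon)p)}{F_i((1+\epsilon)p)}\Big) \\
&= \Psi\big((1+\epsilon)p,\{F_i\}_{i=1}^n\big) \le \R\big((1+\epsilon)p\big),
\end{align*}
the last inequality being feasibility of $\{F_i\}$ to constraint~(\ref{cstr:ap2}) (applicable since $(1+\epsilon)p>1$). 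Thus $\R(p)-\Psi(p,\{F_i^*\}_{i=1}^n)\ge\R(p)-\R((1+\epsilon)p)$, and by the mean value theorem the right side is $|\R'(\xi)|\,\epsilon p$ for some $\xi\in(p,(1+\epsilon)p)$; Lemma~\ref{lem:RQ}.3 gives $|\R'(\xi)|\ge\xi^{-2}>((1+\epsilon)p)^{-2}$, so $\R(p)-\R((1+\epsilon)p)>\tfrac{\epsilon}{(1+\epsilon)^2p}$, and $p\le u^*\le1+\tfrac1\epsilon=\tfrac{1+\epsilon}\epsilon$ turns this into $\R(p)-\R((1+\epsilon)p)>\tfrac{\epsilon^2}{(1+\epsilon)^3}\ge\tfrac18\epsilon^2$ (as $\epsilon<1$). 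Hence the minimand exceeds $\tfrac18\epsilon^2$ throughout $(1,u^*]$, so $\delta^*=\tfrac12\min_{p\in(1,u^*]}(\R(p)-\Psi(p,\{F_i^*\}_{i=1}^n))\ge\tfrac1{16}\epsilon^2\ge\tfrac1{32}\epsilon^2$. (The minimum is attained because the minimand $\to+\infty$ as $p\to1^+$, using $\R(1^+)=\infty$ and $\Psi(p,\{F_i^*\}_{i=1}^n)\le\R(1+\epsilon)<\infty$.)

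\emph{Main obstacle.} The only genuine work is the displayed bound $\Psi(p,\{F_i^*\}_{i=1}^n)\le\R((1+\epsilon)p)$ on $(1,u^*]$: it hinges on (i) the exact scaling relation between value CDFs and virtual-value CDFs, (ii) the stochastic-dominance monotonicity used to pass from $D_i^*\ge\widehat{D}_i$ to $F_i^*\ge\widehat{F}_i$, and (iii) the observation that the floor and the truncation in the definition of $D_i^*$ only move it upward. None of these is deep, but nailing down the boundary behaviour at $p=1$ and $p=u^*$, and the inert buyers with $v_i^*=1$, is where the attention is required; everything else is elementary calculus with the properties of $\R$ from Lemma~\ref{lem:RQ}.
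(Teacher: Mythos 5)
Your proof is correct and, for the heart of the lemma (the bound $\delta^*\ge\frac{1}{32}\epsilon^2$), follows essentially the paper's route: both arguments establish $\Psi\big(p,\{F_i^*\}_{i=1}^n\big)\le\Psi\big((1+\epsilon)p,\{F_i\}_{i=1}^n\big)\le\R\big((1+\epsilon)p\big)$ via $D_i^*(p)\ge D_i\big((1+\epsilon)p\big)$ and the monotone passage from virtual-value CDFs back to value CDFs (the paper cites Fact~\ref{fact2}.2 where you argue stochastic dominance directly), and then lower-bound $\R(p)-\R\big((1+\epsilon)p\big)$ by $|\R'|\cdot\epsilon p$ together with $u^*\le 2/\epsilon$ and $\epsilon u^*\ge1$. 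You do the last step pointwise with the mean value theorem, while the paper first pushes the minimum to $p=u^*$ using convexity of $\R$; same substance, and your constant $\epsilon^2/16$ is even slightly stronger than required. The one genuine divergence is $\kappa^*>0$: you argue each factor $1-q_i^*>0$ separately, from the structural observation that feasibility to constraint~(\ref{cstr:ap2}) forces every $F_i$'s support to reach down to $1$, hence $D_i(1+\epsilon)>0$; the paper instead derives the quantitative bound $\kappa^*\ge e^{-\R(v^*)}$ directly from the constraint via $-\ln\kappa^*\le\sum_{i}\ln\big(1+\tfrac{q_i^*}{1-q_i^*}\big)\le\Psi\big(v^*,\{F_i^*\}_{i=1}^n\big)\le\R\big((1+\epsilon)v^*\big)$. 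Both are valid; the paper's version buys an explicit lower bound on $\kappa^*$ with no discussion of supports, though the lemma only needs positivity. Your treatment of the degenerate buyers with $v_i^*=1$ (discard them as contributing nothing to objective or constraint) is at the same level of rigor as the paper's terse assertion in Section~\ref{subsec:upper2-cstr-opt} that $v_i^*>1$; I see no gap.
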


\subsection{Regularity, Feasibility and Near-Optimality of $\cont(\gamma^*) \cup \{F_i^*\}_{i = 1}^n$}
\label{subsec:upper2-cstr-opt}

We turn to investigate the regularity, feasibility and $\epsilon$-optimality of instance $\cont(\gamma^*) \cup \{F_i^*\}_{i = 1}^n$.

\paragraph{Regularity.}
Recall Definition~\ref{def:cont}, continuous instance $\cont(\gamma^*)$ consists of infinite ``small'' triangular distributions, hence the regularity of $\cont(\gamma^*)$. We also have $\{F_i^*\}_{i = 1}^n \in \reg^n$, in that each $D_i^*$ is well-defined on $p \in (0, \infty)$ (see the equivalent conditions for regularity in Section~\ref{subsec:prelim-reg-vv}).

\paragraph{Feasibility.}
The feasibility to constraint~(\ref{cstr:value}) is trivial. For continuous instance $\cont(\gamma^*)$, by definition it satisfies constraint~(\ref{cstr:value}). For each $i \in [n]$, $D_i^*(p)$ remains $D_i(1 + \epsilon)$ when $p \in (0, 1]$. This indicates that $\Phi_i(v_i^+) \geq 1$ and $v_i > 1$, due to the definition of virtual value.

The feasibility to constraint~(\ref{cstr:ap2}) is given by $\R\big(\max\{p, \gamma^*\}\big) + \Psi\big(p, \{F_i^*\}_{i = 1}^n\big) \leq \R(p)$, for all $p \in (1, \infty)$. We would handle this in the following two cases:
\begin{description}
\item [Case I (when $u^* < p < \infty$):] In this case, $D_i^*(p) = F_i^*(p) = 1$ for each $i \in [n]$, due to line~(\ref{alg1-preprocess}) of algorithm \textsf{MAIN}. Thus, the feasibility of $\cont(\gamma^*) \cup \{F_i^*\}_{i = 1}^n$ comes from that of $\cont(\gamma^*)$;
\item [Case II (when $1 < p \leq u^*$):] In this case, $\R\big(\max\{p, \gamma^*\}\big) = \R(\gamma^*) = \delta^*$, due to Lemma~\ref{lem:constant} that $\gamma^* \geq u^*$. Besides, note that $\delta^* = \frac{1}{2} \cdot \min \left\{\R(p) - \Psi\big(p, \{F_i^*\}_{i = 1}^n\big): \,\, p \in (1, u^*]\right\} \geq \frac{1}{32}\epsilon^2 > 0$,
    \begin{equation}
    \label{eq:cstr:ap.0.2}
    \R\big(\max\{p, \gamma^*\}\big) + \Psi\big(p, \{F_i^*\}_{i = 1}^n\big) = \delta^* + \Psi\big(p, \{F_i^*\}_{i = 1}^n\big) \leq \R(p) - \delta^* \leq \R(p).
    \end{equation}
\end{description}

\paragraph{Near-optimality.}
We have $\opt\big(\cont(\gamma^*) \cup \{F_i^*\}_{i = 1}^n\big) \geq \opt\big(\{F_i^*\}_{i = 1}^n\big)$. To measure the loss of objective value, it suffices to verify that $\opt\big(\{F_i^*\}_{i = 1}^n\big) \geq \opt\big(\{F_i\}_{i = 1}^n\big) - 5\epsilon$, that is,
\[
2 + \displaystyle{\int_1^{u^*}} \left(1 - \prod\limits_{i = 1}^n D_i^*(x)\right) dx \geq 2 + \displaystyle{\int_1^{\infty}} \left(1 - \prod\limits_{i = 1}^n D_i(x)\right) dx - 5\epsilon.
\]
Under the definition of $D_i^*$'s in line~(\ref{alg1-preprocess}) of algorithm \textsf{MAIN}, this inequality comes from combining Lemma~\ref{lem:obj_approx1} and Lemma~\ref{lem:obj_approx2} (whose proofs are deferred to Appendix~\ref{subapp:upper2-cstr-opt}) in the below.

\begin{lemma}
\label{lem:obj_approx1}
$\displaystyle{\int_1^{\infty}} \left(1 - \prod\limits_{i = 1}^n D_i(x)\right) dx - \displaystyle{\int_1^{\infty}} \left(1 - \prod\limits_{i = 1}^n D_i\big((1 + \epsilon) \cdot x\big)\right) dx \leq 3\epsilon$.
\end{lemma}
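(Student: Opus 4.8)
\textbf{Proof proposal for Lemma~\ref{lem:obj_approx1}.}
Write $W(x) \eqdef 1 - \prod_{i=1}^n D_i(x)$ for $x \ge 1$. Since each $D_i$ is a (virtual-value) CDF, $W$ is non-increasing and takes values in $[0,1]$; moreover $\int_1^{\infty} W(x)\,dx$ is finite, because for the feasible instance at hand $\ap \le 1$, and by the ex-ante relaxation bound of Alaei et al.~\cite{AHNPY15} we have $\opt \le e \cdot \ap \le e$, so $\int_1^{\infty} W(x)\,dx = \opt - 2 \le e - 2 < 1$ (recall the objective equals $2 + \int_1^\infty W$). The quantity to be bounded is $\int_1^\infty W(x)\,dx - \int_1^\infty W\big((1+\epsilon)x\big)\,dx$, and both integrals converge since $W\big((1+\epsilon)x\big) \le W(x)$ by monotonicity.

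The plan is a single change of variables. In the second integral substitute $u = (1+\epsilon)x$ to get $\int_1^\infty W\big((1+\epsilon)x\big)\,dx = \tfrac{1}{1+\epsilon}\int_{1+\epsilon}^{\infty} W(u)\,du$. Subtracting and rearranging,
\[
\int_1^\infty W(x)\,dx - \int_1^\infty W\big((1+\epsilon)x\big)\,dx
= \int_1^{1+\epsilon} W(x)\,dx + \frac{\epsilon}{1+\epsilon}\int_{1+\epsilon}^{\infty} W(u)\,du .
\]
Now bound the two terms separately: the first is at most $\int_1^{1+\epsilon} 1\,dx = \epsilon$ since $W \le 1$; the second is at most $\frac{\epsilon}{1+\epsilon}\int_1^{\infty} W(u)\,du \le \frac{\epsilon}{1+\epsilon}\,(e-2) < \epsilon$ by the tail bound above. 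Hence the total is at most $2\epsilon \le 3\epsilon$, which is the claim (with room to spare).

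There is no real obstacle here; the only points requiring care are (i) justifying that $\int_1^{\infty} W$ is finite and, in fact, uniformly bounded by a constant strictly below $1$ — for which I would invoke the already-cited $\opt \le e\cdot\ap$ bound, though any a-priori finiteness bound on the Myerson revenue of a feasible instance would suffice after adjusting constants — and (ii) performing the change of variables cleanly on the improper integral, which is legitimate because the integrand is non-negative and the substituted integral is dominated by a convergent one. If one prefers to avoid citing the $e$-bound, the same argument works by first bounding $\int_1^\infty W$ using constraint~(\ref{cstr:ap2}) together with Lemma~\ref{lem:RQ}.3, at the cost of a slightly larger absolute constant in place of $e-2$; either way the final estimate stays below $3\epsilon$.
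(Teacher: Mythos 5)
Your change-of-variables idea is the same one the paper uses, but your decomposition leaves a gap in the constant. The substitution itself and the identity
\[
\int_1^\infty W(x)\,dx - \int_1^\infty W\big((1+\epsilon)x\big)\,dx
= \int_1^{1+\epsilon} W(x)\,dx + \frac{\epsilon}{1+\epsilon}\int_{1+\epsilon}^{\infty} W(u)\,du
\]
are fine, as is bounding the first term by $\epsilon$. The problem is the tail bound $\int_1^\infty W \le e-2 < 1$. To get it you write $\int_1^\infty W = \opt - 2$ with $\opt$ the objective of Program~(\ref{prog:1}), i.e.\ the Myerson revenue of the \emph{full} instance including the special buyers $\tri(1,1)$ and $\tri(\infty)$, and then invoke $\opt \le e\cdot\ap \le e$. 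But the instance in Section~\ref{subsec:upper2-cstr-opt} is only guaranteed to satisfy constraint~(\ref{cstr:ap2}), which is a \emph{relaxation} of constraint~(\ref{cstr:ap0}); it does not imply $\ap \le 1$ for the full instance, so the hypothesis of the $e\cdot\ap$ bound is not available. (Indeed, bounding the full-instance $\opt$ by a constant close to $e$ is essentially what the paper is in the middle of proving.) What the Alaei et al.\ result does give under constraint~(\ref{cstr:ap2}) is $\sum_i v_iq_i \le e$ for the $n$ non-special buyers, hence only $\int_1^\infty W \le \int_0^\infty W \le e$. Plugging $e$ instead of $e-2$ into your second term yields a total of $\big(1+\tfrac{e}{1+\epsilon}\big)\epsilon$, which exceeds $3\epsilon$ for small $\epsilon$, so the argument as written does not close.

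The fix is short. The paper's route: extend \emph{both} integrals to $[0,\infty)$ before substituting — this only increases the difference, since $D_i\big((1+\epsilon)x\big)\ge D_i(x)$ makes the added integrand difference non-negative — and then the substitution gives exactly $\tfrac{\epsilon}{1+\epsilon}\int_0^\infty W \le \epsilon\cdot\opt\big(\{F_i\}_{i=1}^n\big) \le e\,\epsilon \le 3\epsilon$. Alternatively, within your decomposition, sharpen both terms with $\kappa \eqdef \prod_i(1-q_i)$: since $W(x)\le W(1^+)=1-\kappa$ for $x\ge 1$, the first term is at most $(1-\kappa)\epsilon$, and $\int_1^\infty W = \int_0^\infty W - \int_0^1 W \le e-(1-\kappa)$, so the two terms sum to at most $e\,\epsilon$. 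Either repair is routine, but one of them is needed.
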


\begin{lemma}
\label{lem:obj_approx2}
$\displaystyle{\int_{u^*}^{\infty}} \left(1 - \prod\limits_{i = 1}^n D_i\big((1 + \epsilon) \cdot x\big)\right) dx \leq \displaystyle{\int_{u^*}^{\infty}} \left(1 - \prod\limits_{i = 1}^n D_i(x)\right) dx \leq 2\epsilon$.
\end{lemma}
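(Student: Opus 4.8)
The first inequality is immediate and I would dispose of it in one line: each virtual value CDF $D_i$ is non-decreasing, so $D_i\big((1+\epsilon)x\big)\ge D_i(x)$ pointwise, hence $\prod_{i=1}^n D_i\big((1+\epsilon)x\big)\ge\prod_{i=1}^n D_i(x)$, and integrating $1-\prod_i D_i\big((1+\epsilon)x\big)\le 1-\prod_i D_i(x)$ over $x\in[u^*,\infty)$ finishes it. The real content is the bound $\int_{u^*}^\infty\big(1-\prod_{i=1}^n D_i(x)\big)\,dx\le 2\epsilon$, and my plan is to split it into a pointwise estimate on the integrand followed by an elementary one-dimensional integral.

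\emph{Pointwise estimate.} I would first prove that $1-\prod_{i=1}^n D_i(x)\le\frac{e^{\R(x)}-1}{x}$ for every $x>1$. Write $a_i=a_i(x):=\frac{1-F_i(x)}{F_i(x)}\ge 0$ (finite for $x>1$, since $F_i(x)=0$ would make the left side of constraint~(\ref{cstr:ap2}) infinite). The estimate follows by chaining three elementary steps. \textbf{(i)} By the Weierstrass product inequality, $\prod_i D_i(x)\ge 1-\sum_i\big(1-D_i(x)\big)$, so $1-\prod_i D_i(x)\le\sum_i\big(1-D_i(x)\big)$. \textbf{(ii)} Since $\Phi_i(p)=p-\frac{1-F_i(p)}{f_i(p)}\le p$, we have $\Phi_i^{-1}(x)\ge x$, hence $D_i(x)=F_i\big(\Phi_i^{-1}(x)\big)\ge F_i(x)$ by monotonicity of $F_i$, so $1-D_i(x)\le 1-F_i(x)\le a_i$. \textbf{(iii)} Constraint~(\ref{cstr:ap2}) reads $\sum_i\ln(1+xa_i)\le\R(x)$, i.e.\ $\prod_i(1+xa_i)\le e^{\R(x)}$; expanding the product and discarding the non-negative higher-order terms gives $1+x\sum_i a_i\le\prod_i(1+xa_i)\le e^{\R(x)}$, whence $\sum_i a_i\le\frac{e^{\R(x)}-1}{x}$. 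Combining (i)--(iii) yields the claim.

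\emph{Integrating the estimate.} It remains to show $\int_{u^*}^\infty\frac{e^{\R(x)}-1}{x}\,dx\le 2\epsilon$ with $u^*=\R^{-1}(\epsilon)$. Since $\R$ is strictly decreasing (Lemma~\ref{lem:RQ}.1) and $\R(u^*)=\epsilon$, we have $0<\R(x)\le\epsilon$ on $[u^*,\infty)$; as $t\mapsto\frac{e^t-1}{t}$ is increasing, $\frac{e^{\R(x)}-1}{x}\le\frac{e^\epsilon-1}{\epsilon}\cdot\frac{\R(x)}{x}$, so it suffices to control $\int_{u^*}^\infty\frac{\R(x)}{x}\,dx$. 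By definition $\frac{\R(x)}{x}=\ln\frac{x^2}{x^2-1}=2\ln x-\ln(x-1)-\ln(x+1)$, which has the elementary antiderivative $H(x)=2x\ln x-(x-1)\ln(x-1)-(x+1)\ln(x+1)$, and a one-line rearrangement gives $H(x)=\R(x)+\ln\frac{x-1}{x+1}$, so $H(x)\to 0$ as $x\to\infty$ (using $\R(\infty)=0$ from Lemma~\ref{lem:RQ}.2); therefore
\[
\int_{u^*}^\infty\frac{\R(x)}{x}\,dx=-H(u^*)=\ln\frac{u^*+1}{u^*-1}-\epsilon .
\]
I would then invoke the inequality $\ln\frac{p+1}{p-1}\le 2\R(p)$, valid for all $p>1$: substituting $t=1/p$ and multiplying by $\tfrac t2$, it reduces to $\sum_{k\ge1}\frac{t^{2k}}{2k-1}\le\sum_{k\ge1}\frac{t^{2k}}{k}$, which holds term-by-term because $\frac1{2k-1}\le\frac1k$. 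Applying it at $p=u^*$ gives $\ln\frac{u^*+1}{u^*-1}-\epsilon\le\epsilon$, hence $\int_{u^*}^\infty\frac{\R(x)}{x}\,dx\le\epsilon$, and finally
\[
\int_{u^*}^\infty\frac{e^{\R(x)}-1}{x}\,dx\ \le\ \frac{e^\epsilon-1}{\epsilon}\cdot\epsilon\ =\ e^\epsilon-1\ \le\ 2\epsilon ,
\]
the last step because $e^\epsilon\le 1+2\epsilon$ for $\epsilon\in(0,1)$.

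\emph{Where the difficulty lies.} The only genuinely subtle point is step (iii) of the pointwise estimate: one must keep the factor $x$ inside $\ln(1+xa_i)$. The naive route --- bound $1-\prod_i D_i(x)\le 1-\prod_i F_i(x)$ and then use only $\ln(1+a_i)\le\ln(1+xa_i)$ to get $\prod_i F_i(x)\ge e^{-\R(x)}$ --- yields $1-\prod_i D_i(x)\le 1-e^{-\R(x)}\sim\R(x)\sim x^{-1}$, whose tail integral \emph{diverges}. It is exactly the $x$ in constraint~(\ref{cstr:ap2}), together with the union/Weierstrass bound (which lets us replace $\sum_i\ln(1+a_i)$-type quantities by the smaller $\sum_i a_i$), that upgrades the tail decay to $\frac{e^{\R(x)}-1}{x}\sim x^{-2}$ and makes the tail mass $O(\epsilon)$. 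Everything afterwards --- passing from $D_i$ to $F_i$, the closed form of the logarithmic integral, and the power-series comparison $\ln\frac{p+1}{p-1}\le 2\R(p)$ --- is routine once Lemma~\ref{lem:RQ} is available.
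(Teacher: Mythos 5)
Your proof is correct, but it takes a genuinely different route from the paper's. Both arguments open with the same union bound $1-\prod_i D_i\le\sum_i(1-D_i)$, but then diverge. The paper evaluates each tail integral $\int_{u^*}^{\infty}\big(1-D_i(x)\big)dx$ in closed form via the virtual-value identity (Main Lemma~\ref{lem:virtual_value}, packaged as Lemma~\ref{lem:obj_approx3}), obtaining the single-point quantity $u^*\cdot\frac{1-F_i(u^*)}{F_i(u^*)}$; it then invokes constraint~(\ref{cstr:ap2}) \emph{only at} $p=u^*$ and extracts the constant $2$ from the elementary inequality $z\le 2\ln(1+z)$ on $[0,2]$. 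You instead bound the integrand pointwise ($1-D_i(x)\le\frac{1-F_i(x)}{F_i(x)}$, then the constraint at \emph{every} $x$ gives $\sum_i\frac{1-F_i(x)}{F_i(x)}\le\frac{e^{\R(x)}-1}{x}$) and integrate explicitly, extracting the constant $2$ from $e^{\epsilon}-1\le2\epsilon$ together with the series comparison $\ln\frac{p+1}{p-1}\le2\R(p)$. The paper's route is shorter given that the virtual-value machinery is already in place for other lemmas; yours is more self-contained at the cost of the closed-form antiderivative of $\R(x)/x$ and the power-series inequality, and your closing remark correctly identifies why the factor of $x$ inside the logarithm of constraint~(\ref{cstr:ap2}) is what makes the tail integrable at all --- the same structural point that, in the paper's version, is hidden inside Lemma~\ref{lem:obj_approx3}. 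All the individual steps you give check out (in particular $D_i(x)=F_i(\Phi_i^{-1}(x))\ge F_i(x)$ since $\Phi_i^{-1}(x)\ge x$, and the identity $H(x)=\R(x)+\ln\frac{x-1}{x+1}$ with $H(\infty)=0$), so there is no gap.
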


\subsection{\textsf{SUBROUTINE} and Running Time}
\label{subsec:upper3-subroutine}

As mentioned before, algorithm \textsf{SUBROUTINE} iteratively calls algorithm \textsf{DIMINISH} to construct new instances that are regular, feasible to Program~(\ref{prog:1}), and generate greater objective values (compared with previous instances). In each iteration, for brevity we respectively re-denote by $\cont(\gamma) \cup \{F_i\}_{i = 1}^n$ and $\cont(\ggamma) \cup \{\FF_i\}_{i = 1}^n$ the current and new instances.

When potential $\Psi\big(\{F_i\}_{i = 1}^n\big)$ decreases to $0$, algorithm \textsf{SUBROUTINE} terminates, and outputs a continuous instance $\cont(\hgamma)$. We present the algorithms as follows.

\begin{algorithm}[H]
\caption{$\textsf{SUBROUTINE}(\Delta^*, u^*, \cont(\gamma^*) \cup \{F_i^*\}_{i = 1}^n)$}
\begin{algorithmic}[1]
\Require step-size $\Delta^*$; support-supremum $u^*$; feasible instance $\cont(\gamma^*) \cup \{F_i^*\}_{i = 1}^n$.
\Ensure desired continuous instance $\cont(\hgamma)$.
\State Let $u \eqdef u^*$ and $\cont(\gamma) \cup \{F_i\}_{i = 1}^n \eqdef \cont(\gamma^*) \cup \{F_i^*\}_{i = 1}^n$.
\Comment{\emph{Initialization.}}
\While{$\Psi\big(\{F_i\}_{i = 1}^n\big) > 0$}
    \State Let $v \eqdef \max \big\{v_i: \,\, \big(i \in [n]\big) \bigwedge \big(\Psi\big(\{F_i\}\big) > 0\big)\big\}$.
    \label{alg2:v}
    \Comment{\emph{Referred to the next $F_i$ to ``vanish''.}}
    \State Let $\Delta \eqdef \min\big\{\Delta^*, \Psi\big(v, \{F_i\}_{i = 1}^n\big)\big\}$.
    \label{alg2:Delta}
    \Comment{\emph{Potential-decrease in this iteration.}}
    \State Let $\uu \eqdef \mathop{\arg\max} \big\{p \in [v, u]: \,\, \Psi\big(p, \{F_i\}_{i = 1}^n\big) \geq \Delta\big\}$.
    \label{alg2:uu}
    \Comment{\emph{Support-supremum of $\{\FF_i\}_{i = 1}^n$.}}
    \Statex \quad\,\,\,$------------------------------------------$
    \State \fcolorbox{white}{lightgray}{Define $\{\FF_i\}_{i = 1}^n \eqdef \textsf{DIMINISH}(\Delta, \uu, \{F_i\}_{i = 1}^n)$.}
    \label{alg2:discrete}
    \State \fcolorbox{white}{lightgray}{Define $\cont(\ggamma)$ by letting $\R(\ggamma) = \R(\gamma) + \Delta$.}
    \label{alg2:continuous}
    \State Update: $u \leftarrow \uu$ and $\cont(\gamma) \cup \{F_i\}_{i = 1}^n \leftarrow \cont(\ggamma) \cup \{\FF_i\}_{i = 1}^n$.
    \label{alg2:update}
\EndWhile
\State \Return $\cont(\gamma)$.
\end{algorithmic}
\label{alg2}
\end{algorithm}

\begin{algorithm}[H]
\caption{$\textsf{DIMINISH}(\Delta, \uu, \{F_i\}_{i = 1}^n)$}
\begin{algorithmic}[1]
\Require potential-decrease $\Delta$; next support-supremum $\uu$; current instance $\{F_i\}_{i = 1}^n$.
\Ensure next instance $\{\FF_i\}_{i = 1}^n$.
\State Let $\Delta^{\#} \eqdef \Delta$ and $W \eqdef \{i \in [n]: \,\, F_i \text{ has probability-mass at } \uu\}$.
\Comment{\emph{Initialization.}}
\ForAll{$i \in [n] \setminus W$}
    \State\label{alg3:Delta_no_mass} Let $\Delta_i \eqdef \Psi\big(\uu, \{F_i\}\big)$.
    Update: $\Delta^{\#} \leftarrow \big(\Delta^{\#} - \Delta_i\big)$.
    \State\label{alg3:FF1} \fcolorbox{white}{lightgray}{Define $\FF_i$ as $\Psi\big(p, \{\FF_i\}\big) \equiv \Psi\big(p, \{F_i\}\big) - \Delta_i$, for all $p \in (0, \uu]$.}
\EndFor
\Comment{\emph{$\Delta^{\#} \geq 0$, by definition of $\uu$.}}
\ForAll{$i \in W$}
    \Comment{\emph{Orders of $i$'s do not matter.}}
    \State Let $\Delta_i \eqdef \min\left\{\Delta^{\#}, \Psi\big(\uu, \{F_i\}\big)\right\}$.
    Update: $\Delta^{\#} \leftarrow \big(\Delta^{\#} - \Delta_i\big)$.
    \State\label{alg3:FF2} \fcolorbox{white}{lightgray}{Define $\FF_i$ as $\Psi\big(p, \{\FF_i\}\big) \equiv \Psi\big(p, \{F_i\}\big) - \Delta_i$, for all $p \in (0, \uu]$.}
\EndFor
\Comment{\emph{$\Delta^{\#} = 0$, by definition of $\uu$.}}
\State \Return $\{\FF_i\}_{i = 1}^n$.
\end{algorithmic}
\label{alg3}
\end{algorithm}

For algorithm \textsf{SUBROUTINE}, we first list several useful facts (namely Lemma~\ref{lem:alg_facts}, whose proofs are deferred to Appendix~\ref{app:upper3-lem:alg_facts}), and then measure its running time  in Lemma~\ref{lem:terminate}. Recall Lemma~\ref{lem:constant} for fixed-parameter $v^* = \min \big\{v_i^*: i \in [n]\big\} > 1$.

\begin{lemma}
\label{lem:alg_facts}
In each iteration of \textsf{SUBROUTINE}, (omit line~(\ref{alg2:update}) which is for updating parameters),
\begin{enumerate}
\item $\gamma^* \geq \gamma \geq \ggamma \geq u \geq \uu$, and $\uu_i = \uu$ for each $i \in [n]$ that $\Delta_i > 0$;
\item $\qq_i \leq q_i$ and $\uu_i \geq \vv_i = v_i = v_i^* \geq v^* > 1$, for each $i \in [n]$;
\item $\Psi\big(p, \{\FF_i\}_{i = 1}^n\big) = \Psi\big(p, \{F_i\}_{i = 1}^n\big) - \Delta$ for all $p \in (0, \uu]$.
\end{enumerate}
\end{lemma}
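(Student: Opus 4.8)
The plan is to unwind the definitions made in algorithms \textsf{SUBROUTINE} and \textsf{DIMINISH}, establishing the three claims essentially in the order stated, since each relies on the previous one together with the basic properties of the potential function from Lemma~\ref{lem:potential}.

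First I would prove part~3, which is the cleanest bookkeeping statement: in \textsf{DIMINISH}, every index $i$ (whether in $W$ or not) has its potential reduced by exactly $\Delta_i$ on $(0,\uu]$ via lines~(\ref{alg3:FF1}) and~(\ref{alg3:FF2}), so $\Psi\big(p,\{\FF_i\}_{i=1}^n\big) = \sum_i \Psi\big(p,\{\FF_i\}\big) = \sum_i \big(\Psi(p,\{F_i\}) - \Delta_i\big) = \Psi\big(p,\{F_i\}_{i=1}^n\big) - \sum_i \Delta_i$ for all $p\in(0,\uu]$. It then remains to check $\sum_i \Delta_i = \Delta$, i.e.\ that the running counter $\Delta^{\#}$ is exhausted exactly when the last index is processed. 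For $i\notin W$ we subtract the full $\Psi(\uu,\{F_i\})$; for $i\in W$ we subtract $\min\{\Delta^{\#},\Psi(\uu,\{F_i\})\}$. The key inequality is that after processing $[n]\setminus W$ the residual $\Delta^{\#}$ is nonnegative — this is precisely where the choice of $\uu$ in line~(\ref{alg2:uu}) enters: $\uu = \arg\max\{p\in[v,u]: \Psi(p,\{F_i\}_{i=1}^n)\ge \Delta\}$ guarantees $\sum_{i}\Psi(\uu,\{F_i\}) = \Psi(\uu,\{F_i\}_{i=1}^n) \ge \Delta$, so in particular $\sum_{i\notin W}\Psi(\uu,\{F_i\}) \le \Delta$ would be the wrong direction; instead I need the total over \emph{all} $i$ to be $\ge\Delta$, which forces the $W$-loop to be able to absorb the remainder, and a continuity/maximality argument on $\uu$ (the potential is continuous in $p$ except for the probability-mass jumps recorded by $W$) shows $\Delta^{\#}$ lands exactly at $0$. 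I should be slightly careful about the boundary case where $\uu$ sits at a probability mass of some $F_i$, which is exactly why $W$ is singled out.

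For part~2, note that \textsf{DIMINISH} only modifies each $F_i$ on $(0,\uu]$ by the ``diminishing'' operation of Figure~\ref{fig:potential}: by Lemma~\ref{lem:potential} the potential is flat at $\ln(1+\tfrac{v_iq_i}{1-q_i})$ on $(0,v_i]$ and strictly decreasing on $[v_i,u_i]$. Subtracting the constant $\Delta_i$ and truncating the support at $\uu$ leaves the flat portion flat (just lower), so the monopoly price is unchanged, $\vv_i = v_i$; since $\textsf{PREPROCESSING}$ froze $v_i^* = v_i$ throughout and never revisits it, $\vv_i = v_i = v_i^* \ge v^* > 1$ by Lemma~\ref{lem:constant}. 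That the new monopoly-quantile satisfies $\qq_i \le q_i$ follows because diminishing the potential shrinks $v_iq_i/(1-q_i)$, hence shrinks $q_i$ (the map $q\mapsto vq/(1-q)$ is increasing), and $\uu_i \ge \vv_i$ is automatic since a distribution's support-supremum is at least its monopoly price. Here I need the new object $\FF_i$ to still be a bona fide regular distribution — that its revenue-quantile curve stays concave — which should be a consequence of the potential-function reformulation being monotone-in-$p$ preserving; I'd cite the relevant fact from Section~\ref{subsec:prelim-reg-vv} or the appendix.

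Finally, part~1: $\uu \le u$ is immediate from line~(\ref{alg2:uu}) which restricts $p$ to $[v,u]$; $u \le \gamma$ holds at initialization ($u^* \le \gamma^*$ by Lemma~\ref{lem:constant}) and is maintained because in each iteration $\uu \le u$ while $\ggamma$ is defined by $\R(\ggamma) = \R(\gamma) + \Delta \ge \R(\gamma)$, and $\R$ is decreasing on $(1,\infty)$ by Lemma~\ref{lem:RQ}, so $\ggamma \le \gamma$; chaining, $\uu \le u \le \gamma$ and $\uu \le \ggamma$ need $\uu \le \ggamma$, i.e.\ $\R(\uu) \ge \R(\ggamma) = \R(\gamma)+\Delta$. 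Since $\uu \in [v,u]$ and by part~3 (applied at the current instance, whose continuous component is tight, so $\R(\gamma) + \Psi(p,\{F_i\}) = \R(\max\{p,\gamma\})$ on the appropriate range via the constraint-slack invariant) we get $\R(\uu) = \R(\gamma) + \Psi(\uu,\{F_i\}_{i=1}^n) \ge \R(\gamma) + \Delta$ by the defining property of $\uu$ in line~(\ref{alg2:uu}). The second half of part~1, $\uu_i = \uu$ whenever $\Delta_i > 0$, is the ``global-uniformity'' claim: if $\Delta_i>0$ then $i$ was diminished, and by construction its support was truncated down to exactly $\uu$ (the operation in line~(\ref{alg3:FF1})/(\ref{alg3:FF2}) defines $\FF_i$ only on $(0,\uu]$ and declares $\uu_i = \uu$), provided $F_i$ had support reaching up to $\uu$ in the first place — which I'd argue follows because $\Psi(\uu,\{F_i\}) > 0$ is needed for $\Delta_i > 0$, and $\Psi(p,\{F_i\}) > 0$ exactly on $p < u_i$ by Lemma~\ref{lem:potential}. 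The main obstacle I anticipate is the careful handling of probability masses (the set $W$) in part~3 and the matching of $\uu_i = \uu$ in part~1 when several distributions share the same support-supremum; everything else is monotonicity bookkeeping built on Lemmas~\ref{lem:potential} and~\ref{lem:RQ}.
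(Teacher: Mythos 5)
Your overall route --- unwinding the definitions in \textsf{SUBROUTINE} and \textsf{DIMINISH}, using Lemma~\ref{lem:potential} for the shape of the potential and Lemma~\ref{lem:RQ} for the monotonicity of $\R$ --- is the same as the paper's, and parts~2 and~3 are essentially correct as sketched. (One small remark on part~3: the inequality $\sum_{i\notin W}\Psi\big(\uu,\{F_i\}\big)\le\Delta$ is not ``the wrong direction''; it is exactly what guarantees $\Delta^{\#}\ge 0$ after the first loop, and it follows from the maximality of $\uu$, since for $p>\uu$ one has $\Psi\big(p,\{F_i\}_{i=1}^n\big)<\Delta$ while the contributions of the indices in $W$ vanish as $p\to\uu^+$.)

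The genuine gap is in part~1. The lemma asserts $\gamma\ge\ggamma\ge u\ge\uu$, in particular $\ggamma\ge u$, but you only establish $\ggamma\ge\uu$. Your mechanism --- $\R(\uu)\ge\R(\gamma)+\Psi\big(\uu,\{F_i\}_{i=1}^n\big)\ge\R(\gamma)+\Delta$ via the defining property of $\uu$ --- cannot be run at $p=u$, because $\Psi\big(u,\{F_i\}_{i=1}^n\big)$ may be strictly smaller than $\Delta$ (that is precisely the situation in which $\uu<u$). Worse, the identity you invoke, ``the continuous component is tight, so $\R(\gamma)+\Psi\big(p,\{F_i\}\big)=\R\big(\max\{p,\gamma\}\big)$,'' is false: the construction deliberately maintains a \emph{strict} slack, namely $\R(\gamma)+\Psi\big(p,\{F_i\}_{i=1}^n\big)\le\R(p)-\delta^*$ for all $p\in(1,u]$ (inequality~(\ref{eq:lem:alg_facts1})), and it is exactly this slack that supplies the missing inequality: plugging $p\leftarrow u$ gives $\R(u)\ge\R(\gamma)+\delta^*\ge\R(\gamma)+\Delta^*\ge\R(\gamma)+\Delta=\R(\ggamma)$ (using $\Delta^*\le\delta^*$ from Lemma~\ref{lem:constant} and $\Delta\le\Delta^*$ from line~(\ref{alg2:Delta})), whence $u\le\ggamma$ by Lemma~\ref{lem:RQ}.1. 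The paper also re-derives this slack invariant for the new instance at the end of the same computation, which your induction needs but never re-establishes. The inequality $\ggamma\ge u$ is not cosmetic: it is used verbatim in~(\ref{eq:upper3-p}) and in the proof of Lemma~\ref{lem:contopt}.
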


\begin{lemma}
\label{lem:terminate}
\textsf{SUBROUTINE} terminates in at most $\left\lceil\frac{1}{\Delta^*} \cdot \Psi\big(\{F_i^*\}_{i = 1}^n\big) + n\right\rceil$ iterations, and then output a continuous instance $\cont(\hgamma)$ that $\R(\hgamma) = \R(\gamma^*) + \Psi\big(\{F_i^*\}_{i = 1}^n\big)$, where $\hgamma \in (1, \infty)$ is well-defined.
\end{lemma}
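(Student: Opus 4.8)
The plan is to track two quantities across the iterations of \textsf{SUBROUTINE}: the scalar potential $\Psi\big(\{F_i\}_{i=1}^n\big)$ of the current discrete component, and the number $\R(\gamma)$ of the current continuous component. I will show that every iteration decreases $\Psi\big(\{F_i\}_{i=1}^n\big)$ by \emph{exactly} $\Delta$ and increases $\R(\gamma)$ by \emph{exactly} $\Delta$ (this is immediate from line~(\ref{alg2:continuous})), and that in every iteration either $\Delta=\Delta^*$ or at least one distribution ``vanishes'' (its potential becomes $0$). The first fact turns $\R(\gamma)$ into a telescoping sum and identifies the output; the combination of both facts bounds the number of iterations.

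For the per-iteration drop, fix an iteration with current instance $\cont(\gamma)\cup\{F_i\}_{i=1}^n$ and new instance $\cont(\ggamma)\cup\{\FF_i\}_{i=1}^n$, and take the probe point $v^*$. By Lemma~\ref{lem:alg_facts}.2, $v_i=v_i^*\geq v^*$ and $\vv_i=v_i\geq v^*$ for all $i$, so Lemma~\ref{lem:potential} gives $\Psi\big(\{F_i\}_{i=1}^n\big)=\Psi\big(v^*,\{F_i\}_{i=1}^n\big)$ and $\Psi\big(\{\FF_i\}_{i=1}^n\big)=\Psi\big(v^*,\{\FF_i\}_{i=1}^n\big)$. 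Since $v^*\leq v\leq\uu$ by lines~(\ref{alg2:v}),~(\ref{alg2:uu}) and Lemma~\ref{lem:alg_facts}.1, Lemma~\ref{lem:alg_facts}.3 applies at $p=v^*$ and yields
\[
\Psi\big(\{\FF_i\}_{i=1}^n\big)=\Psi\big(v^*,\{F_i\}_{i=1}^n\big)-\Delta=\Psi\big(\{F_i\}_{i=1}^n\big)-\Delta .
\]
Because line~(\ref{alg2:Delta}) forces $\Delta\leq\Psi\big(v,\{F_i\}_{i=1}^n\big)\leq\Psi\big(\{F_i\}_{i=1}^n\big)$, the potential never drops below $0$, so the algorithm halts exactly when it reaches $0$.

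For the vanishing dichotomy, suppose $\Delta\neq\Delta^*$, so $\Delta=\Psi\big(v,\{F_i\}_{i=1}^n\big)$. Inside \textsf{DIMINISH} one has $\sum_i\Delta_i=\Delta$ and $\Delta_i\leq\Psi\big(\uu,\{F_i\}\big)$ for every $i$; since $\uu\geq v\geq v_i$ and $\Psi\big(\cdot,\{F_i\}\big)$ is non-increasing beyond $v_i$ (Lemma~\ref{lem:potential}), also $\Psi\big(\uu,\{F_i\}\big)\leq\Psi\big(v,\{F_i\}\big)$; hence
\[
\Delta=\sum_i\Delta_i\leq\sum_i\Psi\big(\uu,\{F_i\}\big)\leq\sum_i\Psi\big(v,\{F_i\}\big)=\Psi\big(v,\{F_i\}_{i=1}^n\big)=\Delta ,
\]
so all inequalities are equalities and $\Delta_i=\Psi\big(v,\{F_i\}\big)$ for every $i$. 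For any $i$ with $v_i=v$ (one exists by line~(\ref{alg2:v})) this reads $\Delta_i=\Psi\big(v_i,\{F_i\}\big)=\Psi\big(\{F_i\}\big)$, so $\Psi\big(\{\FF_i\}\big)=0$: this distribution vanishes. A vanished distribution stays vanished and is never again chosen in line~(\ref{alg2:v}), so there are at most $n$ iterations with $\Delta\neq\Delta^*$; every other iteration consumes exactly $\Delta^*$ of the total potential budget $\Psi\big(\{F_i^*\}_{i=1}^n\big)$, so there are at most $\Psi\big(\{F_i^*\}_{i=1}^n\big)/\Delta^*$ of those. Thus \textsf{SUBROUTINE} halts after at most $\big\lceil\Psi\big(\{F_i^*\}_{i=1}^n\big)/\Delta^*+n\big\rceil$ iterations, with $\Psi\big(\{F_i\}_{i=1}^n\big)=0$ at the end. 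Telescoping line~(\ref{alg2:continuous}) over all iterations gives $\R(\hgamma)=\R(\gamma^*)+\Psi\big(\{F_i^*\}_{i=1}^n\big)$, the total potential drop equalling its initial value. Finally $\Psi\big(\{F_i^*\}_{i=1}^n\big)=\Psi\big(v^*,\{F_i^*\}_{i=1}^n\big)\leq\R(v^*)<\infty$ by the feasibility of $\cont(\gamma^*)\cup\{F_i^*\}_{i=1}^n$ to constraint~(\ref{cstr:ap2}) together with $v^*>1$, while $\R(\gamma^*)=\delta^*>0$ by Lemma~\ref{lem:constant}; hence $\R(\hgamma)\in(0,\infty)$, and since $\R$ is a strictly decreasing bijection from $(1,\infty)$ onto $(0,\infty)$ by Lemma~\ref{lem:RQ}, $\hgamma=\R^{-1}\big(\R(\hgamma)\big)\in(1,\infty)$ is well-defined.

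The main obstacle is the vanishing dichotomy: proving that whenever $\Delta<\Delta^*$ some distribution disappears is precisely what keeps the number of ``short'' iterations bounded by $n$ rather than unbounded. It rests on the exact definitions of $v$, $\Delta$, $\uu$ in \textsf{SUBROUTINE}, on the monotonicity of the potential (Lemma~\ref{lem:potential}), and on the accounting $\sum_i\Delta_i=\Delta$ inside \textsf{DIMINISH}; handling the boundary cases there (probability masses at $\uu$, the possibility $\uu=u$, and degenerate distributions with $\Psi\big(\{F_i\}\big)=0$) is where the care is needed.
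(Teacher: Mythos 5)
Your proof is correct and follows essentially the same route as the paper: iterations with $\Delta=\Delta^*$ each consume a fixed amount of the finite total potential, iterations with $\Delta<\Delta^*$ each permanently ``vanish'' at least one of the $n$ distributions, and telescoping line~(\ref{alg2:continuous}) against the total potential drop identifies $\R(\hgamma)$, with well-definedness from Lemma~\ref{lem:RQ}. The only (immaterial) difference is in the vanishing step, where you force equalities in the chain $\sum_i\Delta_i\le\sum_i\Psi\big(\uu,\{F_i\}\big)\le\sum_i\Psi\big(v,\{F_i\}\big)$, whereas the paper reads $\Psi\big(\uu,\{\FF_i\}_{i=1}^n\big)=0$ directly off Lemma~\ref{lem:alg_facts}.3 and concludes termwise.
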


\begin{proof}
Emphasize again, when algorithm \textsf{SUBROUTINE} terminates, potential $\Psi(\{F_i\}_{i = 1}^n) = 0$. Each iteration falls into either of the following two cases:
\begin{description}
\item[Case I (when $\Delta = \Delta^*$):] In this case, we know from Lemma~\ref{lem:alg_facts}.3, after the update in line~(\ref{alg2:update}) of algorithm \textsf{SUBROUTINE}, that potential $\Psi\big(\{F_i\}_{i = 1}^n\big)$ decreases by $\Delta = \Delta^*$. \\
    With fixed step-size $\Delta^* > 0$ and input instance $\cont(\gamma^*) \cup \{F_i^*\}_{i = 1}^n$, this case happens for at most $\left\lceil\frac{1}{\Delta^*} \cdot \Psi\big(\{F_i^*\}_{i = 1}^n\big)\right\rceil$ times;
\item[Case II (when $\Delta < \Delta^*$):] In this case, due to line~(\ref{alg2:Delta}) and line~(\ref{alg2:uu}) of algorithm \textsf{SUBROUTINE}, $\uu = v$ and $\Delta = \Psi\big(v, \{F_i\}_{i = 1}^n\big) = \Psi\big(\uu, \{F_i\}_{i = 1}^n\big)$. Hence, there exists $k \in [n]$ such that
    \begin{itemize}
    \item Since $\uu = v$, it follows from line~(\ref{alg2:v}) of algorithm \textsf{SUBROUTINE} that
        \[
        \uu = v = v_k \xlongequal{Lemma~\ref{lem:alg_facts}.2} \vv_k \quad\quad\quad\quad \Psi\big(\{F_k\}\big) > 0;
        \]
    \item Since $\Delta = \Psi\big(\uu, \{F_i\}_{i = 1}^n\big)$, we know from Lemma~\ref{lem:alg_facts}.3 that $\Psi\big(\uu, \{\FF_i\}_{i = 1}^n\big) = 0$ and thus,
        \[
        \Psi\big(\{\FF_k\}\big) = \Psi\big(\vv_k, \{\FF_k\}\big) \xlongequal{\vv_k = \uu} \Psi\big(\uu, \{\FF_k\}\big) \leq \Psi\big(\uu, \{\FF_i\}_{i = 1}^n\big) = 0.
        \]
    \end{itemize}
    Thus, the $k$-th distribution ``vanishes'' in this iteration, in that $\Psi\big(\{F_k\}\big) > 0$ and $\Psi\big(\FF_k\big) = 0$. Clearly, with input instance $\cont(\gamma^*) \cup \{F_i^*\}_{i = 1}^n$, this case happens for at most $n$ times.
\end{description}
In total, \textbf{Case I} and \textbf{Case II} can happen for at most $\left\lceil\frac{1}{\Delta^*} \cdot \Psi\big(\{F_i^*\}_{i = 1}^n\big) + n\right\rceil$ times. Afterwards, potential $\Psi\big(\{F_i\}_{i = 1}^n\big)$ decreases to $0$, and algorithm \textsf{SUBROUTINE} terminates.

In each iteration of algorithm \textsf{SUBROUTINE} (omit the update in line~(\ref{alg2:update})), combining Lemma~\ref{lem:alg_facts}.3 and line~(\ref{alg2:continuous}) together implies that $\R(\ggamma) + \Psi\big(p, \{F_i\}_{i = 1}^n\big) = \R(\gamma) + \Psi\big(p, \{\FF_i\}_{i = 1}^n\big)$, for all $p \in (0, \uu]$. Particularly, we must have
\[
\R(\ggamma) + \Psi\big(\{F_i\}_{i = 1}^n\big) = \R(\gamma) + \Psi\big(\{\FF_i\}_{i = 1}^n\big).
\]
By induction, output instance $\cont(\hgamma)$ satisfies that $\R(\hgamma) = \R(\gamma^*) + \Psi\big(\{F_i^*\}_{i = 1}^n\big)$. Furthermore, parameter $\hgamma$ must be well-defined in interval $(1, \infty)$, due to Lemma~\ref{lem:RQ} that $R(p)$ is decreasing on $p \in (1, \infty)$, $\lim\limits_{p \rightarrow \infty} \R(p) = 0$ and $\lim\limits_{p \rightarrow 1^+} \R(p) = \infty$.
\end{proof}

\subsection{Proof Plan about \textsf{SUBROUTINE}}
\label{subsec:upper3-reg-cstr}

The rest proofs about algorithm \textsf{SUBROUTINE} (refer to the regularity, feasibility and optimality of involved instances) all relies on induction. Recall Section~\ref{subsec:upper2-cstr-opt} and inequality~(\ref{eq:cstr:ap.0.2}) for the \emph{base-case},
\begin{itemize}
\item $\cont(\gamma^*) \cup \{F_i^*\}_{i = 1}^n$ is regular, and feasible to Program~(\ref{prog:1});
\item $\R(\gamma^*) + \Psi\big(p, \{F_i^*\}_{i = 1}^n\big) \leq \R(p) - \delta^*$ for all $p \in (1, u^*]$.
\end{itemize}
From now on, we would focus on a specific iteration of algorithm \textsf{SUBROUTINE}, safely omitting the update in line~(\ref{alg2:update}). After verifying Main Lemma~\ref{lem:subroutine}, by induction output instance $\cont(\hgamma)$ of algorithm \textsf{SUBROUTINE} satisfies all properties as desired.
\vspace{5.5pt} \\
\fbox{\begin{minipage}{\textwidth}
\begin{mainlemma}
\label{lem:subroutine}
In each iteration of \textsf{SUBROUTINE}, suppose that $\cont(\gamma) \cup \{F_i\}_{i = 1}^n$ is regular, and feasible to Program~(\ref{prog:1}), and that $\R(\gamma) + \Psi\big(p, \{F_i\}_{i = 1}^n\big) \leq \R(p) - \delta^*$ for all $p \in (1, u]$, then
\begin{enumerate}
\item $\cont(\ggamma) \cup \{\FF_i\}_{i = 1}^n$ is regular, and feasible to Program~(\ref{prog:1}). See Sections~\ref{subsec:upper3-reg} and~\ref{subsec:upper3-cstr};
\item $\R(\ggamma) + \Psi\big(p, \{\FF_i\}_{i = 1}^n\big) \leq \R(p) - \delta^*$ for all $p \in (1, \uu]$. See \textbf{Case II} in Section~\ref{subsec:upper3-cstr};
\item $\opt\big(\cont(\ggamma) \cup \{\FF_i\}_{i = 1}^n\big) \geq \opt\big(\cont(\gamma) \cup \{F_i\}_{i = 1}^n\big)$. See Section~\ref{subsec:upper3-opt}.
\end{enumerate}
\end{mainlemma}
\end{minipage}}

\subsubsection{Regularity of $\{\FF_i\}_{i = 1}^n$}
\label{subsec:upper3-reg}

For each $k \in [n]$, the regularity premise of Main Lemma~\ref{lem:subroutine} indicates that $r_k(q)$ is continuous and concave on $[0, 1]$ (see Section~\ref{subsec:prelim-reg-vv}). Under our potential-based construction of $\FF_k$,
\begin{itemize}
\item $\rr_k(q)$ is a left-differentiable and right-differentiable function on $[0, 1]$;
\item $\Psi\big(p, \{\FF_k\}\big)$ remains $\ln\left(1 + \frac{\vv_k \qq_k}{1 - \qq_k}\right) = \ln\left(1 + \frac{v_k q_k}{1 - q_k}\right) - \Delta_k$ when $p \in (0, \vv_k] = (0, v_k]$, due to Lemma~\ref{lem:potential}. In terms of revenue-quantile curve, we know $\rr_k(q) = \frac{\vv_k \qq_k}{1 - \qq_k} \cdot (1 - q)$ when $q \in [\qq_k, 1]$. Clearly, $\rr_k(q)$ is concave on $q \in [\qq_k, 1]$.
\end{itemize}
It remains to prove that $\rr_k(q)$ is concave on $[0, \qq_k]$. For our potential-based construction, we obtain the following structural lemma in Appendix~\ref{subapp:upper3-cstr:regular}.

\begin{lemma}
\label{lem:regular}
$\rr_k(q) \leq \partial_+ \rr_k(\overline{x}) \cdot (q - \overline{x}) + \rr_k(\overline{x})$ for all $\overline{x} \in (0, \qq_k)$ and $q \in [\overline{x}, \qq_k)$.
\end{lemma}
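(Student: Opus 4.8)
The plan is to deduce Lemma~\ref{lem:regular} from the following CDF-level statement: for any fixed $\overline{x}\in(0,\qq_k)$, the ``tangent distribution'' of $\FF_k$ at $\overline{x}$ --- the triangular-tail distribution $\FF_{k,L}(p)=1-\frac{\tau}{p-\sigma}$ whose revenue--quantile curve extends the tangent line $L_{\overline{x}}(q)=\sigma(q-\overline{x})+\rr_k(\overline{x})$, with $\sigma\eqdef\partial_+\rr_k(\overline{x})=\overline{\Phi}_k(\overline{p}^{\,+})$, $\overline{p}\eqdef\rr_k(\overline{x})/\overline{x}$ and $\tau\eqdef\rr_k(\overline{x})-\sigma\overline{x}$ --- stochastically dominates $\FF_k$, i.e.\ $\FF_{k,L}(p)\le\FF_k(p)$ for all $p$. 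Via the revenue--curve/CDF dictionary of Section~\ref{subsec:prelim-rq-curve} (order is reversed), this is equivalent to $\rr_k(q)\le L_{\overline{x}}(q)$ for all $q\in[0,1]$, which in particular gives the asserted inequality on $q\in[\overline{x},\qq_k)$. The necessary first ingredient is a usable closed form of the construction: unwinding $\Psi\big(p,\{\FF_k\}\big)=\Psi\big(p,\{F_k\}\big)-\Delta_k$ and recalling $\Psi\big(p,\{F_i\}\big)=\ln\big(1+p\cdot\frac{1-F_i(p)}{F_i(p)}\big)$, a two-line computation yields $\frac{1}{\FF_k(p)}=e^{-\Delta_k}\cdot\frac{1}{F_k(p)}+\big(1-e^{-\Delta_k}\big)\cdot\frac{p-1}{p}$ for $p\in(v_k,\uu_k]$ (and $\FF_k(p)=1$ for $p>\uu_k$, where the target inequality is trivial).

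Next I would transport the comparison $\FF_{k,L}(p)\le\FF_k(p)$ back through the construction. Since $x\mapsto\ln\big(1+p\cdot(\tfrac1x-1)\big)$ is decreasing for fixed $p$, this is equivalent to $\Psi\big(p,\{\FF_{k,L}\}\big)\ge\Psi\big(p,\{\FF_k\}\big)=\Psi\big(p,\{F_k\}\big)-\Delta_k$. Introduce the auxiliary distribution $F'$ by ``augmenting'' $\FF_{k,L}$ by $\Delta_k$, i.e.\ $\Psi\big(p,\{F'\}\big)\equiv\Psi\big(p,\{\FF_{k,L}\}\big)+\Delta_k$, which by the closed form above is again an explicit rational expression in $p$; the claim becomes simply $F'(p)\le F_k(p)$. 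Now let $x\eqdef1-F_k(\overline{p})$ be the old quantile at the same price $\overline{p}$ and let $F_{k,L}(p)=1-\frac{b_0}{p-a_0}$ be the tangent distribution of $F_k$ at $x$ (so $a_0=\partial_+r_k(x)=\Phi_k(\overline{p}^{\,+})$, $b_0=r_k(x)-a_0x$). Regularity of $F_k$ means precisely that $r_k$ lies below $L_x$, i.e.\ $F_{k,L}(p)\le F_k(p)$ for all $p$; hence it is enough to prove $F'(p)\le F_{k,L}(p)$.

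Both tangent constructions ``touch'' their underlying curve at the common price $\overline{p}$ (the tangent line meets the ray of slope $\overline{p}$ only at the touch quantile), so $\FF_{k,L}(\overline{p})=\FF_k(\overline{p})$ and $F_{k,L}(\overline{p})=F_k(\overline{p})$; chasing this through the potential identity gives $F'(\overline{p})=F_k(\overline{p})=F_{k,L}(\overline{p})$. Consequently $F'(p)\le F_{k,L}(p)$ for all $p$ is exactly the statement that $p=\overline{p}$ is a global minimum of $F_{k,L}(p)-F'(p)$. Having first used the differentiated closed form to express the new tangent data $(\sigma,\tau)$ in terms of the old data $(a_0,b_0,\overline{p},\Delta_k)$ --- equivalently, to express $\overline{\Phi}_k(\overline{p})$ in terms of $\Phi_k(\overline{p})$, $F_k(\overline{p})$, $\overline{p}$ and $\Delta_k$ --- this reduces to a one-variable inequality between two explicit rational functions, which one settles by a sign analysis of the numerator of $F_{k,L}-F'$ (it has a double root at $\overline{p}$ and the correct sign elsewhere on the relevant interval).

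I expect the main obstacle to be precisely this last step: obtaining the relation between $(\sigma,\tau)$ and $(a_0,b_0,\overline{p},\Delta_k)$ requires differentiating the construction, and then one must verify the resulting explicit rational inequality \emph{globally} in $p$ rather than only to second order at $\overline{p}$; keeping the algebra in checkable form (e.g.\ factoring out the double root) is where the real work lies. Beyond that there is only routine bookkeeping: confirming that $F'$ and $F_{k,L}$ are genuine CDFs on the range in question, checking the monotonicity/dominance equivalences used, and dispatching the boundary prices $p\to v_k^{+}$, $p\to\uu_k^{-}$, and $p>\uu_k$, where the inequality degenerates to something trivial.
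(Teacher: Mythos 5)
Your reduction is sound and, up to reversing the direction of the transformation, it is the same argument the paper uses. The paper takes the tangent $r_{\tan}$ of the \emph{original} $r_k$ at the point $x$ whose image under \textsf{DIMINISH} is $\overline{x}$, pushes $F_{\tan}$ forward through the same potential-diminishing map to get $\FF_{\tan}$, and then sandwiches $\rr_k \le \rr_{\tan} \le L_{\overline{x}}$, using that $\rr_k$ and $\rr_{\tan}$ share value and right-derivative at $\overline{x}$ (second-order tangency survives the pointwise transform). Your version pulls the tangent of the \emph{new} curve back through the inverse (augmenting) map and compares it with the old tangent; since the map $F\mapsto\FF$ is pointwise order-preserving, your target $F'\le F_{k,L}$ is literally equivalent to $\FF_{k,L}\le\FF_{\tan}$, i.e.\ to $\rr_{\tan}$ lying below its own tangent line at $\overline{x}$. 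So the two proofs coincide modulo conjugation by the transform, and your bookkeeping (order reversal in the revenue-curve/CDF dictionary, the closed form for $1/\FF_k$, the second-order contact of $F'$ and $F_{k,L}$ at $\overline{p}$) checks out.

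The gap is that the one step carrying all the mathematical content is asserted rather than proven: the global sign of $F_{k,L}-F'$ away from the double root at $\overline{p}$. That claim is exactly the paper's Lemma~\ref{lem:tangent} (regularity of the diminished tangent distribution), and it is not a formality --- it is where the hypotheses $a_0\ge1$ and $b_0\ge0$ (i.e.\ $\Phi_k(\overline{p}^{\,+})\ge 1$, which ultimately comes from constraint~(\ref{cstr:value}) via Lemma~\ref{lem:shape}) must enter; nothing in your outline identifies where they are used, and without them the inequality is in jeopardy. The paper discharges this step not by factoring the polynomial numerator you would obtain, but by computing the virtual value of the transformed tangent distribution in the variable $y=1-\frac{a+b}{p}$, getting $\PPhi_{\tan}(p)=a+\Upsilon(y)\cdot\left(1-e^{-\Delta_k}\right)$ with $\Upsilon(y)=\frac{(a-1)y^2+e^{-\Delta_k}(b+y)^2}{b-(1-e^{-\Delta_k})(b+y^2)}$, and checking that the numerator is positive and increasing while the denominator is nonnegative and decreasing; monotonicity of $\PPhi_{\tan}$ then yields concavity of $\rr_{\tan}$ everywhere, which is (more than) your one-point tangent inequality. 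Until you either reproduce that computation or complete your sign analysis of $F_{k,L}-F'$, exhibiting where $a_0\ge1$ and $b_0\ge0$ are invoked, the proof is not done.
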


Equipped with Lemma~\ref{lem:regular}, the concavity of $\rr_k(q)$ (and thus the regularity of $\FF_k$) immediately follows from the equivalent condition for concave functions.
\vspace{5.5pt} \\
\fbox{\begin{minipage}{\textwidth}
\begin{fact}[Equivalent Condition for Concavity]
\label{fact:concave}
Suppose $r(q)$ is a left- and right-differentiable function on $[a, b]$, then $r(q)$ is concave on $[a, b]$ iff
\[
r(q) \leq \partial_+ r(x) \cdot (q - x) + r(x) \quad\quad \forall x \in (a, b), q \in [x, b).
\]
\end{fact}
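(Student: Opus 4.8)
The plan is to prove the two implications separately; $(\Rightarrow)$ is the routine classical direction, and $(\Leftarrow)$ is where the content lies — and it is precisely $(\Leftarrow)$, combined with Lemma~\ref{lem:regular}, that is needed to conclude that $\rr_k$ is concave on $[0,\qq_k]$. For $(\Rightarrow)$: assume $r$ is concave on $[a,b]$. Left-/right-differentiability already forces $r$ to be continuous and $\partial_+ r(x)$ to exist for $x\in(a,b)$. The standard three-chord property of concave functions says that, for fixed $x$, the secant slope $q\mapsto\frac{r(q)-r(x)}{q-x}$ is non-increasing on $(x,b]$; hence for every $q\in(x,b)$ we have $\frac{r(q)-r(x)}{q-x}\le \lim_{t\to x^+}\frac{r(t)-r(x)}{t-x}=\partial_+ r(x)$, which rearranges into the claimed inequality (and for $q=x$ it is trivial).

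For $(\Leftarrow)$, suppose $r$ satisfies the displayed tangent inequality but is \emph{not} concave. Let $\hr$ be the least concave majorant of $r$ on $[a,b]$; then $\hr\ge r$, $\hr$ is concave and continuous, $\hr$ agrees with $r$ at $a$ and $b$ and, more generally, at every point of $[a,b]$ outside the open set $U:=\{q:r(q)<\hr(q)\}$, and $U\ne\varnothing$ since $r$ is not concave. Pick a connected component $(c,d)$ of $U$ (with $a\le c<d\le b$). Two standard properties of least concave majorants are invoked: $\hr$ restricted to $[c,d]$ is \emph{affine}, and $c,d\notin U$, so $r(c)=\hr(c)$ and $r(d)=\hr(d)$; thus $\hr|_{[c,d]}$ is exactly the chord $\ell$ through $(c,r(c))$ and $(d,r(d))$, with $\ell\ge r$ on $[c,d]$ and $\ell>r$ somewhere on $(c,d)$. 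Now fix any $q\in(c,d)\subseteq(a,b)$ and apply the tangent inequality at $x=q$ with target point $d$ (if $d=b$, apply it first at targets $d'\uparrow b$ and pass to the limit using continuity of $r$): this gives $r(d)\le r(q)+\partial_+ r(q)\,(d-q)$, i.e.\ $\partial_+ r(q)\ge\frac{r(d)-r(q)}{d-q}$. Since $r(q)\le\ell(q)=r(d)-\ell'\,(d-q)$, where $\ell'$ is the slope of $\ell$, we also get $\frac{r(d)-r(q)}{d-q}\ge\ell'$. Hence $\partial_+\bigl(r(q)-\ell'q\bigr)=\partial_+ r(q)-\ell'\ge 0$ for \emph{all} $q\in(c,d)$; as $r-\ell'q$ is continuous on $[c,d]$, the elementary Dini-derivative monotonicity lemma — a continuous function whose right-hand derivative is non-negative throughout an open interval is non-decreasing on its closure — yields $r(q)-\ell'q\ge r(c)-\ell'c$, i.e.\ $r(q)\ge\ell(q)$, for every $q\in[c,d]$, contradicting $\ell>r$ somewhere on $(c,d)$.

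The main obstacle is choosing \emph{which} auxiliary line to feed into the tangent inequality. The naive attempt — take any chord $\ell_{c,d}$ lying strictly above $r$ at some interior point, pick a point $x^*$ where the gap $\ell_{c,d}-r$ is largest, and apply the tangent inequality at $x^*$ — does \emph{not} close: that single application only produces $\partial_+ r(x^*)>\ell'_{c,d}$, which is consistent with all the other constraints, so no contradiction emerges. The fix, realized above, is that one needs a line dominating $r$ on the \emph{entire} subinterval, so that the key lower bound $\frac{r(d)-r(q)}{d-q}\ge\ell'$ holds uniformly in $q\in(c,d)$, and the affine pieces of the least concave majorant are exactly such lines. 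Once this choice is made the rest is routine: the two cited structural facts about least concave majorants (affine on components of $U$; interpolating $r$ at the endpoints of $[a,b]$ and at component boundaries), the continuity extension handling $d=b$, and the standard Dini-derivative monotonicity lemma.
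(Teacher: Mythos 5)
The paper offers no proof of Fact~\ref{fact:concave}: it is stated as a known ``equivalent condition for concavity'' and invoked directly after Lemma~\ref{lem:regular}, so there is nothing in the paper to compare your argument against; what matters is whether your proof stands on its own, and it does. The forward direction is the routine three-chord argument. The reverse direction is the one the paper actually needs (the tangent inequality of Lemma~\ref{lem:regular} must force concavity of $\rr_k$ on $[0,\qq_k]$), and it is genuinely non-trivial because the hypothesis only controls $r$ to the \emph{right} of each tangent point; your diagnosis that the naive ``maximize the gap to a chord and differentiate there'' attempt stalls is accurate. Your fix is sound: on a component $(c,d)$ of $\{r<\hr\}$ the least concave majorant is the chord $\ell$ through $(c,r(c))$ and $(d,r(d))$, the tangent inequality at each $q\in(c,d)$ aimed at the right endpoint $d$ (with the limiting argument when $d=b$, since $q$ ranges only over $[x,b)$) gives $\partial_+ r(q)\ge \frac{r(d)-r(q)}{d-q}\ge\ell'$, and the Dini-derivative monotonicity lemma then forces $r\ge\ell$ on $[c,d]$, contradicting the definition of the component. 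The two structural facts you import (the majorant is affine on components of $\{r<\hr\}$ and interpolates $r$ at their endpoints and at $a,b$) are standard for continuous functions on a compact interval, and continuity of $r$ is guaranteed by the assumed one-sided differentiability, so all hypotheses are available. The only stylistic remark is that the concluding contradiction is stronger than stated: $\ell>r$ holds \emph{everywhere} on $(c,d)$, not merely somewhere, though ``somewhere'' of course suffices.
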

\end{minipage}}

\subsubsection{Feasibility of $\cont(\ggamma) \cup \{\FF_i\}_{i = 1}^n$}
\label{subsec:upper3-cstr}

We turn to check the feasibility to constraint~(\ref{cstr:value}). For each $k \in [n]$, the feasibility premise of Main Lemma~\ref{lem:subroutine} ensures that $v_k > 1$ and $\Phi_k(v_k^+) \geq 1$. It follows from Lemma~\ref{lem:alg_facts}.2 that $\vv_k = v_k > 1$. For virtual value functions, we further acquire Lemma~\ref{lem:upper3-cstr2} in Appendix~\ref{subapp:upper3-cstr:value}, which implies that
\[
\PPhi_k(\vv_k^+) \geq \Phi_k(\vv_k^+) \xlongequal{\vv_k = v_k} \Phi_k(v_k^+) \geq 1 \quad\quad \forall k \in [n].
\]
Later, Lemma~\ref{lem:upper3-cstr2} also serves as a cornerstone of our optimality-analyses (specifically, the proof of Lemma~\ref{lem:abel_ineq1}) in Section~\ref{subsec:upper3-opt}.

\begin{lemma}
\label{lem:upper3-cstr2}
$\PPhi_k(p) \geq \Phi_k(p)$ for all $p \in (\vv_k, \infty) = (v_k, \infty)$ and $k \in [n]$.
\end{lemma}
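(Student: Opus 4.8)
The plan is to translate the claimed inequality $\PPhi_k(p) \ge \Phi_k(p)$ on $p \in (v_k,\infty)$ into a statement about the revenue-quantile curves $\rr_k$ and $r_k$, using the identity $\Phi_i(r_i(q)/q) = \partial_+ r_i(q)$ from Section~\ref{subsec:prelim-rq-curve}. Since $v_k = \vv_k$, both distributions share the same monopoly-quantile interval endpoint; prices $p > v_k$ correspond to quantiles $q \in [0,q_k)$ for $F_k$ and $q \in [0,\qq_k)$ for $\FF_k$, and because the potential-decrease construction only changes $r_k$ on $[0,\qq_k]$ while leaving the ``tail line'' untouched, the delicate range is exactly $q \in (0,\qq_k)$. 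So the first step is to set up, for a fixed $p > v_k$, the two quantiles $q \eqdef$ (the quantile at which $r_k(q)/q = p$) and $\qq \eqdef$ (the quantile at which $\rr_k(\qq)/\qq = p$), and reduce the goal to showing $\partial_+ \rr_k(\qq) \ge \partial_+ r_k(q)$.

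Next I would exploit the two basic facts about the ``diminishing'' operation. First, $\Psi(p,\{\FF_k\}) = \Psi(p,\{F_k\}) - \Delta_k \le \Psi(p,\{F_k\})$ for $p \le \uu_k$ and $\Psi(p,\{\FF_k\})=0 \le \Psi(p,\{F_k\})$ for $p > \uu_k$; translating $\Psi(p,\{F_i\}) = \ln(1 + p(1-F_i(p))/F_i(p))$ back, this gives $\FF_k(p) \ge F_k(p)$ for all $p$, i.e.\ $\FF_k$ stochastically dominates $F_k$, which in revenue-quantile language says $\rr_k(q) \le r_k(q)$ for all $q \in [0,1]$, with equality at $q = \qq_k$ and on $[\qq_k,1]$ only after rescaling. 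Combined with Lemma~\ref{lem:regular} (concavity of $\rr_k$ on $[0,\qq_k]$, already established in Section~\ref{subsec:upper3-reg}) and the concavity of $r_k$ (the regularity premise), I have both curves concave, nested ($\rr_k \le r_k$), and agreeing at the right endpoint $\qq_k$ of the relevant interval after the appropriate matching. The key geometric claim is then: at the price level $p$, moving from the point on $r_k$ to the point on $\rr_k$ (which sits at a smaller quantile, since $\rr_k$ is smaller but the ratio $p$ is fixed and the curve is concave through the origin-ish region) the right-derivative does not decrease. This is precisely where the potential-function identity does the work: one shows that $\partial_+ \Psi(p,\{\FF_k\}) \ge \partial_+ \Psi(p,\{F_k\})$ (as functions of the price $p$, in the decreasing region $p \in [v_k,\uu_k]$) because $\Psi(p,\{\FF_k\})$ reaches $0$ at $\uu_k \le u_k$ — the new curve is ``steeper'' in potential — and then converts this back, via the chain of correspondences in Fact~\ref{fact2}, into the virtual-value inequality $\PPhi_k(p) \ge \Phi_k(p)$.

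The main obstacle I anticipate is the bookkeeping in relating a derivative with respect to the price $p$ (how $\Psi(p,\{F_k\})$ varies) to a derivative with respect to the quantile $q$ (how $r_k(q)$ varies) to the virtual value $\Phi_k(p)$ itself, and making sure the monotonicity survives this change of variables — in particular handling the boundary $p = \uu_k$ where $\FF_k$ may have a probability mass (the set $W$ in \textsf{DIMINISH}), so that left- and right-derivatives of $\Psi(\cdot,\{\FF_k\})$ can jump. I would deal with the mass point separately: for $p > \uu_k$ we have $\FF_k(p) = 1$ so $\PPhi_k(p) = p > \Phi_k(p)$ trivially (by regularity of $F_k$, $\Phi_k(p) \le p$), and for $p \in (v_k,\uu_k]$ the argument above applies with all functions well-defined. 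Assembling the two cases, together with the equality $\vv_k = v_k$ from Lemma~\ref{lem:alg_facts}.2, yields $\PPhi_k(p) \ge \Phi_k(p)$ for all $p \in (v_k,\infty)$, which is the assertion of the lemma; the corollary $\PPhi_k(\vv_k^+) \ge \Phi_k(v_k^+) \ge 1$ then follows by taking the right limit.
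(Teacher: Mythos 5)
There is a genuine gap at the heart of your argument, and it sits exactly where you place the ``key geometric claim.'' You propose to show $\partial_+ \Psi\big(p, \{\FF_k\}\big) \geq \partial_+ \Psi\big(p, \{F_k\}\big)$ on the grounds that the new potential is ``steeper'' because it reaches $0$ at $\uu_k \leq u_k$. But by the very definition of \textsf{DIMINISH}, $\Psi\big(p, \{\FF_k\}\big) = \Psi\big(p, \{F_k\}\big) - \Delta_k$ on all of $(0, \uu_k]$ with $\Delta_k$ a \emph{constant}, so the two potentials have \emph{identical} $p$-derivatives throughout the relevant range; there is no steepness to exploit, and in any case equality of $\partial_p \Psi$ does not transfer to the virtual values, since $\Phi_k(p) = p - \frac{1 - F_k(p)}{f_k(p)}$ depends on $F_k$ and $f_k$ separately and $\FF_k \neq F_k$ even though the potentials differ only by a constant. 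Equivalently: $\PPhi_k(p) \geq \Phi_k(p)$ is a comparison of hazard rates, and neither stochastic dominance ($\FF_k \geq F_k$ pointwise) nor concavity of both revenue-quantile curves implies it in general. Your proposal never isolates the extra ingredient that makes it true here.

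That ingredient is constraint~(\ref{cstr:value}). The paper's proof solves the identity $\Psi\big(p,\{\FF_k\}\big) = \Psi\big(p,\{F_k\}\big) - \Delta_k$ explicitly for $\FF_k(p)$ in terms of $F_k(p)$, $p$ and $\Delta_k$, differentiates to get $\ff_k(p)$, and after rearrangement reduces $\PPhi_k(p) \geq \Phi_k(p)$ to
\[
\Phi_k(p) \;\geq\; 1 - e^{-\Delta_k} \cdot \left(1 + p \cdot \frac{1 - F_k(p)}{F_k(p)}\right)^2 ,
\]
which holds because the right-hand side is at most $1$ while $\Phi_k(p) \geq \Phi_k(v_k^+) \geq 1$ by regularity and the feasibility premise. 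Note that the right-hand side approaches $1 - e^{-\Delta_k} > 0$ as $F_k(p) \to 1$, so the inequality can fail for a regular distribution with small positive virtual value: the hypothesis $\Phi_k(v_k^+) \geq 1$ is doing real work, and any correct proof must invoke it in the main argument rather than only in the corollary at the end. Your treatment of the easy regime $p \in (\uu_k, \infty)$, where $\FF_k(p) = 1$ and $\PPhi_k(p) = p \geq \Phi_k(p)$, is fine and matches the paper; the case $p \in (\vv_k, \uu_k]$ is where your plan, as written, does not go through.
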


As for the feasibility to constraint~(\ref{cstr:ap2}), for all $p \in (1, \infty)$, we shall check that
\begin{equation}
\label{eq:cstr:ap.t}
\R\big(\max\{p, \ggamma\}\big) + \Psi\big(p, \{\FF_i\}_{i = 1}^n\big) \leq \R(p).
\end{equation}
\begin{description}
\item[Case I (when $\uu < p < \infty$):] In this case, $\FF_i(p) = 1$ for each $i \in [n]$. Recall Lemma~\ref{lem:RQ} that $\R(p)$ is decreasing on $p \in (1, \infty)$, clearly $\text{LHS of~(\ref{eq:cstr:ap.t})} = \R\big(\max\{p, \ggamma\}\big) \leq \R(p)$.
\item[Case II (when $1 < p \leq \uu$):] In this case, we know $\R\big(\max\{p, \ggamma\}\big) = \R(\ggamma)$, by Lemma~\ref{lem:alg_facts}.1 that $\ggamma \geq u \geq \uu \geq p$. Under premise in Main Lemma~\ref{lem:subroutine} that $\R(\gamma) + \Psi\big(p, \{F_i\}_{i = 1}^n\big) \leq \R(p) - \delta^*$,
    \[
    \begin{aligned}
    \text{LHS of~(\ref{eq:cstr:ap.t})}
    = & \R(\ggamma) + \Psi\big(p, \{\FF_i\}_{i = 1}^n\big)
    \overset{(\ast)}{=} \R(\ggamma) + \Psi\big(p, \{F_i\}_{i = 1}^n\big) - \Delta \\
    \overset{(\diamond)}{=} & \R(\gamma^*) + \Psi\big(p, \{F_i^*\}_{i = 1}^n\big) \leq \R(p) - \delta^*,
    \end{aligned}
    \]
    where $(\ast)$ comes from Lemma~\ref{lem:alg_facts}.3, and $(\diamond)$ comes from line~(\ref{alg2:continuous}) of algorithm \textsf{SUBROUTINE}.
\end{description}

\subsection{Optimality: Monotonicity of Objective Value}
\label{subsec:upper3-opt}

Each iteration of algorithm \textsf{SUBROUTINE} can be divided into $n$ processes. In the $k$-th process, assume\footnote{Otherwise, i.e., when $\Delta_k = 0$, we know $\ggamma_k = \gamma_k$ and $\FF_k \equiv F_k$, and therefore the optimality trivially holds.} w.l.o.g. $\Delta_k > 0$, then Lemma~\ref{lem:alg_facts}.1 ensures that $\uu_k = \uu$. The following box reviews the changes during the $k$-th process:
\vspace{5.5pt} \\
\fcolorbox{white}{lightgray}{\begin{minipage}{\textwidth}
\begin{itemize}
\item $\{\FF_i\}_{i = 1}^{k - 1} \cup \{F_i\}_{i = k}^n$ becomes $\{\FF_i\}_{i = 1}^k \cup \{F_i\}_{i = k + 1}^n$, where $\FF_k$ is defined as
    \begin{equation}
    \label{eq:upper3-F}
    \Psi\big(p, \{\FF_k\}\big) \equiv \Psi\big(p, \{F_k\}\big) - \Delta_k \quad\quad \forall p \in (0, \uu] = (0, \uu_k];
    \end{equation}
\item $\cont(\gamma_k)$ becomes $\cont(\ggamma_k)$, where $\gamma_k$ and $\ggamma_k$ satisfies $\R(\gamma_k) = \R(\gamma) + \sum\limits_{i = 1}^{k - 1} \Delta_i$ and
    \begin{equation}
    \label{eq:upper3-cont}
    \R(\ggamma_k) = \R(\gamma_k) + \Delta_k;
    \end{equation}
\item Clearly, $\Delta_k > 0$ means $\Psi\big(\{F_k\}\big) > 0$ and $\gamma_k > \ggamma_k$ (see Lemma~\ref{lem:RQ}.1). Based on Lemma~\ref{lem:alg_facts}, and since $\vv_k \leq \uu_k \leq u_k \leq u = \max \big\{u_i: \,\, \big(i \in [n]\big) \bigwedge \big(\Psi\big(\{F_i\}\big) > 0\big)\big\}$, we can conclude that
    \begin{equation}
    \label{eq:upper3-p}
    \gamma^* \geq \gamma_k > \ggamma_k \geq u_k \geq \uu_k \geq v_k = \vv_k \geq v^* > 1.
    \end{equation}
\end{itemize}
\end{minipage}}
\vspace{5.5pt}

The objective value can never decrease after the above process, which is formulated as Lemma~\ref{lem:contopt} in the below. Clearly, applying Lemma~\ref{lem:contopt} over all $k \in [n]$ indicates Main Lemma~\ref{lem:subroutine}.3.
\begin{lemma}
\label{lem:contopt}
$\opt\big(\cont(\ggamma_k) \cup \{\FF_i\}_{i = 1}^k \cup \{F_i\}_{i = k + 1}^n\big) \geq \opt\big(\cont(\gamma_k) \cup \{\FF_i\}_{i = 1}^{k - 1} \cup \{F_i\}_{i = k}^n\big)$.
\end{lemma}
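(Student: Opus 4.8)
The plan is to write the objective difference as a single integral over the virtual-value axis, isolate the only two objects that change in the $k$-th process (the discrete distribution $F_k$ and the continuous component), and then push the comparison through an Abel-type inequality that exploits the monotonicity of the unchanged factors.

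First I would record, using the characterization of Fact~\ref{fact:opt_rev} together with Definition~\ref{def:cont} and Lemma~\ref{lem:opt_cont}, that for any hybrid instance $\cont(\gamma)\cup\{G_i\}_{i=1}^n$ with $r_{G_i}(0)=0$ (true here by Main Lemma~\ref{lem:p/(p+1)}.2) the continuous component merely contributes the extra factor $e^{-\Q(\max\{x,\gamma\})}$ to the product of virtual-value CDFs, so that $\opt(\cont(\gamma)\cup\{G_i\}_{i=1}^n)=2+\int_1^\infty(1-e^{-\Q(\max\{x,\gamma\})}\prod_i D_{G_i}(x))\,dx$. Subtracting this for the two instances of the $k$-th process and setting $P(x)\eqdef\prod_{i<k}\DD_i(x)\cdot\prod_{i>k}D_i(x)\in[0,1]$ — which is left unchanged by the process and is nondecreasing in $x$, since every virtual-value CDF is nondecreasing — gives
\[
\opt(\text{new})-\opt(\text{old})=\int_1^\infty\big[A(x)-B(x)\big]\,P(x)\,dx,\qquad A(x)\eqdef e^{-\Q(\max\{x,\gamma_k\})}D_k(x),\quad B(x)\eqdef e^{-\Q(\max\{x,\ggamma_k\})}\DD_k(x).
\]
Since $P\ge0$ is nondecreasing and $A,B$ agree for large $x$, an Abel / integration-by-parts step — this is where an Abel-type inequality such as Lemma~\ref{lem:abel_ineq1} enters — reduces the claim to the pointwise statement $\Lambda(t)\eqdef\int_t^\infty(A(x)-B(x))\,dx\ge0$ for every $t\ge1$.

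To handle $\Lambda(t)$ I would invoke the structural identity $\int_x^\infty(1-D_k(t))\,dt=\max_{q\in[0,1]}(r_k(q)-xq)=:g_k(x)$, a consequence of the machinery of Appendix~\ref{app:prelim} (Fact~\ref{fact1}, Fact~\ref{fact2}, Main Lemma~\ref{lem:virtual_value}), and similarly $\tilde g_k$ for $\FF_k$. Because \textsf{DIMINISH} only lowers the revenue-quantile curve, $\rr_k\le r_k$ pointwise and hence $\tilde g_k\le g_k$; because $\uu_k\le u_k\le\ggamma_k<\gamma_k$ (Lemma~\ref{lem:alg_facts}.1 and inequality~(\ref{eq:upper3-p})), the factors $e^{-\Q(\max\{x,\cdot\})}$ equal the constants $c\eqdef e^{-\Q(\gamma_k)}>\tilde c\eqdef e^{-\Q(\ggamma_k)}$ wherever $D_k,\DD_k$ are non-trivial. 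The range $t>u_k$ is immediate ($D_k=\DD_k=1$ there, $A-B\ge0$ on $[t,\gamma_k]$ and $=0$ beyond); for $t\le u_k$, carrying out the integral writes $\Lambda(t)$ as a manifestly nonnegative ``continuous-component gain'' $(c-\tilde c)(\ggamma_k-t)+\int_{\ggamma_k}^{\gamma_k}(c-e^{-\Q(x)})\,dx$ minus a ``discrete loss'' $c\,g_k(t)-\tilde c\,\tilde g_k(t)$.

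The crux — and the step I expect to be the main obstacle — is showing that the gain dominates the loss. This is the quantitative form of the slogan ``the continuous component converts potential into revenue at least as efficiently as any regular distribution'': \textsf{DIMINISH} drops the potential of $F_k$ by exactly $\Delta_k$, while line~(\ref{alg2:continuous}) of \textsf{SUBROUTINE} spends exactly $\R(\ggamma_k)-\R(\gamma_k)=\Delta_k$ of potential on the continuous component, so the two instances are built from the \emph{same} potential budget and must be compared at equal budget. To make this rigorous I would (i) use Lemma~\ref{lem:RQ} — especially $\R'(p)=p\,\Q'(p)$, $\R'<\Q'<0$, and the bounds $\tfrac1p\le\R(p)\le\tfrac1{p-1}$, $\tfrac1{p^2}\le|\R'(p)|\le\tfrac1{p^2-p}$ — to turn the shift $\gamma_k\to\ggamma_k$ into quantitative control of $\Delta_k$ and of $c-\tilde c$; (ii) use Lemma~\ref{lem:upper3-cstr2} ($\PPhi_k\ge\Phi_k$ on $(v_k,\infty)$, so $\DD_k$ cannot exceed $D_k$ by too much) to bound the loss $g_k(t)-\tilde g_k(t)$ by a multiple of $\Delta_k$; and (iii) use the ``curved-edge'' estimates for a concave revenue-quantile curve (tangent line above, chord below), together with the slack hypothesis $\R(\gamma_k)+\Psi(p,\{F_i\}_{i=1}^n)\le\R(p)-\delta^*$ of Main Lemma~\ref{lem:subroutine}, to keep every estimate inside its valid regime. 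Combining these with the elementary inequality from the previous paragraph yields $\Lambda(t)\ge0$ for all $t\ge1$, and then the Abel reduction gives $\opt(\text{new})\ge\opt(\text{old})$, i.e.\ Lemma~\ref{lem:contopt}.
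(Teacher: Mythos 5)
Your architecture is essentially the paper's: the same unchanged monotone factor $d_k(x)=\prod_{i<k}\DD_i(x)\cdot\prod_{i>k}D_i(x)$, the same appeal to Abel's inequality (Fact~\ref{fact:abel_ineq}) with that factor as the weight, the same identity from Main Lemma~\ref{lem:virtual_value} to evaluate $\int(1-D_k)$, and the same ultimate reliance on the constraint slack $\delta^*$ and the choice of $\Delta^*$. The gap is in how you set up the Abel step. By folding the change of the continuous component into the integrand via $e^{-\Q(\max\{x,\cdot\})}$ and applying Abel to $\int_1^{\infty}(A-B)P\,dx$, your precondition becomes $\Lambda(t)=\int_t^{\infty}(A-B)\,dx\ge 0$ for \emph{every} $t$, which for $t\le\ggamma_k$ unwinds (using $D_k=\DD_k=1$ on $[\ggamma_k,\gamma_k]$) to
\[
\int_t^{\gamma_k}\Bigl(D_k(x)\,e^{-\Q(\gamma_k)}-\DD_k(x)\,e^{-\Q(\ggamma_k)}\Bigr)dx\;\ge\;C\eqdef\int_{\ggamma_k}^{\gamma_k}\Bigl(e^{-\Q(x)}-e^{-\Q(\ggamma_k)}\Bigr)dx\;>\;0.
\]
The paper's Lemma~\ref{lem:abel_ineq1} proves only the weaker statement with $0$ in place of $C$, and that is all its Abel application needs, because it excises $[\ggamma_k,\gamma_k]$ (where $d_k\equiv 1$, so that piece is computed exactly) and pays the cost $C$ only once, at $t=1$, where the \emph{first-order} slack $\tfrac{1}{\kappa^*}\gamma^{*2}\Delta^*$ of Lemma~\ref{lem:main} is available (inequality~(\ref{eq:main2})). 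You must instead beat $C$ --- a quantity of order $\Delta_k^2$ --- at every $t$, including $t\in[\Phi_k(\uu_k^+),\ggamma_k]$, where the proof of mere nonnegativity (the paper's ``Analysis I'' of Lemma~\ref{lem:abel_ineq1}) bottoms out in Lemma~\ref{lem:ineq}.4, an inequality that degenerates to equality as $\ggamma_k\to\gamma_k$ and so carries only second-order slack itself. Whether that slack dominates $C$ is a delicate second-order-versus-second-order comparison that your sketch never addresses: your ``gain'' $(c-\tilde c)(\ggamma_k-t)$ shrinks to essentially nothing as $t$ approaches $\uu_k$ when $\uu_k$ is close to $\ggamma_k$, while $C$ is independent of $t$. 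The closing sentence ``combining these \dots yields $\Lambda(t)\ge 0$ for all $t$'' is therefore an unproven assertion, and it is exactly the crux.

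The repair is to follow the paper's decomposition: handle $[\ggamma_k,\gamma_k]$ exactly, apply Abel only to $\int_1^{\gamma_k}d_k\bigl(D_k e^{-\Q(\gamma_k)}-\DD_k e^{-\Q(\ggamma_k)}\bigr)dx$ under the weaker precondition of Lemma~\ref{lem:abel_ineq1}, and then verify the single inequality $\kappa^*\int_1^{\gamma_k}(\cdots)\,dx\ge C$ via the explicit formulas and Lemma~\ref{lem:main}. Note also that your items (i)--(iii) for the remaining comparison would still have to reproduce the full content of Lemma~\ref{lem:main}, including the factor $\kappa^*$ that the Abel step forces into the final inequality; your formulation, which only needs $P(1^+)\Lambda(1)\ge 0$, hides this factor but does not remove the work, since it reappears in the per-$t$ precondition.
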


\begin{proof}
Let $d_k(p) \eqdef \prod\limits_{i = 1}^{k - 1} \DD_i(p) \cdot \prod\limits_{i = k + 1}^n D_i(p)$ for notational convenience. Recall Lemma~\ref{lem:opt_cont} and the objective of Program~(\ref{prog:1}), the statement of the lemma is given by
\[
2 + \displaystyle{\int_1^{\infty}} \left(1 - d_k(x) \cdot \DD_k(x) \cdot e^{-\Q\big(\max\{x, \ggamma_k\}\big)}\right) dx
\geq 2 + \displaystyle{\int_1^{\infty}} \left(1 - d_k(x) \cdot D_k(x) \cdot e^{-\Q\big(\max\{x, \gamma_k\}\big)}\right) dx.
\]
Recall Lemma~\ref{lem:alg_facts}.1 that $\gamma \geq \ggamma \geq u = \max \big\{u_i: \,\, \Psi\big(\{F_i\}\big) > 0\big\} \geq \uu = \max \big\{\uu_i: \,\, \Psi\big(\{\FF_i\}\big) > 0\big\}$. Certainly, $D_k(p) = \DD_k(p) = d_k(p) = 1$ for all $p \in (\ggamma, \infty)$. As per this, after rearranging the above inequality, we are left to justify that
\begin{equation}
\label{eq:main1}\tag{E1}
\displaystyle{\int_1^{\gamma_k}} d_k(x) \cdot \left(D_k(x) \cdot e^{-\Q(\gamma_k)} - \DD_k(x) \cdot e^{-\Q(\ggamma_k)}\right) dx
\geq \displaystyle{\int_{\ggamma_k}^{\gamma_k}} \left(e^{-\Q(x)} - e^{-\Q(\ggamma_k)}\right) dx.
\end{equation}
In spirit, the following observations are crucial for testifying inequality~(\ref{eq:main1}). Independent of anything else, suppose that step-size $\Delta_k$ is a \emph{first-order} infinitesimal. Since then,
\begin{itemize}
\item $\big(\DD_k(p) - D_k(p)\big)$ is a positive first-order infinitesimal, for all $p \in (1, \infty)$;
\item $\big(\gamma_k - \ggamma_k\big)$ and $\left(e^{-\Q(\gamma_k)} - e^{-\Q(\ggamma_k)}\right)$ is positive first-order infinitesimals;
\item The $\text{LHS of inequality~(\ref{eq:main1})}$ is a \emph{first-order} infinitesimal: the interval of integration, $[1, \gamma_k]$, is of constant length; the involved integrand is (pointwisely) of first-order infinitesimal;
\item The $\text{RHS of inequality~(\ref{eq:main1})}$ is a \emph{positive second-order} infinitesimal: the interval of integration, $[\ggamma_k, \gamma_k]$, has a length of first-order infinitesimal; $\left(e^{-\Q(p)} - e^{-\Q(\ggamma_k)}\right)$ is a first-order infinitesimal, for all $p \in [\ggamma_k, \gamma_k]$.
\end{itemize}
Conceivably, a first-order infinitesimal (namely the LHS) should be no less than a second-order infinitesimal (namely the RHS). However, because of the following two issues, converting the above intuition into the ultimate proof is far from obvious.
\begin{itemize}
\item Whether the $\text{LHS of inequality~(\ref{eq:main1})}$ is \emph{positive} is quite non-trivial: in fact, we can construct a sophisticated example such that, on interval $p \in [1, \gamma_k]$, the values of the involved integrand (of the LHS) change signs.

    The main tool for overcoming this issue is Abel's inequality. Showing that Abel's inequality is applicable requires enormous calculations. Nevertheless, the subsequent benefits are twofold: (see \textbf{Analysis I})~we do catch a new definite integration, which always takes a positive value, to bound the $\text{LHS of inequality~(\ref{eq:main1})}$ from below; (see \textbf{Analysis II})~the involved integrand is greatly simplified, and thus the new definite integration admits an \emph{explicit formula}.
\item Whether the chosen step-size $\Delta_k$ is small enough to guarantee inequality~(\ref{eq:main1})?

    We shall rewrite/relax both hand side of inequality~(\ref{eq:main1}) to be formulas containing $\Delta_k$, and measure the resulting coefficients (of $\Delta_k$ and $\Delta_k^2$). For the RHS, this task is relatively easy. To treat the LHS in a similar manner, the following two observations are crucial.
    \begin{enumerate}
    \item Recall step~(\ref{alg1:gamma}) of \textsf{MAIN} that $\Delta^* = \frac{v^* - 1}{3\gamma^{*2}} \cdot \kappa^* \cdot \delta^*$. Adopting the idea from Section~\ref{subsec:upper3-cstr}, we can show a slack of constraint~(\ref{cstr:ap2}): for all $p \in (1, \uu]$,
        \[
        \Psi\big(\cont(\ggamma_k) \cup \{\FF_i\}_{i = 1}^k \cup \{F_i\}_{i = k + 1}^n\big) = \Psi\big(\cont(\gamma_k) \cup \{\FF_i\}_{i = 1}^{k - 1} \cup \{F_i\}_{i = k}^n\big) \leq \R(p) - \delta^*.
        \]
        In fact, this constraint-slack will reward us with a lower bound (containing $\Delta^*$) of the LHS of inequality~(\ref{eq:main1}).
    \item Hitherto, it suffices to show that the relax LHS (i.e., a formula of order $\Delta^* \geq \Delta = \sum\limits_{i = 1}^n \Delta_i \geq \Delta_k$) is no less than the relaxed RHS (i.e., a formula of order $\Delta_k^2$): this task is relatively easy. Notably, the value $\Delta^* = \frac{v^* - 1}{3\gamma^{*2}} \cdot \kappa^* \cdot \delta^*$ is carefully chosen, to make it fit in the coefficients on both hand sides of the relaxed inequality~(\ref{eq:main1}).
    \end{enumerate}
    We will elaborate on the above ideas in \textbf{Analysis III}.
\end{itemize}
In the rest part of Section~\ref{subsec:upper3-opt}, we will implement the above proof plan.
\vspace{5.5pt} \\
\fbox{\begin{minipage}{\textwidth}
\begin{fact}[Abel's Inequality]
\label{fact:abel_ineq}
Suppose (1)~$\lambda(x)$ is a non-negative and increasing function on $[a, b]$; and (2)~$\mu(x)$ is an integrable function such that $\displaystyle{\int_y^b} \mu(x) dx \geq 0$ for all $y \in [a, b]$, then
\[
\displaystyle{\int_a^b} \lambda(x) \cdot \mu(x) dx \geq \lambda(a^+) \cdot \displaystyle{\int_a^b} \mu(x) dx.
\]
\end{fact}
\end{minipage}}

\paragraph{Analysis I: Exploiting Abel's Inequality.}
To see Fact~\ref{fact:abel_ineq} can be applied to the LHS of inequality~(\ref{eq:main1}), we shall check the two involved conditions. Specifically,
\begin{enumerate}
\item $d_k(p) = \prod\limits_{i = 1}^{k - 1} \DD_i(p) \cdot \prod\limits_{i = k + 1}^n D_i(p)$ is a non-negative and increasing function on $p \in (0, \infty)$;
\item $\displaystyle{\int_p^{\gamma_k}} \left(D_k(x) \cdot e^{-\Q(\gamma_k)} - \DD_k(x) \cdot e^{-\Q(\ggamma_k)}\right) dx \geq 0$ for all $p \in [1, \gamma_k]$ .
\end{enumerate}
The correctness of the first condition is trivial. The second condition is more technical. Based on Lemma~\ref{lem:upper3-cstr2}, Lemma~\ref{lem:main} (to appear soon after, whose proof is deferred to Appendix~\ref{app:upper3-lem}) and extra mathematical facts, we further confirm Lemma~\ref{lem:abel_ineq1} in Appendix~\ref{app:upper3-opt}. Hence, the second condition for applying Fact~\ref{fact:abel_ineq} (to the LHS of inequality~(\ref{eq:main1})) is also guaranteed.

\begin{lemma}
\label{lem:abel_ineq1}
$\displaystyle{\int_p^{\gamma_k}} D_k(x) \cdot e^{-\Q(\gamma_k)} dx \geq \displaystyle{\int_p^{\gamma_k}} \DD_k(x) \cdot e^{-\Q(\ggamma_k)} dx$ for all $p \in [1, \gamma_k]$.
\end{lemma}

To generate useful bound for the LHS of inequality~(\ref{eq:main1}), we need to measure $d_k(1^+)$. For each $i \in [n]$, we have shown in Section~\ref{subsec:upper3-cstr} that $\PPhi_i(\vv_i^+) \geq \Phi_i(v_i^+) \geq 1$, which means
\begin{equation}
\label{eq:abel_ineq5}
\DD_i(p) = \DD_i(1^+) = 1 - \qq_i \quad\quad D_i(p) = D_i(1^+) = 1 - q_i \quad\quad \forall p\in (0,1].
\end{equation}
Recall Lemma~\ref{lem:constant} for fixed-parameter $\kappa^* = \prod\limits_{i = 1}^n (1 - q_i^*) \in (0, 1]$.
\begin{equation}
\label{eq:abel_ineq4}
d_k(1^+) \overset{(\ref{eq:abel_ineq5})}{=} \prod\limits_{i = 1}^{k - 1} (1 - \qq_i) \cdot \prod\limits_{i = k + 1}^n (1 - q_i) \overset{(\diamond)}{=} \prod\limits_{i \in [n] \setminus \{k\}} (1 - q_i^*) \geq \kappa^*,
\end{equation}
where in $(\diamond)$ we apply Lemma~\ref{lem:alg_facts}.2 inductively. Combine Fact~\ref{fact:abel_ineq} and inequality~(\ref{eq:abel_ineq4}) together,
\[
\text{LHS of~(\ref{eq:main1})}
\geq \kappa^* \cdot \displaystyle{\int_1^{\gamma_k}} \left(D_k(x) \cdot e^{-\Q(\gamma_k)} - \DD_k(x) \cdot e^{-\Q(\ggamma_k)}\right) dx.
\]

\paragraph{Analysis II: Getting Explicit Formulas.}
The relaxed formula mentioned above (for the LHS of inequality~(\ref{eq:main1})) admits an explicit formula. Recall that $D_k(p) = 1$ for all $p \in (\ggamma, \infty)$.
\[
\begin{aligned}
\displaystyle{\int_1^{\gamma_k}} D_k(x) dx
= & \gamma_k - 1 + \displaystyle{\int_0^1} \big(1 - D_k(x)\big) dx - \displaystyle{\int_0^{\infty}} \big(1 - D_k(x)\big) dx \\
\overset{(\ref{eq:abel_ineq5})}{=} & \gamma_k - 1 + q_k - \displaystyle{\int_0^{\infty}} \big(1 - D_k(x)\big) dx \\
\overset{(\diamond)}{=} & \gamma_k - 1 + q_k - v_k q_k,
\end{aligned}
\]
where $(\diamond)$ comes from Main Lemma~\ref{lem:p/(p+1)}.2 (that $r_k(0) = 0$) and Main Lemma~\ref{lem:virtual_value} (see Appendix~\ref{app:prelim}). Similarly, $\displaystyle{\int_1^{\gamma_k}} \DD_k(x) dx = \gamma_k - 1 + \qq_k - \vv_k \qq_k \xlongequal{Lemma~\ref{lem:alg_facts}.2} \gamma_k - 1 + \qq_k - v_k \qq_k$. To sum up,
\[
\text{LHS of~(\ref{eq:main1})}
\geq \kappa^* \cdot \left[\left(\gamma_k - 1 - v_k q_k + q_k\right) \cdot e^{-\Q(\gamma_k)} - \left(\gamma_k - 1 - v_k \qq_k + \qq_k\right) \cdot e^{-\Q(\ggamma_k)}\right]
\]
Moreover, we can relax the RHS of inequality~(\ref{eq:main1}) as follows:
\[
\text{RHS of~(\ref{eq:main1})} = \displaystyle{\int_{\ggamma_k}^{\gamma_k}} \left(e^{-\Q(x)} - e^{-\Q(\ggamma_k)}\right) dx \leq (\gamma_k - \ggamma_k) \cdot \left(e^{-\Q(\gamma_k)} - e^{-\Q(\ggamma_k)}\right),
\]
which follows from Lemma~\ref{lem:RQ}.1 that $\Q(p)$ is decreasing on $p \in (1, \infty)$. Adopt both tricks, and then rearrange the intermediate inequality. Afterwards, we are left to prove the following.
\begin{equation}
\label{eq:main2}\tag{E2}
\gamma_k - (1 - q_k + v_k q_k) - (v_k - 1) \cdot \frac{q_k - \qq_k}{e^{\Q(\ggamma_k) - \Q(\gamma_k)} - 1}
\geq \frac{1}{\kappa^*} \cdot (\gamma_k - \ggamma_k).
\end{equation}

\paragraph{Analysis III: Exploiting Auxiliary Lemmas.}
To conquer inequality~(\ref{eq:main2}), we shall introduce the following technical lemma (whose proof is deferred to Appendix~\ref{app:upper3-lem}).

\begin{lemma}
\label{lem:main}
For all $p \in [\vv_k, \uu_k] \subset [v_k, u_k]$,
\begin{equation}
\label{eq:lem:main}
\gamma_k - \Big[F_k(p) + p \cdot \big(1 - F_k(p)\big)\Big] - (p - 1) \cdot \frac{\FF_k(p) - F_k(p)}{e^{\Q(\ggamma_k) - \Q(\gamma_k)} - 1} \geq \frac{1}{\kappa^*} \cdot \gamma^{*2} \cdot \Delta^*.
\end{equation}
\end{lemma}

We continue to relax both hand sides of inequality~(\ref{eq:main2}). Assigning $p \leftarrow v_k$ in Lemma~\ref{lem:main} gives
\[
\text{LHS of~(\ref{eq:main2})} = \text{LHS of~(\ref{eq:lem:main})} \geq \frac{1}{\kappa^*} \cdot \gamma^{*2} \cdot \Delta^*.
\]
Moreover, recall Lemma~\ref{lem:RQ}.1 that $\R(p)$ is a decreasing and convex function on $p \in (1, \infty)$,
\[
\text{RHS of~(\ref{eq:main2})} \leq \frac{1}{\kappa^*} \cdot \frac{\R(\ggamma_k) - \R(\gamma_k)}{\big|\R'(\gamma_k)\big|} \overset{(\ref{eq:upper3-cont})}{=} \frac{1}{\kappa^*} \cdot \frac{\Delta_k}{\big|\R'(\gamma_k)\big|} \overset{(\star)}{\leq} \frac{1}{\kappa^*} \cdot \gamma_k^2 \cdot \Delta_k,
\]
where in $(\star)$ we apply Lemma~\ref{lem:RQ}.3 that $\big|\R'(p)\big| \geq \frac{1}{p^2}$ for all $p \in (1, \infty)$. Gather everything together, it remains to show that
\[
\frac{1}{\kappa^*} \cdot \gamma^{*2} \cdot \Delta^* \geq \frac{1}{\kappa^*} \cdot \gamma_k^2 \cdot \Delta_k.
\]
This follows as $\gamma^* \geq \gamma_k$ and $\Delta^* \geq \Delta \geq \Delta_k$, according to inequality~(\ref{eq:upper3-p}) and line~(\ref{alg2:Delta}) of algorithm \textsf{SUBROUTINE}, respectively. This completes the proof of Lemma~\ref{lem:contopt}.
\end{proof}

%

\section{Bayesian Multi-Dimensional Unit-Demand Mechanism Design}
\label{sec:extension}
In this section, we move to the multi-dimensional unit-demand environment (with a single buyer), aiming to certify Theorem~\ref{thm:bupp_upm}.
\vspace{5.5pt} \\
\fbox{\begin{minipage}{\textwidth}
\paragraph{[Theorem~\ref{thm:bupp_upm}].}
\emph{To sell heterogeneous items to a unit-demand buyer with independent regular distributions for different items, the supremum of the ratio of optimal item pricing to uniform pricing equals to $\C \approx 2.6202$.}
\end{minipage}}
\vspace{5.5pt}

Intuitively, the upper-bound part comes from combining Theorem~\ref{thm:opt_ap} and previous results (i.e., the \emph{copying} technique developed in~\cite{CHK07,CHMS10,CMS15}). As for the lower-bound part, it turns out that applying a natural extension of Example~\ref{exp:opt-ap-lower} (i.e., the lower-bound instance for Theorem~\ref{thm:opt_ap}) would work. We will settle everything in the rest of this section.

\subsection{Recap of Multi-Dimensional Unit-Demand Setting}

The problem of selling \emph{heterogeneous} items among unit-demand buyers, is termed the \emph{Bayesian multi-dimensional unit-demand mechanism design} ($\bmumd$) problem~\cite{CHMS10,KW12,CMS15}. In the restricted setting with a \emph{single} buyer, basic concepts for the problem are given as follows. We re-denote by $n \in \mathbb{N}_+$ the number of items, and re-denote by $\{F_i\}_{i = 1}^n$ the buyer's value distributions for the items.

\paragraph{$\bmumd$ problem.}
In the single-buyer setting, Chawla et al.~\cite{CMS15} proved that any a mechanism (possibly randomized) can be mapped to a \emph{lottery-pricing} mechanism. We denote a lottery $\ell = \left(p^{\ell}, x_1^{\ell}, x_2^{\ell}, \cdots, x_n^{\ell}\right)$ by $(n + 1)$ elements: the embedded price $p^{\ell} \in \mathbb{R}_+$; and each $i$-th item's allocation probability $x_i^{\ell}$. With values $(w_1, w_2, \cdots, w_n)$ drawn from distributions $\{F_i\}_{i = 1}^n$, the buyer will gain utility $u^l \eqdef \sum\limits_{i = 1}^n w_i \cdot x_i^{\ell} - p^{\ell}$ after accepting this lottery. W.l.o.g. we have $\sum\limits_{i = 1}^n x_i^{\ell} \leq 1$, since the buyer is unit-demand.

A \emph{lottery-pricing} mechanism involves an assemblage of lotteries $\mathcal{L} = \{\ell_0, \ell_1, \ell_2, \cdots\}$, where the special lottery $\ell_0 = \{0\}^{n + 1}$ respects the buyer's rationality. Facing lottery set $\mathcal{L}$, the buyer will take a utility-optimal lottery $\ell^* \in \mathop{\arg\max} \big\{u^l: \,\, \ell \in \mathcal{L}\big\}$ (possibly the special lottery $l_0$, which results in $u^{l_0} = 0$). For brevity, denote by $\bmumd\big(\{F_i\}_{i = 1}^n\big)$ the optimal lottery-pricing and revenue.

Restricted to deterministic mechanisms, the $\bmumd$ problem is also known as the \emph{Bayesian unit-demand item-pricing} ($\bupp$) problem in~\cite{CHK07,CD15,CMS15,CDOPSY15}. In an \emph{item pricing} scheme: the seller posts prices $(p_1, p_2, \cdots, p_n) \in \mathbb{R}_+^n$ for the items; with values $(w_1, w_2, \cdots, w_n)$, the buyer selects a utility-optimal item $i^* \in \mathop{\arg\max} \big\{w_i - p_i: \,\, i \in [n]\big\}$ (or nothing when $w_{i^*} < p_{i^*}$). Again, we abuse notation $\bupp\big(\{F_i\}_{i = 1}^n\big)$ to denote the optimal item pricing and revenue.

A \emph{uniform-pricing} ($\upm$) mechanism~\cite{CD15,HH15} is a restricted item pricing, where $p_1 = p_2 = \cdots = p_n = p$. For a specific price $p \in \mathbb{R}_+$, let $\upm\big(p, \{F_i\}_{i = 1}^n\big)$ be the revenue from such uniform-pricing. Since then, $\upm\big(\{F_i\}_{i = 1}^n\big) \eqdef \max\big\{\upm\big(p, \{F_i\}_{i = 1}^n\big): \,\, p \in \mathbb{R}_+\big\}$ presents the optimum.

\paragraph{Copying Technique.}
Chawla et al.~\cite{CHK07,CMS15} developed the so-called \emph{copying} technique. Derived from an \emph{original} instance $\{F_i\}_{i = 1}^n$ for the single-buyer $\bmumd$ problem, consider a new scenario: the seller aims to sell a single item among $n$ buyers; for the item, each $i$-th buyer's value follows distribution $F_i$. Chawla et al. named this instance the \emph{copied} instance. For brevity, we use $\{F_i\}_{i = 1}^n$ to denote both instances. Which scenario we refer to will be clear from the context.

For the involved two instances, Chawla et al.~\cite{CHK07,CMS15} acquired the following two lemmas, hence a connection to Myerson Auction ($\opt$) in the single-dimensional setting.

\begin{lemma}[\cite{CHK07}]
\label{lem:copy_bupp}
$\bupp\big(\{F_i\}_{i = 1}^n\big) \leq \opt\big(\{F_i\}_{i = 1}^n\big)$.
\end{lemma}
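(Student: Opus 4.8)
The plan is to construct, for the \emph{copied} single-item instance, a truthful (dominant-strategy incentive compatible) mechanism whose expected revenue is at least $\bupp\big(\{F_i\}_{i = 1}^n\big)$. Since Myerson Auction is revenue-optimal among all truthful single-item mechanisms on a given regular instance, it follows that $\opt\big(\{F_i\}_{i = 1}^n\big)$ dominates that revenue, and hence dominates $\bupp\big(\{F_i\}_{i = 1}^n\big)$.

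Concretely, I would first fix an item-pricing vector $(p_1, \dots, p_n) \in \mathbb{R}_+^n$ attaining the optimum $\bupp\big(\{F_i\}_{i = 1}^n\big)$ in the single-buyer unit-demand setting. Given a value profile $\bm{w} = (w_1, \dots, w_n)$ drawn from $\{F_i\}_{i = 1}^n$, the unit-demand buyer purchases the item $i^*(\bm{w}) \in \arg\max_{i \in [n]} (w_i - p_i)$ provided $w_{i^*} \geq p_{i^*}$ (breaking ties by a fixed rule), and nothing otherwise. For the copied instance---where the $n$ coordinates become $n$ single-minded buyers for one shared item---define the allocation rule that awards the item to buyer $i^*(\bm{w})$ (and to nobody when no sale occurs in the original). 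This is a feasible single-item allocation, since at most one buyer ever receives the item; and, fixing $\bm{w}_{-i}$, buyer $i$ wins exactly when $w_i \geq t_i(\bm{w}_{-i}) \eqdef p_i + \max\big\{0,\, \max_{j \neq i}(w_j - p_j)\big\}$, so the allocation is monotone in each $w_i$.

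Next I would apply Myerson's Lemma (Lemma~\ref{lem:myerson}) to this monotone allocation: the unique truthful payment charges the winner her threshold $t_i(\bm{w}_{-i})$. Since $t_i(\bm{w}_{-i}) \geq p_i$ always holds, the expected revenue of this truthful mechanism is $\sum_{i=1}^n \E_{\bm{w}}\big[t_i(\bm{w}_{-i}) \cdot \mathbbm{1}\{i = i^*(\bm{w})\}\big] \geq \sum_{i=1}^n \E_{\bm{w}}\big[p_i \cdot \mathbbm{1}\{i = i^*(\bm{w})\}\big]$, and the right-hand side is precisely the expected revenue of the original item-pricing, namely $\bupp\big(\{F_i\}_{i=1}^n\big)$. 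Optimality of Myerson Auction over all truthful single-item mechanisms on the copied instance then yields $\opt\big(\{F_i\}_{i=1}^n\big) \geq \bupp\big(\{F_i\}_{i=1}^n\big)$.

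The routine-but-delicate point is the bookkeeping around ties and the no-sale event, so that the copied allocation is genuinely a well-defined monotone single-item allocation, together with invoking the optimality of Myerson Auction over the \emph{full} class of truthful single-item mechanisms rather than merely over posted prices. Neither is conceptually hard---this is exactly the ``copying'' argument of~\cite{CHK07}---so I expect no real obstacle beyond careful case-handling.
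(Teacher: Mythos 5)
Your argument is correct and is precisely the standard copying argument of~\cite{CHK07} that the paper invokes for this lemma without reproving it: simulate the unit-demand buyer's choice under the optimal item pricing as a monotone single-item allocation on the copied instance, note via Myerson's Lemma that the winner's threshold payment $t_{i}(\bm{w}_{-i}) = p_{i} + \max\{0, \max_{j \neq i}(w_j - p_j)\}$ is at least $p_{i}$, and conclude by optimality of Myerson Auction over truthful single-item mechanisms. The only cosmetic caveat is that if the optimal item pricing is a supremum rather than an attained maximum, one runs the argument for an $\epsilon$-optimal pricing and lets $\epsilon \to 0$.
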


\begin{lemma}[\cite{CMS15}]
\label{lem:copy_bmumd}
$\bmumd\big(\{F_i\}_{i = 1}^n\big) \leq 2\opt\big(\{F_i\}_{i = 1}^n\big)$.
\end{lemma}

\subsection{Upper-Bound Analysis of Theorem~\ref{thm:bupp_upm}}
\label{subsec:extension:bupp_upm_upper}

As Lemma~\ref{lem:copy_upm_ap} suggests, in terms of revenue, the multi-dimensional uniform-pricing ($\upm$) and the single-dimensional anonymous pricing ($\ap$) are equivalent.

\begin{lemma}
\label{lem:copy_upm_ap}
$\upm\big(p, \{F_i\}_{i = 1}^n\big) = \ap\big(p, \{F_i\}_{i = 1}^n\big)$ for all $p \in (0, \infty)$. Thus, $\upm\big(\{F_i\}_{i = 1}^n\big) = \ap\big(\{F_i\}_{i = 1}^n\big)$.
\end{lemma}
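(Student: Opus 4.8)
The plan is to evaluate both quantities directly and observe that they coincide pointwise in the price $p$, after which equality of the optima is immediate. First I would unfold the definition of $\upm\big(p, \{F_i\}_{i = 1}^n\big)$: the seller posts the common price $p$ on every item, and the unit-demand buyer, having drawn values $(w_1, \dots, w_n)$ from $\{F_i\}_{i = 1}^n$, selects a utility-optimal item $i^{*} \in \arg\max_{i \in [n]} \{w_i - p\}$ and purchases it precisely when $w_{i^{*}} \geq p$ (breaking ties among items, and the indifference at $w_{i^{*}} = p$, exactly as in the convention underlying the definition of $\ap$, i.e.\ in favor of a sale; see the remark on tie-breaking below). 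The key structural observation is that, since \emph{all} items carry the same price $p$, the identity of the chosen item $i^{*}$ is irrelevant to the payment: whenever the buyer buys, the seller collects exactly $p$, and otherwise collects $0$. Hence $\upm\big(p, \{F_i\}_{i = 1}^n\big) = p \cdot \Pr\big[\text{the buyer buys some item}\big]$.

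Second I would compute this probability. Because $p_1 = \dots = p_n = p$, a utility-optimal item is one of maximal value, and it yields nonnegative utility (so the buyer buys) if and only if $\max_{i \in [n]} w_i \geq p$, i.e.\ if and only if there exists $i$ with $w_i \geq p$. By independence of the value distributions and the left-continuity convention $F_i(p) = \Pr[w_i < p]$ fixed in Section~\ref{sec:prelim},
\[
\Pr\big[\exists\, i \in [n]:\, w_i \geq p\big] = 1 - \Pr\big[\forall\, i \in [n]:\, w_i < p\big] = 1 - \prod_{i = 1}^n \Pr[w_i < p] = 1 - \prod_{i = 1}^n F_i(p).
\]
Substituting, $\upm\big(p, \{F_i\}_{i = 1}^n\big) = p \cdot \big(1 - \prod_{i = 1}^n F_i(p)\big) = \ap\big(p, \{F_i\}_{i = 1}^n\big)$ for every $p \in (0, \infty)$. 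Taking the supremum over $p$ then gives $\upm\big(\{F_i\}_{i = 1}^n\big) = \ap\big(\{F_i\}_{i = 1}^n\big)$; here one notes that $\upm(0, \cdot) = \ap(0, \cdot) = 0$ and that the extra endpoint in the definition of $\ap$, namely $\ap(\infty, \cdot) = \lim_{p \to \infty} \ap(p, \cdot)$, is already a limit of finite-price revenues and hence does not enlarge the supremum.

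There is essentially no hard step here; the only point demanding a word of care is tie-breaking. If some $F_i$ places a probability-mass exactly at $p$, the event $\{w_{i^{*}} = p\}$ can have positive probability, and one must decide whether the indifferent buyer purchases; this choice affects the revenue only on that event, where the payment is $p$ if a sale occurs and $0$ otherwise. Since $\ap\big(p, \{F_i\}_{i = 1}^n\big)$ is \emph{defined} through $1 - \prod_{i = 1}^n F_i(p)$ with $F_i(p) = \Pr[w_i < p]$ (i.e.\ a value-$p$ buyer is treated as willing to buy), we adopt the same convention for $\upm$, so the two expressions agree by construction. Likewise, ties \emph{among} items never affect the payment precisely because the prices are equal. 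This pointwise identity is exactly what lets the single-dimensional anonymous-pricing analysis behind Theorem~\ref{thm:opt_ap} transfer verbatim to the uniform-pricing setting.
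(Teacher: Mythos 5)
Your proposal is correct and follows the same route as the paper: in both the single-buyer multi-item instance and the copied single-item multi-buyer instance, a sale at price $p$ occurs exactly when some $w_i \geq p$, so both revenues equal $p\cdot\bigl(1 - \prod_{i=1}^n F_i(p)\bigr)$, and the equality of optima follows by taking the supremum over $p$. The paper states this in two lines; your added care about tie-breaking and the $p=\infty$ endpoint is consistent with the paper's conventions and does not change the argument.
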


\begin{proof}
Given $p \in (0, \infty)$, in both of the original and copied instances, the seller allocates an/the item with probability $1 - \prod\limits_{i = 1}^n F_i(p)$. Clearly, this fact indicates the lemma.
\end{proof}

Recall Theorem~\ref{thm:opt_ap} that $\C \cdot \ap\big(\{F_i\}_{i = 1}^n\big) \geq \opt\big(\{F_i\}_{i = 1}^n\big)$ for all $\{F_i\}_{i = 1}^n \in \reg^n$. Combining this with Lemma~\ref{lem:copy_bupp} and Lemma~\ref{lem:copy_upm_ap}, we settle the upper-bound part of Theorem~\ref{thm:bupp_upm} as follows:
\[
\C \cdot \upm\big(\{F_i\}_{i = 1}^n\big) = \C \cdot \ap\big(\{F_i\}_{i = 1}^n\big) \geq \opt\big(\{F_i\}_{i = 1}^n\big) \geq \bupp\big(\{F_i\}_{i = 1}^n\big).
\]
Similarly, gathering Theorem~\ref{thm:opt_ap}, Lemma~\ref{lem:copy_bmumd} and Lemma~\ref{lem:copy_upm_ap} together results in Corollary~\ref{crl:bmumd_upm}.
\vspace{5.5pt} \\
\fbox{\begin{minipage}{\textwidth}
\begin{corollary}
\label{crl:bmumd_upm}
To sell heterogeneous items, to a unit-demand buyer with regular distributions, the ratio of $\bmumd$ to $\upm$ is upper-bounded by $2\C \approx 5.2404$.
\end{corollary}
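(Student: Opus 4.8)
The plan is to obtain Corollary~\ref{crl:bmumd_upm} as a direct composition of three facts already in hand: the copying bound for randomized mechanisms (Lemma~\ref{lem:copy_bmumd}), the main single-item ratio (Theorem~\ref{thm:opt_ap}), and the equivalence between multi-dimensional uniform pricing and single-dimensional anonymous pricing (Lemma~\ref{lem:copy_upm_ap}). Concretely, I would start from an arbitrary original single-buyer $\bmumd$ instance $\{F_i\}_{i = 1}^n \in \reg^n$, and pass to the associated \emph{copied} single-item instance on the same marginals $\{F_i\}_{i = 1}^n$; since the copied instance is built from exactly these regular distributions, it again lies in $\reg^n$, so Theorem~\ref{thm:opt_ap} applies to it.

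The key steps, in order: (i) invoke Lemma~\ref{lem:copy_bmumd} on the copied instance to get $\bmumd\big(\{F_i\}_{i = 1}^n\big) \leq 2\opt\big(\{F_i\}_{i = 1}^n\big)$; (ii) invoke Theorem~\ref{thm:opt_ap}, which gives $\opt\big(\{F_i\}_{i = 1}^n\big) \leq \C \cdot \ap\big(\{F_i\}_{i = 1}^n\big)$ for the copied (regular) single-item instance; (iii) invoke Lemma~\ref{lem:copy_upm_ap} to replace $\ap\big(\{F_i\}_{i = 1}^n\big)$ by $\upm\big(\{F_i\}_{i = 1}^n\big)$. Chaining these yields
\[
\bmumd\big(\{F_i\}_{i = 1}^n\big) \leq 2\opt\big(\{F_i\}_{i = 1}^n\big) \leq 2\C \cdot \ap\big(\{F_i\}_{i = 1}^n\big) = 2\C \cdot \upm\big(\{F_i\}_{i = 1}^n\big),
\]
which is exactly the claimed upper bound $2\C \approx 5.2404$, and since the instance was arbitrary the corollary follows.

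Honestly, there is no real obstacle here: the only point that needs care is the bookkeeping between the \emph{original} $\bmumd$ instance and its \emph{copied} single-item counterpart, since the excerpt deliberately overloads the notation $\{F_i\}_{i = 1}^n$ for both; one should state explicitly which object each inequality refers to and note that regularity is inherited by the copied instance (as it shares the same per-item marginals). One could also remark that the bound is presumably not tight, since it stacks the worst case of Lemma~\ref{lem:copy_bmumd} on top of the worst case of Theorem~\ref{thm:opt_ap}, and there is no reason the two extremal configurations coincide; but pinning down the exact ratio between $\bmumd$ and $\upm$ is left open and is not needed for this corollary.
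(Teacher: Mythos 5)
Your proof is correct and is exactly the paper's argument: the corollary is obtained by chaining Lemma~\ref{lem:copy_bmumd}, Theorem~\ref{thm:opt_ap}, and Lemma~\ref{lem:copy_upm_ap} in the same order, just as in Section~\ref{subsec:extension:bupp_upm_upper}. Your added remarks on the original-versus-copied bookkeeping and on the likely non-tightness of $2\C$ match the paper's own discussion.
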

\end{minipage}}
\vspace{5.5pt}

Note that the ratio involved in Corollary~\ref{crl:bmumd_upm} is lower-bound by $\C \approx 2.6202$ (see Section~\ref{subsec:extension:bupp_upm_lower}). As stressed by Alaei et al.~\cite{AHNPY15}, closing this gap remains an important open problem, either by proving improved upper bound, or constructing sharper examples.

\subsection{Lower-Bound Analysis of Theorem~\ref{thm:bupp_upm}}
\label{subsec:extension:bupp_upm_lower}

We are left with the lower-bound part of Theorem~\ref{thm:bupp_upm}, which comes from gathering Main Lemma~\ref{lem:bupp_upm_lower} (in the below) with Theorem~\ref{thm:opt_ap_lower} and Lemma~\ref{lem:copy_upm_ap}: viewed as a single-buyer multi-item instance\footnote{In terms of notations defined in Example~\ref{exp:opt-ap-lower}, the seller brings $(n + 2)$ heterogeneous items, for which the buyer's values are drawn from distributions $\{\tri(\infty)\} \cup \{\tri(v_i, q_i)\}_{i = 1}^{n + 1}$.}, Example~\ref{exp:opt-ap-lower} turns out to be the desired lower-bound instance for Theorem~\ref{thm:bupp_upm}.

\paragraph{[Example~\ref{exp:opt-ap-lower}~(\cite{JLTX18})].}
\emph{Given $\epsilon \in (0, 1)$, let $a \eqdef \Q^{-1}\left(\ln\frac{8}{\epsilon}\right) > 1$, $b \eqdef \frac{8}{\epsilon}$ and $\delta \eqdef \frac{b - a}{n}$. Define triangular instance $\{\tri(\infty)\} \cup \{\tri(v_i, q_i)\}_{i = 1}^{n + 1}$ as, for each $i \in [n + 1]$,}
\[
v_i = b - (i - 1) \cdot \delta \quad\quad\quad\quad q_i = \frac{\R(v_i) - \R(v_{i - 1})}{v_i + \R(v_i) - \R(v_{i - 1})}.
\]

\paragraph{[Theorem~\ref{thm:opt_ap_lower}~(\cite{JLTX18})].}
\emph{Consider the triangular instance in Example~\ref{exp:opt-ap-lower}. When $n \in \mathbb{N}_+$ is sufficiently large, $\ap\big(\{\tri(\infty)\} \cup \{\tri(v_i, q_i)\}_{i = 1}^{n + 1}\big) \leq 1$ and further,}
\[
\C \geq \opt\big(\{\tri(\infty)\} \cup \{\tri(v_i, q_i)\}_{i = 1}^{n + 1}\big) = 1 + \sum\limits_{i = 1}^{n + 1} v_i q_i \cdot \prod\limits_{j = 1}^{i - 1} (1 - q_i) \geq \C - \epsilon.
\]
\fbox{\begin{minipage}{\textwidth}
\begin{mainlemma}
\label{lem:bupp_upm_lower}
Consider the triangular instance in Example~\ref{exp:opt-ap-lower}. For any fixed $\epsilon' \in (0, 1)$,
\[
\bupp\big(\{\tri(\infty)\} \cup \{\tri(v_i, q_i)\}_{i = 1}^{n + 1}\big) \geq  -\epsilon' + \opt\big(\{\tri(\infty)\} \cup \{\tri(v_i, q_i)\}_{i = 1}^{n + 1}\big).
\]
\end{mainlemma}
\end{minipage}}

\begin{proof}
Recall Lemma~\ref{lem:copy_upm_ap}, the revenue from the multi-dimensional $\bupp$ is no more than the revenue from the single-dimensional $\opt$: in the former case, the unit-demand buyer would choose an item in favor of himself (rather than in favor of the seller). However, with a triangular instance like Example~\ref{exp:opt-ap-lower}, the seller can manipulate the items' prices, so as to incentive the buyer to choose higher-price items, while keeping revenue-loss small enough (compared with $\opt$).

Emphasizing again, here we regard Example~\ref{exp:opt-ap-lower} as a single-buyer multi-item instance. In particular, with parameters $h \in (0, 1)$ and $H > v_1 = b$, take the item pricing $(H, p_1, p_2, \cdots, p_{n + 1})$ below into account:
\begin{itemize}
\item Post price $H$ for item $\tri(\infty)$;
\item Post price $p_i \eqdef (1 - h) \cdot v_i$ for the $i$-th item $\tri(v_i, q_i)$, for each $i \in [n + 1]$. Since then, by choosing the $i$-th item, the buyer's utility is at most $v_i - p_i = h \cdot v_i$.
\end{itemize}
Under such item pricing, consider the following $(n + 2)$ disjoint events: ($A_{\infty}$)~that the buyer gets item $\tri(\infty)$; and ($A_i$ for each $i \in [n + 1]$)~that the buyer gets the $i$-th item $\tri(v_i, q_i)$.

Recall Example~\ref{exp:opt-ap-lower} that $b = v_1 > v_2 > \cdots > v_{n + 1} = a$, event $\{A_{\infty}\}$ happens when the buyer values item $\tri(\infty)$ at $w_{\infty} > H + h \cdot b$, with utility exceeding $h \cdot b$ (thus exceeding $h \cdot v_i$, the greatest possible utility from the $i$-th item, for each $i \in [n + 1]$). Besides, event $\{A_{\infty}\}$ happens only if the buyer values item $\tri(\infty)$ at $w_{\infty} \geq H$. Formally,
\[
\Pr\big\{w_{\infty} > H + h \cdot b\big\} \leq \Pr\big\{A_{\infty}\big\} \leq \Pr\big\{w_{\infty} \geq H\big\}.
\]
Independent of anything else, we can impel $H$ to approach to infinity. Afterwards, the buyer gets item $\tri(\infty)$ with \emph{negligible} probability and expected payment of $1$, in that
\begin{align*}
& \lim\limits_{H \rightarrow \infty} \Pr\big\{A_{\infty}\big\} \leq \lim\limits_{H \rightarrow \infty} \Pr\big\{w_{\infty} \geq H\big\} = \lim\limits_{H \rightarrow \infty} \frac{1}{H + 1} = 0 \\
& \lim\limits_{H \rightarrow \infty} H \cdot \Pr\big\{A_{\infty}\big\} \leq \lim\limits_{H \rightarrow \infty} H \cdot \Pr\big\{w_{\infty} \geq H\big\} = \lim\limits_{H \rightarrow \infty} \frac{H}{H + 1} = 1, \\
& \lim\limits_{H \rightarrow \infty} H \cdot \Pr\big\{A_{\infty}\big\} \geq \lim\limits_{H \rightarrow \infty} H \cdot \Pr\big\{w_{\infty} > H + h \cdot b\big\} = \lim\limits_{H \rightarrow \infty} \frac{H}{H + h \cdot b + 1} = 1.
\end{align*}
For each $i \in [n + 1]$, even $\{A_i\}$ happens when the following two conditions hold:
\begin{itemize}
\item The buyer values the $j$-th item at $w_j < p_j = (1 - h) \cdot v_j$, for each $j \in [i - 1]$. Respecting his own rationality, the buyer will discard all these items;
\item The buyer values the $i$-th item at $w_i = v_i$, gaining utility $w_i - p_i = h \cdot v_i$. This utility exceeds $h \cdot v_k$ (namely the greatest possible utility from the $k$-th item), for each $k \in [n + 1] \setminus [i]$.
\end{itemize}
Conditioned on that event $\{A_i\}$ happens, we can bound the expected revenue from below:
\[
\begin{aligned}
p_i \cdot \Pr\{A_i\}
\geq & (1 - h) \cdot v_i \cdot \Pr\big\{w_i = v_i\big\} \cdot \prod\limits_{j = 1}^{i - 1} \Pr\big\{w_j < (1 - h) \cdot v_j\big\} \\
= & (1 - h)^i \cdot v_i q_i \cdot \prod\limits_{j = 1}^{i - 1} \left[\frac{1 - q_j}{1 - h \cdot (1 - q_j)}\right] \\
\geq & \big[1 - (n + 1) \cdot h\big] \cdot v_i q_i \cdot \prod\limits_{j = 1}^{i - 1} (1 - q_j),
\end{aligned}
\]
for each $i \in [n + 1]$. To sum up, the revenue from such item pricing $(H, p_1, p_2, \cdots, p_{n + 1})$ is
\begin{flalign*}
1 + \sum\limits_{i = 1}^{n + 1} p_i \cdot \Pr\{A_i\}
\geq & 1 + \big[1 - (n + 1) \cdot h\big] \cdot \sum\limits_{i = 1}^{n + 1} v_i q_i \cdot \prod\limits_{j = 1}^{i - 1} (1 - q_j) \\
\geq & \big[1 - (n + 1) \cdot h\big] \cdot \opt\big(\{\tri(\infty)\} \cup \{\tri(v_i, q_i)\}_{i = 1}^{n + 1}\big). & \text{(By Fact~\ref{fact:revenue_spm_opt})}
\end{flalign*}
Recall Theorem~\ref{thm:opt_ap_lower} that $\opt\big(\{\tri(\infty)\} \cup \{\tri(v_i, q_i)\}_{i = 1}^{n + 1}\big) \leq \C < 3$. After assigning $h \leftarrow \frac{\epsilon'}{3 \cdot (n + 1)}$, where $\epsilon' > 0$ is given in the lemma, we conclude that
\[
1 + \sum\limits_{i = 1}^{n + 1} p_i \cdot \Pr\{A_i\} \geq  -\epsilon' + \opt\big(\{\tri(\infty)\} \cup \{\tri(v_i, q_i)\}_{i = 1}^{n + 1}\big)
\]
In terms of revenue, optimal deterministic mechanism $\bupp\big(\{\tri(\infty)\} \cup \{\tri(v_i, q_i)\}_{i = 1}^{n + 1}\big)$ must surpass the above item pricing $(H, p_1, p_2, \cdots, p_{n + 1})$. This completes the proof of Main lemma~\ref{lem:bupp_upm_lower}.
\end{proof}

%

\bibliographystyle{plain}
\bibliography{main}

\newpage

\appendix
\renewcommand{\appendixname}{Appendix~\Alph{section}}

\section{Extended Preliminaries}
\label{app:prelim}
In this appendix, we formally prove a few facts relating revenue-quantile curve and virtual value CDF, which will be useful in our proof. 
 Fact~\ref{fact1} captures reductions from revenue-quantile curve $r_i(q)$, and Fact~\ref{fact2} formalizes reductions from virtual value CDF $D_i$. Notably, inverse function $D_i^{-1}$ (respect virtual value CDF) is defined as follows: for all $x \in [1 - q_i, 1]$,
\[
D_i^{-1}(x) \eqdef \arg\max \big\{p \in (0, \infty]:\,\, D_i(p) \leq x\big\}.
\]
\fbox{\begin{minipage}{\textwidth}
\begin{fact}
\label{fact1}
Given a continuous and concave revenue-quantile curve $r_i(q)$ on $q \in [0, 1]$,
\begin{enumerate}
\item Value CDF $F_i$ is given by $F_i(p) \eqdef 0$ for all $p \in \big(0, r_i(1)\big]$ and
    \[
    F_i(p) \eqdef 1 - \max \big\{q \in [0, 1]: \,\, \big.r_i(q)\big/q \leq p\big\} \quad\quad\quad\quad \forall p \in \big(r_i(1), \infty\big];
    \]
\item Virtual Value CDF $D_i$ is given by $D_i(p) \eqdef 1 - q_i$ for all $q \in \big(0, \Phi_i(v_i^+)\big]$ and
    \[
    D_i(p) \eqdef 1 - \max \big\{q \in [0, 1]: \,\, \partial_+ r_i(q) \leq p\big\} \quad\quad\quad\quad \forall p \in \big(\Phi_i(v_i^+), \infty\big];
    \]
\end{enumerate}
\end{fact}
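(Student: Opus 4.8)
The plan is to read both identities as the statement that the three formats are mutually \emph{generalized inverses}, and to verify them directly, one price $p$ at a time. Write $P_i(q) \eqdef r_i(q)/q$ for $q \in (0,1]$; by the very definition of the revenue-quantile curve, $P_i(q) = \max\{p' : F_i(p') \le 1-q\}$ is the ``price at quantile $q$''. Because $F_i$ is non-decreasing (and $1-q$ is decreasing in $q$), $P_i$ is non-increasing; equivalently, concavity of $r_i$ together with $r_i(0) \ge 0$ gives $r_i(q) \ge q\cdot\partial_+ r_i(q)$, so $q \mapsto r_i(q)/q$ is non-increasing. Part~1 then says that, for prices above the support infimum $r_i(1) = P_i(1)$, the extreme quantile at which the price curve reaches level $p$ recovers $1 - F_i(p)$, while below $r_i(1)$ we have $F_i \equiv 0$ by left-continuity.

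For Part~1 I would argue as follows. If $p \le r_i(1)$, then $F_i(p') = 0$ for every $p' < p$, so left-continuity gives $F_i(p) = 0$. If $p > r_i(1)$, put $q^{\star} \eqdef 1 - F_i(p)$. First, $P_i(q^{\star}) \ge p$: the price $p$ itself lies in $\{p' : F_i(p') \le F_i(p)\}$, so the maximum defining $P_i(q^{\star})$ is at least $p$. Second, for every $q$ with $1-q < F_i(p)$, monotonicity and left-continuity of $F_i$ force $\{p' : F_i(p') \le 1-q\} \subseteq (0, p)$, whence $P_i(q) < p$. Thus $q^{\star}$ is precisely the extreme quantile whose price reaches $p$, which is the displayed formula. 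The one case needing care is when $p$ is a mass point: by Fact~\textbf{(a)}--\textbf{(b)} a regular distribution has single-interval support with at most one atom, necessarily at $u_i$, and there $P_i(q) = u_i$ exactly on $q \le 1 - F_i(u_i)$, so the identity holds verbatim.

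For Part~2 the crucial input is the identity recorded in Section~\ref{subsec:prelim-rq-curve}, namely $\Phi_i\big(r_i(q)/q\big) = \partial_+ r_i(q)$ for $q \in [0,1)$: the right-derivative of the revenue-quantile curve is the virtual value at the corresponding price. Composing with $D_i(p) = F_i\big(\Phi_i^{-1}(p)\big)$, one sees that $D_i$ stands to the non-increasing map $q \mapsto \partial_+ r_i(q)$ exactly as $F_i$ stands to $q \mapsto r_i(q)/q$. So I would rerun the Part~1 argument with $\partial_+ r_i$ replacing $r_i(\cdot)/(\cdot)$: above $\Phi_i(v_i^+) = \lim_{q \to q_i^-}\partial_+ r_i(q)$ the extreme quantile recovers $1 - D_i(p)$, while on $(0, \Phi_i(v_i^+)]$ the virtual-value CDF is flat at $1 - q_i = F_i(v_i)$ (the horizontal piece in Figure~\ref{fig:prelim:virtaul_cdf}, where $\Phi_i^{-1}$ collapses to $v_i$). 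The only extra step is to reconcile $\Phi_i^{-1}(p) = \max\{x : \Phi_i(x) \le p\}$ with this quantile picture, which is done by composing the two monotone correspondences $p \leftrightarrow q$ (through $P_i$) and $q \leftrightarrow \text{virtual value}$ (through $\partial_+ r_i$) and checking they glue on the common domain of positive virtual values, using that $\Phi_i$ is non-decreasing (regularity).

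I expect the main obstacle to be the bookkeeping of one-sided conventions. Both $P_i$ and $\partial_+ r_i$ are only \emph{weakly} monotone: they may be locally constant -- corresponding to atoms, i.e.\ vertical jumps of $F_i$ or $D_i$ -- and they may jump -- corresponding to flat stretches of $F_i$ or $D_i$ -- so every ``$\max$'' and every implicit inversion has to be taken with the correct side, and the nonstandard left-continuity convention for $F_i$ must be threaded through the whole argument. Once the $r_i \leftrightarrow F_i$ dictionary is pinned down cleanly in Part~1, Part~2 contributes little beyond transporting it across the identity $\Phi_i\big(r_i(q)/q\big) = \partial_+ r_i(q)$.
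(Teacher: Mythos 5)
The paper never actually proves Fact~\ref{fact1}: Appendix~\ref{app:prelim} states it (together with Fact~\ref{fact2}) without argument and only supplies proofs for Main Lemma~\ref{lem:virtual_value} and Fact~\ref{fact:opt_rev}, so there is no ``paper route'' to compare against. Your direct verification -- reading $q\mapsto r_i(q)/q$ as the generalized inverse of $F_i$ via the defining identity $r_i(q)=q\cdot\max\{p: F_i(p)\le 1-q\}$, establishing its monotonicity from concavity of $r_i$ with $r_i(0)\ge 0$, and then transporting the same dictionary to $D_i$ through $\Phi_i\big(r_i(q)/q\big)=\partial_+ r_i(q)$ and $D_i=F_i\circ\Phi_i^{-1}$ -- is exactly the argument the authors are implicitly relying on, and your handling of the left-continuity convention (which makes the sets $\{p':F_i(p')\le c\}$ right-closed, so the maxima are attained and the strict inequality $P_i(q)<p$ for $1-q<F_i(p)$ goes through) is the one point that genuinely needs care.

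One thing you should state explicitly rather than leave implicit: what your argument proves is $1-F_i(p)=\max\big\{q: r_i(q)/q\ge p\big\}$ (and analogously $1-D_i(p)=\max\big\{q:\partial_+ r_i(q)\ge p\big\}$), i.e.\ the statement with the inequality reversed relative to what is printed. As written, Fact~\ref{fact1} cannot be literally correct: since $r_i(q)/q$ and $\partial_+ r_i(q)$ are non-increasing in $q$, the set $\{q: r_i(q)/q\le p\}$ is an upper interval and its maximum is $1$ for every $p> r_i(1)$, which would force $F_i\equiv 0$; one checks against the triangular distribution $\tri(v_i,q_i)$ that the $\ge$ version reproduces $F_i(p)=\frac{p(1-q_i)}{p(1-q_i)+v_iq_i}$ correctly. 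So the discrepancy is a sign typo in the statement, not a flaw in your proof, but a complete write-up should record the corrected formula, since Fact~\ref{fact1}.2 is invoked verbatim in the proofs of Lemma~\ref{lem:shape} and Main Lemma~\ref{lem:virtual_value}.
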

\end{minipage}}
\vspace{5.5pt} \\
\fbox{\begin{minipage}{\textwidth}
\begin{fact}
\label{fact2}
Given a well-defined virtual value CDF $D_i$ and $r_i(0) = \lim\limits_{p \rightarrow \infty} p \cdot \big(1 - F_i(p)\big)$,
\begin{enumerate}
\item In the range of $q \in [0, q_i]$, revenue-quantile curve $r_i(q)$ is given by
    \[
    r_i(q) \eqdef r_i(0) + \displaystyle{\int_0^q} D_i^{-1}(1 - x) \cdot dx;
    \]
\item In the range of $p \in [v_i, \infty]$, parameterized by $t \in [0, q_i]$, value CDF $F_i$ is given by
    \[
    \left\{
    \begin{aligned}
    & F_i = 1 - t \\
    & p = \big.r_i(t)\big/t
    \end{aligned}.
    \right.
    \]
\end{enumerate}
\end{fact}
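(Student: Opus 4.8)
The plan is to verify both formulas by reducing them to identities already available among the three representations $F_i$, $D_i$ and $r_i$, together with the structural facts recorded in Section~\ref{subsec:prelim-rq-curve} and Fact~\ref{fact1}. Throughout I treat $F_i$ as the fixed underlying regular distribution, so that $F_i$, $D_i = F_i \circ \Phi_i^{-1}$ and $r_i$ are all defined simultaneously; the content of the fact is that the displayed formulas correctly recover $r_i$ and $F_i$ from $D_i$ and from $r_i(0)$.

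For part~1, the key claim I would isolate is that
\[
D_i^{-1}(1-x) = \partial_+ r_i(x) \qquad \text{for (almost) every } x \in [0, q_i].
\]
To prove it I would start from the identity $\Phi_i\big(r_i(x)/x\big) = \partial_+ r_i(x)$ for $x \in [0, q_i)$ (Section~\ref{subsec:prelim-rq-curve}) and from $D_i(p) = F_i\big(\Phi_i^{-1}(p)\big)$. Setting $p = \partial_+ r_i(x)$ and using that $r_i(x)/x$ is a point with $\Phi_i\big(r_i(x)/x\big) = p$, the definition of $\Phi_i^{-1}$ (with its tie-break to the largest pre-image) together with monotonicity of $\Phi_i$ yields $F_i\big(\Phi_i^{-1}(p)\big) = F_i\big(r_i(x)/x\big) = 1-x$, i.e.\ $D_i\big(\partial_+ r_i(x)\big) = 1-x$; monotonicity of $D_i$ and the tie-break in the definition of $D_i^{-1}$ then upgrade this to the claimed equality. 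The only places needing care are the intervals on which $\Phi_i$ is flat (equivalently, where $r_i$ is linear and $F_i$ has the ``triangular'' hyperbolic shape); there I would check separately that both sides are constant and agree at the endpoints. Granting the claim, the formula is immediate from the fundamental theorem of calculus: a finite concave function on the compact interval $[0, q_i]$ is absolutely continuous there and equals the integral of its right-derivative, so $r_i(0) + \int_0^q \partial_+ r_i(x)\, dx = r_i(q)$ for all $q \in [0, q_i]$.

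For part~2, I would argue that $r_i(t)/t$ is exactly the ``price at quantile $t$''. By the definition of the revenue-quantile curve, $r_i(t) = t \cdot \max\{p \in (0,\infty] : F_i(p) \le 1-t\}$ for $t \in (0,1]$, hence $r_i(t)/t = \max\{p : F_i(p) \le 1-t\}$. Concavity of $r_i$ with $r_i(0) \ge 0$ makes $t \mapsto r_i(t)/t$ non-increasing on $(0, q_i]$, with values sweeping the interval from $v_i = r_i(q_i)/q_i$ up to $u_i = \lim\limits_{t \to 0^+} r_i(t)/t$; thus, as the parameter $t$ runs over $[0, q_i]$, the quantity $p = r_i(t)/t$ runs monotonically over $[v_i, u_i]$, and over all of $[v_i, \infty]$ once we adopt the convention $t = 0 \mapsto p = \infty$ (consistent with the behaviour of $F_i$ above its support-supremum and at a probability mass). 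Finally, left-continuity of $F_i$ applied to $r_i(t)/t = \max\{p : F_i(p) \le 1-t\}$ gives $F_i\big(r_i(t)/t\big) = 1-t$, which is the asserted parametrization; the edge cases $t = q_i$ (a probability mass at $u_i$ may occur) and $t = 0$ are checked directly against the conventions. Alternatively, part~2 can be read straight off Fact~\ref{fact1}.1 after the same monotonicity observation.

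The main obstacle I anticipate is purely bookkeeping: pinning down the tie-breaking conventions in $\Phi_i^{-1}$, $D_i^{-1}$, and the various ``$\max$'' operators so that $D_i^{-1}(1-x) = \partial_+ r_i(x)$ holds on the nose (not merely up to a null set) on the linear pieces of $r_i$ / flat pieces of $\Phi_i$, and handling the support-supremum $u_i$ and the at-most-one probability mass consistently across all three representations. The analytic content — absolute continuity of concave functions and the fundamental theorem of calculus — is entirely standard.
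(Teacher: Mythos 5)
The paper itself contains no proof of Fact~\ref{fact2}: it is stated in Appendix~\ref{app:prelim}, together with Fact~\ref{fact1}, as the formal record of the standard conversions among $F_i$, $D_i$ and $r_i(q)$, and is then used (e.g.\ in Main Lemma~\ref{lem:virtual_value}) without further justification. Your verification is therefore more detailed than anything the authors wrote, and its substance is correct and is exactly the route the paper implicitly has in mind: part~1 is the identity $D_i^{-1}(1-x)=\partial_+ r_i(x)$ for almost every $x\in[0,q_i]$ (flat pieces of $\Phi_i$, including the linear piece created by a mass at $u_i$, and the countably many jump points of $\partial_+ r_i$ only cause disagreement on a null set) combined with the fundamental-theorem identity $r_i(q)=r_i(0)+\int_0^q \partial_+ r_i(x)\,dx$, which is legitimate here because regularity gives continuity of the concave curve at $q=0$ even when $\partial_+ r_i(0)=\infty$; part~2 is the quantile--price correspondence built into the definition of $r_i(q)$.

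One bookkeeping point you mislocate: in part~2 the probability-mass issue is not confined to the single parameter $t=q_i$. A regular distribution can have $v_i<u_i<\infty$ together with a mass $m$ at $u_i$ (for instance, density $\tfrac12$ on $[1,2]$ plus an atom of weight $\tfrac12$ at $2$, where $v_i=\tfrac32$, $q_i=\tfrac34$, $m=\tfrac12$); then for every $t\in(0,m)\subset[0,q_i]$ one has $r_i(t)/t=u_i$ and $F_i\big(r_i(t)/t\big)=1-m\neq 1-t$, so the identity you extract from left-continuity fails on a whole parameter interval, and the parametrization in Fact~\ref{fact2}.2 must be read as tracing the completed graph of $F_i$ (the vertical jump segment at $u_i$), from which the left-continuous CDF is recovered. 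This is the same level of informality as the statement itself, so it is a caveat to record rather than a flaw in your approach; note that on this same interval your intermediate step $F_i\big(r_i(x)/x\big)=1-x$ in part~1 also fails, but the separate endpoint check you already propose for linear pieces yields $D_i^{-1}(1-x)=u_i=\partial_+ r_i(x)$ there, so part~1 is unaffected.
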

\end{minipage}}
\vspace{5.5pt}

As a corollary of Fact~\ref{fact1} and Fact~\ref{fact2}, we can prove the following structural lemma.

\begin{mainlemma}[Virtual Value]
\label{lem:virtual_value}
Given a regular distribution $F_i \in \reg$,
\[
r_i(0) + \displaystyle{\int_p^{\infty}} \big(1 - D_i(x)\big) dx = \left(\Phi_i^{-1}(p^+) - p\right) \cdot \big(1 - D_i(p^+)\big),
\]
for all $p \in [0, \infty)$. Particularly, $r_i(0) + \displaystyle{\int_0^{\infty}} \big(1 - D_i(x)\big) dx = r_i(q_i) = v_i q_i$.
\end{mainlemma}

\begin{proof}
Due to integration by substitution,
\begin{flalign*}
p \cdot \big(1 - D_i(p^+)\big) + \displaystyle{\int_p^{\infty}} \big(1 - D_i(x)\big) dx
= & \displaystyle{\int_0^{1 - D_i(p^+)}} D_i^{-1}(1 - y) dy & \text{(Let $y = 1 - D_i(x)$)} \\
= & r_i\big(1 - D_i(p^+)\big) - r_i(0) & \text{(By Fact~\ref{fact2}.1)} \\
= & \big(1 - D_i(p^+)\big) \cdot F_i^{-1}\big(D_i(p^+)\big) - r_i(0) & \text{(By Fact~\ref{fact1}.1)} \\
= & \Phi_ i^{-1}(p^+) \cdot \big(1 - D_i(p^+)\big) - r_i(0).
\end{flalign*}
After being rearranged, this equation is exactly the formula in the lemma. Moreover, for the special case that $p = 0$, we shall notice $D_i(0^+) = 1 - q_i$ and $\Phi_i^{-1}(0^+) = v_i$ (see our definition of inverse function $\Phi_i^{-1}$ in Section~\ref{subsec:prelim-reg-vv}). This completes the proof of the lemma.
\end{proof}

\subsection{Proof of Fact~\ref{fact:opt_rev}}
\label{subapp:prelim:opt_rev}

\paragraph{[Fact~\ref{fact:opt_rev}~{\normalfont(Characterization)}].}
\emph{Given a regular instance $\{F_i\}_{i = 1}^n \in \reg^n$,}
\[
\opt\big(\{F_i\}_{i = 1}^n\big) = \sum\limits_{i = 1}^n r_i(0) + \displaystyle{\int_0^{\infty}} \left(1 - \prod\limits_{i = 1}^n D_i(x)\right) dx.
\]

\begin{proof}
In Myerson Auction, the item is always allocated to the buyer with highest virtual value (required to be no less than $0$). Let $\bm{x}(\cdot)$ be the corresponding (monotone) allocation rule. Clearly, the $k$-th buyer always wins, when his bid goes to $\infty$, while other buyers' bids are finite:
\begin{equation}
\label{eq:lem:opt_rev1}
\lim\limits_{b_k \rightarrow \infty} x_k(b_k, \bm{b}_{-k}) = 1 \quad\quad \forall k \in [n] \text{, } \forall \bm{b}_{-k} \in \mathbb{R}_+^{n - 1}.
\end{equation}
Given $\bm{b} \in \mathbb{R}_+^n$, Lemma~\ref{lem:myerson} (Myerson Lemma) explicitly formulates the payment of the $k$-th buyer:
\begin{equation}
\label{eq:lem:opt_rev2}
\pi_k(\bm{b}) = \displaystyle{\int}_0^{b_k} z \cdot dx_k(z, \bm{b}_{-k}).
\end{equation}
Hence, with fixed $\bm{b}_{-k} \in \mathbb{R}_+^{n - 1}$ and $b_k$ drawn from $F_k$,
\[
\begin{aligned}
\E_{\left.b_k\right|\bm{b}_{-k}}\big[\pi_k(\bm{b})\big]
= & \displaystyle{\int}_0^{\infty} \pi_k(b_k, \bm{b}_{-k}) \cdot dF_k(b_k)
\overset{(\ref{eq:lem:opt_rev2})}{=} \displaystyle{\int}_0^{\infty} \displaystyle{\int}_0^{b_k} z \cdot dx_k(z, \bm{b}_{-k}) \cdot dF_k(b_k) \\
& \text{(By interchange of the order of integration)} \\
= & \displaystyle{\int}_0^{\infty} \left(\displaystyle{\int}_z^{\infty} dF_k(b_k)\right) \cdot z \cdot dx_k(z, \bm{b}_{-k})
= \displaystyle{\int}_0^{\infty} \big(1 - F_k(z)\big) \cdot z \cdot dx_k(z, \bm{b}_{-k}) \\
& \text{(By integration by parts)} \\
= & \left.\Big[\big(1 - F_k(z)\big) \cdot z \cdot x_k(z, \bm{b}_{-k})\Big]\right|_0^{\infty} - \displaystyle{\int}_0^{\infty} \left(1 - F_k(z) - z \cdot \frac{dF_k(z)}{dz}\right) \cdot x_k(z, \bm{b}_{-k}) \cdot dz \\
= & \lim\limits_{z \rightarrow \infty} \Big[\big(1 - F_k(z)\big) \cdot z \cdot x_k(z, \bm{b}_{-k})\Big] + \displaystyle{\int}_0^{\infty} \Phi_k(z) \cdot x_k(z, \bm{b}_{-k}) \cdot dF_k(z) \\
\overset{(\ref{eq:lem:opt_rev1})}{=} & r_k(0) + \E_{\left.b_k\right|\bm{b}_{-k}} \big[\Phi_k(b_k) \cdot x_k(b_k, \bm{b}_{-k})\big].
\end{aligned}
\]
As per this, with $\bm{b}$ drawn from $\{F_i\}_{i = 1}^n$, the expected payment of the $k$-th buyer equals to
\[
\E_{\bm{b}}\big[\pi_k(\bm{b})\big] = r_k(0) + \E_{\bm{b}}\big[\Phi_k(b_k) \cdot x_k(b_k, \bm{b}_{-k})\big] \quad\quad \forall k \in [n].
\]
Sum this equation over all $k \in [n]$, the expected revenue from Myerson Auction equals to
\[
\E_{\bm{b}}\left[\sum\limits_{i = 1}^n \pi_i(\bm{b})\right] = \sum\limits_{i = 1}^n r_i(0) + \E_{\bm{b}}\left[\sum\limits_{i = 1}^n \Phi_i(b_i) \cdot x_i(b_i, \bm{b}_{-i})\right].
\]
Again, the buyer with highest virtual value (required to be no less than $0$) always wins, the above formula can be rewritten as
\[
\E_{\bm{b}}\left[\sum\limits_{i = 1}^n \pi_i(\bm{b})\right]
= \sum\limits_{i = 1}^n r_i(0) + \E_{\bm{b}}\left[\left(\max \big\{\Phi_i(b_i): \,\, i \in [n]\big\}\right)_+\right]
\overset{(\ast)}{=} \sum\limits_{i = 1}^n r_i(0) + \displaystyle{\int_0^{\infty}} \left(1 - \prod\limits_{i = 1}^n D_i(x)\right) dx,
\]
where $\{D_i\}_{i = 1}^n$ denotes the virtual value CDF's, and $(\ast)$ follows from order statistics. This completes the proof of the lemma.
\end{proof}



\section{Mathematical Facts}
\label{app:math_facts}
In this appendix, we verify a few mathematical facts (mainly inequalities) which are omitted in the main text. 

\subsection{Missing Proof of Lemma~\ref{lem:ineq:p/(p+1)}}
\label{subapp:math_facts:p/(p+1)}
\begin{lemma}
\label{lem:ineq:p/(p+1)}
$H(s, p) < 0$ for all $s \in \left[0, \frac{1}{\sqrt{3}}\right]$ and $p \in (1, \infty)$, where
\[
H(s, p) \eqdef \ln\left(1 - \frac{1}{p}\right) + \ln\left(1 + \frac{s}{p}\right) + \big[p - (1 + s)\big] \cdot \left(\frac{1}{p - 1} + \frac{1}{p + s} - \frac{2}{p}\right).
\]
\end{lemma}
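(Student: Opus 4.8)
The plan is to treat $H(s,p)$ as a function of $s$ for fixed $p>1$ and show it is maximized (over $s\in[0,1/\sqrt3]$) at an endpoint whose value is already negative, or alternatively to fix $s$ and analyze the $p$-behavior. Let me sketch the $s$-direction approach first, since $H$ is a clean combination of two logarithms and a rational term. Write $H(s,p) = \ln\!\big(1-\tfrac1p\big) + g(s)$ where
\[
g(s) \eqdef \ln\!\Big(1+\tfrac{s}{p}\Big) + \big(p-1-s\big)\cdot\Big(\tfrac{1}{p-1}+\tfrac{1}{p+s}-\tfrac{2}{p}\Big).
\]
Note $g(0) = (p-1)\big(\tfrac{1}{p-1}-\tfrac{1}{p}\big) = \tfrac{1}{p}$, so $H(0,p) = \ln(1-\tfrac1p)+\tfrac1p = \ln(1-\tfrac1p)-\ln(e^{-1/p}) < 0$ since $1-x < e^{-x}$ for $x\in(0,1)$. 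That settles the $s=0$ endpoint. First I would compute $\partial_s g(s)$: the derivative of the log is $\tfrac{1}{p+s}$; the derivative of the product, by the product rule, is $-\big(\tfrac{1}{p-1}+\tfrac{1}{p+s}-\tfrac2p\big) + (p-1-s)\cdot\big(-\tfrac{1}{(p+s)^2}\big)$. Collecting, $\partial_s H = \partial_s g$ simplifies substantially — the $\tfrac{1}{p+s}$ terms and constants combine, and I expect $\partial_s g(s) = \tfrac{2}{p} - \tfrac{1}{p-1} - \tfrac{p-1-s}{(p+s)^2}$ (the two copies of $\tfrac1{p+s}$ with opposite signs cancel). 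This is a rational function of $s$; one more differentiation in $s$ shows $\partial_s g$ is monotone (the term $\tfrac{p-1-s}{(p+s)^2}$ is decreasing in $s$ on the relevant range), so $g$ is convex in $s$, hence $H(\cdot,p)$ attains its maximum on $[0,1/\sqrt3]$ at an endpoint. Since $s=0$ is handled, it remains to check $s=1/\sqrt3$.

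The remaining task is then purely one-variable: show $H(1/\sqrt3,\,p)<0$ for all $p>1$. Substituting $s_0 = 1/\sqrt3$, set $\phi(p) \eqdef H(s_0,p)$ and study its limits and monotonicity. As $p\to\infty$, every term vanishes ($\ln(1-\tfrac1p)\to0$, $\ln(1+\tfrac{s_0}{p})\to0$, and $(p-1-s_0)\cdot O(1/p^2)\to0$ — more carefully one should Taylor-expand to confirm $\phi(p)\to 0^-$, which requires the $O(1/p)$ coefficients to cancel, leaving a negative $O(1/p^2)$ term — the constant $s_0 = 1/\sqrt3$ is presumably exactly what makes the $1/p$ coefficient nonpositive). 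As $p\to1^+$, $\ln(1-\tfrac1p)\to-\infty$ while the rational term $(1-s_0)\cdot\tfrac{1}{p-1}\to+\infty$, so there is a competition; but $\ln(1-\tfrac1p) = \ln(p-1)-\ln p$ diverges only logarithmically while $\tfrac{1-s_0}{p-1}$ diverges like a pole, so near $p=1$ the sign is determined by $+\tfrac{1-s_0}{p-1}>0$ — wait, that would be positive, contradicting the claim. So actually the product term near $p=1$ is $(p-1-s_0)\cdot\big(\tfrac1{p-1}+\cdots\big)$, and $p-1-s_0\to -s_0<0$, so the pole contributes $-s_0\cdot\tfrac1{p-1}\to-\infty$. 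Good — so $\phi(p)\to-\infty$ as $p\to1^+$ as well. Thus $\phi$ is negative at both ends; I would finish by showing $\phi$ has no interior local maximum with positive value, e.g. by proving $\phi'(p)$ changes sign at most once (so $\phi$ is unimodal, decreasing-then-increasing, and bounded above by $\max\{\phi(1^+),\phi(\infty)\} \le 0$), or by a direct estimate $\phi(p)<\ln(1-\tfrac1p)+\tfrac1p + [\text{small correction}]<0$ using the two-variable convexity already established to bound the $s_0$-slice against the $s=0$ slice plus an explicit remainder.

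The main obstacle I anticipate is the endpoint analysis at $s=1/\sqrt3$: verifying $\phi(p)<0$ uniformly requires either a careful asymptotic expansion at $p\to\infty$ (to see the leading behavior is $-c/p^2$ with $c>0$, which is exactly where the specific value $1/\sqrt3$ enters — I'd guess the $1/p$ coefficient works out to something like $\tfrac12 - \tfrac32 s_0^2$ or similar, vanishing at $s_0^2 = 1/3$) or a monotonicity argument for $\phi'$ that involves clearing denominators and checking a polynomial inequality in $p$. The convexity-in-$s$ reduction is the clean part; the honest work is the single-variable inequality $H(1/\sqrt3,p)<0$, and getting the constant $1/\sqrt3$ to do its job in the asymptotics is where I'd expect to spend the most effort. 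A fallback, if the clean monotonicity of $\phi$ fails, is to split $(1,\infty)$ into a bounded region near $1$ handled by the pole estimate and a region $p\ge p_0$ handled by the asymptotic expansion with explicit error bounds.
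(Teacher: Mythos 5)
Your skeleton matches the paper's: reduce to the two endpoints $s=0$ and $s=1/\sqrt{3}$, then handle each as a one-variable inequality. Your reduction step is actually cleaner than the paper's. Your formula $\partial_s H = \frac{2}{p}-\frac{1}{p-1}-\frac{p-1-s}{(p+s)^2}$ is correct (over a common denominator it reproduces the paper's numerator $h(s,p)=(p-2)s^2+(3p^2-5p)s-p$), and your convexity claim checks out: $\partial_s^2 H = \frac{3p-2-s}{(p+s)^3}>0$ since $3p-2>1>s$ on the relevant range. So $H(\cdot,p)$ is strictly convex in $s$ and the maximum is at an endpoint — a one-line argument where the paper instead does a case analysis on the roots and axis of symmetry of the quadratic $h(s,p)$ (its Lemma~\ref{lem:ineq:p/(p+1)-1}). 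Your $s=0$ endpoint via $\ln(1-x)<-x$ is complete and matches the paper.

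The gap is that you never actually prove the crux, $H(1/\sqrt{3},p)<0$; you only list candidate strategies. Your primary candidate does work and is exactly what the paper does: writing $\phi(p)=H(1/\sqrt{3},p)$, one finds $\phi'(p)=h_2(p)\cdot p^{-2}(p-1)^{-2}\bigl(p+\tfrac{1}{\sqrt{3}}\bigr)^{-2}$ with $h_2$ a convex quadratic satisfying $h_2(1)>0$ and axis of symmetry $<1$, so $\phi$ is strictly increasing on $(1,\infty)$ with $\phi(\infty)=0$, hence negative. (In particular the ``decreasing-then-increasing'' case is vacuous; $\phi$ is monotone.) Be warned that your fallback — the large-$p$ asymptotic with leading term $-c/p^2$ — fails at exactly $s=1/\sqrt{3}$: the $1/p$ coefficient of $H(s,p)$ cancels for every $s$, and the $1/p^2$ coefficient is $\tfrac{3}{2}s^2-\tfrac{1}{2}$, which is exactly zero at $s=1/\sqrt{3}$. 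So the critical constant makes the $1/p^2$ term vanish rather than merely nonpositive, and that route would force you to the $1/p^3$ term with explicit error control. Commit to the monotonicity computation and the proof closes.
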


\begin{proof}
To conquer the inequality in the lemma, we first investigate the monotonicity of $H(s, p)$, and thus acquire Lemma~\ref{lem:ineq:p/(p+1)-1} (whose proof is deferred to the following).
\vspace{5.5pt} \\
\fbox{\begin{minipage}{\textwidth}
\begin{lemma}
\label{lem:ineq:p/(p+1)-1}
Let $M_p \eqdef \frac{5}{6} + \frac{1}{3\sqrt{3}} + \sqrt{\left(\frac{5}{6} + \frac{1}{3\sqrt{3}}\right)^2 + \frac{2}{3\sqrt{3}}} \approx 2.2246$,
\begin{enumerate}
\item Given $p \in (1, M_p]$, $\frac{\partial H}{\partial s} \leq 0$ for all $s \in \left[0, \frac{1}{\sqrt{3}}\right]$;
\item Given $p \in (M_p, \infty)$, there exists a unique $M_s \in \left(0, \frac{1}{\sqrt{3}}\right)$ such that $\frac{\partial H}{\partial s} \leq 0$ for all $s \in [0, M_s]$, and $\frac{\partial H}{\partial s} > 0$ for all $s \in \left(M_s, \frac{1}{\sqrt{3}}\right]$.
\end{enumerate}
\end{lemma}
\end{minipage}}
\vspace{5.5pt} \\
Lemma~\ref{lem:ineq:p/(p+1)-1} implies, to settle Lemma~\ref{lem:ineq:p/(p+1)} (no matter whether $p > M_p$ or not), that it suffices to show
\[
H_1(p) \eqdef H(0, p) < 0 \quad\quad\quad\quad H_2(p) \eqdef H\left(\frac{1}{\sqrt{3}}, p\right) < 0,
\]
for all $p \in (1, \infty)$. We would justify these two inequalities as follows.

Firstly, $H_1(p) = \ln\left(1 - \frac{1}{p}\right) + \frac{1}{p} \overset{(\diamond)}{=} -\sum\limits_{k = 2}^{\infty} \frac{1}{k} \cdot \frac{1}{p^k} < 0$ for all $p \in (1, \infty)$, where $(\diamond)$ follows from Taylor series. As for $H_2(p) = \ln\left(1 - \frac{1}{p}\right) + \ln\left(1 + \frac{1}{\sqrt{3}p}\right) + \left[p - \left(1 + \frac{1}{\sqrt{3}}\right)\right] \cdot \left(\frac{1}{p - 1} + \frac{\sqrt{3}}{\sqrt{3}p + 1} - \frac{2}{p}\right)$, it can be checked that $H_2'(p) = h_2(p) \cdot p^{-2} \cdot (p - 1)^{-2} \cdot \left(p + \frac{1}{\sqrt{3}}\right)^{-2}$, where
\[
h_2(p) \eqdef \left(2 + \frac{14}{3\sqrt{3}}\right) \cdot p^2 - \left(\frac{2}{3} + \frac{8}{3\sqrt{3}}\right) \cdot p - \left(\frac{2}{3} + \frac{2}{3\sqrt{3}}\right).
\]
As a convex quadratic function, $h_2(p) > 0$ for all $p \in (1, \infty)$, since (1)~$h_2(1) = \frac{2}{3} + \frac{4}{3\sqrt{3}} > 0$; and (2)~$h_2(p)$ has axis of symmetry $\frac{1}{2} \cdot \left.\left(\frac{2}{3} + \frac{8}{3\sqrt{3}}\right)\right/\left(2 + \frac{14}{3\sqrt{3}}\right) = \frac{19 - 5\sqrt{3}}{44} \approx 0.2350 < 1$. Thus, $H_2(p)$ is strictly increasing on $(1, \infty)$. Noting that $\lim\limits_{p \rightarrow \infty} H_2(p) = 0$, we complete the proof of the lemma.
\end{proof}

\paragraph{[Lemma~\ref{lem:ineq:p/(p+1)-1}].}
\emph{Let $M_p \eqdef \frac{5}{6} + \frac{1}{3\sqrt{3}} + \sqrt{\left(\frac{5}{6} + \frac{1}{3\sqrt{3}}\right)^2 + \frac{2}{3\sqrt{3}}} \approx 2.2246$,
\begin{enumerate}
\item Given $p \in (1, M_p]$, $\frac{\partial H}{\partial s} \leq 0$ for all $s \in \left[0, \frac{1}{\sqrt{3}}\right]$;
\item Given $p \in (M_p, \infty)$, there exists a unique $M_s \in \left(0, \frac{1}{\sqrt{3}}\right)$ such that $\frac{\partial H}{\partial s} \leq 0$ for all $s \in [0, M_s]$, and $\frac{\partial H}{\partial s} > 0$ for all $s \in \left(M_s, \frac{1}{\sqrt{3}}\right]$.
\end{enumerate}}

\begin{proof}
Given $p \in (1, \infty)$, it can be checked that $\frac{\partial H}{\partial s} = \frac{h(s, p)}{p \cdot (p - 1) \cdot (p + s)^2}$, where
\[
h(s, p) \eqdef (p - 2) \cdot s^2 + (3p^2 - 5p) \cdot s - p.
\]
We would explore the real roots of this function as follows.
\begin{description}
\item [Case I (when $p = 2$):] In this case, $h(s, 2) = 2s - 2 < 0$ for all $s \in \left(0, \frac{1}{\sqrt{3}}\right)$;
\end{description}
Otherwise, given $p \in (1, 2) \cup (2, \infty)$, $h(s, p)$ is a quadratic function of $s$, with axis of symmetry
\[
\rho_h(p) \eqdef -\frac{3p^2 - 5p}{2 \cdot (p - 2)} = -\frac{3}{2} \cdot (p - 2) - \frac{1}{p - 2} - \frac{7}{2}.
\]
Clearly, $h(s, p)$ has real roots in the range of $s \in \left(0, \frac{1}{\sqrt{3}}\right)$ only if
\[
\Delta_h(p) \eqdef (3p^2 - 5p)^2 + 4 \cdot (p - 2) \cdot p = p \cdot (p - 1) \cdot \left(p - \frac{7 - \sqrt{17}}{6}\right) \cdot \left(p - \frac{7 + \sqrt{17}}{6}\right) \geq 0.
\]
Since $\frac{7 - \sqrt{17}}{6} \approx 0.4795 < 1 < \frac{7 + \sqrt{17}}{6} \approx 1.8539$, we safely assume $p \in \left[\frac{7 + \sqrt{17}}{6}, 2\right) \cup (2, \infty)$ henceforth.
\begin{description}
\item [Case II (when $\frac{7 + \sqrt{17}}{6} \leq p < 2$):] In this case, $h(s, p)$ is a concave function of $s$, with axis of symmetry $\rho_h(p) \overset{(\dagger)}{\geq} \rho_h\left(\frac{7 + \sqrt{17}}{6}\right) = \frac{3 + \sqrt{17}}{2} \approx 3.5616 > \frac{1}{\sqrt{3}} \approx 0.5774$, where $(\dagger)$ follows as $\rho_h(p)$ is increasing on $p \in \left(2 - \sqrt{\frac{2}{3}}, 2\right)$, and $2 - \sqrt{\frac{2}{3}} \approx 1.1835 < \frac{7 + \sqrt{17}}{6} \approx 1.8539$;
\item [Case III (when $2 < p < \infty$):] In this case, $h(s, p)$ is a convex function of $s$, with axis of symmetry $\rho_h(p) = -\frac{3}{2} \cdot (p - 2) - \frac{1}{p - 2} - \frac{7}{2} < -\frac{7}{2} < 0$;
\end{description}
To sum up, given $p \in \left[\frac{7 + \sqrt{17}}{6}, 2\right) \cup (2, \infty)$, we know $h(s, p)$ is strictly increasing on $s \in \left(0, \frac{1}{\sqrt{3}}\right)$. Together with the fact $h(0, p) = -p < 0$, such monotonicity implies that $h(s, p)$ has (a unique) real root in the range of $s \in \left(0, \frac{1}{\sqrt{3}}\right)$, iff $h\left(\frac{1}{\sqrt{3}}, p\right) = \sqrt{3}p^2 - \left(\frac{2}{3} + \frac{5}{\sqrt{3}}\right) \cdot p - \frac{2}{3} > 0$, that is, iff
\[
p > M_p = \frac{5}{6} + \frac{1}{3\sqrt{3}} + \sqrt{\left(\frac{5}{6} + \frac{1}{3\sqrt{3}}\right)^2 + \frac{2}{3\sqrt{3}}} \approx 2.2246.
\]
Gathering everything together, we can summarize that
\begin{itemize}
\item Given $p \in (1, M_p]$, $h(s, p) \leq 0$ for all $s \in \left[0, \frac{1}{\sqrt{3}}\right]$, in that $h(0, p) = -p < 0$ and $h(s, p)$ has no roots in the interior of this interval;
\item Given $p \in (M_p, \infty)$, $h(s, p)$ is strictly increasing when $s \in \left(0, \frac{1}{\sqrt{3}}\right)$, and has a unique root, namely $M_s$, in this range.
\end{itemize}
Since then, Lemma~\ref{lem:ineq:p/(p+1)-1} follows from above immediately.
\end{proof}

\subsection{Missing Proof of Lemma~\ref{lem:RQ}}
\label{subapp:math_facts:RQ}

Lemma~\ref{lem:RQ}.1 and Lemma~\ref{lem:RQ}.2 were first obtained in~\cite{JLTX18}. For the sake of completeness, here we offer a proof of the whole lemma.

\paragraph{[Lemma~\ref{lem:RQ}~(\cite{JLTX18})].}
\emph{For $\R(p) = p\ln\left(\frac{p^2}{p^2 - 1}\right)$ and $\Q(p) = \ln\left(\frac{p^2}{p^2 - 1}\right) - \frac{1}{2}Li_2\left(\frac{1}{p^2}\right)$,
\begin{enumerate}
\item $\R'(p) = p \cdot \Q'(p)$, $\R'(p) < \Q'(p) < 0$ and $\R''(p) > 0$, for all $p \in (1, \infty)$;
\item $\R(\infty) = \Q(\infty) = 0$ and $\R(1^+) = \Q(1^+) = \infty$;
\item $\frac{1}{p} \leq \R(p) \leq \frac{1}{p - 1}$ and $\frac{1}{p^2} \leq \left|\R'(p)\right| \leq \frac{1}{p^2 - p}$, for all $p \in (1, \infty)$.
\end{enumerate}}

\begin{proof}[Proof of Lemma~\ref{lem:RQ}.1]
Given $p \in (1, \infty)$, since $\ln(1 + x) \leq x$ for all $x \geq 0$, we have
\begin{equation}
\label{eq:math_fact1}
\R'(p) = -\frac{2}{p^2 - 1} + \ln\left(1 + \frac{1}{p^2 - 1}\right) \overset{(\ast)}{\leq} -\frac{1}{p^2 - 1} < 0,
\end{equation}
Besides, $\Q'(p) = -\frac{2}{p(p^2 - 1)} + \sum\limits_{k = 1}^{\infty} \frac{1}{k} \cdot \frac{1}{p^{2k + 1}} = -\frac{2}{p(p^2 - 1)} - \frac{1}{p} \cdot \ln\left(1 - \frac{1}{p^2}\right) \overset{(\ref{eq:math_fact1})}{=} \frac{\R'(p)}{p}$, for all $p \in (1, \infty)$. Combining the above two facts together directly implies $\R'(p) < \Q'(p) < 0$ for all $p \in (1, \infty)$. Finally, observing that $\R''(p) \overset{(\ref{eq:math_fact1})}{=} \frac{2(p^2 + 1)}{p(p^2 - 1)^2} > 0$, we complete the proof of Lemma~\ref{lem:RQ}.1.
\end{proof}

\begin{proof}[Proof of Lemma~\ref{lem:RQ}.2]
For the first limit, since $\ln(1 + x) \leq x$ for all $x \geq 0$,
\[
\begin{aligned}
& 0 \leq \R(\infty) = \lim\limits_{p \rightarrow \infty} p \ln\left(1 + \frac{1}{p^2 - 1}\right) \leq \lim\limits_{p \rightarrow \infty} \frac{p}{p^2 - 1} = 0, \\
& 0 \leq \Q(\infty) = \lim\limits_{p \rightarrow \infty} \displaystyle{\int_p^{\infty}} \left(\frac{-\R'(x)}{x}\right) dx \leq \lim\limits_{p \rightarrow \infty} \displaystyle{\int_p^{\infty}} \left(-\R'(x)\right) dx = \lim\limits_{p \rightarrow \infty} \R(p) = 0.
\end{aligned}
\]
For the second limit,
\[
\begin{aligned}
& \R(1^+) \geq \ln\left(\lim\limits_{p \rightarrow 1^+}\frac{p^2}{p^2 - 1}\right) = \ln(\infty) = \infty, \\
& \Q(1^+) = \lim\limits_{p \rightarrow 1^+} \displaystyle{\int_p^{\infty}} \frac{-\R'(x)}{x} dx \geq \lim\limits_{p \rightarrow 1^+} \displaystyle{\int_p^2} \frac{-\R'(x)}{2} dx = \lim\limits_{p \rightarrow 1^+} \frac{\R(p) - \R(2)}{2} = \infty.
\end{aligned}
\]
This completes the proof of Lemma~\ref{lem:RQ}.2.
\end{proof}

\begin{proof}[Proof of Lemma~\ref{lem:RQ}.3]
Since $\ln(1 - x) \leq -x$ for all $x \in [0, 1)$,
\[
\R(p) = -p\ln\left(1 - \frac{1}{p^2}\right) \geq p \cdot \frac{1}{p^2} = \frac{1}{p} \quad\quad \forall p \in (1, \infty).
\]
Since $\ln(1 + x) \leq x$ for all $x \geq 0$,
\[
\R(p) = p\ln\left(1 + \frac{1}{p^2 - 1}\right) \leq \frac{p}{p^2 - 1} \leq \frac{p + 1}{p^2 - 1} = \frac{1}{p - 1} \quad\quad \forall p \in (1, \infty).
\]
Furthermore, $\left|\R'(p)\right| \overset{(\ref{eq:math_fact1})}{\geq} \frac{1}{p^2 - 1} > \frac{1}{p^2}$ and
\[
\left|\R'(p)\right| \overset{(\ref{eq:math_fact1})}{=} \frac{2}{p^2 - 1} + \ln\left(1 - \frac{1}{p^2}\right) \overset{(\star)}{\leq} \frac{2}{p^2 - 1} - \frac{1}{p^2} = \frac{1}{p^2 - p} - \frac{1}{p^3 + p^2} < \frac{1}{p^2 - p},
\]
where $(\star)$ follows as $\ln(1 - x) \leq -x$ for all $x \in [0, 1)$. This completes the proof of Lemma~\ref{lem:RQ}.3.
\end{proof}

\subsection{Useful Inequalities for Lemma~\ref{lem:main} and Lemma~\ref{lem:abel_ineq1}}
\label{subapp:math_facts:ineq}

\begin{lemma}
\label{lem:ineq}
For all $y \geq x > 1$,
\begin{enumerate}
\item $1 - e^{-\big(\R(x) - \R(y)\big)} \leq \left(1 - \frac{x}{y}\right) \cdot \frac{x + e^{\R(x) - \R(y)} - 1}{(x - 1)^2}$;
\item $\frac{1}{y} \cdot \left[1 - e^{-\big(\R(x) - \R(y)\big)}\right] \leq 1 - e^{-\big(\Q(x) - \Q(y)\big)} \leq \frac{1}{x} \cdot \left[1 - e^{-\big(\R(x) - \R(y)\big)}\right]$;
\item $\frac{e^{\R(x) - \R(y)}}{x + e^{\R(x) - \R(y)} - 1} \geq \frac{1}{y}$;
\item $1 - \frac{1}{y} \geq (x - 1) \cdot \frac{e^{\R(x) - \R(y)}}{x + e^{\R(x) - \R(y)} - 1}$.
\end{enumerate}
\end{lemma}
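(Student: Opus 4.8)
Throughout write $E \eqdef e^{\R(x) - \R(y)}$; since $y \geq x$ and $\R$ is decreasing on $(1,\infty)$ (Lemma~\ref{lem:RQ}), $E \geq 1$. The plan is to reduce each of the four inequalities to an elementary algebraic inequality in $x$, $y$, $E$, feeding in only the facts from Lemma~\ref{lem:RQ} — principally the bounds $\tfrac1{p^2} \le |\R'(p)| \le \tfrac1{p^2-p}$, the sign $\R'(p)<0$, and the identity $\Q'(p) = \R'(p)/p$ — and, where $E$ itself must be controlled, the estimate $E \le \tfrac{x(y-1)}{y(x-1)}$ obtained from $\R(x)-\R(y) = \int_x^y |\R'(t)|\,dt \le \int_x^y \tfrac{dt}{t^2-t} = \ln\tfrac{x(y-1)}{y(x-1)}$.

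I would dispatch Parts~3 and~4 first. Part~3 is equivalent, after clearing the positive denominator, to $E(y-1) \ge x-1$, which is immediate from $E \ge 1$ and $y \ge x$. For Part~4, the map $E \mapsto \tfrac{E}{x+E-1}$ is increasing on $(0,\infty)$, so substituting the upper bound $E \le \tfrac{x(y-1)}{y(x-1)}$ into the right-hand side and clearing denominators reduces the claim to a polynomial inequality which simplifies to $x \le y$. Part~2 splits into its two halves: from $\Q'(t) = \R'(t)/t$ and $\R'<0$ one gets the sandwich $\tfrac1y\big(\R(x)-\R(y)\big) \le \Q(x)-\Q(y) \le \tfrac1x\big(\R(x)-\R(y)\big)$; the lower bound follows by combining it with the monotonicity and concavity of $t \mapsto 1-e^{-t}$ (concavity gives $1 - e^{-a/y} \ge \tfrac1y(1-e^{-a})$ with $a = \R(x)-\R(y)$, then monotonicity passes from $\tfrac ay$ to $\Q(x)-\Q(y)$), while for the upper bound I would instead fix $x$, view $\tfrac1x\big(1-e^{-(\R(x)-\R(y))}\big) - \big(1 - e^{-(\Q(x)-\Q(y))}\big)$ as a function of $y$ that vanishes at $y = x$, and show it is nondecreasing; after using $\Q'=\R'/p$ this reduces to the monotonicity of $p \mapsto \ln p + \R(p) - \Q(p)$, hence to $1 + (p-1)\R'(p) \ge 0$, i.e.\ to $|\R'(p)| \le \tfrac1{p-1}$, a weakening of Lemma~\ref{lem:RQ}.3.

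Part~1 is the main obstacle. The tempting shortcut — bounding the right-hand side by plugging the extreme value of $E$ into it — fails, because the quantity to be controlled, $\tfrac{E-1}{E(x+E-1)}$, is not monotone in $E$ on $[1,\infty)$ (it is maximized at $E=\sqrt{x}+1$), so substituting $E = \tfrac{x(y-1)}{y(x-1)}$ need not give a usable bound and would force an awkward case split on the sign of $E^* - (\sqrt x + 1)$. Instead I would again fix $x$, set $\Theta(y) \eqdef \big(1 - \tfrac xy\big)\tfrac{x+E-1}{(x-1)^2} - \big(1 - \tfrac1E\big)$, observe $\Theta(x) = 0$ (as $E = 1$ there), and prove $\Theta'(y) \ge 0$ for $y \ge x$. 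Differentiating and using $E'(y) = E(y)\,|\R'(y)|$, the inequality $\Theta'(y)\ge 0$ rearranges to
\[
\frac{x(x+E-1)}{y^2(x-1)^2} \;\ge\; |\R'(y)|\left(\frac1E - \frac{(y-x)E}{y(x-1)^2}\right);
\]
if the right side is $\le 0$ we are done, and otherwise I would bound $|\R'(y)| \le \tfrac1{y^2-y}$ (Lemma~\ref{lem:RQ}.3) and discard the negative term in the parenthesis, leaving $xE(x+E-1)(y-1) \ge y(x-1)^2$, which follows from $E \ge 1$ (so $xE(x+E-1) \ge x^2$) together with the elementary bound $x^2(y-1) \ge y(x-1)^2$, equivalent to $y(2x-1) \ge x^2$ and hence to $y \ge x$. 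The delicate point is making this last reduction go through cleanly — choosing exactly which terms to keep and which to bound when invoking $|\R'(y)| \le \tfrac1{y^2-y}$ — and verifying that no regime (in particular small $x$ with large $y$) slips through.
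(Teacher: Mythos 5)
Your proposal is correct; I checked each reduction and the algebra goes through, including the final collapse of Part~1 to $y(2x-1)\ge x^2$ and of Part~4 to $y\ge x$. Two of your arguments essentially coincide with the paper's: Part~3 is the same one-line rearrangement ($E(y-1)\ge x-1$ from $E\ge 1$), and for the upper half of Part~2 the paper likewise fixes $x$, differentiates in $y$, and reduces to the monotonicity of $p\mapsto \ln p + \R(p)-\Q(p)$, i.e.\ to $|\R'(p)|\le \frac{1}{p-1}$. Elsewhere you genuinely diverge. For the lower half of Part~2 the paper differentiates in $x$ and uses the monotonicity of $\R(p)-\Q(p)-\ln p$; your route via concavity of $t\mapsto 1-e^{-t}$ together with the integral sandwich $\frac1y\big(\R(x)-\R(y)\big)\le \Q(x)-\Q(y)\le\frac1x\big(\R(x)-\R(y)\big)$ is more elementary and needs no differentiation at all. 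For Part~1 the paper first relaxes both sides (replacing $1-e^{-z}$ by $z$ on the left and $x+E-1$ by $x$ on the right) and only then runs the derivative-in-$y$ argument on the simplified quantity $G_1(x,y)=\R(x)-\R(y)-\big(1-\frac xy\big)\frac{x}{(x-1)^2}$; you keep the exact inequality and absorb the $E$-dependence into $\Theta'(y)$, which costs you the case split on the sign of the bracket $\frac1E-\frac{(y-x)E}{y(x-1)^2}$ but is handled correctly (in the positive case discarding the negative term and invoking $|\R'(y)|\le\frac{1}{y^2-y}$ does suffice, and I see no regime that slips through). For Part~4 the paper again differentiates, via $G_4(x,y)=\big(\frac{1}{x-1}-\frac{1}{y-1}\big)-\big(1-e^{-(\R(x)-\R(y))}\big)$, whereas your substitution of the closed-form bound $E\le\frac{x(y-1)}{y(x-1)}$ into the increasing map $E\mapsto \frac{E}{x+E-1}$ lands directly on a polynomial identity; this is arguably the cleanest of the four and makes transparent exactly where $y\ge x$ enters.
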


\begin{proof}[Proof of Lemma~\ref{lem:ineq}.1]
Since $\R(x) \geq \R(y)$ when $y \geq x > 1$ (see Lemma~\ref{lem:RQ}.1), and $1 - e^{-z} \leq z$ for all $z \geq 0$, we can relax the left hand side (of the inequality in the lemma) as follows:
\[
1 - e^{-\big(\R(x) - \R(y)\big)} \leq \R(x) - \R(y).
\]
Furthermore, the right hand side (of the inequality in the lemma) can be relaxed as follows:
\[
\left(1 - \frac{x}{y}\right) \cdot \frac{x + e^{\R(x) - \R(y)} - 1}{(x - 1)^2} \geq \left(1 - \frac{x}{y}\right) \cdot \frac{x}{(x - 1)^2}.
\]
As per these, we only need to show $G_1(x, y) \leq 0$ when $y \geq x > 1$, where
\[
G_1(x, y) \eqdef \R(x) - \R(y) - \left(1 - \frac{x}{y}\right) \cdot \frac{x}{(x - 1)^2}.
\]
For this, we shall notice $\frac{\partial G_1}{\partial y} = -\R'(y) - \frac{1}{y^2} \cdot \left(1 - \frac{1}{x}\right)^{-2} \leq g_1(y)$, where
\begin{equation}
\label{eq:math_fact2}
g_1(y) \eqdef -\R'(y) - \frac{1}{y^2} \cdot \left(1 - \frac{1}{y}\right)^{-2} = \frac{2}{y^2 - 1} - \ln\left(\frac{y^2}{y^2 - 1}\right) - \frac{1}{(y - 1)^2}.
\end{equation}
Since $g_1(\infty) = 0$ and $g_1'(y) = \frac{6y^2 + 2}{y \cdot (y - 1) \cdot (y^2 - 1)^2} \geq 0$, when $y \geq x > 1$, we surely have
\begin{equation}
\label{eq:math_fact3}
\frac{\partial G_1}{\partial y} \leq g_1(y) \leq 0.
\end{equation}
Together with the fact $G_1(x, y) = 0$ when $y = x > 1$, we complete the proof of Lemma~\ref{lem:ineq}.1.
\end{proof}

\begin{proof}[Proof of Lemma~\ref{lem:ineq}.2]
Define $G_2(x, y) \eqdef \frac{1}{y} \cdot \left[1 - e^{-\big(\R(x) - \R(y)\big)}\right] - 1 + e^{-\big(\Q(x) - \Q(y)\big)}$. Noticing that $G_2(x, y) = 0$ when $y = x > 1$, we safely turn to certify
\begin{flalign*}
\frac{\partial G_2}{\partial x}
= & -\frac{1}{y} \cdot e^{-\big(\R(x) - \R(y)\big)} \cdot \big(-\R'(x)\big) + e^{-\big(\Q(x) - \Q(y)\big)} \cdot \big(-\Q'(x)\big) \\
= & \big(-\R'(x)\big) \cdot e^{-\R(x) + \Q(y)} \cdot \left(\frac{1}{x} \cdot e^{\R(x) - \Q(x)} - \frac{1}{y} \cdot e^{\R(y) - \Q(y)}\right) \geq 0, & \text{(Since $\Q'(x) = \frac{1}{x} \cdot \R'(x)$)}
\end{flalign*}
for all $y \geq x > 1$. Due to Lemma~\ref{lem:RQ}.1 that $\R'(x) \leq 0$, we only need to check $g_2(x) \geq g_2(y)$, where $g_2(x) \eqdef \R(x) - \Q(x) - \ln x$. This follows from the monotonicity of $g_2$:
\[
g_1'(x) = -(x - 1) \cdot \big(-\Q'(x)\big) - \frac{1}{x} \leq -\frac{1}{x} \leq 0 \quad\quad \forall x \in (1, \infty).
\]

We continue to attest the second inequality in Lemma~\ref{lem:ineq}.2, that is,
\[
G_3(x, y) \eqdef \frac{1}{x} \cdot \left[1 - e^{-\big(\R(x) - \R(y)\big)}\right] - 1 + e^{-\big(\Q(x) - \Q(y)\big)} \geq 0,
\]
for all $y \geq x > 1$. Since $G_3(x, y) = 0$ when $y = x > 1$, we only need to verify
\begin{flalign*}
\frac{\partial G_3}{\partial y}
= & \frac{1}{x} \cdot e^{-\left(\R(x) - \R(y)\right)} \cdot \left(-\R'(y)\right) - e^{-\left(\Q(x) - \Q(y)\right)} \cdot \left(-\Q'(y)\right) \\
= & \frac{1}{x} \cdot \left(-\Q'(y)\right) \cdot e^{-\R(x) + \Q(y)} \cdot \left(y \cdot e^{\R(y) - \Q(y)} - x \cdot e^{\R(x) - \Q(x)}\right) \geq 0. & \text{(Since $\R'(y) = y \cdot \Q'(y)$)}
\end{flalign*}
Since $\Q'(y) \leq 0$ for all $y > 1$ (see Lemma~\ref{lem:RQ}.1), it remains to reveal $g_3(y) \geq g_3(x)$ for all $y \geq x > 1$, where $g_3(x) \eqdef \ln x + \R(x) - \Q(x)$. The monotonicity of $g_3(x)$ is proved as follows:
\[
g_3'(x) \overset{(\ref{eq:math_fact1})}{=} \frac{1}{x} + \left(1 - \frac{1}{x}\right) \cdot \left[-\frac{2}{x^2 - 1} + \ln\left(\frac{x^2}{x^2 - 1}\right)\right] \geq \frac{1}{x} - \left(1 - \frac{1}{x}\right) \cdot \frac{2}{x^2 - 1} = \frac{x - 1}{x(x + 1)} \geq 0.
\]
This completes the proof of the Lemma~\ref{lem:ineq}.2.
\end{proof}

\begin{proof}[Proof of Lemma~\ref{lem:ineq}.3]
After being rearranged, the inequality in Lemma~\ref{lem:ineq}.3 becomes
\[
e^{\R(x) - \R(y)} \geq \frac{x - 1}{y - 1}.
\]
Since $y \geq x > 1$, the lemma directly follows from Lemma~\ref{lem:RQ}.1 as $e^{\R(x) - \R(y)} \geq 1 \geq \frac{x - 1}{y - 1}$.
\end{proof}

\begin{proof}[Proof of Lemma~\ref{lem:ineq}.4]
Since $0 \leq \R(x) \leq \frac{1}{x - 1}$ (see Lemma~\ref{lem:RQ}.3), and $1 - e^{-z} \leq z$ for all $z \geq 0$, we know $1 - e^{-\R(x)} \leq \frac{1}{x - 1}$ for all $x \in (1, \infty)$. After being rearranged, this inequality becomes
\[
1 \geq (x - 1) \cdot \frac{e^{\R(x)}}{x + e^{\R(x)} - 1} \overset{(\ddagger)}{\geq} (x - 1) \cdot \frac{e^{\R(x) - \R(y)}}{x + e^{\R(x) - \R(y)} - 1},
\]
for all $y \geq x > 1$, where $(\ddagger)$ follows as $g_4(z) \eqdef \frac{z}{x + z - 1}$ is increasing on $z \in (0, \infty)$. As per these, we can safely rearrange the inequality in Lemma~\ref{lem:ineq}.4 as follows:
\[
y - 1 \geq \frac{1}{1 - (x - 1) \cdot \frac{e^{\R(x) - \R(y)}}{x + e^{\R(x) - \R(y)} - 1}} - 1 = \frac{(x - 1) \cdot e^{\R(x) - \R(y)}}{x + e^{\R(x) - \R(y)} - 1 - (x - 1) \cdot e^{\R(x) - \R(y)}},
\]
Further rearranging results in $G_4(x, y) \eqdef \left(\frac{1}{x - 1} - \frac{1}{y - 1}\right) - \left(1 - e^{-(\R(x) - \R(y))}\right) \geq 0$, for all $y \geq x > 1$. This inequality can be confirmed by observing that $G_4(x, y) = 0$ when $y = x > 1$, and that
\[
\frac{\partial G_4}{\partial y}
= \frac{1}{(y - 1)^2} + e^{-\big(\R(x) - \R(y)\big)} \cdot \R'(y)
\overset{(\ddagger)}{\geq} \frac{1}{(y - 1)^2} + \R'(y)
\overset{(\ref{eq:math_fact2})}{=} -g_1(y) \overset{(\ref{eq:math_fact3})}{\geq} 0,
\]
where $(\ddagger)$ follows as $\R'(p) \leq 0$ for all $p \in (1, \infty)$. This completes the proof of Lemma~\ref{lem:ineq}.4.
\end{proof}

\section{Missing Proofs in Section~\ref{sec:upper1}}
\label{app:upper1}
\subsection{Proof of Lemma~\ref{lem:p/(p+1)-3}}
\label{subapp:upper1-p/(p+1)-3}

\fcolorbox{white}{lightgray}{\begin{minipage}{\textwidth}
In a worst-case instance $\{F_i\}_{i = 0}^n$, w.l.o.g.
\begin{itemize}
\item With some pre-fixed $r_0(0) \geq 0$, $r_0(q) = r_0(0) \cdot (1 - q)$ for all $q \in [0, 1]$;
\item For each $i \in [n]$, $q_i > 0$, $\partial_+ r_i(q) > 1$ for all $q \in [0, q_i)$, $v_i = \frac{r_i(q_i)}{q_i} > 1$.
\end{itemize}
\end{minipage}}

\paragraph{[Lemma~\ref{lem:p/(p+1)-3}].}
\emph{There exists a worst-case instance $\{F_i\}_{i = 0}^n$ of Program~(\ref{prog:0}), such that for all $i \in [n]$ with $r_i(0) > 0$,}
\[
\partial_+ r_i(0) > 1 + r_0(0).
\]

\begin{proof}
Assume to the contrary: There exists $k \in [n]$ such that $r_k(0) > 0$ and $\partial_+ r_k(0) \leq 1 + r_0(0)$. In terms of revenue-quantile curve, let us replace $F_0$ by $\FF_0$ (resp. replace $F_k$ by $\FF_k$).
\begin{itemize}
\item Define $\rr_0(q) \eqdef \big(r_0(0) + r_k(0)\big) \cdot (1 - q)$ for all $q \in [0, 1]$;
\item Define $\qq_k \eqdef q_k$ and $\rr_k(q) \eqdef
    \begin{cases}
    r_k(q) - r_k(0) & \forall q \in [0, \qq_k] \\
    \frac{r_k(q_k) - r_k(0)}{1 - q_k} \cdot (1 - q) & \forall q \in (\qq_k, 1]
    \end{cases}$.
\end{itemize}
We have shown in Section~\ref{subsec:upper1-lem:p/(p+1)-1} that $\opt\left(\{\FF_0\} \bigcup \{\FF_k\} \bigcup \{F_i\}_{i \in [n] \setminus \{k\}}\right) = \opt\big(\{F_i\}_{i = 0}^n\big)$. In the rest proof, we focus on the feasibility of instance $\{\FF_0\} \bigcup \{\FF_k\} \bigcup \{F_i\}_{i \in [n] \setminus \{k\}}$.


\paragraph{Feasibility Analysis.}
Under our construction of $\FF_k$, its feasibility to constraint~(\ref{cstr:value}) trivially follows from Lemma~\ref{lem:shape}.1: Since $\partial_+ \rr_k(q) = \partial_+ r_k(q) > 1$ for all $q \in [0, \qq_k) = [0, q_k)$,
\[
\vv_k = \frac{\vv_k(\qq_k)}{\qq_k} > 1 \quad\quad\quad\quad\quad\quad \PPhi_k(\vv_k^+) = \lim\limits_{q \rightarrow \qq_k^-} \partial_+ \rr_k(q) = \lim\limits_{q \rightarrow q_k^-} \partial_+ r_k(q) = \Phi_k(v_k^+) \geq 1.
\]
It remains to preserve constraint~(\ref{cstr:ap0}) that $\ap\big(p, \{\FF_0\} \bigcup \{\FF_k\} \bigcup \{F_i\}_{i \in [n] \setminus \{k\}}\big) \leq \ap\big(p, \{F_i\}_{i = 0}^n\big)$, or equivalently,
\[
\FF_0(p) \cdot \FF_k(p) \geq F_0(p) \cdot F_k(p),
\]
for all $p \in (1, \infty)$. Depending on whether $p > \vv_k \eqdef \frac{r_k(q_k) - r_k(0)}{q_k}$ or not, our claim is clarified in the following two cases. Noticeably, premise $\partial_+ r_k(0) \leq 1 + r_0(0)$ is only used in \textbf{Case II}.

\begin{figure}[H]
\centering
\subfigure[Original $r_k(q)$]{
\begin{tikzpicture}[thick, smooth, domain = 0: 0.5, scale = 2.55]
\draw[->] (0, 0) -- (1.1, 0);
\draw[->] (0, 0) -- (0, 1.1);
\node[above] at (0, 1.1) {\scriptsize $r_k(q)$};
\node[right] at (1.1, 0) {\scriptsize $q$};
\node[left] at (0, 0) {\scriptsize $0$};
\draw[color = blue] plot (\x, {0.2 + 1.6 * sqrt(\x / 2)});
\draw[thin, dashed] (0, 1) -- (0.5, 1);
\draw[very thick] (0, 1) -- (1.5pt, 1);
\node[anchor = 0] at (0, 1) {\scriptsize $r_k(q_k)$};

\draw[very thick] (0, 0.2) -- (1.5pt, 0.2);
\node[left] at (0, 0.2) {\scriptsize $r_k(0)$};

\draw[thin, dashed] (0.2222, 0) -- (0.2222, 0.7333);
\draw[very thick] (0.2222, 0) -- (0.2222, 1.5pt);
\draw[thin, dashed] (0, 0.7333) -- (0.2222, 0.7333);
\draw[very thick] (0, 0.7333) -- (1.5pt, 0.7333);
\node[below] at (0.2222, 0) {\scriptsize $y$};
\node[left] at (0, 0.7333) {\scriptsize $r_k(y)$};

\draw[color = blue] (0.5, 1) -- (1, 0);
\draw[color = Sepia] (0, 0) -- (0.5, 1.2);
\draw[color = Sepia] (0.2222, 0.7333) -- (1, 0);
\node[anchor = 60, color = Sepia] at (0.7879, 0.2) {\scriptsize $\mathcal{L}(q)$};
\draw[very thick] (1, 0) -- (1, 1.5pt);
\node[below] at (1, 0) {\scriptsize $1$};
\node[anchor = west, color = Sepia] at (0.5, 1.2) {\scriptsize $r(q) = p \cdot q$};
\draw[thin, color = blue, fill = green] (0.2222, 0.7333) circle(0.5pt);
\draw[thin, color = blue, fill = green] (0.3701, 0.8883) circle(0.5pt);
\draw[thin, color = blue, fill = green] (0.2821, 0.6769) circle(0.5pt);
\draw[thin, color = blue, fill = red] (1, 0) circle(0.5pt);
\draw[thin, color = blue, fill = red] (0.5, 1) circle(0.5pt);
\draw[thin, color = blue, fill = red] (0, 0.2) circle(0.5pt);
\end{tikzpicture}
\label{fig:lem:p/(p+1)-3.1}
}
\quad\quad\quad\quad\quad\quad
\subfigure[$\rr_k(q)$ after reduction]{
\begin{tikzpicture}[thick, smooth, domain = 0: 0.5, scale = 2.55]
\draw[->] (0, 0) -- (1.1, 0);
\draw[->] (0, 0) -- (0, 1.1);
\node[above] at (0, 1.1) {\scriptsize $\rr_k(q)$};
\node[right] at (1.1, 0) {\scriptsize $q$};
\node[left] at (0, 0) {\scriptsize $\rr_k(0) = 0$};
\draw[color = blue] plot (\x, {1.6 * sqrt(\x / 2)});
\draw[thin, dashed] (0, 0.8) -- (0.5, 0.8);
\draw[very thick] (0, 0.8) -- (1.5pt, 0.8);
\node[anchor = 0] at (0, 0.8) {\scriptsize $\rr_k(\qq_k)$};

\draw[thin, dashed] (0.2222, 0) -- (0.2222, 0.5333);
\draw[very thick] (0.2222, 0) -- (0.2222, 1.5pt);
\draw[thin, dashed] (0, 0.5333) -- (0.2222, 0.5333);
\draw[very thick] (0, 0.5333) -- (1.5pt, 0.5333);
\node[below] at (0.2222, 0) {\scriptsize $y$};
\node[left] at (0, 0.5333) {\scriptsize $\rr_k(y)$};

\draw[color = blue] (0.5, 0.8) -- (1, 0);
\draw[color = Sepia] (0, 0) -- (0.45, 1.08);
\draw[very thick] (1, 0) -- (1, 1.5pt);
\node[below] at (1, 0) {\scriptsize $1$};
\node[anchor = north west, color = Sepia] at (0.45, 1.08) {\scriptsize $r(q) = p \cdot q$};
\draw[thin, color = blue, fill = green] (0.2222, 0.5333) circle(0.5pt);
\draw[thin, color = blue, fill = red] (1, 0) circle(0.5pt);
\draw[thin, color = blue, fill = red] (0.5, 0.8) circle(0.5pt);
\draw[thin, color = blue, fill = red] (0, 0) circle(0.5pt);
\end{tikzpicture}
\label{fig:lem:p/(p+1)-3.2}
}
\caption{Demonstrations for the reduction in Lemma~\ref{lem:p/(p+1)-3}}
\label{fig:lem:p/(p+1)-3}
\end{figure}
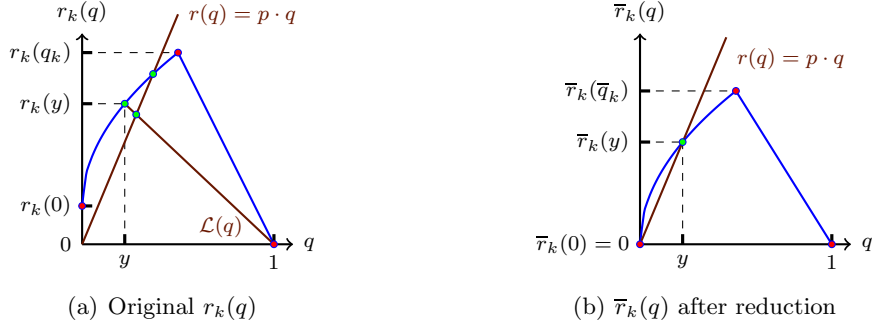

\paragraph{Case I (when $\vv_k < p < \infty$).}
In this case, there exists some $y \in [0, \qq_k) = [0, q_k)$ such that
\begin{equation}
\label{eq:lem:p/(p+1)-3.1}
p \cdot y = \rr_k(y) = r_k(y) - r_k(0),
\end{equation}
which is shown in Figure~\ref{fig:lem:p/(p+1)-3.2}. In Figure~\ref{fig:lem:p/(p+1)-3.1}, line segment $\mathcal{L}(q)$ has endpoints $\big(y, r_k(y)\big)$ and $(1, 0)$. Clearly, $\mathcal{L}(q) \eqdef \frac{r_k(y)}{1 - y} \cdot (1 - q)$ for all $q \in [y, 1]$, corresponding to CDF
\[
F_{\mathcal{L}}(p') \eqdef \frac{(1 - y) \cdot p'}{(1 - y) \cdot p' + r_k(y)} \quad\quad \forall p' \in \left[0, \frac{r_k(y)}{y}\right].
\]
We know\footnote{Particularly, when $y = 0$, premise $r_k(0) > 0$ imposes $\frac{r_k(y)}{y} = \infty$, and thus $p \leq \frac{r_k(y)}{y}$ still holds.} $p \in \left[0, \frac{r_k(y)}{y}\right]$ and $r_k(q) \geq \mathcal{L}(q)$ for all $q \in [y, 1]$, since $r_k(q)$ is concave on $[y, 1] \subset [0, 1]$. As a consequence, $F_k(p') \leq F_{\mathcal{L}}(p')$ for all $p' \in \left[0, \frac{r_k(y)}{y}\right]$, and specifically,
\begin{equation}
\label{eq:lem:p/(p+1)-3.2}
F_k(p) \leq F_{\mathcal{L}}(p) = \frac{(1 - y) \cdot p}{(1 - y) \cdot p + r_k(y)}.
\end{equation}
Since $\FF_0(p) = \frac{p}{p + \rr_0(0)} = \frac{p}{p + r_0(0) + r_k(0)}$, $\FF_k(p) = 1 - y$ and $F_0(p) = \frac{p}{p + r_0(0)}$, our claim follows as
\[
\begin{aligned}
\left(\FF_0(p) \cdot \FF_k(p)\right)^{-1}
= & \left(1 + \frac{r_0(0) + r_k(0)}{p}\right) \cdot \frac{1}{1 - y} \\
\leq & \left(1 + \frac{r_0(0)}{p}\right) \cdot \left(1 + \frac{r_k(0)}{p}\right) \cdot \frac{1}{1 - y} \\
\overset{(\ref{eq:lem:p/(p+1)-3.1})}{=} & \left(1 + \frac{r_0(0)}{p}\right) \cdot \left(1 + \frac{r_k(y)}{1 - y} \cdot \frac{1}{p}\right)
\overset{(\ref{eq:lem:p/(p+1)-3.2})}{\leq} \big(F_0(p) \cdot F_k(p)\big)^{-1}.
\end{aligned}
\]

\paragraph{Case II (when $1 < p \leq \vv_k$).}
In this case, observe that $p \leq \vv_k = \frac{r_k(q_k) - r_k(0)}{q_k} \leq v_k = \frac{r_k(q_k)}{q_k} < \infty$, all involved CDF's admit explicit formulas:
\begin{equation}
\label{eq:lem:p/(p+1)-3.3}
\FF_0(p) = \frac{p}{p + \rr_0(0)} \quad\quad \FF_k(p) = \frac{p}{p + \frac{\rr_k(\qq_k)}{1 - \qq_k}} \quad\quad F_0(p) = \frac{p}{p + r_0(0)} \quad\quad F_k(p) = \frac{p}{p + \frac{r_k(q_k)}{1 - q_k}}.
\end{equation}
Due to premise $\partial_+ r_k(0) \leq 1 + r_0(0)$, and since $r_k(q)$ is concave,
\begin{equation}
\label{eq:lem:p/(p+1)-3.4}
\frac{r_k(q_k) - r_k(0)}{q_k} \leq \partial_+ r_k(0) \leq 1 + r_0(0).
\end{equation}
Recall that $\rr_0(0) = r_0(0) + r_k(0)$, $\rr_k(\qq_k) = r_k(q_k) - r_k(0)$ and $\qq_k = q_k$. Put everything together,
\[
\begin{aligned}
& \left(\FF_0(p) \cdot \FF_k(p)\right)^{-1} - \big(F_0(p) \cdot F_k(p)\big)^{-1} \\
\overset{(\ref{eq:lem:p/(p+1)-3.3})}{=} & \left(1 + \frac{r_0(0) + r_k(0)}{p}\right) \cdot \left(1 + \frac{r_k(q_k) - r_k(0)}{1 - q_k} \cdot \frac{1}{p}\right) - \left(1 + \frac{r_0(0)}{p}\right) \cdot \left(1 + \frac{r_k(q_k)}{1 - q_k} \cdot \frac{1}{p}\right) \\
= & -\left(p + \frac{r_0(0)}{q_k} - \frac{r_k(q_k) - r_k(0)}{q_k}\right) \cdot \frac{r_k(0)}{1 - q_k} \cdot \frac{q_k}{p^2} \\
\overset{(\ref{eq:lem:p/(p+1)-3.4})}{\leq} & -\left(p + \frac{r_0(0)}{q_k} - 1 - r_0(0)\right) \cdot \frac{r_k(0)}{1 - q_k} \cdot \frac{q_k}{p^2} \leq 0,
\end{aligned}
\]
where the last inequality follows as $p > 1$ and $q_k \leq 1$. This completes the proof of the lemma.
\end{proof}

\subsection{Proofs of Lemma~\ref{lem:ap:increase} and Lemma~\ref{lem:ap_infinity}}
\label{subapp:upper1-p/(p+1)-1}

\fcolorbox{white}{lightgray}{\begin{minipage}{\textwidth}
In a worst-case instance $\{F_i\}_{i = 0}^n$, w.l.o.g.
\begin{itemize}
\item With some pre-fixed $r_0(0) \geq 0$, $r_0(q) = r_0(0) \cdot (1 - q)$ for all $q \in [0, 1]$;
\item For each $i \in [n]$, $q_i > 0$, $\partial_+ r_i(q) > 1$ for all $q \in [0, q_i)$, $v_i = \frac{r_i(q_i)}{q_i} > 1$.
\end{itemize}
\end{minipage}}

\paragraph{[Lemma~\ref{lem:ap:increase}].}
\emph{$\ap\big(p, \{F_i\}_{i = 0}^n\big)$ is increasing, when $1 < p < \min \big\{v_i: \,\, i \in [n]\big\}$.}

\begin{proof}
When $1 < p < \min \big\{v_i: \,\, i \in [n]\big\}$, $\ap\big(p, \{F_i\}_{i = 0}^n\big) = p \cdot \left\{1 - \prod\limits_{i = 0}^n \left[\frac{(1 - q_i) \cdot p}{(1 - q_i) \cdot p + r_i(q_i)}\right]\right\}$. Therefore, in this range, we can settle the lemma by observing that
\[
\begin{aligned}
\ap'\big(p, \{F_i\}_{i = 0}^n\big)
= & 1 - \prod\limits_{i = 0}^n \left[\frac{(1 - q_i) \cdot p}{(1 - q_i) \cdot p + r_i(q_i)}\right] \cdot \left[1 + \sum\limits_{i = 0}^n \frac{r_i(q_i)}{(1 - q_i) \cdot p + r_i(q_i)}\right] \\
\geq & 1 - \prod\limits_{i = 0}^n \left[\frac{(1 - q_i) \cdot p}{(1 - q_i) \cdot p + r_i(q_i)}\right] \cdot \left[1 + \sum\limits_{i = 0}^n \frac{r_i(q_i)}{(1 - q_i) \cdot p}\right] \\
\overset{(\uplus)}{\geq} & 1 - \prod\limits_{i = 0}^n \left[\frac{(1 - q_i) \cdot p}{(1 - q_i) \cdot p + r_i(q_i)}\right] \cdot \prod\limits_{i = 0}^n \left[1 + \frac{r_i(q_i)}{(1 - q_i) \cdot p}\right] = 0, \\
\end{aligned}
\]
where $(\uplus)$ follows as $1 + \sum\limits_{i = 0}^n z_i \leq \prod\limits_{i = 0}^n (1 + z_i)$ when $z_i \geq 0$ for each $i \in \{0\} \bigcup [n]$.
\end{proof}

\paragraph{[Lemma~\ref{lem:ap_infinity}].}
\emph{$\ap\big(\infty, \{F_i\}_{i = 0}^n\big) = \sum\limits_{i = 0}^n r_i(0)$.}

\begin{proof}
Recall Section~\ref{subsec:prelim-reg-vv}, distribution $F_i$ is regular iff its revenue-quantile curve $r_i(q)$ is continuous and concave on $q \in [0, 1]$. Such continuity and concavity promise that $\lim\limits_{q \rightarrow 0^+} r_i(q) = r_0(0) \in [0, \infty)$. Recall the definition of revenue-quantile curve in Section~\ref{subsec:prelim-rq-curve}, for each $i \in \{0\} \cup [n]$,
\[
\lim\limits_{p \rightarrow \infty} p \cdot \big(1 - F_i(p)\big)
= \lim\limits_{p \rightarrow \infty} r_i\big(1 - F_i(p)\big)
= \lim\limits_{q \rightarrow 0^+} r_i(q) = r_i(0),
\]
which means $F_i(p) = 1 - \frac{r_i(0)}{p} + o\left(\frac{1}{p}\right)$ for sufficiently large $p$. Therefore, with pre-fixed $n \in \mathbb{N}$ and sufficiently large $p$, we must have
\[
\ap\big(p, \{F_i\}_{i = 0}^n\big) = p \cdot \left[1 - \prod\limits_{i = 0}^n \left(1 - \frac{r_i(0)}{p} + o\left(\frac{1}{p}\right)\right)\right] = \sum\limits_{i = 0}^n r_i(0) + o\left(\frac{1}{p}\right).
\]
Letting $p$ approach to $\infty$, we complete the proof of the lemma.
\end{proof}

\subsection{Proof of Lemma~\ref{lem:p/(p+1)-4}}
\label{subapp:upper1-p/(p+1)-4}

\fcolorbox{white}{lightgray}{\begin{minipage}{\textwidth}
In a worst-case instance $\{F_i\}_{i = 0}^n$, w.l.o.g.
\begin{itemize}
\item With some pre-fixed $r_0(0) \geq 0$, $r_0(q) = r_0(0) \cdot (1 - q)$ for all $q \in [0, 1]$;
\item For each $i \in [n]$, $q_i > 0$, $\partial_+ r_i(q) > 1$ for all $q \in [0, q_i)$, $v_i = \frac{r_i(q_i)}{q_i} > 1$.
\end{itemize}
\end{minipage}}

\paragraph{[Lemma~\ref{lem:p/(p+1)-4}].}
\emph{Given a feasible instance $\{F_i\}_{i = 0}^n$ of Program~(\ref{prog:0}) that $r_0(0) \geq \frac{1}{2}$, there exists another feasible instance $\{\FF_i\}_{i = 0}^n$ such that $\opt\big(\{\FF_i\}_{i = 0}^n\big) \geq \opt\big(\{F_i\}_{i = 0}^n\big)$, and}
\[
\rr_0(q) = 1 - q \quad\quad \forall q \in [0, 1] \quad\quad\quad\quad\quad\quad\quad\quad \rr_i(0) = 0 \quad\quad \forall i \in [n].
\]

\begin{proof}
Given a feasible instance $\{F_i\}_{i = 0}^n$ that $r_0(0) \geq \frac{1}{2}$, a desired instance $\{\FF_i\}_{i = 0}^n$ is constructed in Algorithm~\ref{alg0.3}. Notably, premise $r_0(0) \geq \frac{1}{2}$ is only used in \textbf{Case II.1} of \textbf{Analysis III}.

\paragraph{Analysis I: Reduction.}
It was explicitly shown in~\cite{AHNPY15} that $\sum\limits_{i = 0}^n r_i(q_i) \geq \opt\big(\{F_i\}_{i = 0}^n\big)$. Assuming w.l.o.g. $\opt\big(\{F_i\}_{i = 0}^n\big) > 1$, we have $r_0(0) + \sum\limits_{i = 1}^n r_i(q_i) > 1$. Due to premise $v_i > 1$ for each $i \in [n]$, definitely $\min\{v_i: \,\, i \in [n]\} > 1$. Since then,
\begin{itemize}
\item $\mathsf{ExAnte}(p) \eqdef r_0(0) + \sum\limits_{i \in [n]: v_i \geq p} r_i(q_i) + \sum\limits_{i \in [n]: v_i < p} r_i\big(1 - F_i(p)\big)$ is decreasing on $(0, \infty)$;
\item $\mathsf{ExAnte}\big(\min\{v_i: \,\, i \in [n]\}\big) = r_0(0) + \sum\limits_{i = 1}^n r_i(q_i) > 1$, ;
\item $\mathsf{ExAnte}(\infty) = \sum\limits_{i = 0}^n r_i(0) \overset{(\ast)}{=} \ap\big(\infty, \{F_i\}_{i = 0}^n\big) \leq 1$, where $(\ast)$ follows from Lemma~\ref{lem:ap_infinity}.
\item $\mathsf{ExAnte}(p)$ has at most $n$ points of discontinuity: For each $i \in [n]$, $F_i$ has at most one point of discontinuity (which must be $u_i$ if exists, see Section~\ref{subsec:prelim-reg-vv}).
\end{itemize}
As per these, we know (1)~parameter $\theta \eqdef \max \big\{p \in (1, \infty]: \,\, \mathsf{ExAnte}(p) \geq 1\big\}$ is well-defined; and (2)~let $S \eqdef \{i \in [n]: \,\, F_i \text{ has probability-mass at } \theta\}$, then
\begin{equation}
\label{eq:upper1.1}
\mathsf{ExAnte}(\theta) - \sum\limits_{i \in S: v_i \geq \theta} r_i(q_i) - \sum\limits_{i \in S: v_i < \theta} r_i\big(1 - F_i(\theta)\big) \leq 1.
\end{equation}
Afterwards, we would (1)~determine parameter $z_i \in [0, q_i]$ for each $i \in [n]$; then (2)~construct a desired instance $\{\FF_i\}_{i = 0}^n$ in term of revenue-quantile curves (in a fashion suggested by Figure~\ref{fig:lem:p/(p+1)-4.1}).

\begin{figure}[H]
\centering
\subfigure[Original $r_i(q)$]{
\begin{tikzpicture}[thick, smooth, domain = 0: 0.5, scale = 2.5]
\draw[->] (0, 0) -- (1.1, 0);
\draw[->] (0, 0) -- (0, 1.1);
\node[above] at (0, 1.1) {\footnotesize $r_i(q)$};
\node[right] at (1.1, 0) {\footnotesize $q$};
\node[left] at (0, 0) {\footnotesize $0$};
\draw[color = blue] plot (\x, {0.2 + 1.6 * sqrt(\x / 2)});

\draw[color = Sepia] (0, 0) -- (0.25, 1.2);
\node[right, color = Sepia] at (0.25, 1.2) {$r(q) = \theta \cdot q$};

\draw[very thick] (0.5, 0) -- (0.5, 1.25pt);
\node[below] at (0.5, 0) {\footnotesize $q_i$};
\draw[very thick] (0, 1) -- (1.25pt, 1);
\node[left] at (0, 1) {\footnotesize $r_i(q_i)$};

\draw[very thick] (0.125, 0) -- (0.125, 1.25pt);
\node[below] at (0.175, 0.01) {\footnotesize $z_i$};
\draw[very thick] (0, 0.6) -- (1.25pt, 0.6);
\node[left] at (0, 0.6) {\footnotesize $r_i(z_i)$};

\draw[very thick] (0, 0.2) -- (1.25pt, 0.2);
\node[left] at (0, 0.2) {\footnotesize $r_i(0)$};

\draw[very thick, color = blue] (0.5, 1) -- (1, 0);
\draw[thin, dashed] (0.5, 0) -- (0.5, 1);
\draw[thin, dashed] (0, 1) -- (0.5, 1);
\draw[thin, dashed] (0.125, 0) -- (0.125, 0.6);
\draw[thin, dashed] (0, 0.6) -- (0.125, 0.6);
\draw[thin, color = blue, fill = green] (0.5, 1) circle(0.5pt);
\draw[thin, color = blue, fill = green] (0.125, 0.6) circle(0.5pt);
\draw[very thick] (1, 0) -- (1, 1.25pt);
\node[below] at (1, 0) {\footnotesize $1$};
\end{tikzpicture}
\label{fig:lem:p/(p+1)-4.1.1}
}
\quad\quad\quad\quad\quad\quad
\subfigure[$\rr_i(q)$ after reduction]{
\begin{tikzpicture}[thick, smooth, domain = 0: 0.375, scale = 2.5]
\draw[->] (0, 0) -- (1.1, 0);
\draw[->] (0, 0) -- (0, 0.7);
\node[above] at (0, 0.7) {\footnotesize $\rr_i(q)$};
\node[right] at (1.1, 0) {\footnotesize $q$};
\node[left] at (0, 0) {\footnotesize $0$};
\draw[color = blue] plot (\x, {1.6 * sqrt(\x / 2 + 0.0625) - 0.4});

\draw[very thick] (0.375, 0) -- (0.375, 1.25pt);
\node[below] at (0.375, 0) {\footnotesize $\qq_i$};
\draw[very thick] (0, 0.4) -- (1.25pt, 0.4);
\node[left] at (0, 0.4) {\footnotesize $\rr(\qq_i)$};

\draw[very thick, color = blue] (0.375, 0.4) -- (1, 0);
\draw[thin, dashed] (0.375, 0) -- (0.375, 0.4);
\draw[thin, dashed] (0, 0.4) -- (0.375, 0.4);
\draw[thin, color = blue, fill = green] (0, 0) circle(0.5pt);
\draw[thin, color = blue, fill = green] (0.375, 0.4) circle(0.5pt);
\draw[very thick] (1, 0) -- (1, 1.25pt);
\node[below] at (1, 0) {\footnotesize $1$};
\end{tikzpicture}
\label{fig:lem:p/(p+1)-4.1.2}
}
\caption{Demonstration for Algorithm~\ref{alg0.3}, where $\qq_i = q_i - z_i$ and $\rr(\qq_i) = r_i(q_i) - r_i(z_i)$.}
\label{fig:lem:p/(p+1)-4.1}
\end{figure}
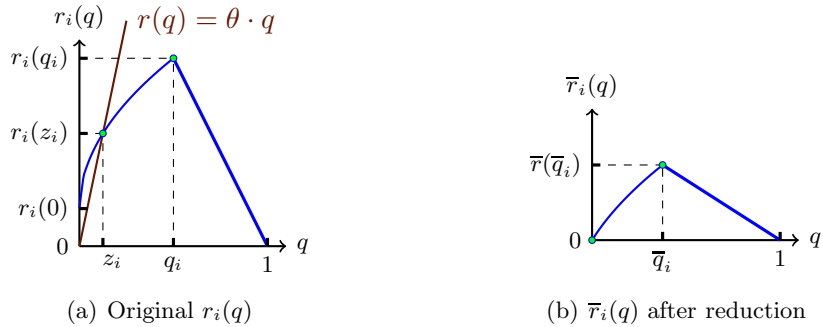

\begin{algorithm}[H]
\caption{(Reduction for Lemma~\ref{lem:p/(p+1)-4})}
\begin{algorithmic}[1]
\Require parameter $\theta$, indices $S$, feasible instance $\{r_i\}_{i = 0}^n$ that $r_0(0) \geq \frac{1}{2}$ and $\sum\limits_{i = 0}^n r_i(q_i) > 1$.
\Ensure parameters $\{z_i\}_{i = 1}^n$, desired instance $\{\rr_i\}_{i = 0}^n$, then $\{\FF_i\}_{i = 0}^n$ follows.
\Statex $------------------------------------------$
\State Let $\Delta_{\#} \eqdef 1 - r_0(0)$.
\Comment{\emph{Firstly, compute parameters $\{z_i\}_{i = 0}^n$.}}
\ForAll{$i \in [n] \setminus S$}
    \If{$v_i \geq \theta$} Define $z_i \eqdef q_i$.
    \label{alg0.3-z}
    \Else{} Define $z_i \eqdef \big(1 - F_i(\theta)\big) \leq q_i$.
    \EndIf
    \State Update: $\Delta_{\#} \leftarrow \big(\Delta_{\#} - r_i(z_i)\big)$
\EndFor
\Comment{\emph{$\Delta_{\#} \geq 0$ at the end of this loop, by inequality~(\ref{eq:upper1.1}).}}
\ForAll{$i \in S$}
    \Comment{\emph{Different orders of $i$'s in $S$ do not matter.}}
    \State Define $z_i \eqdef \min\left\{1 - F_i(\theta), \frac{\Delta_{\#}}{\theta}\right\} \leq q_i$.
    \Comment{\emph{Such $F_i$ has probability-mass at $u_i = \theta$.}}
    \State Update: $\Delta_{\#} \leftarrow \big(\Delta_{\#} - r_i(z_i)\big)$.
\EndFor
\Comment{\emph{$\Delta_{\#} = 0$ at the end of this loop, since $\mathsf{ExAnte}(\theta) \geq 1$.}}
\Statex $------------------------------------------$
\State Define $\rr_0(q) \eqdef 1 - q$ for all $q \in [0, 1]$.
\Comment{\emph{Secondly, construct desired $\{\rr_i\}_{i = 0}^n$.}}
\ForAll{$i \in [n]$}
    \State\label{alg0.3-reduct} Define $\qq_i \eqdef (q_i - z_i) \in [0, q_i]$ and $\rr_i(q) \eqdef
    \begin{cases}
    r_i(q + z_i) - r_i(z_i) & \forall q \in [0, \qq_i] \\
    \frac{r_i(q_i) - r_i(z_i)}{1 - \qq_i} \cdot (1 - q) & \forall q \in (\qq_i, 1]
    \end{cases}.$
\EndFor
\State \Return $\{z_i\}_{i = 1}^n$ and $\{\rr_i\}_{i = 0}^n$.
\end{algorithmic}
\label{alg0.3}
\end{algorithm}

\paragraph{Analysis II: Optimality.}
The facts below are central to justifying the optimality of $\{\FF_i\}_{i = 0}^n$. Here, the concavity of $r_i(q)$ ensures $\partial_+ r_i(q) \geq \partial_+ r_i(q + z_i) = \partial_+ \rr_i(q)$ for all $q \in (0, \qq_i)$. Since then, fact~\textbf{(c)} follows directly from Fact~\ref{fact1}.2 (namely how to convert a revenue-quantile curve to the corresponding virtual value CDF).
\vspace{5.5pt} \\
\fcolorbox{black}{white}{\begin{minipage}{\textwidth}
\begin{description}
\item [(a):] $r_0(0) + \sum\limits_{i = 1}^n r_i(z_i) = \rr_0(0) = 1$. This follows from definitions of $\theta$ and $z_i$'s;
\item [(b):] $\rr_i(0) = 0$ and $\rr_i(\qq_i) = r_i(q_i) - r_i(z_i)$ for all $i \in [n]$, due to line~(\ref{alg0.3-reduct}) of Algorithm~\ref{alg0.3};
\item [(c):] $1 \geq \DD_i(p) \geq D_i(p)$ for all $p \in (0, \infty)$ and $i \in [n]$.
\end{description}
\end{minipage}}
\vspace{5.5pt}

After our reduction, the objective value of Program~(\ref{prog:0}) can never decrease: $\opt\big(\{\FF_i\}_{i = 0}^n\big) = 2 + \displaystyle{\int_1^{\infty}} \left(1 - \prod\limits_{i = 1}^n \DD_i(x)\right) dx \geq 1 + \sum\limits_{i = 0}^{n} r_i(0) + \displaystyle{\int_1^{\infty}} \left(1 - \prod\limits_{i = 1}^n D_i(x)\right) dx = \opt\big(\{F_i\}_{i = 0}^n\big)$. After being rearranged, this inequality is equivalent to
\begin{equation}
\label{eq:upper1.3}
\begin{aligned}
1 - \sum\limits_{i = 0}^{n} r_i(0)
\geq & \displaystyle{\int_1^{\infty}} \left(\prod\limits_{i = 1}^n \DD_i(x) - \prod\limits_{i = 1}^n D_i(x)\right) dx \\
= & \sum\limits_{k = 1}^n \displaystyle{\int_1^{\infty}} \prod\limits_{i = 1}^{k - 1} \DD_i(x) \cdot \prod\limits_{i = k + 1}^n D_i(x) \cdot \big(\DD_k(x) - D_k(x)\big) dx.
\end{aligned}
\end{equation}
Based on the above facts, we can certify inequality~(\ref{eq:upper1.3}) as follows:
\begin{flalign*}
\text{RHS of~(\ref{eq:upper1.3})}
\overset{(\textbf{c})}{\leq} & \sum\limits_{k = 1}^n \displaystyle{\int_1^{\infty}} \big(\DD_k(x) - D_k(x)\big) dx
\overset{(\textbf{c})}{\leq} \sum\limits_{k = 1}^n \displaystyle{\int_0^{\infty}} \big(\DD_k(x) - D_k(x)\big) dx \\
= & \sum\limits_{k = 1}^n \displaystyle{\int_0^{\infty}} \big(1 - D_k(x)\big) dx - \sum\limits_{k = 1}^n \displaystyle{\int_0^{\infty}} \big(1 - \DD_k(x)\big) dx \\
= & \sum\limits_{k = 1}^n \big(r_k(q_k) - r_k(0)\big) - \sum\limits_{k = 1}^n \big(\rr_k(\qq_k) - \rr_k(0)\big) & \text{(By Main Lemma~\ref{lem:virtual_value})} \\
\overset{(\textbf{b})}{=} & \sum\limits_{k = 1}^n \big(r_k(z_k) - r_k(0)\big)
\overset{(\textbf{a})}{=} 1 - \sum\limits_{k = 0}^n r_k(0) = \text{LHS of~(\ref{eq:upper1.3})}.
\end{flalign*}

\paragraph{Analysis III: Feasibility.}
For output $\{\FF_i\}_{i = 0}^n$ of Algorithm~\ref{alg0.3}, the feasibility to constraint~(\ref{cstr:value}) trivially follows from Lemma~\ref{lem:shape}.1: For each $i \in [n]$, under premise $\partial_+ r_i(q) > 1$ for all $q \in [0, q_i)$, it follows from line~(\ref{alg0.3-reduct}) that $\partial_+ \rr_i(q) = \partial_+ r_i(q + z_i) > 1$ for all $q \in [0, \qq_i) = [0, q_i - z_i)$. Clearly\footnote{Particularly, $\FF_i$ ``vanishes'' when $\qq_i = 0$, i.e., when $z_i = q_i$. If so, $\vv_i \equiv \PPhi_i(\vv_i^+)$ can be set as any finite value.},
\[
\vv_i = \frac{\rr_i(\qq_i)}{\qq_i} > 1 \quad\quad\quad\quad\quad\quad \PPhi_i(\vv_i^+) = \lim\limits_{q \rightarrow \qq_i^-} \partial_+ \rr_i(q) \geq 1.
\]
On the other hand, the feasibility to constraint~(\ref{cstr:ap0}) is specified in the following two cases. For this, the facts mentioned in \textbf{Analysis II} are also useful.

\paragraph{Case I (when $p > \theta$):}
In this case, for each $i \in [n]$, it is easy to see $\FF_i(p) = 1$. Consequently, $\ap\big(p, \{\FF_i\}_{i = 0}^n\big) = \ap\big(p, \{\FF_0\}\big) = p \cdot \left(1 - \frac{p}{p + 1}\right) = \frac{p}{p + 1} \leq 1$.

\paragraph{Case II (when $1 < p \leq \theta$):}
In this case, for each $j \in \{0\} \bigcup [n]$, let $s_j \eqdef r_0(0) + \sum\limits_{i = 1}^j r_i(z_i)$ for notational simplicity. For all $j \in [n]$, we claim
\begin{equation}
\label{eq:lem:p/(p+1)-4.0}
\frac{p}{p + s_j} \cdot \prod\limits_{i = 1}^j \FF_i(p) \cdot \prod\limits_{i = j + 1}^n F_i(p) \geq \frac{p}{p + s_{j - 1}} \cdot \prod\limits_{i = 1}^{j - 1} \FF_i(p) \cdot \prod\limits_{i = j}^n F_i(p) \quad\quad \forall p \in (1, \theta].
\end{equation}
If so, since $s_0 = r_0(0)$ and $s_n = r_0(0) + \sum\limits_{i = 1}^n r_i(z_i) \overset{(\textbf{a})}{=} 1 = \rr_0(0)$, this chain of inequalities implies
\[
\prod\limits_{i = 0}^n \FF_i(p) \xlongequal{s_n = \rr_0(0)} \frac{p}{p + s_n} \cdot \prod\limits_{i = 1}^n \FF_i(p) \overset{(\ref{eq:lem:p/(p+1)-4.0})}{\geq} \frac{p}{p + s_0} \cdot \prod\limits_{i = 1}^n F_i(p) \xlongequal{s_0 = r_0(0)} \prod\limits_{i = 0}^n F_i(p) \quad\quad \forall p \in (1, \theta],
\]
which guarantees $\ap\big(p, \{\FF_i\}_{i = 0}^n\big) \leq \ap\big(p, \{F_i\}_{i = 0}^n\big) \leq 1$ for all $p \in (1, \theta]$.

Given $k \in [n]$, under induction hypothesis that inequality~(\ref{eq:lem:p/(p+1)-4.0}) holds for all $j \in [k - 1]$, we shall attest that so it is when $j = k$. Rearrange inequality~(\ref{eq:lem:p/(p+1)-4.0}), it suffices to show
\begin{equation}
\label{eq:lem:p/(p+1)-4.1}
\left(1 + \frac{s_{k - 1} + r_k(z_k)}{p}\right) \cdot \big(\FF_k(p)\big)^{-1} \leq \left(1 + \frac{s_{k - 1}}{p}\right) \cdot \big(F_k(p)\big)^{-1} \quad\quad \forall p \in (1, \theta].
\end{equation}
For each $i \in [n]$, let $\vv_i \eqdef \frac{\rr_i(\qq_i)}{\qq_i} > 1$. Define $U \eqdef \big\{i \in [n]: \,\, (v_i \leq \theta) \bigwedge (\vv_i \geq p)\big\}$. Depending on whether $k \in U$ or not, we would deal with inequality~(\ref{eq:lem:p/(p+1)-4.1}) in different manners.

\paragraph{Case II.1 (when $1 < p \leq \theta$ and $k \in U$):}
In this case, since $r_k(q)$ is concave on $[0, 1]$ and $k \in U$,
\[
v_k = \frac{r_k(q_k)}{q_k} \geq \frac{r_k(q_k) - r_k(z_k)}{q_k - z_k} \overset{(\uplus)}{=} \vv_k \geq p,
\]
where $(\uplus)$ follows from line~(\ref{alg0.3-reduct}) of Algorithm~\ref{alg0.3}. Clearly, $\FF_k(p)$ and $F_k(p)$ admit explicit formulas:
\begin{equation}
\label{eq:lem:p/(p+1)-4.2}
\FF_k(p) = \frac{(1 - \qq_k) \cdot p}{(1 - \qq_k) \cdot p + \rr_k(\qq_k)} \quad\quad\quad\quad F_k(p) = \frac{(1 - q_k) \cdot p}{(1 - q_k) \cdot p + r_k(q_k)}.
\end{equation}
Since then, we can check inequality~(\ref{eq:lem:p/(p+1)-4.1}) as follows:
\begin{flalign*}
& \text{LHS of~(\ref{eq:lem:p/(p+1)-4.1})} - \text{RHS of~(\ref{eq:lem:p/(p+1)-4.1})} \\
\overset{(\ref{eq:lem:p/(p+1)-4.2})}{=} & \left(1 + \frac{s_{k - 1} + r_k(z_k)}{p}\right) \cdot \left(1 + \frac{\rr_k(\qq_k)}{1 - \qq_k} \cdot \frac{1}{p}\right) - \left(1 + \frac{s_{k - 1}}{p}\right) \cdot \left(1 + \frac{r_k(q_k)}{1 - q_k} \cdot \frac{1}{p}\right) \\
& \text{(By line~(\ref{alg0.3-reduct}) of Algorithm~\ref{alg0.3}, $\rr_k(\qq_k) = r_k(q_k) - r_k(z_k)$ and $\qq_k = q_k - z_k \leq q_k$)} \\
\leq & \left(1 + \frac{s_{k - 1} + r_k(z_k)}{p}\right) \cdot \left(1 + \frac{r_k(q_k) - r_k(z_k)}{1 - q_k} \cdot \frac{1}{p}\right) - \left(1 + \frac{s_{k - 1}}{p}\right) \cdot \left(1 + \frac{r_k(q_k)}{1 - q_k} \cdot \frac{1}{p}\right) \\
= & -\left(p - \frac{r_k(q_k) - s_{k - 1}}{q_k} + \frac{r_k(z_k)}{q_k}\right) \cdot \frac{q_k}{1 - q_k} \cdot \frac{r_k(z_k)}{p^2} \\
\overset{(\diamond)}{\leq} & -\left(p - 1 + \frac{r_k(z_k)}{q_k}\right) \cdot \frac{q_k}{1 - q_k} \cdot \frac{r_k(z_k)}{p^2} \leq 0, & \text{(As $p > 1$)}
\end{flalign*}
where in $(\diamond)$ we apply Lemma~\ref{lem:p/(p+1)-4.1} (whose premise $v_k \leq \theta$ follows as $k \in U$).
\vspace{5.5pt} \\
\fbox{\begin{minipage}{\textwidth}
\begin{lemma}
\label{lem:p/(p+1)-4.1}
Suppose $v_k \leq \theta$, then $\frac{r_k(q_k) - s_{k - 1}}{q_k} = v_k - \frac{s_{k - 1}}{q_k} \leq 1$.
\end{lemma}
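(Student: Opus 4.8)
\textbf{Proof plan for Lemma~\ref{lem:p/(p+1)-4.1}.}

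The plan is to unpack the definition of $\theta$ and combine it with the definitions of the partial sums $s_j$ and the parameters $z_i$ produced by Algorithm~\ref{alg0.3}. Recall that $s_{k-1} = r_0(0) + \sum_{i=1}^{k-1} r_i(z_i)$, and that $\theta = \max\{p \in (1,\infty]: \mathsf{ExAnte}(p) \geq 1\}$. First I would observe that the premise $v_k \leq \theta$ places index $k$ in the regime where $\mathsf{ExAnte}(\theta)$ has already "used up" the full contribution $r_k(q_k)$ of the $k$-th distribution (since $v_k \leq \theta$ means $F_k(\theta) = 1$, or at least $1 - F_k(\theta) \leq q_k$ with the monopoly quantile already counted at $p = \theta$). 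The key inequality to extract from $\mathsf{ExAnte}(\theta) \geq 1$ is that the sum of $r_0(0)$ together with the quantile-revenues of the already-processed distributions (those indexed $1,\dots,k-1$ and the full $r_k(q_k)$) is at least $1$: schematically, $s_{k-1} + r_k(q_k) \geq \mathsf{ExAnte}(\theta)|_{\text{truncated at }k} \geq$ (a quantity $\geq 1$). Actually the cleaner route is to use that, along the loop in Algorithm~\ref{alg0.3}, the running quantity $\Delta_\# = 1 - r_0(0) - \sum_{\text{processed }i} r_i(z_i)$ stays nonnegative, and that $z_k \leq q_k$ with $r_k(z_k) \leq r_k(q_k)$.

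Concretely, the second step is the following chain. From $\mathsf{ExAnte}(\theta) \geq 1$ and the fact that every term $r_i(\cdot)$ is nonnegative, dropping all terms with index $> k$ and using $r_k(q_k) \geq r_k\big(1 - F_k(\theta)\big)$ (monotonicity of $r_k$ on $[0,q_k]$ together with $1 - F_k(\theta) \leq q_k$ when $v_k \leq \theta$) gives
\[
r_0(0) + \sum_{i=1}^{k-1} r_i(q_i) + r_k(q_k) \;\geq\; 1.
\]
This is almost what I want, but the partial sum appearing in the lemma is $s_{k-1} = r_0(0) + \sum_{i=1}^{k-1} r_i(z_i)$, with $r_i(z_i)$ rather than $r_i(q_i)$. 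So the genuine point is different: I want $v_k - s_{k-1}/q_k \leq 1$, i.e. $r_k(q_k)/q_k - s_{k-1}/q_k \leq 1$, i.e. $r_k(q_k) - s_{k-1} \leq q_k$, i.e. $s_{k-1} \geq r_k(q_k) - q_k$. So it suffices to show $s_{k-1} = r_0(0) + \sum_{i=1}^{k-1} r_i(z_i) \geq r_k(q_k) - q_k$. Now use concavity of $r_k$: the secant from $(0, r_k(0))$ — actually from $\big(q_k, r_k(q_k)\big)$ back toward $q=1$ has slope $-r_k(q_k)/(1-q_k) \leq 0$, and more usefully $r_k(q_k) \leq r_k(0) + q_k \cdot \partial_+ r_k(0)$, but by Main Lemma~\ref{lem:p/(p+1)} we are in a worst-case instance where $r_k(0) = 0$, so $r_k(q_k) \leq q_k \cdot \partial_+ r_k(0)$. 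Hmm, that still carries $\partial_+ r_k(0)$, which need not be $\leq 2$.

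The cleaner and intended argument, I expect, is purely via $\theta$: the definition of $\theta$ as a \emph{maximum} forces $\mathsf{ExAnte}(p) < 1$ for $p > \theta$, and right-continuity/left-limit considerations at $\theta$ give an \emph{upper} bound too. Precisely, for $k$ with $v_k \leq \theta$, monotonicity of $\mathsf{ExAnte}$ and the choice of $z_i = 1 - F_i(\theta)$ (or $z_i = q_i$) make $r_0(0) + \sum_{i=1}^{n} r_i(z_i)$ telescope against $\mathsf{ExAnte}(\theta)$; combined with fact~\textbf{(a)} from Algorithm~\ref{alg0.3} that $r_0(0) + \sum_{i=1}^n r_i(z_i) = 1$, one gets $s_{k-1} = 1 - \sum_{i=k}^n r_i(z_i) \geq 1 - \big(\text{contribution of indices} \geq k \text{ to } \mathsf{ExAnte}(\theta)\big)$. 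Since that contribution is at most $r_k(q_k) + (\text{the rest})$ and the total is exactly $1 + (\text{stuff} \geq 0)$... this needs care. I would therefore: (i) write $\mathsf{ExAnte}(\theta) = s_{k-1} + r_k(1-F_k(\theta)) + \big(\sum_{i > k} r_i(1-F_i(\theta))\big)$ after substituting the $z_i$ definitions for $i < k$; (ii) use $\mathsf{ExAnte}(\theta) \geq 1$ together with the crude bound $r_k(1 - F_k(\theta)) \leq r_k(q_k)$ is the \emph{wrong} direction, so instead I use that $\mathsf{ExAnte}(\theta^+) \leq 1$ (strict maximality), which after accounting for the at-most-$n$ discontinuities and the probability-mass bookkeeping in inequality~\eqref{eq:upper1.1} gives $s_{k-1} + r_k(q_k) - q_k \leq s_{k-1} + \big(r_k(q_k) \text{ minus the mass term}\big) \leq 1 \leq$ something. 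The main obstacle, and where I would spend the bulk of the effort, is exactly this: correctly tracking the probability-mass at $\theta$ (the set $S$) and the $\leq$ versus $\geq$ directions at the discontinuity point $\theta$, so that the premise $v_k \leq \theta$ converts into the clean bound $r_k(q_k) - s_{k-1} \leq q_k$. Everything after that is a one-line rearrangement: $v_k - s_{k-1}/q_k = (r_k(q_k) - s_{k-1})/q_k \leq q_k/q_k = 1$.
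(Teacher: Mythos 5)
There is a genuine gap here: you correctly rearrange the target to $s_{k-1} \geq r_k(q_k) - q_k = q_k\,(v_k - 1)$, but you never actually establish that inequality, and the route you sketch cannot deliver it. The $\mathsf{ExAnte}$ constraint (and the maximality of $\theta$, with or without careful bookkeeping of the probability-mass set $S$) only controls \emph{sums of revenues} $r_i(\cdot)$; it gives no leverage on the individual relation between $q_k$, $v_k$ and $s_{k-1}$ that the lemma asserts. Indeed the crudest consequence of feasibility, $v_k q_k \leq 1$, only yields $q_k(v_k-1) \leq 1 - 1/v_k$, which can be arbitrarily close to $1$, while $s_{k-1}$ can be as small as $\tfrac12$; so something sharper is genuinely required, and you do not supply it. You also acknowledge this yourself ("this needs care", "the main obstacle\dots").

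The paper's proof uses two ingredients that are absent from your proposal. First, it invokes the anonymous-pricing feasibility constraint~(\ref{cstr:ap0}) at the price $p = v_k$ (legitimate precisely because $v_k \leq \theta$), combined with the partial chain of inequalities~(\ref{eq:lem:p/(p+1)-4.0}) already established by induction for $j \leq k-1$, which gives $\frac{p}{p+s_{k-1}}\,F_k(p) \geq \prod_{i=0}^n F_i(p)$ and hence
\[
v_k \cdot \left[1 - \frac{v_k}{v_k + s_{k-1}}\cdot(1 - q_k)\right] \leq 1,
\quad\text{i.e.}\quad
q_k \leq \frac{1 - s_{k-1}}{v_k} + \frac{s_{k-1}}{v_k^2}.
\]
Second, it uses the standing hypothesis $r_0(0) \geq \tfrac12$ of Lemma~\ref{lem:p/(p+1)-4}, which forces $s_{k-1} \geq \tfrac12$ and hence $\frac{1-s_{k-1}}{s_{k-1}} \leq 1$; plugging the displayed bound on $q_k$ into $v_k - s_{k-1}/q_k$ then gives $v_k - \frac{v_k^2}{v_k+1} = \frac{v_k}{v_k+1} \leq 1$. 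You mention neither the feasibility-at-$v_k$ step nor the role of $r_0(0) \geq \tfrac12$, and without them the lemma does not follow.
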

\end{minipage}}

\begin{proof}[Proof of Lemma~\ref{lem:p/(p+1)-4.1}]
Recall premise $r_0(0) \geq \frac{1}{2}$ in Lemma~\ref{lem:p/(p+1)-4}, and facts mentioned in \textbf{Analysis II}, $\frac{1}{2} \leq r_0(0) = s_0 \leq \cdots \leq s_{k - 1} \leq \cdots \leq s_n = r_0(0) + \sum\limits_{i = 1}^n r_i(z_i) \overset{(\textbf{a})}{=} 1$, which means
\begin{equation}
\label{eq:lem:p/(p+1)-4.3}
0 \leq \frac{1 - s_{k - 1}}{s_{k - 1}} \leq 1.
\end{equation}
Given $p \in (1, \theta]$, under induction hypothesis that inequality~(\ref{eq:lem:p/(p+1)-4.0}) holds for all $j \in [k - 1]$, certainly $\frac{p}{p + s_{k - 1}} \cdot F_{k - 1}(p) \geq \frac{p}{p + s_{k - 1}} \cdot \prod\limits_{i = 1}^{k - 1} \FF_i(p) \cdot \prod\limits_{i = k}^n F_i(p) \overset{(\ref{eq:lem:p/(p+1)-4.0})}{\geq} \frac{p}{p + r_0(0)} \cdot \prod\limits_{i = 1}^n F_i(p) \xlongequal{s_0 = r_0(0)} \prod\limits_{i = 0}^n F_i(p)$. As per this, and under our premise that $\{F_i\}_{i = 0}^n$ is feasible to constraint~(\ref{cstr:ap0}), definitely
\[
p \cdot \left(1 - \frac{p}{p + s_{k - 1}} \cdot F_{k - 1}(p)\right) \leq \ap\big(p, \{F_i\}_{i = 0}^n\big) \leq 1,
\]
for all $p \in (1, \theta]$. Under premise $\theta \geq v_k > 1$, assigning $p \leftarrow v_k$ gives $v_k \cdot \left[1 - \frac{v_k}{v_k + s_{k - 1}} \cdot (1 - q_k)\right] \leq 1$, or equivalently,
\begin{equation}
\label{eq:lem:p/(p+1)-4.4}
q_k \leq \frac{1 - s_{k - 1}}{v_k} + \frac{s_{k - 1}}{v_k^2}.
\end{equation}
Accordingly, $v_k - \frac{s_{k - 1}}{q_k} \overset{(\ref{eq:lem:p/(p+1)-4.4})}{\leq} v_k - v_k^2\left/\left(\frac{1 - s_{k - 1}}{s_{k - 1}} \cdot v_k + 1\right)\right. \overset{(\ref{eq:lem:p/(p+1)-4.3})}{\leq} v_k - \frac{v_k^2}{v_k + 1} = \frac{v_k}{v_k + 1} \leq 1$. This completes the proof of Lemma~\ref{lem:p/(p+1)-4.1}.
\end{proof}

\paragraph{Case II.2 (when $1 < p \leq \theta$ and $k \in [n] \setminus U$):}
Due to line~(\ref{alg0.3-reduct}) of Algorithm~\ref{alg0.3}, and as Figure~\ref{fig:lem:p/(p+1)-4.2.2} shows, there exists $y \in [0, \qq_k) = [0, q_k - z_k)$ such that
\begin{equation}
\label{eq:lem:p/(p+1)-4.5}
p \cdot y = \rr_k(y) = r_k(y + z_k) - r_k(z_k).
\end{equation}
No matter whether $y = 0$ or not, we always have $p \in \left[0, \frac{r_k(y + z_k)}{y + z_k}\right]$:
\begin{itemize}
\item Suppose $y = 0$, then $p \leq \theta \overset{(\vee)}{\leq} \frac{r_k(z_k)}{z_k} = \frac{r_k(y + z_k)}{y + z_k}$, where $(\vee)$ follows from the definition of $z_k$;
\item Otherwise, $p \overset{(\ref{eq:lem:p/(p+1)-4.5})}{=} \frac{r_k(y + z_k) - r_k(z_k)}{y + z_k - z_k} \leq \frac{r_k(y + z_k)}{y + z_k}$, where the inequality follows as $r_k(q)$ is concave.
\end{itemize}

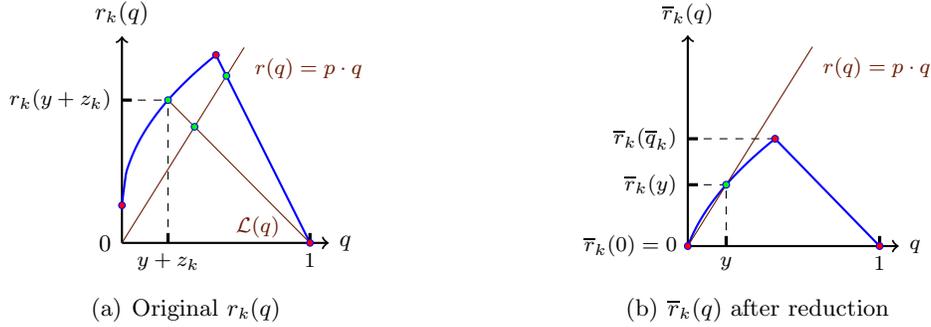
\begin{figure}[H]
\centering
\subfigure[Original $r_k(q)$]{
\begin{tikzpicture}[thick, smooth, domain = 0: 0.5, scale = 2.5]
\draw[->] (0, 0) -- (1.1, 0);
\draw[->] (0, 0) -- (0, 1.1);

\node[above] at (0, 1.1) {\footnotesize $r_k(q)$};
\node[right] at (1.1, 0) {\footnotesize $q$};
\node[left] at (0, 0) {\footnotesize $0$};
\draw[color = blue] plot (\x, {0.2 + 1.6 * sqrt(\x / 2)});

\draw[thin, dashed] (0.245, 0) -- (0.245, 0.76);
\draw[very thick] (0.245, 0) -- (0.245, 1.5pt);
\draw[thin, dashed] (0, 0.76) -- (0.245, 0.76);
\draw[very thick] (0, 0.76) -- (1.5pt, 0.76);
\node[below] at (0.245, 0) {\scriptsize $y + z_k$};
\node[left] at (0, 0.76) {\scriptsize $r_k(y + z_k)$};

\draw[color = blue] (0.5, 1) -- (1, 0);
\draw[thin, color = Sepia] (0, 0) -- (0.65, 1.04);
\draw[thin, color = Sepia] (0.245, 0.76) -- (1, 0);
\node[anchor = 60, color = Sepia] at (0.7879, 0.2) {\scriptsize $\mathcal{L}(q)$};
\draw[very thick] (1, 0) -- (1, 1.5pt);
\node[below] at (1, 0) {\scriptsize $1$};
\node[anchor = north west, color = Sepia] at (0.65, 1.04) {\scriptsize $r(q) = p \cdot q$};

\draw[thin, color = blue, fill = green] (0.245, 0.76) circle(0.5pt);
\draw[thin, color = blue, fill = green] (0.5556, 0.8889) circle(0.5pt);
\draw[thin, color = blue, fill = green] (0.3862, 0.6179) circle(0.5pt);
\draw[thin, color = blue, fill = red] (1, 0) circle(0.5pt);
\draw[thin, color = blue, fill = red] (0.5, 1) circle(0.5pt);
\draw[thin, color = blue, fill = red] (0, 0.2) circle(0.5pt);
\end{tikzpicture}
\label{fig:lem:p/(p+1)-4.2.1}
}
\quad\quad\quad\quad\quad\quad
\subfigure[$\rr_k(q)$ after reduction]{
\begin{tikzpicture}[thick, smooth, domain = 0: 0.455, scale = 2.55]
\draw[->] (0, 0) -- (1.1, 0);
\draw[->] (0, 0) -- (0, 1.1);
\node[above] at (0, 1.1) {\scriptsize $\rr_k(q)$};
\node[right] at (1.1, 0) {\scriptsize $q$};
\node[left] at (0, 0) {\scriptsize $\rr_k(0) = 0$};
\draw[color = blue] plot (\x, {1.6 * sqrt(\x / 2 + 0.0225) - 0.24});
\draw[thin, dashed] (0, 0.56) -- (0.455, 0.56);
\draw[very thick] (0, 0.56) -- (1.5pt, 0.56);
\node[anchor = 0] at (0, 0.56) {\scriptsize $\rr_k(\qq_k)$};

\draw[thin, dashed] (0.2, 0) -- (0.2, 0.32);
\draw[very thick] (0.2, 0) -- (0.2, 1.5pt);
\draw[thin, dashed] (0, 0.32) -- (0.2, 0.32);
\draw[very thick] (0, 0.32) -- (1.5pt, 0.32);
\node[below] at (0.2, 0) {\scriptsize $y$};
\node[left] at (0, 0.32) {\scriptsize $\rr_k(y)$};

\draw[color = blue] (0.455, 0.56) -- (1, 0);
\draw[thin, color = Sepia] (0, 0) -- (0.65, 1.04);
\draw[very thick] (1, 0) -- (1, 1.5pt);
\node[below] at (1, 0) {\scriptsize $1$};
\node[anchor = north west, color = Sepia] at (0.65, 1.04) {\scriptsize $r(q) = p \cdot q$};
\draw[thin, color = blue, fill = green] (0.2, 0.32) circle(0.5pt);
\draw[thin, color = blue, fill = red] (1, 0) circle(0.5pt);
\draw[thin, color = blue, fill = red] (0.455, 0.56) circle(0.5pt);
\draw[thin, color = blue, fill = red] (0, 0) circle(0.5pt);
\end{tikzpicture}
\label{fig:lem:p/(p+1)-4.2.2}
}
\caption{Demonstrations for the reduction in Algorithm~\ref{alg0.3}}
\label{fig:lem:p/(p+1)-4.2}
\end{figure}

In Figure~\ref{fig:lem:p/(p+1)-4.2.1}, line segment $\mathcal{L}(q)$ has endpoints $\big(y + z_k, r_k(y + z_k)\big)$ and $(1, 0)$. Transparently, $\mathcal{L}(q) \eqdef \frac{r_k(y + z_k)}{1 - y - z_k} \cdot (1 - q)$ for all $q \in [y + z_k, 1]$, which corresponds to CDF
\[
F_{\mathcal{L}}(p') \eqdef \frac{(1 - y - z_k) \cdot p'}{(1 - y - z_k) \cdot p' + r_k(y + z_k)} \quad\quad \forall p' \in \left[0, \frac{r_k(y + z_k)}{y + z_k}\right].
\]
Since $r_k(q)$ is concave on $[y + z_k, 1] \subset [0, 1]$, we know $r_k(q) \geq \mathcal{L}(q)$ for all $q \in [y + z_k, 1]$ and further, $F_k(p') \leq F_{\mathcal{L}}(p')$ for all $p' \in \left[0, \frac{r_k(y + z_k)}{y + z_k}\right]$. Specifically, recall that $p \in \left[0, \frac{r_k(y + z_k)}{y + z_k}\right]$,
\begin{equation}
\label{eq:lem:p/(p+1)-4.6}
F_k(p) \leq F_{\mathcal{L}}(p) = \frac{(1 - y - z_k) \cdot p}{(1 - y - z_k) \cdot p + r_k(y + z_k)} \leq \frac{(1 - y) \cdot p}{(1 - y) \cdot p + r_k(y + z_k)}.
\end{equation}
Together with the fact $\FF_k(p) = 1 - y$, we settle inequality~(\ref{eq:lem:p/(p+1)-4.1}) as follows:
\[
\begin{aligned}
\text{LHS of~(\ref{eq:lem:p/(p+1)-4.1})}
= & \left(1 + \frac{s_{k - 1} + r_k(z_k)}{p}\right) \cdot \frac{1}{1 - y}
\leq \left(1 + \frac{s_{k - 1}}{p}\right) \cdot \left(1 + \frac{r_k(z_k)}{p}\right) \cdot \frac{1}{1 - y} \\
\overset{(\ref{eq:lem:p/(p+1)-4.5})}{=} & \left(1 + \frac{s_{k - 1}}{p}\right) \cdot \left(1 + \frac{r_k(y + z_k)}{1 - y} \cdot \frac{1}{p}\right)
\overset{(\ref{eq:lem:p/(p+1)-4.6})}{\leq} \text{RHS of~(\ref{eq:lem:p/(p+1)-4.1})}.
\end{aligned}
\]
To sum up, by induction inequality~(\ref{eq:lem:p/(p+1)-4.1}) holds for each $j \in [n]$. As mentioned before, this indicates $\ap\big(p, \{\FF_i\}_{i = 0}^n\big) \leq \ap\big(p, \{F_i\}_{i = 0}^n\big) \leq 1$ for all $p \in (1, \theta]$.

Henceforth, we settle the optimality and feasibility of $\{\FF_i\}_{i = 0}^n$, hence Lemma~\ref{lem:p/(p+1)-4}.
\end{proof}

\subsection{Proof of Lemma~\ref{lem:p/(p+1)-1.1}}
\label{subapp:upper1-p/(p+1)-1.1}

The following mathematical fact is got in Appendix~\ref{subapp:math_facts:p/(p+1)}, and is central to the proof of Lemma~\ref{lem:p/(p+1)-1.1}.

\paragraph{[Lemma~\ref{lem:ineq:p/(p+1)}].}
\emph{$H(s, p) < 0$ for all $s \in \left[0, \frac{1}{\sqrt{3}}\right]$ and $p \in (1, \infty)$, where}
\[
H(s, p) \eqdef \ln\left(1 - \frac{1}{p}\right) + \ln\left(1 + \frac{s}{p}\right) + \big[p - (1 + s)\big] \cdot \left(\frac{1}{p - 1} + \frac{1}{p + s} - \frac{2}{p}\right).
\]

Recall the definition of critical instance:
\vspace{5.5pt} \\
\fcolorbox{white}{lightgray}{\begin{minipage}{\textwidth}
\paragraph{[Definition~\ref{def:critical}].}
\emph{A critical instance $\{F_i\}_{i = 0}^n$ is feasible, has well-defined and finite $\beta$, and satisfies:
\begin{enumerate}
\item $\forall \big(i \in [n] \text{: } F_i(\beta) < 1\big)$, $\exists a_i > 1$, $\exists b_i \geq 0$: $F_i(p) = 1 - \frac{b_i}{p - a_i}$ for all $p \in (\beta, \infty)$;
\item $\partial_+ \ap\big(\beta\big) = 0$.
\end{enumerate}}
\end{minipage}}

\paragraph{[Lemma~\ref{lem:p/(p+1)-1.1}].}
\emph{Given a critical instance $\{F_i\}_{i = 0}^n$ of Program~(\ref{prog:0}) that satisfies the property from Lemma~\ref{lem:p/(p+1)-3} (i.e., $\partial_+ r_i(0) > 1 + r_0(0)$ for all $i \in [n]$ with $r_i(0) > 0$), then $r_0(0) > \frac{1}{\sqrt{3}}$.}

\begin{proof}
We would settle Lemma~\ref{lem:p/(p+1)-1.1} in two steps, both of which are based on proof by contradiction.

\paragraph{Analysis I.}
We first claim \emph{there exists at least one $k \in [n]$ such that $r_k(0) > 0$}.

When $\{F_i\}_{i = 0}^n$ satisfies all premises, clearly\footnote{If $\Gamma = \emptyset$, we obtain a contradiction $1 = \ap\big(\beta, \{F_i\}_{i = 0}^n\big) = \ap\big(\beta, \{F_0\}\big) = \beta \cdot \left(1 - \frac{\beta}{\beta + r_0(0)}\right) \leq r_0(0) \leq \frac{1}{\sqrt{3}}$.} $\Gamma \eqdef \big\{i \in [n]: \,\, F_i(\beta) < 1\big\} \neq \emptyset$. By the \emph{first} condition for criticality, for each $i \in \Gamma$, there exists $a_i > 1$ and $b_i \geq 0$ such that $F_i(p) = 1 - \frac{b_i}{p - a_i}$ for all $p \in (\beta, \infty)$. Indeed, $\Gamma_0 \eqdef \big\{i \in \Gamma: b_i = 0\big\} = \emptyset$. For this, we shall observe that
\begin{description}
\item [\textbf{(a):}] $F_i(\beta^+) = F_i(\beta) < 1$ for each $i \in (\Gamma \setminus \Gamma_0)$. Clearly, such distribution $F_i$ has support-supremum $u_i = \infty$ (in that $r_i(0) = b_i > 0$), and thus must be continuous at $\beta < u_i = \infty$ (see Section~\ref{subsec:prelim-reg-vv});
\item [\textbf{(b):}] $F_i(\beta^+) = 1$ yet $F_i(\beta) < 1$, for each $i \in \Gamma_0$.
\end{description}
Assume to the contrary $\Gamma_0 \neq \emptyset$, then it follows from above that
\[
\begin{aligned}
\ap\big(\beta^+, \{F_i\}_{i = 0}^n\big)
= & \ap\big(\beta^+, \{F_0\} \bigcup \{F_i\}_{i \in \Gamma}\big)
\overset{(\textbf{b})}{=} \ap\big(\beta^+, \{F_0\} \bigcup \{F_i\}_{i \in \Gamma \setminus \Gamma_0}\big) \\
\overset{(\textbf{a})}{=} & \ap\big(\beta, \{F_0\} \bigcup \{F_i\}_{i \in \Gamma \setminus \Gamma_0}\big)
\overset{(\textbf{b})}{<} \ap\big(\beta^+, \{F_0\} \bigcup \{F_i\}_{i \in \Gamma}\big)
= \ap\big(\beta, \{F_i\}_{i = 0}^n\big),
\end{aligned}
\]
which violates to premise $\partial_+ \ap\big(\beta, \{F_i\}_{i = 0}^n\big) = 0$. Hitherto, we know $(\Gamma \setminus \Gamma_0) = \Gamma \neq \emptyset$. To sum up, for each $i \in \Gamma \neq \emptyset$, there exists $a_i > 1$ and $b_i > 0$ such that
\begin{itemize}
\item $F_i(p) = 1 - \frac{b_i}{p - a_i}$ for all $p \in (\beta, \infty)$, and $F_i(\beta^+) = F_i(\beta) < 1$;
\item $f_i(p) = F_i'(p) = \frac{b_i}{(p - a_i)^2} > 0$ when $p \in (\beta, \infty)$;
\item $r_i(q) = a_i \cdot q + b_i$ when $q \in \big[0, 1 - F_i(\beta)\big]$, due to the reduction of revenue-quantile curve in Section~\ref{subsec:prelim-rq-curve}.
\end{itemize}
Now, we achieve our claim by observing $r_i(0) = b_i > 0$ and $\partial_+ r_i(0) = a_i > 1$, for each $i \in \Gamma \neq \emptyset$.

\paragraph{Analysis II.}
To see the lemma, assume to the contrary that $r_0(0) \leq \frac{1}{\sqrt{3}}$. Based on the conclusions of \textbf{Analysis I},
\[
p - \frac{1 - F_i(p)}{f_i(p)} = a_i > 1 + r_0(0),
\]
for all $p \in (\beta, \infty)$ and $i \in \Gamma \neq \emptyset$. Particularly, when $p$ goes to $\beta^+$, rearranging this inequality gives
\begin{equation}
\label{eq:lem:p/(p+1)-1.1.2}
\Big[\beta - \big(1 + r_0(0)\big)\Big] \cdot f_i(\beta^+) \geq 1 - F_i(\beta^+) = 1 - F_i(\beta),
\end{equation}
for each $i \in \Gamma \neq \emptyset$. Recall premise that $\beta = \max \big\{p \in (1, \infty]: \,\, \ap\big(p, \{F_i\}_{i = 0}^n\big) = 1\big\}$ is well-defined and finite, and the \emph{second} condition for criticality that $\partial_+ \ap\big(\beta, \{F_i\}_{i = 0}^n\big) = 0$. We have
\begin{description}
\item [\textbf{(c):}] $\ln\left(1 - \frac{1}{\beta}\right) + \ln\left(1 + \frac{r_0(0)}{\beta}\right) = \sum\limits_{i \in \Gamma} \ln F_i(\beta)$. This follows from rearranging $1 = \ap\big(\beta, \{F_i\}_{i = 0}^n\big) = \ap\big(\beta, \{F_0\} \bigcup \{F_i\}_{i \in \Gamma}\big) = \beta \cdot \left(1 - \frac{\beta}{\beta + r_0(0)} \cdot \prod\limits_{i \in \Gamma} F_i(\beta)\right)$;
\item [\textbf{(d):}] $\frac{1}{\beta - 1} + \frac{1}{\beta + r_0(0)} - \frac{2}{\beta} = \sum\limits_{i \in \Gamma} \frac{f_i(\beta^+)}{F_i(\beta)}$. This follows from taking right-derivatives on both hand sides of the equation in~\textbf{(c)}, under premise $\partial_+ \ap\big(\beta, \{F_0\} \bigcup \{F_i\}_{i \in \Gamma}\big) = \partial_+ \ap\big(\beta, \{F_i\}_{i = 0}^n\big) = 0$.
\end{description}
Gathering everything together results in
\[
\begin{aligned}
\ln\left(1 - \frac{1}{\beta}\right) + \ln\left(1 + \frac{r_0(0)}{\beta}\right)
\overset{(\textbf{c})}{=} & -\sum\limits_{i \in \Gamma} \ln\left(1 + \frac{1 - F_i(\beta)}{F_i(\beta)}\right)
\overset{(\star)}{\geq} -\sum\limits_{i \in \Gamma} \frac{1 - F_i(\beta)}{F_i(\beta)} \\
\overset{(\ref{eq:lem:p/(p+1)-1.1.2})}{\geq} & -\Big[\beta - \big(1 + r_0(0)\big)\Big] \cdot \sum\limits_{i \in \Gamma} \frac{f_i(\beta^+)}{F_i(\beta)} \\
\overset{(\textbf{d})}{=} & -\Big[\beta - \big(1 + r_0(0)\big)\Big] \cdot \left(\frac{1}{\beta - 1} + \frac{1}{\beta + r_0(0)} - \frac{2}{\beta}\right),
\end{aligned}
\]
where $(\star)$ follows as $x \geq \ln(1 + x)$ for all $x \geq 0$.

In terms of $H(s, p)$ defined in Lemma~\ref{lem:ineq:p/(p+1)}, rearranging this inequality gives $H\big(r_0(0), \beta\big) \geq 0$. Under premises $1 < \beta < \infty$ and $0 \leq r_0(0) \leq \frac{1}{\sqrt{3}}$, this contradicts Lemma~\ref{lem:ineq:p/(p+1)} that $H(s, p) < 0$ for all $s \in \left[0, \frac{1}{\sqrt{3}}\right]$ and $p \in (1, \infty)$. Refuting our previous assumption, we settle the proof of Lemma~\ref{lem:p/(p+1)-1.1}.
\end{proof}

\subsection{Proof of Lemma~\ref{lem:p/(p+1)-2}}
\label{subapp:upper1-p/(p+1)-2}

\fcolorbox{white}{lightgray}{\begin{minipage}{\textwidth}
In a worst-case instance $\{F_i\}_{i = 0}^n$, w.l.o.g.
\begin{itemize}
\item With some pre-fixed $r_0(0) \geq 0$, $r_0(q) = r_0(0) \cdot (1 - q)$ for all $q \in [0, 1]$;
\item For each $i \in [n]$, $q_i > 0$, $\partial_+ r_i(q) > 1$ for all $q \in [0, q_i)$, $v_i = \frac{r_i(q_i)}{q_i} > 1$.
\end{itemize}
\end{minipage}}

\paragraph{[Lemma~\ref{lem:p/(p+1)-2}].}
\emph{Given a feasible instance $\{F_i\}_{i = 0}^n$ of Program~(\ref{prog:0}) that $\sum\limits_{i = 0}^n r_i(0) = 1$ and satisfies the property from Lemma~\ref{lem:p/(p+1)-3} (i.e., $\partial_+ r_i(0) > 1 + r_0(0)$ for all $i \in [n]$ with $r_i(0) > 0$), then $r_0(0) > \frac{1}{\sqrt{3}}$.}

\begin{proof}
For ease of notation, let $S \eqdef \big\{i \in [n]: \,\, r_i(0) > 0\big\} \subset [n]$. Assuming to the contrary that $r_0(0) \leq \frac{1}{\sqrt{3}}$, we claim that $\ap\big(p, \{F_i\}_{i = 0}^n\big) > 1$ when $p$ is sufficiently large. This would violate the feasibility to constraint~(\ref{cstr:ap0}).

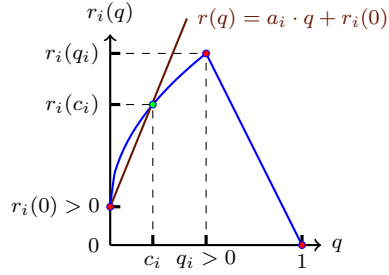
\begin{figure}[H]
\centering
\begin{tikzpicture}[thick, smooth, domain = 0: 0.5, scale = 2.55]
\draw[->] (0, 0) -- (1.1, 0);
\draw[->] (0, 0) -- (0, 1.1);
\node[above] at (0, 1.1) {\scriptsize $r_i(q)$};
\node[right] at (1.1, 0) {\scriptsize $q$};
\node[left] at (0, 0) {\scriptsize $0$};
\draw[color = blue] plot (\x, {0.2 + 1.6 * sqrt(\x / 2)});
\draw[thin, dashed] (0, 1) -- (0.5, 1);
\draw[very thick] (0, 1) -- (1.5pt, 1);
\draw[thin, dashed] (0.5, 0) -- (0.5, 1);
\draw[very thick] (0.5, 0) -- (0.5, 1.5pt);
\node[left] at (0, 1) {\scriptsize $r_i(q_i)$};
\node[below] at (0.5, 0.75pt) {\scriptsize $q_i > 0$};

\draw[very thick] (0, 0.2) -- (1.5pt, 0.2);
\node[left] at (0, 0.2) {\scriptsize $r_i(0) > 0$};

\draw[thin, dashed] (0.2222, 0) -- (0.2222, 0.7333);
\draw[very thick] (0.2222, 0) -- (0.2222, 1.5pt);
\draw[thin, dashed] (0, 0.7333) -- (0.2222, 0.7333);
\draw[very thick] (0, 0.7333) -- (1.5pt, 0.7333);
\node[below] at (0.2222, 0) {\scriptsize $c_i$};
\node[left] at (0, 0.7333) {\scriptsize $r_i(c_i)$};

\draw[color = blue] (0.5, 1) -- (1, 0);
\draw[color = Sepia] (0, 0.2) -- (0.4, 1.1822);
\draw[very thick] (1, 0) -- (1, 1.5pt);
\node[below] at (1, 0.25pt) {\scriptsize $1$};
\node[anchor = west, color = Sepia] at (0.4, 1.1822) {\scriptsize $r(q) = a_i \cdot q + r_i(0)$};
\draw[thin, color = blue, fill = green] (0.2222, 0.7333) circle(0.5pt);
\draw[thin, color = blue, fill = red] (1, 0) circle(0.5pt);
\draw[thin, color = blue, fill = red] (0.5, 1) circle(0.5pt);
\draw[thin, color = blue, fill = red] (0, 0.2) circle(0.5pt);
\end{tikzpicture}
\caption{Demonstrations for Lemma~\ref{lem:p/(p+1)-2}}
\label{fig:lem:p/(p+1)-6}
\end{figure}

For each $i \in S$, given $a_i \in \big(1 + r_0(0), \partial_+ r_i(0)\big)$, let $c_i > 0$ be that $r_i(c_i) = a_i c_i + r_i(0)$ (which refers to Figure~\ref{fig:lem:p/(p+1)-6} in the above). Recall that $q_i > 0$ for each $i \in S$. As Figure~\ref{fig:lem:p/(p+1)-6} suggests, for each $i \in S$, we shall notice that
\begin{itemize}
\item The concavity of $r_i(q)$ indicates that $r_i(q) \geq a_i \cdot q + r_i(0)$ when $q \in [0, c_i]$;
\item $F_i(p) \geq 1 - \frac{r_i(0)}{p - a_i}$ for all $p \in \left(\frac{r_i(c_i)}{c_i}, \infty\right)$, due to Fact~\ref{fact1} (i.e., the reduction from $r_i(q)$ to $F_i$).
\end{itemize}
Define $\theta \eqdef \max \left\{\frac{r_i(c_i)}{c_i}: \,\, i \in S\right\}$. For all $p \in [\theta, \infty]$, we know from the above that
\[
p \cdot \left[1 - \frac{p}{p + r_0(0)} \cdot \prod\limits_{i \in S} \left(1 - \frac{r_i(0)}{p - a_i}\right)\right] \leq \ap\big(p, \{F_0\} \bigcup \{F_i\}_{i \in S}\big) \leq \ap\big(p, \{F_i\}_{i = 0}^n\big).
\]
To obtain the desired contradiction, it suffices to prove that $p \cdot \left[1 - \frac{p}{p + r_0(0)} \cdot \prod\limits_{i \in S} \left(1 - \frac{r_i(0)}{p - a_i}\right)\right] > 1$, when $p$ is sufficiently large. Let $t \eqdef \frac{1}{p} \in \left[0, \theta^{-1}\right]$ for notational brevity. After rearranging, we turn to show that $G_{\ap}(t) > 0$ when $t > 0$ is sufficiently small, where
\[
G_{\ap}(t) \eqdef \ln(1 - t) + \ln\big(1 + r_0(0) \cdot t\big) + \sum\limits_{i \in S} \ln(1 - a_i \cdot t) - \sum\limits_{i \in S} \ln\left[1 - \big(a_i + r_i(0)\big) \cdot t\right].
\]
Equipped with the new notations, we shall observe $G_{\ap}(0) = 0$ and further,
\begin{itemize}
\item \emph{First Derivative}: $G_{\ap}'(t) = -\frac{1}{1 - t} + \frac{r_0(0)}{1 + r_0(0) \cdot t} - \sum\limits_{i \in S} \frac{a_i}{1 - a_i \cdot t} + \sum\limits_{i \in S} \frac{a_i + r_i(0)}{1 - \big(a_i + r_i(0)\big) \cdot t}$. In that $\sum\limits_{i = 0}^n r_i(0) = 1$, we must have $G_{\ap}'(0) = -1 + r_0(0) + \sum\limits_{i \in S} r_i(0) = -1 + \sum\limits_{i = 0}^n r_i(0) = 0$;
\item \emph{Second Derivative}: $G_{\ap}''(t) = -\frac{1}{(1 - t)^2} - \frac{r_0^2(0)}{\big(1 + r_0(0) \cdot t\big)^2} - \sum\limits_{i \in S} \frac{a_i^2}{(1 - a_i \cdot t)^2} + \sum\limits_{i \in S} \frac{\big(a_i + r_i(0)\big)^2}{\left[1 - \big(a_i + r_i(0)\big) \cdot t\right]^2}$. Hence,
    \[
    \begin{aligned}
    G_{\ap}''(0)
    = & -1 - r_0^2(0) - \sum\limits_{i \in S} a_i^2 + \sum\limits_{i \in S} \big(a_i + r_i(0)\big)^2
    > -1 - r_0^2(0) + 2 \cdot \sum_{i \in S} a_i \cdot r_i(0) \\
    > & -1 - r_0^2(0) + 2 \cdot \sum_{i \in S} \big(1 + r_0(0)\big) \cdot r_i(0)
    = -1 - r_0^2(0) + 2 \cdot \big(1 - r_0^2(0)\big) \geq 0,
    \end{aligned}
    \]
    where the last inequality follows from our assumption that $r_0(0) \leq \frac{1}{\sqrt{3}}$.
\end{itemize}
Combining everything together, we conclude that $G_{\ap}(t) > 0$ when $t > 0$ is sufficiently small, which contradicts the feasibility to constraint~(\ref{cstr:ap0}). This completes the proof of the lemma.
\end{proof}

\section{Proof of Lemma~\ref{lem:opt_cont}}
\label{subapp:upper2-cont}
\fcolorbox{white}{lightgray}{\begin{minipage}{\textwidth}
\paragraph{[Definition~\ref{def:cont}].}
\emph{Given $\gamma \in [1, \infty)$ and $n \in \mathbb{N}_+$, triangular instance $\{\tri(v_i, q_i)\}_{i = 1}^{n^2}$ satisfies that
\begin{itemize}
\item $v_i = \gamma + n - \frac{i - 1}{n}$, for all $i \in [n^2]$. For notational simplicity, let $v_0 \eqdef \infty$;
\item $\sum\limits_{j = 1}^i \ln\left(1 + \frac{v_i q_i}{1 - q_i}\right) = \R(v_i)$, that is, $q_i = \frac{e^{\R(v_i) - \R(v_{i - 1})} - 1}{v_i + e^{\R(v_i) - \R(v_{i - 1})} - 1}$, for all $i \in [n^2]$.
\end{itemize}
When $n$ approaches to infinity, $\{\tri(v_i, q_i)\}_{i = 1}^{n^2}$ converges to $\cont(\gamma)$.}
\end{minipage}}

\paragraph{[Lemma~\ref{lem:opt_cont}].}
\emph{$\opt\big(\cont(\gamma)\big) = 2 + \displaystyle{\int_1^{\infty}} \left(1 - e^{-\Q\big(\max\{x, \gamma\}\big)}\right) dx$ is decreasing for all $\gamma \in [1, \infty)$.}

\begin{figure}[H]
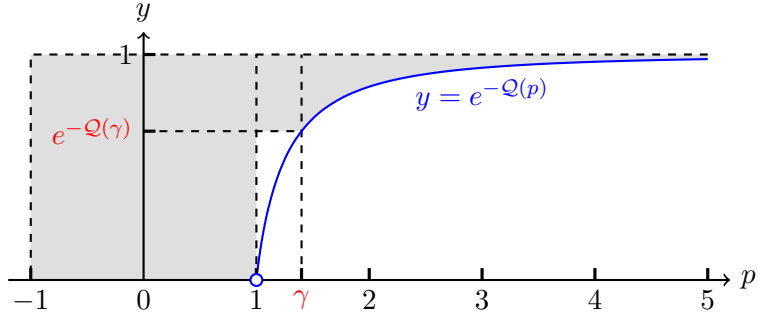

\centering

\caption{Demonstration for $y = e^{-Q(p)}$, where $\opt\big(\cont(\gamma)\big)$ equals to the area of shadow}
\label{fig:opt_cont}
\end{figure}

\begin{proof}
Figure~\ref{fig:opt_cont} is offered to make things mimic, from which the monotonicity of $\opt\big(\cont(\gamma)\big)$ follows immediately. Given $\gamma \in [1, \infty)$ and $n \in \mathbb{N}_+$ in Definition~\ref{def:cont}, we would use a formula obtained in~\cite{JLTX18} (see Lemma~1 therein) to rewrite the objective value of Program~(\ref{prog:1}):
\begin{equation}
\label{eq:lem:opt_cont1}
\opt\left(\{\tri(v_i, q_i)\}_{i = 1}^{n^2}\right) = 2 + \displaystyle{\int_1^{\infty}} \left[1 - \prod\limits_{i \in [n^2]: v_i > x} (1 - q_i)\right] dx.
\end{equation}
When $n$ approaches to infinity, combining Definition~\ref{def:cont} and Lemma~\ref{lem:RQ} gives
\begin{flalign*}
& \lim\limits_{n \rightarrow \infty} \prod\limits_{i \in [n^2]: v_i > p} (1 - q_i)
= \lim\limits_{n \rightarrow \infty} \prod\limits_{i \in [n^2]: v_i > p} \left(1 + \frac{e^{\R(v_i) - \R(v_{i - 1})} - 1}{v_i}\right)^{-1} \\
= & \lim\limits_{n \rightarrow \infty} \prod\limits_{i \in [n^2]: v_i > p} \left(1 + \frac{\R(v_i) - \R(v_{i - 1})}{v_i}\right)^{-1}
= \lim\limits_{n \rightarrow \infty} \prod\limits_{i \in [n^2]: v_i > p} \left[1 + \frac{\left|\R'(v_i)\right|}{v_i} \cdot (v_{i - 1} - v_i)\right]^{-1} \\
= & \exp\left(-\lim\limits_{n \rightarrow \infty} \sum\limits_{i \in [n^2]: v_i > p} \frac{\left|\R'(v_i)\right|}{v_i} \cdot (v_{i - 1} - v_i)\right)
= \exp\left(-\int_{p}^{\infty} \frac{\left|\R'(x)\right|}{x} dx\right) = e^{-\Q(p)},
\end{flalign*}
for all $p \in (\gamma, \infty)$. Similarly, we also have $\lim\limits_{n \rightarrow \infty} \prod\limits_{i \in [n^2]: v_i > p} (1 - q_i) = e^{-\Q(\gamma)}$, for all $p \in (1, \gamma]$. Applying these to Eq~(\ref{eq:lem:opt_cont1}) completes the proof of the lemma.
\end{proof}

\section{Missing Proofs in Section~\ref{subsec:upper2-main} and Section~\ref{subsec:upper2-cstr-opt}}
\label{app:upper2}

\subsection{Proof of Lemma~\ref{lem:constant}}
\label{subapp:upper2-alg}

\paragraph{[Lemma~\ref{lem:constant}].}
\emph{$\frac{1}{32}\epsilon^2 \leq \delta^* \leq \frac{1}{2}\epsilon$, and $1 < v^* \leq u^* \leq \gamma^* \leq \frac{33}{\epsilon^2}$, and $\kappa^* \in (0, 1]$, and $\Delta^* \in (0, \delta^*)$.}

\begin{proof}
Given $\epsilon \in (0, 1)$, for $u^* = \R^{-1}(\epsilon)$, it follows from Lemma~\ref{lem:RQ}.1 and Lemma~\ref{lem:RQ}.3 that
\begin{align}
\label{eq:lem:constant2}
& \frac{1}{\epsilon} = \frac{1}{\R(u^*)} \leq u^* \leq \frac{1}{\R(u^*)} + 1 \leq \frac{2}{\epsilon}, \\
\label{eq:lem:constant3}
& \left|\R'\big((1 + \epsilon) \cdot u^*\big)\right| \geq \left|\R'\big(2u^*\big)\right| \geq \frac{1}{4u^{*2}} \overset{(\ref{eq:lem:constant2})}{\geq} \frac{1}{16}\epsilon^2.
\end{align}
For each $i \in [n]$, clearly $D_i\big((1 + \epsilon) \cdot p\big) \leq D_i^*(p)$ when $p \in (0, \infty)$, due to the definition of $\{D_i^*\}_{i = 1}^n$ in line~(\ref{alg1-preprocess}) of Algorithm~\ref{alg1}. By Fact~\ref{fact2}.2, we also have $F_i\big((1 + \epsilon) \cdot p\big) \leq F_i^*(p)$ when $p \in (0, \infty)$. Thus,
\begin{equation}
\label{eq:lem:constant4}
\Psi\big(p, \{F_i^*\}_{i = 1}^n\big)
\leq \Psi\big((1 + \epsilon) \cdot p, \{F_i\}_{i = 1}^n\big)
\overset{(\diamond)}{\leq} \R\big((1 + \epsilon) \cdot p\big),
\end{equation}
where $(\diamond)$ follows as $\{F_i\}_{i = 1}^n$ is feasible to constraint~(\ref{cstr:ap2}). As a consequence,
\begin{flalign*}
\delta^*
= & \frac{1}{2} \cdot \min \Big\{\R(p) - \Psi\big(p, \{F_i^*\}_{i = 1}^n\big): \,\, p \in (1, u^*]\Big\} \\
\overset{(\ref{eq:lem:constant4})}{\geq} & \frac{1}{2} \cdot \min \Big\{\R(p) - \R\big((1 + \epsilon) \cdot p\big): \,\, p \in (1, u^*]\Big\} \\
= & \frac{1}{2} \cdot \Big(\R(u^*) - \R\big((1 + \epsilon) \cdot u^*\big)\Big) & \text{(By Lemma~\ref{lem:RQ}.1, $\R'(p) \leq \R'\big((1 + \epsilon) \cdot p\big)$)} \\
\geq & \frac{1}{2} \cdot \left|\R'\big((1 + \epsilon) \cdot u^*\big)\right| \cdot \epsilon \cdot u^*
\overset{(\ref{eq:lem:constant2},\ref{eq:lem:constant3})}{\geq} \frac{1}{32}\epsilon^2. & \text{(Since $\R(p)$ is convex and decreasing)}
\end{flalign*}
The monotonicity of $\R(p)$ also implies $\delta^* \leq \frac{1}{2} \cdot \min \big\{\R(p): \,\, p \in (1, u^*]\big\} = \frac{1}{2} \cdot \R(u^*) = \frac{1}{2}\epsilon$.

As for the second claim, it is easy to see $u^* \geq v^* > 1$. Moreover, noticing that $\R(p)$ is strictly decreasing on $p \in (1, \infty)$, and $\R(p) \leq \frac{1}{p - 1}$, we can infer $\frac{33}{\epsilon^2} \geq 1 + \frac{32}{\epsilon^2} \geq \gamma^* \geq u^*$ by observing
\[
\frac{1}{\gamma^* - 1} \geq \R(\gamma^*) = \delta^* \geq \frac{1}{32}\epsilon^2 \quad\quad\quad\quad\quad\quad \R(\gamma^*) = \delta^* \leq \frac{1}{2}\epsilon < \epsilon = \R(u^*).
\]
Besides, $1 \geq \prod\limits_{i = 1}^n (1 - q_i^*) = \kappa^* \geq e^{-\R(v^*)} > 0$, in that
\begin{flalign*}
\R(v^*)
\geq & \R\big((1 + \epsilon) \cdot v^*\big)
\overset{(\ref{eq:lem:constant4})}{\geq} \Psi\big(v^*, \{F_i^*\}_{i = 1}^n\big) & \text{(Since $\R(p)$ is decreasing)} \\
= & \sum\limits_{i = 1}^n \ln\left(1 + \frac{v_i^* q_i^*}{1 - q_i^*}\right)
\geq \sum\limits_{i = 1}^n \ln\left(1 + \frac{q_i^*}{1 - q_i^*}\right)
= -\ln\kappa^*. & \text{(Since $v_i^* \geq v^* > 1$ for all $i \in [n]$)}
\end{flalign*}
Eventually, $0 < \Delta^* = \frac{v^* - 1}{3\gamma^{*2}} \cdot \kappa^* \cdot \delta^* < \delta^*$, in that $\gamma^* \geq v^* > 1$ and $\kappa \in (0, 1]$. This completes the proof of the lemma.
\end{proof}

\subsection{Proofs of Lemma~\ref{lem:obj_approx1} and Lemma~\ref{lem:obj_approx2}}
\label{subapp:upper2-cstr-opt}

\paragraph{[Lemma~\ref{lem:obj_approx1}].}
\emph{$\displaystyle{\int_1^{\infty}} \left(1 - \prod\limits_{i = 1}^n D_i(x)\right) dx - \displaystyle{\int_1^{\infty}} \left(1 - \prod\limits_{i = 1}^n D_i\big((1 + \epsilon) \cdot x\big)\right) dx \leq 3\epsilon$.}

\begin{proof}
By line~(\ref{alg1-preprocess}) of Algorithm~\ref{alg1}, $D_i(p) \leq D_i\big((1 + \epsilon) \cdot p\big)$ for all $p \in (0, \infty)$ and $i \in [n]$. Hence,
\[
\begin{aligned}
\text{LHS}
\leq & \left(\displaystyle{\int_0^1} + \displaystyle{\int_1^{\infty}}\right) \left(1 - \prod\limits_{i = 1}^n D_i(x)\right) dx - \left(\displaystyle{\int_0^1} + \displaystyle{\int_1^{\infty}}\right) \left(1 - \prod\limits_{i = 1}^n D_i\big((1 + \epsilon) \cdot x\big)\right) dx \\
\overset{(\ast)}{=} & \left(1 - \frac{1}{1 + \epsilon}\right) \cdot \displaystyle{\int_0^{\infty}} \left(1 - \prod\limits_{i = 1}^n D_i(x)\right) dx \leq \epsilon \cdot \opt\big(\{F_i\}_{i = 1}^n\big),
\end{aligned}
\]
where $(\ast)$ follows from integration by substitution. Alaei et al.~\cite{AHNPY15} proved $\opt\big(\{F_i\}_{i = 1}^n\big) \leq e \leq 3$. Applying this to the above inequality completes the proof of the lemma.
\end{proof}

\paragraph{[Lemma~\ref{lem:obj_approx2}].}
\emph{$\displaystyle{\int_{u^*}^{\infty}} \left(1 - \prod\limits_{i = 1}^n D_i\big((1 + \epsilon) \cdot x\big)\right) dx \leq \displaystyle{\int_{u^*}^{\infty}} \left(1 - \prod\limits_{i = 1}^n D_i(x)\right) dx \leq 2\epsilon$.}

\begin{proof}
The first inequality follows from the monotonicity of $D_i$'s. As for the second one, note that $1 - \prod\limits_{i = 1}^n z_i \leq \sum\limits_{i = 1}^n (1 - z_i)$ for all $(z_1, z_2, \cdots, z_n) \in [0, 1]^n$ (which can be easily attested by induction),
\begin{equation}
\label{eq:lem:obj_approx2.1}
\displaystyle{\int_{u^*}^{\infty}} \left(1 - \prod\limits_{i = 1}^n D_i(x)\right) dx
\leq \displaystyle{\int_{u^*}^{\infty}} \sum\limits_{i = 1}^n \big(1 - D_i(x)\big) dx
\overset{(\star)}{\leq} \sum\limits_{i = 1}^n u^* \cdot \frac{1 - F_i(u^*)}{F_i(u^*)}
\end{equation}
where in $(\star)$ we apply Lemma~\ref{lem:obj_approx3} in the below. For each $i \in [n]$, we have
\[
\ln\left(1 + u^* \cdot \frac{1 - F_i(u^*)}{F_i(u^*)}\right) = \Psi\big(u^*, \{F_i^*\}\big) \leq \Psi\big(u^*, \{F_i^*\}_{i = 1}^n\big) \overset{(\vee)}{\leq} \R(u^*) = \epsilon \leq 1.
\]
where $(\vee)$ follows as $\{F_i^*\}_{i = 1}^n$ is feasible to constraint~(\ref{cstr:ap2}). Accordingly, $u^* \cdot \frac{1 - F_i(u^*)}{F_i(u^*)} \leq e - 1 \leq 2$ for each $i \in [n]$. Note that $z \leq 2\ln(1 + z)$ for all $z \in [0, 2]$,
\[
\text{LHS of~(\ref{eq:lem:obj_approx2.1})}
\leq 2 \cdot \sum\limits_{i = 1}^n \ln\left(1 + u^* \cdot \frac{1 - F_i(u^*)}{F_i(u^*)}\right)
= 2 \Psi\big(u^*, \{F_i^*\}_{i = 1}^n\big)
\overset{(\circ)}{\leq} 2 \R(u^*) = 2\epsilon,
\]
where $(\circ)$ also follows from the feasibility of $\{F_i^*\}_{i = 1}^n$. This completes the proof of the lemma.
\end{proof}

\begin{lemma}
\label{lem:obj_approx3}
$\displaystyle{\int_p^{\infty}} \big(1 - D_i(x)\big) dx \leq p \cdot \frac{1 - F_i(p)}{F_i(p)}$ for all $p \in (0, \infty)$ and $i \in [n]$.
\end{lemma}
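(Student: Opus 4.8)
The plan is to use the characterization in Main Lemma~\ref{lem:virtual_value} to replace the left-hand side by a single quantity attached to the price $\Phi_i^{-1}(p^+)$, and then to win by a two-case estimate driven by Lemma~\ref{lem:potential}.

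First I would rewrite the left-hand side. Abbreviate $w \eqdef \Phi_i^{-1}(p^+)$ and $q' \eqdef 1 - D_i(p^+) = 1 - F_i(w)$. By Main Lemma~\ref{lem:virtual_value}, $r_i(0) + \int_p^{\infty}\big(1 - D_i(x)\big)\,dx = (w - p)\cdot q'$, so since $r_i(0) \ge 0$ it suffices to prove $(w - p)\cdot q' \le p\cdot\frac{1 - F_i(p)}{F_i(p)}$. I would record three elementary facts: (i) $w \ge p$, because $\Phi_i(p) \le p$ puts $p$ into the set $\{x : \Phi_i(x) \le p\}$ and $\Phi_i^{-1}$ is non-decreasing; (ii) $q' \le q_i$, because $D_i$ is non-decreasing and equals $1 - q_i$ on $(0, \Phi_i(v_i^+)]$, hence $D_i(p^+) \ge 1 - q_i$ for all $p$; and (iii) $w\cdot q' \le r_i(q')$, directly from $r_i(q') = q'\cdot\max\{p' : F_i(p') \le 1 - q'\}$ together with $F_i(w) = 1 - q'$. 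The degenerate cases $q' = 0$ and $F_i(p) = 0$ are immediate (the left-hand side is $\le 0$, while the right-hand side is $\ge 0$, respectively $+\infty$), so I may assume $q' > 0$ and $F_i(p) > 0$.

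I then split on whether $p \le v_i$. If $p \le v_i$, then $q_i < 1$ (otherwise $F_i(v_i) = 0$ would force $F_i(p) = 0$), and Lemma~\ref{lem:potential} gives $p\cdot\frac{1-F_i(p)}{F_i(p)} = \frac{v_i q_i}{1 - q_i}$; using (iii), that $q_i$ maximizes $r_i$, and $1 - q_i \le 1$,
\[
(w - p)\cdot q' \le w\cdot q' \le r_i(q') \le r_i(q_i) = v_i q_i \le \frac{v_i q_i}{1 - q_i} = p\cdot\frac{1 - F_i(p)}{F_i(p)} .
\]
If instead $p > v_i$, then $q' > 0$ forces $F_i(p) < 1$, hence $p \le u_i$, and $w \le u_i$ as well (for $x > u_i$ one has $\Phi_i(x) = x > u_i \ge p$, so no such $x$ lies in $\{x : \Phi_i(x) \le p^+\}$); thus $v_i \le p \le w \le u_i$. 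Lemma~\ref{lem:potential} says $\Psi(\cdot, \{F_i\})$, hence $y \mapsto y\cdot\frac{1 - F_i(y)}{F_i(y)}$, is decreasing on $[v_i, u_i]$, so $w\cdot\frac{1 - F_i(w)}{F_i(w)} \le p\cdot\frac{1 - F_i(p)}{F_i(p)}$, and then
\[
(w - p)\cdot q' \le w\cdot\big(1 - F_i(w)\big) = F_i(w)\cdot w\cdot\frac{1 - F_i(w)}{F_i(w)} \le F_i(w)\cdot p\cdot\frac{1 - F_i(p)}{F_i(p)} \le p\cdot\frac{1 - F_i(p)}{F_i(p)},
\]
where the last two steps use $0 < F_i(p) \le F_i(w) \le 1$. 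This completes the plan.

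The part I expect to require the most care is not any individual estimate but the bookkeeping around $\Phi_i^{-1}(p^+)$ and $D_i(p^+)$: justifying $1 - D_i(p^+) = 1 - F_i(\Phi_i^{-1}(p^+))$ (including when $F_i$ has an atom at $u_i$ or when $p$ lies outside the support), checking $\Phi_i^{-1}(p^+) \le u_i$ in the case $p > v_i$, and making sure Lemma~\ref{lem:potential}'s monotonicity is invoked only on the interval $[v_i, u_i]$. These are exactly the spots sensitive to the left-continuity convention for $F_i$ and to the tie-breaking rules defining $v_i$ and $q_i$; everything else reduces to one-line inequalities.
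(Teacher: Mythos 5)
Your proposal is correct and follows essentially the same route as the paper: apply Main Lemma~\ref{lem:virtual_value} to rewrite the integral as $\big(\Phi_i^{-1}(p^+)-p\big)\cdot\big(1-D_i(p^+)\big)$ (up to the nonnegative $r_i(0)$), bound it by $x\cdot\frac{1-F_i(x)}{F_i(x)}$ at $x=\Phi_i^{-1}(p^+)$, and conclude by the monotonicity of that function via Lemma~\ref{lem:potential}. Your case split on $p\le v_i$ versus $p>v_i$ is just a more careful invocation of Lemma~\ref{lem:potential} (which literally asserts decrease only on $[v_i,u_i]$ and constancy on $(0,v_i]$), where the paper simply asserts the function is non-increasing on all of $(0,\infty)$; the substance is identical.
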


\begin{proof}
Given $p \in (0, \infty)$, since $D_i(p^+) = F_i(\Phi_i^{-1}(p^+))$, we settle the lemma by observing
\begin{flalign*}
\displaystyle{\int_p^{\infty}} \big(1 - D_i(x)\big) dx
= & \left.\Big[(x - p) \cdot \big(1 - F_i(x)\big)\Big]\right|_{x = \Phi_i^{-1}(p^+)} & \text{(By Main Lemma~\ref{lem:virtual_value})} \\
\leq & \left.\left(x \cdot \frac{1 - F_i(x)}{F_i(x)}\right)\right|_{x = \Phi_i^{-1}(p^+)}
\overset{(\wedge)}{\leq} p \cdot \frac{1 - F_i(p)}{F_i(p)},
\end{flalign*}
where $(\wedge)$ follows as $\Phi_i^{-1}(p) \geq p$, and $y = x \cdot \frac{1 - F_i(x)}{F_i(x)}$ is decreasing on $(0, \infty)$ (see Lemma~\ref{lem:potential}).
\end{proof}

\section{Proof of Lemma~\ref{lem:alg_facts}}
\label{app:upper3-lem:alg_facts}
\paragraph{[Lemma~\ref{lem:alg_facts}].}
\emph{In a non-terminal iteration of Algorithm~\ref{alg2}, omit the update in line~(\ref{alg2:update}),
\begin{enumerate}
\item $\gamma^* \geq \gamma \geq \ggamma \geq u \geq \uu$, and $\uu_i = \uu$ for each $i \in [n]$ that $\Delta_i > 0$;
\item $\qq_i \leq q_i$ and $\uu_i \geq \vv_i = v_i = v_i^* \geq v^* > 1$, for each $i \in [n]$;
\item $\Psi\big(p, \{\FF_i\}_{i = 1}^n\big) = \Psi\big(p, \{F_i\}_{i = 1}^n\big) - \Delta$ for all $p \in (0, \uu]$.
\end{enumerate}}

\begin{proof}
By line~(\ref{alg2:continuous}) of Algorithm~\ref{alg2}, $\R(\ggamma) = \R(\gamma) + \Delta \geq \R(\gamma)$. As per this, we can infer $\ggamma \leq \gamma$ from Lemma~\ref{lem:RQ}.1. By induction, we also have $\gamma \leq \gamma^*$. Essentially, $u$ (resp. $\uu$) is the maximum support-supremum, among all non-vanishing $F_i$'s (resp. $\FF_i$'s). Since then, we know
\begin{itemize}
\item $\uu \leq u$ (even though some $F_i$'s vanishes in the current iteration), since $\uu_i \leq u_i$ for all $i \in [n]$;
\item $\uu_i = \uu$ for each $i \in [n]$ that $\Delta_i > 0$. Assuming to the contrary that $\uu_i \neq \uu$, we would see a contradiction that $\Delta_i > 0$, in both of the following two cases:
    \begin{enumerate}
    \item When $\uu_i < \uu$, \emph{this distribution's aggregated potential-decrease is still $0$}, since $\uu$ has never met this distribution's support, in the current and all previous iterations;
    \item Otherwise, i.e., when $\uu_i > \uu$, \emph{this distribution's potential already decreases to $0$, in some previous iteration}, which can be inferred from line~(\ref{alg2:uu}) of Algorithm~\ref{alg2}.
    \end{enumerate}
\end{itemize}
It remains to expose $\ggamma \geq \uu$. For convenience, we would justify this after confirming the rest claims.

For the second claim, it is easy to see $\qq_i \geq q_i$ and $\uu_i \geq \vv_i$, for each $i \in [n]$. By Lemma~\ref{lem:constant} that $v^* = \min\big\{v_j^*: \,\, j \in [n]\big\} > 1$, clearly $v_i^* \geq v^* > 1$ for each $i \in [n]$. Besides, (1)~when $\Delta_i = 0$, we surely have $\FF_i \equiv F_i$ and thus, $\vv_i = v_i$; otherwise, i.e. (2)~when $\Delta_i > 0$, it follows from Lemma~\ref{lem:potential} and line~(\ref{alg3:FF1})/line~(\ref{alg3:FF2}) of Algorithm~\ref{alg3} that
\begin{itemize}
\item $\Psi\big(p, \{\FF_i\}\big)$ is strictly decreasing on $p \in [v_i, \uu] = [v_i, \uu_i] \subset [v_i, u_i]$, which means that $\vv_i \leq v_i$;
\item $\Psi\big(p, \{\FF_i\}\big)$ remains $\ln\left(1 + \frac{v_i q_i}{1 - q_i}\right) - \Delta_i$ on $p \in (0, v_i]$, which means $\vv_i \geq v_i$.
\end{itemize}
Therefore, $\vv_i = v_i$ always holds, whether $\Delta_i = 0$ or not. By induction, $\vv_i = v_i = v_i^*$ for all $i \in [n]$.

Furthermore, the third claim can be easily inferred from line~(\ref{alg3:FF1})/line~(\ref{alg3:FF2}) of Algorithm~\ref{alg3}. For this, we shall emphasize again that $\uu_i = \uu$, for all $i \in [n]$ that $\Delta_i > 0$.

Eventually, to see $\ggamma \geq u$, we shall recall inequality~(\ref{eq:cstr:ap.0.2}) that $\R(\gamma^*) + \Psi\big(p, \{F_i^*\}_{i = 1}^n\big) \leq \R(p) - \delta^*$ for all $p \in (1, u^*]$. In the current iteration, assume w.l.o.g.
\begin{equation}
\label{eq:lem:alg_facts1}
\R(\gamma) + \Psi\big(p, \{F_i\}_{i = 1}^n\big) \leq \R(p) - \delta^* \quad\quad \forall p \in (1, u].
\end{equation}
Particularly, assigning $p \leftarrow u$ in inequality~(\ref{eq:lem:alg_facts1}) results in
\[
\R(u) \overset{(\ref{eq:lem:alg_facts1})}{\geq} \R(\gamma) + \Psi\big(u, \{F_i\}_{i = 1}^n\big) + \delta^* \geq \R(\gamma) + \delta^* \overset{(\vee)}{\geq} \R(\gamma) + \Delta \overset{(\wedge)}{=} \R(\ggamma),
\]
where $(\vee)$ follows as $\delta^* \geq \Delta^* \geq \Delta$ (see Lemma~\ref{lem:constant} and line~(\ref{alg2:Delta}) of Algorithm~\ref{alg2}), and $(\wedge)$ follows from line~(\ref{alg2:continuous}) of Algorithm~\ref{alg2}. As per this, we can infer $\ggamma \geq u$ from Lemma~\ref{lem:RQ}.1. Additionally,
\begin{flalign*}
\R(\ggamma) + \Psi\big(p, \{\FF_i\}_{i = 1}^n\big)
= & \R(\ggamma) + \Psi\big(p, \{F_i\}_{i = 1}^n\big) - \Delta & \text{(By the third claim)} \\
= & \R(\gamma) + \Psi\big(p, \{F_i\}_{i = 1}^n\big)
\overset{(\ref{eq:lem:alg_facts1})}{\leq} \R(p) - \delta^*, & \text{(By line~(\ref{alg2:continuous}) of Algorithm~\ref{alg2})}
\end{flalign*}
for all $p \in (1, \uu] \subset (1, u]$.  This means inequality~(\ref{eq:lem:alg_facts1}) holds in the next iteration, which completes the proof of the lemma.
\end{proof}

\section{Missing Proofs in Section~\ref{subsec:upper3-reg-cstr}}
\label{app:upper3-cstr}
\subsection{Auxiliary Lemma~\ref{lem:tangent}}
\label{subapp:upper3-cstr:lem}

\begin{lemma}
\label{lem:tangent}
Given $a \geq 1$, $b \geq 0$ and $F_{\tan}(p) =
\begin{cases}
1 - \frac{b}{p - a} & p \in (a + b, \infty) \\
0 & p \in (0, a + b]
\end{cases}$. For all $\theta \in (a + b, \infty)$ and $\Delta_k \in \Big(0, \Psi\big(\theta, \{F_{\tan}\}\big)\Big]$, the $\FF_{\tan}$ defined below is regular in the range of $p \in (0, \theta]$:
\[
\Psi\big(p, \{\FF_{\tan}\}\big) \equiv \Psi\big(p, \{F_{\tan}\}\big) - \Delta_k \quad\quad \forall p \in (a + b, \theta].
\]
\end{lemma}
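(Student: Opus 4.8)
The plan is to compute $\FF_{\tan}$ in closed form, read off its revenue-quantile curve, and show that curve is concave; by the revenue-quantile characterization of regularity (Section~\ref{subsec:prelim-rq-curve}) this is exactly the claim. The whole argument reduces, after some bookkeeping, to checking that the discriminant of a certain quadratic is nonnegative.

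\textbf{Step 1: explicit form of $\FF_{\tan}$.} Set $\beta \eqdef a+b$, so that $F_{\tan}(p) = \frac{p-\beta}{p-a}$ on $(\beta,\infty)$ and $\beta$ is its monopoly price, and set $c \eqdef e^{-\Delta_k}\in(0,1)$. Rewriting the defining identity $\Psi(p,\{\FF_{\tan}\}) = \Psi(p,\{F_{\tan}\}) - \Delta_k$ in terms of odds ratios $\overline{G}\eqdef\frac{1-\FF_{\tan}}{\FF_{\tan}}$ and $G\eqdef\frac{1-F_{\tan}}{F_{\tan}}$ gives the clean update rule $\overline{G}(p) = c\,G(p) - \frac{1-c}{p}$. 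Since $G(p)=\frac{b}{p-\beta}$, this yields
\[
\FF_{\tan}(p) = \frac{p\,(p-\beta)}{p^{2} + (\gamma_{1}-\beta)\,p + \gamma_{0}}, \qquad \gamma_{1}\eqdef c(1+b)-1, \quad \gamma_{0}\eqdef\beta\,(1-c)>0,
\]
valid throughout the declining range $p\in(\beta,\theta]$: the hypothesis $\Delta_k\le\Psi(\theta,\{F_{\tan}\})$ together with the fact that $\Psi(\cdot,\{F_{\tan}\})$ is decreasing (Lemma~\ref{lem:potential}) guarantees $\Psi(p,\{\FF_{\tan}\})\ge 0$, i.e.\ $\FF_{\tan}(p)\in(0,1]$, on that range. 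If $b=0$ then $F_{\tan}$ is a point mass and $\FF_{\tan}$ a triangular distribution, hence trivially regular, so we may assume $b>0$.

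\textbf{Step 2: revenue-quantile curve.} At price $p\in(\beta,\theta]$ the quantile is $q=1-\FF_{\tan}(p)$; clearing denominators turns this into the quadratic $q\,p^{2} + A(q)\,p + \gamma_{0}(q-1) = 0$, where $A(q)\eqdef(\gamma_{1}-\beta)\,q-\gamma_{1}$ is affine in $q$. Taking the larger root (left-continuity convention) and multiplying by $q$,
\[
\rr_{\tan}(q) = \tfrac12\Big(-A(q) + \sqrt{B(q)}\Big), \qquad B(q)\eqdef A(q)^{2} - 4\gamma_{0}\,q\,(q-1) = P q^{2} + Q q + R,
\]
a quadratic $B$; note $B(q) = A(q)^{2} + 4\gamma_{0}\,q(1-q) > 0$ for $q\in(0,1)$ since $\gamma_{0}>0$, and as a sanity check $c=1$ recovers $\rr_{\tan}(q)=aq+b$. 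Since $-A(q)$ is affine, it suffices to show $\sqrt{B}$ is concave on $(0,1)$. For a quadratic $B$ one has $\big(\sqrt{B}\big)'' = \frac{2BB''-(B')^{2}}{4B^{3/2}}$ with $2BB''-(B')^{2} = 4PR-Q^{2}$ constant, and a short computation gives $Q^{2}-4PR = 16\,\gamma_{0}(\gamma_{1}\beta+\gamma_{0}) = 16\,\beta^{2}b\,c\,(1-c)$, using $\gamma_{1}\beta+\gamma_{0}=\beta cb$. As $\beta=a+b\ge 1$, $b>0$, $c\in(0,1)$, this is positive, so $2BB''-(B')^{2}\le 0$; hence $\sqrt{B}$ is concave, $\rr_{\tan}$ is concave on the relevant quantile range, and $\FF_{\tan}$ is regular on $p\in(0,\theta]$ (together with the constant piece $\rr_{\tan}(q)=\beta$ for the quantiles corresponding to $p\in(0,\beta]$).

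\textbf{Main obstacle.} The only real work is the algebra in Steps~1--2: deriving the rational form of $\FF_{\tan}$ and inverting $q=1-\FF_{\tan}(p)$ to get $p$ (hence $\rr_{\tan}$) as a function of $q$. Once $\rr_{\tan}$ is in the shape ``affine plus $\tfrac12\sqrt{\text{quadratic}}$'', concavity collapses to the single identity $Q^{2}-4PR = 16\beta^{2}bc(1-c)\ge 0$, which is immediate. One should also verify the boundary bookkeeping --- the range of $q$ swept out as $p$ runs over $(\beta,\theta]$, the root choice, and the sign of $\gamma_{1}$ (which merely decides whether $\FF_{\tan}$'s support-supremum is finite) --- but none of this affects the core estimate.
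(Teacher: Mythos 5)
Your proof is correct, but it takes a genuinely different route from the paper's. The paper works directly with the virtual-value characterization of regularity: after the substitution $y = 1 - \frac{a+b}{p}$ it writes $\PPhi_{\tan}(p) = a + \Upsilon(y)\cdot\big(1 - e^{-\Delta_k}\big)$ for an explicit rational function $\Upsilon$, and shows $\Upsilon$ is increasing by observing that its numerator is positive and increasing while its denominator is positive and decreasing. You instead invert $q = 1 - \FF_{\tan}(p)$ to get the revenue-quantile curve in closed form as ``affine plus $\tfrac12\sqrt{\text{quadratic}}$'' and reduce concavity to the single identity $Q^{2}-4PR = 16\beta^{2}bc(1-c)\ge 0$ via the fact that $\big(\sqrt{B}\big)''$ has the constant sign of $4PR-Q^{2}$; I checked the algebra ($\gamma_1\beta+\gamma_0=\beta cb$, and the discriminant computation) and it is right, as is the sanity check at $c=1$. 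The paper's route avoids square roots and stays in the $(p,F)$ coordinates used elsewhere in Section~\ref{subsec:upper3-reg}; yours isolates the whole lemma into one clean algebraic invariant and makes the structure of $\FF_{\tan}$ (linear-fractional odds ratio) transparent. One small piece of bookkeeping you should add: to justify that the larger root of $q\,p^2+A(q)p+\gamma_0(q-1)=0$ really is $\rr_{\tan}(q)/q$, you need $\FF_{\tan}$ to be a bona fide (nondecreasing) CDF on $(\beta,\theta]$, so that $p\mapsto q$ is a monotone bijection; this follows in one line because $p\,\overline{G}(p) = e^{\Psi(p,\{\FF_{\tan}\})}-1$ is nonnegative and decreasing (Lemma~\ref{lem:potential} plus $\Delta_k\le\Psi(\theta,\{F_{\tan}\})$), hence $\overline{G}(p)=\big(p\,\overline{G}(p)\big)\cdot\frac1p$ is decreasing. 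Also, the ``constant piece for $p\in(0,\beta]$'' is just the single quantile $q=1$, where your formula already gives $\rr_{\tan}(1)=\beta$, so continuity at the endpoint is automatic.
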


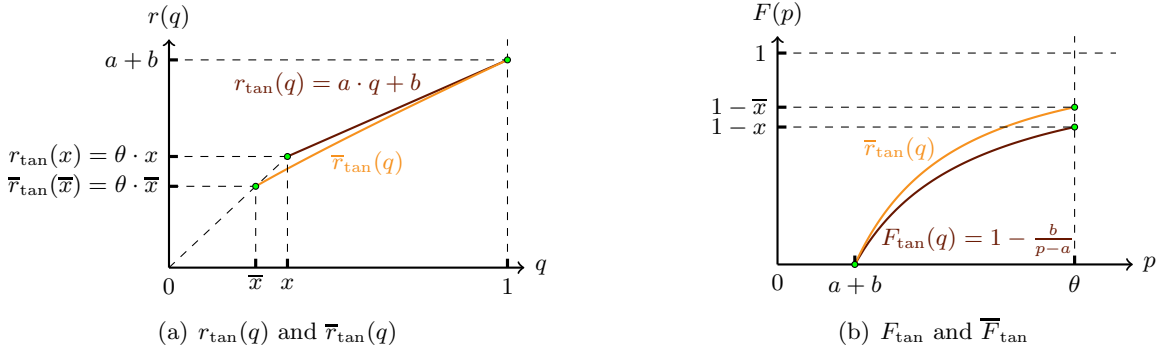
\begin{figure}[H]
\centering
\subfigure[$r_{\tan}(q)$ and $\rr_{\tan}(q)$]{
\begin{tikzpicture}[thick, smooth, scale = 2.25]
\draw[->] (0, 0) -- (2.1, 0);
\draw[->] (0, 0) -- (0, 1.35);
\draw[color = Sepia] (0.7, 0.6572) -- (2, 1.2284);
\draw[color = BurntOrange] (0.5123, 0.4809) -- (0.5142, 0.4821) -- (0.5162, 0.4832) -- (0.5182, 0.4843) -- (0.5202, 0.4855) -- (0.5222, 0.4866) -- (0.5242, 0.4877) -- (0.5262, 0.4888) -- (0.5282, 0.4900) -- (0.5302, 0.4911) -- (0.5322, 0.4922) -- (0.5342, 0.4934) -- (0.5362, 0.4945) -- (0.5382, 0.4956) -- (0.5402, 0.4968) -- (0.5422, 0.4979) -- (0.5442, 0.4990) -- (0.5462, 0.5002) -- (0.5482, 0.5013) -- (0.5502, 0.5024) -- (0.5522, 0.5036) -- (0.5543, 0.5047) -- (0.5563, 0.5058) -- (0.5583, 0.5070) -- (0.5603, 0.5081) -- (0.5623, 0.5092) -- (0.5644, 0.5104) -- (0.5664, 0.5115) -- (0.5684, 0.5126) -- (0.5705, 0.5138) -- (0.5725, 0.5149) -- (0.5745, 0.5160) -- (0.5765, 0.5172) -- (0.5786, 0.5183) -- (0.5806, 0.5194) -- (0.5827, 0.5206) -- (0.5847, 0.5217) -- (0.5867, 0.5228) -- (0.5888, 0.5240) -- (0.5908, 0.5251) -- (0.5929, 0.5263) -- (0.5949, 0.5274) -- (0.5970, 0.5285) -- (0.5990, 0.5297) -- (0.6011, 0.5308) -- (0.6031, 0.5319) -- (0.6052, 0.5331) -- (0.6072, 0.5342) -- (0.6093, 0.5353) -- (0.6114, 0.5365) -- (0.6134, 0.5376) -- (0.6155, 0.5388) -- (0.6176, 0.5399) -- (0.6196, 0.5410) -- (0.6217, 0.5422) -- (0.6238, 0.5433) -- (0.6258, 0.5444) -- (0.6279, 0.5456) -- (0.6300, 0.5467) -- (0.6320, 0.5479) -- (0.6341, 0.5490) -- (0.6362, 0.5501) -- (0.6383, 0.5513) -- (0.6404, 0.5524) -- (0.6424, 0.5536) -- (0.6445, 0.5547) -- (0.6466, 0.5558) -- (0.6487, 0.5570) -- (0.6508, 0.5581) -- (0.6529, 0.5592) -- (0.6550, 0.5604) -- (0.6571, 0.5615) -- (0.6591, 0.5627) -- (0.6612, 0.5638) -- (0.6633, 0.5649) -- (0.6654, 0.5661) -- (0.6675, 0.5672) -- (0.6696, 0.5684) -- (0.6717, 0.5695) -- (0.6738, 0.5707) -- (0.6759, 0.5718) -- (0.6780, 0.5729) -- (0.6801, 0.5741) -- (0.6823, 0.5752) -- (0.6844, 0.5764) -- (0.6865, 0.5775) -- (0.6886, 0.5786) -- (0.6907, 0.5798) -- (0.6928, 0.5809) -- (0.6949, 0.5821) -- (0.6970, 0.5832) -- (0.6992, 0.5843) -- (0.7013, 0.5855) -- (0.7034, 0.5866) -- (0.7055, 0.5878) -- (0.7076, 0.5889) -- (0.7098, 0.5901) -- (0.7119, 0.5912) -- (0.7140, 0.5923) -- (0.7161, 0.5935) -- (0.7183, 0.5946) -- (0.7204, 0.5958) -- (0.7225, 0.5969) -- (0.7247, 0.5981) -- (0.7268, 0.5992) -- (0.7289, 0.6003) -- (0.7311, 0.6015) -- (0.7332, 0.6026) -- (0.7353, 0.6038) -- (0.7375, 0.6049) -- (0.7396, 0.6061) -- (0.7418, 0.6072) -- (0.7439, 0.6083) -- (0.7461, 0.6095) -- (0.7482, 0.6106) -- (0.7503, 0.6118) -- (0.7525, 0.6129) -- (0.7546, 0.6141) -- (0.7568, 0.6152) -- (0.7589, 0.6164) -- (0.7611, 0.6175) -- (0.7633, 0.6186) -- (0.7654, 0.6198) -- (0.7676, 0.6209) -- (0.7697, 0.6221) -- (0.7719, 0.6232) -- (0.7740, 0.6244) -- (0.7762, 0.6255) -- (0.7784, 0.6267) -- (0.7805, 0.6278) -- (0.7827, 0.6290) -- (0.7849, 0.6301) -- (0.7870, 0.6312) -- (0.7892, 0.6324) -- (0.7914, 0.6335) -- (0.7935, 0.6347) -- (0.7957, 0.6358) -- (0.7979, 0.6370) -- (0.8000, 0.6381) -- (0.8022, 0.6393) -- (0.8044, 0.6404) -- (0.8066, 0.6416) -- (0.8087, 0.6427) -- (0.8109, 0.6439) -- (0.8131, 0.6450) -- (0.8153, 0.6461) -- (0.8175, 0.6473) -- (0.8196, 0.6484) -- (0.8218, 0.6496) -- (0.8240, 0.6507) -- (0.8262, 0.6519) -- (0.8284, 0.6530) -- (0.8306, 0.6542) -- (0.8328, 0.6553) -- (0.8349, 0.6565) -- (0.8371, 0.6576) -- (0.8393, 0.6588) -- (0.8415, 0.6599) -- (0.8437, 0.6611) -- (0.8459, 0.6622) -- (0.8481, 0.6634) -- (0.8503, 0.6645) -- (0.8525, 0.6657) -- (0.8547, 0.6668) -- (0.8569, 0.6679) -- (0.8591, 0.6691) -- (0.8613, 0.6702) -- (0.8635, 0.6714) -- (0.8657, 0.6725) -- (0.8679, 0.6737) -- (0.8701, 0.6748) -- (0.8723, 0.6760) -- (0.8745, 0.6771) -- (0.8767, 0.6783) -- (0.8789, 0.6794) -- (0.8812, 0.6806) -- (0.8834, 0.6817) -- (0.8856, 0.6829) -- (0.8878, 0.6840) -- (0.8900, 0.6852) -- (0.8922, 0.6863) -- (0.8944, 0.6875) -- (0.8967, 0.6886) -- (0.8989, 0.6898) -- (0.9011, 0.6909) -- (0.9033, 0.6921) -- (0.9055, 0.6932) -- (0.9078, 0.6944) -- (0.9100, 0.6955) -- (0.9122, 0.6967) -- (0.9144, 0.6978) -- (0.9167, 0.6990) -- (0.9189, 0.7001) -- (0.9211, 0.7013) -- (0.9233, 0.7024) -- (0.9256, 0.7036) -- (0.9278, 0.7047) -- (0.9300, 0.7059) -- (0.9323, 0.7070) -- (0.9345, 0.7082) -- (0.9367, 0.7093) -- (0.9390, 0.7105) -- (0.9412, 0.7116) -- (0.9434, 0.7128) -- (0.9457, 0.7139) -- (0.9479, 0.7151) -- (0.9501, 0.7162) -- (0.9524, 0.7174) -- (0.9546, 0.7185) -- (0.9569, 0.7197) -- (0.9591, 0.7208) -- (0.9613, 0.7220) -- (0.9636, 0.7231) -- (0.9658, 0.7243) -- (0.9681, 0.7254) -- (0.9703, 0.7266) -- (0.9726, 0.7277) -- (0.9748, 0.7289) -- (0.9771, 0.7300) -- (0.9793, 0.7312) -- (0.9816, 0.7323) -- (0.9838, 0.7335) -- (0.9861, 0.7346) -- (0.9883, 0.7358) -- (0.9906, 0.7369) -- (0.9928, 0.7381) -- (0.9951, 0.7392) -- (0.9973, 0.7404) -- (0.9996, 0.7415) -- (1.0019, 0.7427) -- (1.0041, 0.7438) -- (1.0064, 0.7450) -- (1.0086, 0.7461) -- (1.0109, 0.7473) -- (1.0132, 0.7485) -- (1.0154, 0.7496) -- (1.0177, 0.7508) -- (1.0200, 0.7519) -- (1.0222, 0.7531) -- (1.0245, 0.7542) -- (1.0268, 0.7554) -- (1.0290, 0.7565) -- (1.0313, 0.7577) -- (1.0336, 0.7588) -- (1.0358, 0.7600) -- (1.0381, 0.7611) -- (1.0404, 0.7623) -- (1.0426, 0.7634) -- (1.0449, 0.7646) -- (1.0472, 0.7657) -- (1.0495, 0.7669) -- (1.0517, 0.7680) -- (1.0540, 0.7692) -- (1.0563, 0.7703) -- (1.0586, 0.7715) -- (1.0608, 0.7727) -- (1.0631, 0.7738) -- (1.0654, 0.7750) -- (1.0677, 0.7761) -- (1.0700, 0.7773) -- (1.0722, 0.7784) -- (1.0745, 0.7796) -- (1.0768, 0.7807) -- (1.0791, 0.7819) -- (1.0814, 0.7830) -- (1.0837, 0.7842) -- (1.0859, 0.7853) -- (1.0882, 0.7865) -- (1.0905, 0.7876) -- (1.0928, 0.7888) -- (1.0951, 0.7899) -- (1.0974, 0.7911) -- (1.0997, 0.7923) -- (1.1020, 0.7934) -- (1.1043, 0.7946) -- (1.1065, 0.7957) -- (1.1088, 0.7969) -- (1.1111, 0.7980) -- (1.1134, 0.7992) -- (1.1157, 0.8003) -- (1.1180, 0.8015) -- (1.1203, 0.8026) -- (1.1226, 0.8038) -- (1.1249, 0.8049) -- (1.1272, 0.8061) -- (1.1295, 0.8072) -- (1.1318, 0.8084) -- (1.1341, 0.8096) -- (1.1364, 0.8107) -- (1.1387, 0.8119) -- (1.1410, 0.8130) -- (1.1433, 0.8142) -- (1.1456, 0.8153) -- (1.1479, 0.8165) -- (1.1502, 0.8176) -- (1.1525, 0.8188) -- (1.1548, 0.8199) -- (1.1571, 0.8211) -- (1.1595, 0.8222) -- (1.1618, 0.8234) -- (1.1641, 0.8246) -- (1.1664, 0.8257) -- (1.1687, 0.8269) -- (1.1710, 0.8280) -- (1.1733, 0.8292) -- (1.1756, 0.8303) -- (1.1779, 0.8315) -- (1.1802, 0.8326) -- (1.1826, 0.8338) -- (1.1849, 0.8349) -- (1.1872, 0.8361) -- (1.1895, 0.8372) -- (1.1918, 0.8384) -- (1.1941, 0.8396) -- (1.1965, 0.8407) -- (1.1988, 0.8419) -- (1.2011, 0.8430) -- (1.2034, 0.8442) -- (1.2057, 0.8453) -- (1.2081, 0.8465) -- (1.2104, 0.8476) -- (1.2127, 0.8488) -- (1.2150, 0.8499) -- (1.2173, 0.8511) -- (1.2197, 0.8523) -- (1.2220, 0.8534) -- (1.2243, 0.8546) -- (1.2266, 0.8557) -- (1.2290, 0.8569) -- (1.2313, 0.8580) -- (1.2336, 0.8592) -- (1.2359, 0.8603) -- (1.2383, 0.8615) -- (1.2406, 0.8626) -- (1.2429, 0.8638) -- (1.2453, 0.8650) -- (1.2476, 0.8661) -- (1.2499, 0.8673) -- (1.2523, 0.8684) -- (1.2546, 0.8696) -- (1.2569, 0.8707) -- (1.2593, 0.8719) -- (1.2616, 0.8730) -- (1.2639, 0.8742) -- (1.2663, 0.8753) -- (1.2686, 0.8765) -- (1.2709, 0.8777) -- (1.2733, 0.8788) -- (1.2756, 0.8800) -- (1.2779, 0.8811) -- (1.2803, 0.8823) -- (1.2826, 0.8834) -- (1.2850, 0.8846) -- (1.2873, 0.8857) -- (1.2896, 0.8869) -- (1.2920, 0.8880) -- (1.2943, 0.8892) -- (1.2967, 0.8904) -- (1.2990, 0.8915) -- (1.3013, 0.8927) -- (1.3037, 0.8938) -- (1.3060, 0.8950) -- (1.3084, 0.8961) -- (1.3107, 0.8973) -- (1.3131, 0.8984) -- (1.3154, 0.8996) -- (1.3178, 0.9007) -- (1.3201, 0.9019) -- (1.3225, 0.9031) -- (1.3248, 0.9042) -- (1.3272, 0.9054) -- (1.3295, 0.9065) -- (1.3319, 0.9077) -- (1.3342, 0.9088) -- (1.3366, 0.9100) -- (1.3389, 0.9111) -- (1.3413, 0.9123) -- (1.3436, 0.9135) -- (1.3460, 0.9146) -- (1.3483, 0.9158) -- (1.3507, 0.9169) -- (1.3530, 0.9181) -- (1.3554, 0.9192) -- (1.3577, 0.9204) -- (1.3601, 0.9215) -- (1.3624, 0.9227) -- (1.3648, 0.9238) -- (1.3672, 0.9250) -- (1.3695, 0.9262) -- (1.3719, 0.9273) -- (1.3742, 0.9285) -- (1.3766, 0.9296) -- (1.3790, 0.9308) -- (1.3813, 0.9319) -- (1.3837, 0.9331) -- (1.3860, 0.9342) -- (1.3884, 0.9354) -- (1.3908, 0.9366) -- (1.3931, 0.9377) -- (1.3955, 0.9389) -- (1.3978, 0.9400) -- (1.4002, 0.9412) -- (1.4026, 0.9423) -- (1.4049, 0.9435) -- (1.4073, 0.9446) -- (1.4097, 0.9458) -- (1.4120, 0.9469) -- (1.4144, 0.9481) -- (1.4168, 0.9493) -- (1.4191, 0.9504) -- (1.4215, 0.9516) -- (1.4239, 0.9527) -- (1.4262, 0.9539) -- (1.4286, 0.9550) -- (1.4310, 0.9562) -- (1.4334, 0.9573) -- (1.4357, 0.9585) -- (1.4381, 0.9597) -- (1.4405, 0.9608) -- (1.4428, 0.9620) -- (1.4452, 0.9631) -- (1.4476, 0.9643) -- (1.4500, 0.9654) -- (1.4523, 0.9666) -- (1.4547, 0.9677) -- (1.4571, 0.9689) -- (1.4595, 0.9701) -- (1.4618, 0.9712) -- (1.4642, 0.9724) -- (1.4666, 0.9735) -- (1.4690, 0.9747) -- (1.4713, 0.9758) -- (1.4737, 0.9770) -- (1.4761, 0.9781) -- (1.4785, 0.9793) -- (1.4808, 0.9804) -- (1.4832, 0.9816) -- (1.4856, 0.9828) -- (1.4880, 0.9839) -- (1.4904, 0.9851) -- (1.4927, 0.9862) -- (1.4951, 0.9874) -- (1.4975, 0.9885) -- (1.4999, 0.9897) -- (1.5023, 0.9908) -- (1.5047, 0.9920) -- (1.5070, 0.9932) -- (1.5094, 0.9943) -- (1.5118, 0.9955) -- (1.5142, 0.9966) -- (1.5166, 0.9978) -- (1.5190, 0.9989) -- (1.5213, 1.0001) -- (1.5237, 1.0012) -- (1.5261, 1.0024) -- (1.5285, 1.0035) -- (1.5309, 1.0047) -- (1.5333, 1.0059) -- (1.5357, 1.0070) -- (1.5381, 1.0082) -- (1.5404, 1.0093) -- (1.5428, 1.0105) -- (1.5452, 1.0116) -- (1.5476, 1.0128) -- (1.5500, 1.0139) -- (1.5524, 1.0151) -- (1.5548, 1.0162) -- (1.5572, 1.0174) -- (1.5596, 1.0186) -- (1.5620, 1.0197) -- (1.5644, 1.0209) -- (1.5667, 1.0220) -- (1.5691, 1.0232) -- (1.5715, 1.0243) -- (1.5739, 1.0255) -- (1.5763, 1.0266) -- (1.5787, 1.0278) -- (1.5811, 1.0289) -- (1.5835, 1.0301) -- (1.5859, 1.0313) -- (1.5883, 1.0324) -- (1.5907, 1.0336) -- (1.5931, 1.0347) -- (1.5955, 1.0359) -- (1.5979, 1.0370) -- (1.6003, 1.0382) -- (1.6027, 1.0393) -- (1.6051, 1.0405) -- (1.6075, 1.0416) -- (1.6099, 1.0428) -- (1.6123, 1.0440) -- (1.6147, 1.0451) -- (1.6171, 1.0463) -- (1.6195, 1.0474) -- (1.6219, 1.0486) -- (1.6243, 1.0497) -- (1.6267, 1.0509) -- (1.6291, 1.0520) -- (1.6315, 1.0532) -- (1.6339, 1.0543) -- (1.6363, 1.0555) -- (1.6387, 1.0567) -- (1.6411, 1.0578) -- (1.6435, 1.0590) -- (1.6459, 1.0601) -- (1.6483, 1.0613) -- (1.6507, 1.0624) -- (1.6531, 1.0636) -- (1.6555, 1.0647) -- (1.6579, 1.0659) -- (1.6603, 1.0670) -- (1.6628, 1.0682) -- (1.6652, 1.0694) -- (1.6676, 1.0705) -- (1.6700, 1.0717) -- (1.6724, 1.0728) -- (1.6748, 1.0740) -- (1.6772, 1.0751) -- (1.6796, 1.0763) -- (1.6820, 1.0774) -- (1.6844, 1.0786) -- (1.6868, 1.0797) -- (1.6892, 1.0809) -- (1.6917, 1.0820) -- (1.6941, 1.0832) -- (1.6965, 1.0844) -- (1.6989, 1.0855) -- (1.7013, 1.0867) -- (1.7037, 1.0878) -- (1.7061, 1.0890) -- (1.7085, 1.0901) -- (1.7110, 1.0913) -- (1.7134, 1.0924) -- (1.7158, 1.0936) -- (1.7182, 1.0947) -- (1.7206, 1.0959) -- (1.7230, 1.0970) -- (1.7254, 1.0982) -- (1.7279, 1.0994) -- (1.7303, 1.1005) -- (1.7327, 1.1017) -- (1.7351, 1.1028) -- (1.7375, 1.1040) -- (1.7399, 1.1051) -- (1.7423, 1.1063) -- (1.7448, 1.1074) -- (1.7472, 1.1086) -- (1.7496, 1.1097) -- (1.7520, 1.1109) -- (1.7544, 1.1120) -- (1.7569, 1.1132) -- (1.7593, 1.1144) -- (1.7617, 1.1155) -- (1.7641, 1.1167) -- (1.7665, 1.1178) -- (1.7690, 1.1190) -- (1.7714, 1.1201) -- (1.7738, 1.1213) -- (1.7762, 1.1224) -- (1.7786, 1.1236) -- (1.7811, 1.1247) -- (1.7835, 1.1259) -- (1.7859, 1.1270) -- (1.7883, 1.1282) -- (1.7907, 1.1293) -- (1.7932, 1.1305) -- (1.7956, 1.1316) -- (1.7980, 1.1328) -- (1.8004, 1.1340) -- (1.8029, 1.1351) -- (1.8053, 1.1363) -- (1.8077, 1.1374) -- (1.8101, 1.1386) -- (1.8126, 1.1397) -- (1.8150, 1.1409) -- (1.8174, 1.1420) -- (1.8198, 1.1432) -- (1.8223, 1.1443) -- (1.8247, 1.1455) -- (1.8271, 1.1466) -- (1.8295, 1.1478) -- (1.8320, 1.1489) -- (1.8344, 1.1501) -- (1.8368, 1.1512) -- (1.8392, 1.1524) -- (1.8417, 1.1536) -- (1.8441, 1.1547) -- (1.8465, 1.1559) -- (1.8490, 1.1570) -- (1.8514, 1.1582) -- (1.8538, 1.1593) -- (1.8563, 1.1605) -- (1.8587, 1.1616) -- (1.8611, 1.1628) -- (1.8635, 1.1639) -- (1.8660, 1.1651) -- (1.8684, 1.1662) -- (1.8708, 1.1674) -- (1.8733, 1.1685) -- (1.8757, 1.1697) -- (1.8781, 1.1708) -- (1.8806, 1.1720) -- (1.8830, 1.1731) -- (1.8854, 1.1743) -- (1.8879, 1.1754) -- (1.8903, 1.1766) -- (1.8927, 1.1778) -- (1.8952, 1.1789) -- (1.8976, 1.1801) -- (1.9000, 1.1812) -- (1.9025, 1.1824) -- (1.9049, 1.1835) -- (1.9073, 1.1847) -- (1.9098, 1.1858) -- (1.9122, 1.1870) -- (1.9146, 1.1881) -- (1.9171, 1.1893) -- (1.9195, 1.1904) -- (1.9219, 1.1916) -- (1.9244, 1.1927) -- (1.9268, 1.1939) -- (1.9292, 1.1950) -- (1.9317, 1.1962) -- (1.9341, 1.1973) -- (1.9366, 1.1985) -- (1.9390, 1.1996) -- (1.9414, 1.2008) -- (1.9439, 1.2019) -- (1.9463, 1.2031) -- (1.9487, 1.2042) -- (1.9512, 1.2054) -- (1.9536, 1.2065) -- (1.9561, 1.2077) -- (1.9585, 1.2088) -- (1.9609, 1.2100) -- (1.9634, 1.2111) -- (1.9658, 1.2123) -- (1.9683, 1.2135) -- (1.9707, 1.2146) -- (1.9731, 1.2158) -- (1.9756, 1.2169) -- (1.9780, 1.2181) -- (1.9805, 1.2192) -- (1.9829, 1.2204) -- (1.9853, 1.2215) -- (1.9878, 1.2227) -- (1.9902, 1.2238) -- (1.9927, 1.2250) -- (1.9951, 1.2261) -- (1.9976, 1.2273) -- (2.0000, 1.2284);

\node[above] at (0, 1.35) {\footnotesize $r(q)$};
\node[right] at (2.1, 0) {\footnotesize $q$};
\node[below] at (0, 0) {\footnotesize $0$};
\draw[very thick] (2, 0) -- (2, 1.5pt);
\node[below] at (2, 0) {\footnotesize $1$};
\draw[thin, dashed] (0, 0) -- (0.7, 0.6572);

\draw[very thick] (0.7, 0) -- (0.7, 1.5pt);
\node[below] at (0.7, 0) {\footnotesize $x$};
\node[left] at (0, 0.6572) {\footnotesize $r_{\tan}(x) = \theta \cdot x$};
\draw[thin, dashed] (0.7, 0) -- (0.7, 0.6572) -- (0, 0.6572);

\draw[very thick] (0.5123, 0) -- (0.5123, 1.5pt);
\node[below] at (0.5123, 0.025) {\footnotesize $\overline{x}$};
\node[left] at (0, 0.4809) {\footnotesize $\rr_{\tan}(\overline{x}) = \theta \cdot \overline{x}$};
\draw[thin, dashed] (0.5123, 0) -- (0.5123, 0.4809) -- (0, 0.4809);

\draw[very thick] (0, 1.2284) -- (1.5pt, 1.2284);
\node[left] at (0, 1.2284) {\footnotesize $a + b$};
\draw[thin, dashed] (0, 1.2284) -- (2, 1.2284);

\draw[thin, dashed] (2, 0) -- (2, 1.35);
\node[anchor = south east, color = Sepia] at (1.55, 0.95) {\footnotesize $r_{\tan}(q) = a \cdot q + b$};
\node[anchor = north west, color = BurntOrange] at (0.9, 0.75) {\footnotesize $\rr_{\tan}(q)$};

\draw[thin, fill = green] (0.7, 0.6572) circle(0.5pt);
\draw[thin, fill = green] (0.5123, 0.4809) circle(0.5pt);
\draw[thin, fill = green] (2, 1.2284) circle(0.5pt);

\draw[very thick] (0, 0.4809) -- (1.5pt, 0.4809);
\draw[very thick] (0, 0.6572) -- (1.5pt, 0.6572);
\end{tikzpicture}
\label{fig:lem:tangent1}
}
\quad\quad\quad\quad
\subfigure[$F_{\tan}$ and $\FF_{\tan}$]{
\begin{tikzpicture}[thick, smooth, scale = 2.25]
\draw[->] (0, 0) -- (2.1, 0);
\draw[->] (0, 0) -- (0, 1.35);
\draw[color = Sepia, domain = 0.4568: 1.7554] plot (\x, {1.25 * (1 - 0.69925 / (\x + 0.24245))});
\draw[color = BurntOrange] (1.7554, 0.9298) -- (1.7497, 0.9286) -- (1.7440, 0.9274) -- (1.7384, 0.9261) -- (1.7328, 0.9249) -- (1.7273, 0.9236) -- (1.7217, 0.9224) -- (1.7162, 0.9211) -- (1.7108, 0.9199) -- (1.7053, 0.9186) -- (1.6999, 0.9174) -- (1.6945, 0.9161) -- (1.6892, 0.9149) -- (1.6839, 0.9136) -- (1.6786, 0.9124) -- (1.6733, 0.9111) -- (1.6681, 0.9099) -- (1.6629, 0.9086) -- (1.6577, 0.9074) -- (1.6525, 0.9061) -- (1.6474, 0.9048) -- (1.6423, 0.9036) -- (1.6373, 0.9023) -- (1.6322, 0.9011) -- (1.6272, 0.8998) -- (1.6222, 0.8985) -- (1.6173, 0.8973) -- (1.6123, 0.8960) -- (1.6074, 0.8947) -- (1.6025, 0.8935) -- (1.5977, 0.8922) -- (1.5928, 0.8909) -- (1.5880, 0.8897) -- (1.5833, 0.8884) -- (1.5785, 0.8871) -- (1.5738, 0.8858) -- (1.5691, 0.8846) -- (1.5644, 0.8833) -- (1.5597, 0.8820) -- (1.5551, 0.8807) -- (1.5505, 0.8795) -- (1.5459, 0.8782) -- (1.5413, 0.8769) -- (1.5368, 0.8756) -- (1.5323, 0.8743) -- (1.5278, 0.8730) -- (1.5233, 0.8718) -- (1.5189, 0.8705) -- (1.5145, 0.8692) -- (1.5101, 0.8679) -- (1.5057, 0.8666) -- (1.5013, 0.8653) -- (1.4970, 0.8640) -- (1.4927, 0.8627) -- (1.4884, 0.8614) -- (1.4841, 0.8602) -- (1.4798, 0.8589) -- (1.4756, 0.8576) -- (1.4714, 0.8563) -- (1.4672, 0.8550) -- (1.4630, 0.8537) -- (1.4589, 0.8524) -- (1.4548, 0.8511) -- (1.4507, 0.8498) -- (1.4466, 0.8485) -- (1.4425, 0.8472) -- (1.4384, 0.8459) -- (1.4344, 0.8446) -- (1.4304, 0.8433) -- (1.4264, 0.8420) -- (1.4224, 0.8407) -- (1.4185, 0.8393) -- (1.4145, 0.8380) -- (1.4106, 0.8367) -- (1.4067, 0.8354) -- (1.4028, 0.8341) -- (1.3990, 0.8328) -- (1.3951, 0.8315) -- (1.3913, 0.8302) -- (1.3875, 0.8289) -- (1.3837, 0.8275) -- (1.3799, 0.8262) -- (1.3762, 0.8249) -- (1.3724, 0.8236) -- (1.3687, 0.8223) -- (1.3650, 0.8210) -- (1.3613, 0.8196) -- (1.3577, 0.8183) -- (1.3540, 0.8170) -- (1.3504, 0.8157) -- (1.3468, 0.8144) -- (1.3432, 0.8130) -- (1.3396, 0.8117) -- (1.3360, 0.8104) -- (1.3324, 0.8091) -- (1.3289, 0.8077) -- (1.3254, 0.8064) -- (1.3219, 0.8051) -- (1.3184, 0.8037) -- (1.3149, 0.8024) -- (1.3114, 0.8011) -- (1.3080, 0.7998) -- (1.3046, 0.7984) -- (1.3012, 0.7971) -- (1.2978, 0.7958) -- (1.2944, 0.7944) -- (1.2910, 0.7931) -- (1.2876, 0.7917) -- (1.2843, 0.7904) -- (1.2810, 0.7891) -- (1.2777, 0.7877) -- (1.2744, 0.7864) -- (1.2711, 0.7851) -- (1.2678, 0.7837) -- (1.2646, 0.7824) -- (1.2613, 0.7810) -- (1.2581, 0.7797) -- (1.2549, 0.7783) -- (1.2517, 0.7770) -- (1.2485, 0.7757) -- (1.2453, 0.7743) -- (1.2422, 0.7730) -- (1.2390, 0.7716) -- (1.2359, 0.7703) -- (1.2328, 0.7689) -- (1.2297, 0.7676) -- (1.2266, 0.7662) -- (1.2235, 0.7649) -- (1.2204, 0.7635) -- (1.2174, 0.7622) -- (1.2143, 0.7608) -- (1.2113, 0.7595) -- (1.2083, 0.7581) -- (1.2053, 0.7568) -- (1.2023, 0.7554) -- (1.1993, 0.7540) -- (1.1963, 0.7527) -- (1.1934, 0.7513) -- (1.1904, 0.7500) -- (1.1875, 0.7486) -- (1.1846, 0.7473) -- (1.1817, 0.7459) -- (1.1788, 0.7445) -- (1.1759, 0.7432) -- (1.1730, 0.7418) -- (1.1702, 0.7405) -- (1.1673, 0.7391) -- (1.1645, 0.7377) -- (1.1617, 0.7364) -- (1.1589, 0.7350) -- (1.1561, 0.7336) -- (1.1533, 0.7323) -- (1.1505, 0.7309) -- (1.1477, 0.7295) -- (1.1450, 0.7282) -- (1.1422, 0.7268) -- (1.1395, 0.7254) -- (1.1368, 0.7240) -- (1.1340, 0.7227) -- (1.1313, 0.7213) -- (1.1286, 0.7199) -- (1.1260, 0.7186) -- (1.1233, 0.7172) -- (1.1206, 0.7158) -- (1.1180, 0.7144) -- (1.1153, 0.7131) -- (1.1127, 0.7117) -- (1.1101, 0.7103) -- (1.1075, 0.7089) -- (1.1049, 0.7076) -- (1.1023, 0.7062) -- (1.0997, 0.7048) -- (1.0971, 0.7034) -- (1.0946, 0.7020) -- (1.0920, 0.7007) -- (1.0895, 0.6993) -- (1.0869, 0.6979) -- (1.0844, 0.6965) -- (1.0819, 0.6951) -- (1.0794, 0.6937) -- (1.0769, 0.6924) -- (1.0744, 0.6910) -- (1.0719, 0.6896) -- (1.0695, 0.6882) -- (1.0670, 0.6868) -- (1.0646, 0.6854) -- (1.0621, 0.6840) -- (1.0597, 0.6827) -- (1.0573, 0.6813) -- (1.0549, 0.6799) -- (1.0525, 0.6785) -- (1.0501, 0.6771) -- (1.0477, 0.6757) -- (1.0453, 0.6743) -- (1.0429, 0.6729) -- (1.0406, 0.6715) -- (1.0382, 0.6701) -- (1.0359, 0.6687) -- (1.0336, 0.6673) -- (1.0312, 0.6659) -- (1.0289, 0.6645) -- (1.0266, 0.6632) -- (1.0243, 0.6618) -- (1.0220, 0.6604) -- (1.0197, 0.6590) -- (1.0175, 0.6576) -- (1.0152, 0.6562) -- (1.0130, 0.6548) -- (1.0107, 0.6534) -- (1.0085, 0.6520) -- (1.0062, 0.6506) -- (1.0040, 0.6492) -- (1.0018, 0.6478) -- (0.9996, 0.6464) -- (0.9974, 0.6449) -- (0.9952, 0.6435) -- (0.9930, 0.6421) -- (0.9908, 0.6407) -- (0.9886, 0.6393) -- (0.9865, 0.6379) -- (0.9843, 0.6365) -- (0.9822, 0.6351) -- (0.9800, 0.6337) -- (0.9779, 0.6323) -- (0.9758, 0.6309) -- (0.9737, 0.6295) -- (0.9715, 0.6281) -- (0.9694, 0.6267) -- (0.9673, 0.6252) -- (0.9653, 0.6238) -- (0.9632, 0.6224) -- (0.9611, 0.6210) -- (0.9590, 0.6196) -- (0.9570, 0.6182) -- (0.9549, 0.6168) -- (0.9529, 0.6154) -- (0.9508, 0.6139) -- (0.9488, 0.6125) -- (0.9468, 0.6111) -- (0.9447, 0.6097) -- (0.9427, 0.6083) -- (0.9407, 0.6069) -- (0.9387, 0.6054) -- (0.9367, 0.6040) -- (0.9348, 0.6026) -- (0.9328, 0.6012) -- (0.9308, 0.5998) -- (0.9288, 0.5984) -- (0.9269, 0.5969) -- (0.9249, 0.5955) -- (0.9230, 0.5941) -- (0.9210, 0.5927) -- (0.9191, 0.5912) -- (0.9172, 0.5898) -- (0.9153, 0.5884) -- (0.9134, 0.5870) -- (0.9114, 0.5856) -- (0.9095, 0.5841) -- (0.9076, 0.5827) -- (0.9058, 0.5813) -- (0.9039, 0.5798) -- (0.9020, 0.5784) -- (0.9001, 0.5770) -- (0.8983, 0.5756) -- (0.8964, 0.5741) -- (0.8946, 0.5727) -- (0.8927, 0.5713) -- (0.8909, 0.5699) -- (0.8890, 0.5684) -- (0.8872, 0.5670) -- (0.8854, 0.5656) -- (0.8836, 0.5641) -- (0.8818, 0.5627) -- (0.8800, 0.5613) -- (0.8782, 0.5598) -- (0.8764, 0.5584) -- (0.8746, 0.5570) -- (0.8728, 0.5555) -- (0.8710, 0.5541) -- (0.8693, 0.5527) -- (0.8675, 0.5512) -- (0.8657, 0.5498) -- (0.8640, 0.5484) -- (0.8622, 0.5469) -- (0.8605, 0.5455) -- (0.8587, 0.5441) -- (0.8570, 0.5426) -- (0.8553, 0.5412) -- (0.8536, 0.5397) -- (0.8519, 0.5383) -- (0.8501, 0.5369) -- (0.8484, 0.5354) -- (0.8467, 0.5340) -- (0.8450, 0.5325) -- (0.8434, 0.5311) -- (0.8417, 0.5297) -- (0.8400, 0.5282) -- (0.8383, 0.5268) -- (0.8367, 0.5253) -- (0.8350, 0.5239) -- (0.8333, 0.5225) -- (0.8317, 0.5210) -- (0.8300, 0.5196) -- (0.8284, 0.5181) -- (0.8268, 0.5167) -- (0.8251, 0.5152) -- (0.8235, 0.5138) -- (0.8219, 0.5123) -- (0.8203, 0.5109) -- (0.8186, 0.5095) -- (0.8170, 0.5080) -- (0.8154, 0.5066) -- (0.8138, 0.5051) -- (0.8122, 0.5037) -- (0.8107, 0.5022) -- (0.8091, 0.5008) -- (0.8075, 0.4993) -- (0.8059, 0.4979) -- (0.8044, 0.4964) -- (0.8028, 0.4950) -- (0.8012, 0.4935) -- (0.7997, 0.4921) -- (0.7981, 0.4906) -- (0.7966, 0.4892) -- (0.7950, 0.4877) -- (0.7935, 0.4863) -- (0.7920, 0.4848) -- (0.7904, 0.4834) -- (0.7889, 0.4819) -- (0.7874, 0.4804) -- (0.7859, 0.4790) -- (0.7844, 0.4775) -- (0.7829, 0.4761) -- (0.7814, 0.4746) -- (0.7799, 0.4732) -- (0.7784, 0.4717) -- (0.7769, 0.4703) -- (0.7754, 0.4688) -- (0.7739, 0.4673) -- (0.7724, 0.4659) -- (0.7710, 0.4644) -- (0.7695, 0.4630) -- (0.7680, 0.4615) -- (0.7666, 0.4600) -- (0.7651, 0.4586) -- (0.7637, 0.4571) -- (0.7622, 0.4557) -- (0.7608, 0.4542) -- (0.7594, 0.4527) -- (0.7579, 0.4513) -- (0.7565, 0.4498) -- (0.7551, 0.4484) -- (0.7537, 0.4469) -- (0.7522, 0.4454) -- (0.7508, 0.4440) -- (0.7494, 0.4425) -- (0.7480, 0.4411) -- (0.7466, 0.4396) -- (0.7452, 0.4381) -- (0.7438, 0.4367) -- (0.7424, 0.4352) -- (0.7410, 0.4337) -- (0.7397, 0.4323) -- (0.7383, 0.4308) -- (0.7369, 0.4293) -- (0.7355, 0.4279) -- (0.7342, 0.4264) -- (0.7328, 0.4249) -- (0.7315, 0.4235) -- (0.7301, 0.4220) -- (0.7288, 0.4205) -- (0.7274, 0.4191) -- (0.7261, 0.4176) -- (0.7247, 0.4161) -- (0.7234, 0.4147) -- (0.7221, 0.4132) -- (0.7207, 0.4117) -- (0.7194, 0.4102) -- (0.7181, 0.4088) -- (0.7168, 0.4073) -- (0.7154, 0.4058) -- (0.7141, 0.4044) -- (0.7128, 0.4029) -- (0.7115, 0.4014) -- (0.7102, 0.3999) -- (0.7089, 0.3985) -- (0.7076, 0.3970) -- (0.7064, 0.3955) -- (0.7051, 0.3941) -- (0.7038, 0.3926) -- (0.7025, 0.3911) -- (0.7012, 0.3896) -- (0.7000, 0.3882) -- (0.6987, 0.3867) -- (0.6974, 0.3852) -- (0.6962, 0.3837) -- (0.6949, 0.3823) -- (0.6937, 0.3808) -- (0.6924, 0.3793) -- (0.6912, 0.3778) -- (0.6899, 0.3763) -- (0.6887, 0.3749) -- (0.6874, 0.3734) -- (0.6862, 0.3719) -- (0.6850, 0.3704) -- (0.6837, 0.3690) -- (0.6825, 0.3675) -- (0.6813, 0.3660) -- (0.6801, 0.3645) -- (0.6789, 0.3630) -- (0.6776, 0.3616) -- (0.6764, 0.3601) -- (0.6752, 0.3586) -- (0.6740, 0.3571) -- (0.6728, 0.3556) -- (0.6716, 0.3542) -- (0.6704, 0.3527) -- (0.6692, 0.3512) -- (0.6681, 0.3497) -- (0.6669, 0.3482) -- (0.6657, 0.3467) -- (0.6645, 0.3453) -- (0.6633, 0.3438) -- (0.6622, 0.3423) -- (0.6610, 0.3408) -- (0.6598, 0.3393) -- (0.6587, 0.3378) -- (0.6575, 0.3364) -- (0.6564, 0.3349) -- (0.6552, 0.3334) -- (0.6540, 0.3319) -- (0.6529, 0.3304) -- (0.6518, 0.3289) -- (0.6506, 0.3274) -- (0.6495, 0.3260) -- (0.6483, 0.3245) -- (0.6472, 0.3230) -- (0.6461, 0.3215) -- (0.6449, 0.3200) -- (0.6438, 0.3185) -- (0.6427, 0.3170) -- (0.6416, 0.3155) -- (0.6405, 0.3141) -- (0.6394, 0.3126) -- (0.6382, 0.3111) -- (0.6371, 0.3096) -- (0.6360, 0.3081) -- (0.6349, 0.3066) -- (0.6338, 0.3051) -- (0.6327, 0.3036) -- (0.6316, 0.3021) -- (0.6305, 0.3006) -- (0.6295, 0.2992) -- (0.6284, 0.2977) -- (0.6273, 0.2962) -- (0.6262, 0.2947) -- (0.6251, 0.2932) -- (0.6241, 0.2917) -- (0.6230, 0.2902) -- (0.6219, 0.2887) -- (0.6208, 0.2872) -- (0.6198, 0.2857) -- (0.6187, 0.2842) -- (0.6177, 0.2827) -- (0.6166, 0.2812) -- (0.6156, 0.2798) -- (0.6145, 0.2783) -- (0.6135, 0.2768) -- (0.6124, 0.2753) -- (0.6114, 0.2738) -- (0.6103, 0.2723) -- (0.6093, 0.2708) -- (0.6082, 0.2693) -- (0.6072, 0.2678) -- (0.6062, 0.2663) -- (0.6052, 0.2648) -- (0.6041, 0.2633) -- (0.6031, 0.2618) -- (0.6021, 0.2603) -- (0.6011, 0.2588) -- (0.6000, 0.2573) -- (0.5990, 0.2558) -- (0.5980, 0.2543) -- (0.5970, 0.2528) -- (0.5960, 0.2513) -- (0.5950, 0.2498) -- (0.5940, 0.2483) -- (0.5930, 0.2468) -- (0.5920, 0.2453) -- (0.5910, 0.2438) -- (0.5900, 0.2423) -- (0.5890, 0.2408) -- (0.5880, 0.2393) -- (0.5871, 0.2378) -- (0.5861, 0.2363) -- (0.5851, 0.2348) -- (0.5841, 0.2333) -- (0.5831, 0.2318) -- (0.5822, 0.2303) -- (0.5812, 0.2288) -- (0.5802, 0.2273) -- (0.5793, 0.2258) -- (0.5783, 0.2243) -- (0.5773, 0.2228) -- (0.5764, 0.2213) -- (0.5754, 0.2198) -- (0.5745, 0.2183) -- (0.5735, 0.2168) -- (0.5726, 0.2153) -- (0.5716, 0.2138) -- (0.5707, 0.2123) -- (0.5697, 0.2108) -- (0.5688, 0.2093) -- (0.5678, 0.2078) -- (0.5669, 0.2063) -- (0.5660, 0.2048) -- (0.5650, 0.2033) -- (0.5641, 0.2018) -- (0.5632, 0.2002) -- (0.5622, 0.1987) -- (0.5613, 0.1972) -- (0.5604, 0.1957) -- (0.5595, 0.1942) -- (0.5586, 0.1927) -- (0.5576, 0.1912) -- (0.5567, 0.1897) -- (0.5558, 0.1882) -- (0.5549, 0.1867) -- (0.5540, 0.1852) -- (0.5531, 0.1837) -- (0.5522, 0.1822) -- (0.5513, 0.1807) -- (0.5504, 0.1791) -- (0.5495, 0.1776) -- (0.5486, 0.1761) -- (0.5477, 0.1746) -- (0.5468, 0.1731) -- (0.5459, 0.1716) -- (0.5450, 0.1701) -- (0.5441, 0.1686) -- (0.5433, 0.1671) -- (0.5424, 0.1656) -- (0.5415, 0.1641) -- (0.5406, 0.1625) -- (0.5397, 0.1610) -- (0.5389, 0.1595) -- (0.5380, 0.1580) -- (0.5371, 0.1565) -- (0.5363, 0.1550) -- (0.5354, 0.1535) -- (0.5345, 0.1520) -- (0.5337, 0.1505) -- (0.5328, 0.1489) -- (0.5319, 0.1474) -- (0.5311, 0.1459) -- (0.5302, 0.1444) -- (0.5294, 0.1429) -- (0.5285, 0.1414) -- (0.5277, 0.1399) -- (0.5268, 0.1384) -- (0.5260, 0.1368) -- (0.5251, 0.1353) -- (0.5243, 0.1338) -- (0.5235, 0.1323) -- (0.5226, 0.1308) -- (0.5218, 0.1293) -- (0.5210, 0.1278) -- (0.5201, 0.1262) -- (0.5193, 0.1247) -- (0.5185, 0.1232) -- (0.5176, 0.1217) -- (0.5168, 0.1202) -- (0.5160, 0.1187) -- (0.5152, 0.1172) -- (0.5143, 0.1156) -- (0.5135, 0.1141) -- (0.5127, 0.1126) -- (0.5119, 0.1111) -- (0.5111, 0.1096) -- (0.5103, 0.1081) -- (0.5095, 0.1065) -- (0.5087, 0.1050) -- (0.5078, 0.1035) -- (0.5070, 0.1020) -- (0.5062, 0.1005) -- (0.5054, 0.0990) -- (0.5046, 0.0974) -- (0.5038, 0.0959) -- (0.5030, 0.0944) -- (0.5023, 0.0929) -- (0.5015, 0.0914) -- (0.5007, 0.0898) -- (0.4999, 0.0883) -- (0.4991, 0.0868) -- (0.4983, 0.0853) -- (0.4975, 0.0838) -- (0.4967, 0.0822) -- (0.4960, 0.0807) -- (0.4952, 0.0792) -- (0.4944, 0.0777) -- (0.4936, 0.0762) -- (0.4929, 0.0747) -- (0.4921, 0.0731) -- (0.4913, 0.0716) -- (0.4905, 0.0701) -- (0.4898, 0.0686) -- (0.4890, 0.0670) -- (0.4882, 0.0655) -- (0.4875, 0.0640) -- (0.4867, 0.0625) -- (0.4860, 0.0610) -- (0.4852, 0.0594) -- (0.4845, 0.0579) -- (0.4837, 0.0564) -- (0.4829, 0.0549) -- (0.4822, 0.0534) -- (0.4814, 0.0518) -- (0.4807, 0.0503) -- (0.4799, 0.0488) -- (0.4792, 0.0473) -- (0.4785, 0.0457) -- (0.4777, 0.0442) -- (0.4770, 0.0427) -- (0.4762, 0.0412) -- (0.4755, 0.0397) -- (0.4748, 0.0381) -- (0.4740, 0.0366) -- (0.4733, 0.0351) -- (0.4726, 0.0336) -- (0.4718, 0.0320) -- (0.4711, 0.0305) -- (0.4704, 0.0290) -- (0.4696, 0.0275) -- (0.4689, 0.0259) -- (0.4682, 0.0244) -- (0.4675, 0.0229) -- (0.4668, 0.0214) -- (0.4660, 0.0198) -- (0.4653, 0.0183) -- (0.4646, 0.0168) -- (0.4639, 0.0153) -- (0.4632, 0.0137) -- (0.4625, 0.0122) -- (0.4618, 0.0107) -- (0.4611, 0.0092) -- (0.4603, 0.0076) -- (0.4596, 0.0061) -- (0.4589, 0.0046) -- (0.4582, 0.0031) -- (0.4575, 0.0015) -- (0.4568, 0.0000);

\node[above] at (0, 1.35) {\footnotesize $F(p)$};
\node[right] at (2.1, 0) {\footnotesize $p$};
\node[below] at (0, 0) {\footnotesize $0$};
\draw[very thick] (1.7554, 0) -- (1.7554, 1.5pt);
\node[below] at (1.7554, 0) {\footnotesize $\theta$};
\draw[very thick] (0.4568, 0) -- (0.4568, 1.5pt);
\node[below] at (0.4568, 0) {\footnotesize $a + b$};

\node[left] at (0, 1.25) {\footnotesize $1$};
\draw[thin, dashed] (0, 1.25) -- (2, 1.25);

\node[left] at (0, 0.8125) {\footnotesize $1 - x$};
\draw[thin, dashed] (0, 0.8125) -- (1.7554, 0.8125);
\node[anchor = north west, color = Sepia] at (0.55, 0.3) {\footnotesize $F_{\tan}(q) = 1 - \frac{b}{p - a}$};

\node[left] at (0, 0.9298) {\footnotesize $1 - \overline{x}$};
\draw[thin, dashed] (0, 0.9298) -- (1.7554, 0.9298);

\draw[thin, dashed] (1.7554, 0) -- (1.7554, 1.35);
\node[anchor = south east, color = BurntOrange] at (1, 0.55) {\footnotesize $\rr_{\tan}(q)$};

\draw[thin, fill = green] (0.4568, 0) circle(0.5pt);
\draw[thin, fill = green] (1.7554, 0.9298) circle(0.5pt);
\draw[thin, fill = green] (1.7554, 0.8125) circle(0.5pt);

\draw[very thick] (0, 0.8125) -- (1.5pt, 0.8125);
\draw[very thick] (0, 0.9298) -- (1.5pt, 0.9298);
\draw[very thick] (0, 1.25) -- (1.5pt, 1.25);
\end{tikzpicture}
\label{fig:lem:tangent2}
}
\caption{Demonstrations for $r_{\tan}(q)$, $\rr_{\tan}(q)$, $F_{\tan}$ and $\FF_{\tan}$.}
\label{fig:lem:tangent}
\end{figure}

\begin{proof}
To make things mimic, we provide Figure~\ref{fig:lem:tangent} to exhibit involved revenue-quantile curves and CDF's. When $p \in (0, a + b]$, clearly $\Psi\big(p, \{\FF_{\tan}\}\big) = \Psi\big(p, \{F_{\tan}\}\big) = \infty$ and $\FF_{\tan}(p) = F_{\tan}(p) = 0$, which indicates the regularity in this range. It remains to deal with the case when $p \in (a + b, \theta]$. Recall Section~\ref{subsec:prelim-reg-vv} for the definition of regularity, the following two claims are equivalent:
\begin{itemize}
\item $\FF_{\tan}$ is regular, in the range of $p \in (a + b, \theta]$;
\item $\PPhi_{\tan}(p) \eqdef p - \frac{1 - \FF_{\tan}(p)}{\ff_{\tan}(p)}$ is increasing when $p \in (a + b, \theta]$, where $\ff_{\tan}(p) \eqdef \FF_{\tan}'(p)$.
\end{itemize}
To settle the lemma, we would verify the second claim instead.

Given $p \in (a + b, \theta]$, let $y \eqdef \left(1 - \frac{a + b}{p}\right) \in \left(0, 1 - \frac{a + b}{\theta}\right) \subset (0, 1)$ for notational simplicity. Clearly, $y$ can be viewed as an increasing function of $p$. Under the construction in the lemma,
\begin{align}
\notag
& \FF_{\tan}(p) = \FF_{\tan}\left(\frac{a + b}{1 - y}\right) = \frac{(a + b) \cdot y}{ay + b - (1 - e^{-\Delta_k}) \cdot (1 - y) \cdot (b + y)} \\
\label{eq:lem:tangent2}
& \ff_{\tan}(p) = \ff_{\tan}\left(\frac{a + b}{1 - y}\right) = \frac{\left[b - (1 - e^{-\Delta_k}) \cdot (b + y^2)\right] \cdot (1 - y)^2}{\left[ay + b - (1 - e^{-\Delta_k}) \cdot (1 - y) \cdot (b + y)\right]^2}.
\end{align}
As a consequence, $\PPhi_{\tan}(p) = p - \frac{1 - \FF_{\tan}(p)}{\ff_{\tan}(p)} = a + \Upsilon(y) \cdot \left(1 - e^{-\Delta_k}\right)$, where
\[
\Upsilon(y) \eqdef \frac{(a - 1) \cdot y^2 + e^{-\Delta_k} \cdot (b + y)^2}{b - \left(1 - e^{-\Delta_k}\right) \cdot (b + y^2)}.
\]
To see the lemma, it suffices to show $\Upsilon(y)$ is increasing on $y \in (0, 1)$. Since $a \geq 1$ and $b \geq 0$,
\begin{itemize}
\item $\Upsilon_1(y) \eqdef (a - 1) \cdot y^2 + e^{-\Delta_k} \cdot (b + y)^2$ is a positive and increasing function on $y \in (0, 1)$;
\item $\Upsilon_2(y) \eqdef b - \left(1 - e^{-\Delta_k}\right) \cdot (b + y^2)$ is a decreasing function on $y \in (0, 1)$;
\item $\Upsilon_2(y) \overset{(\ref{eq:lem:tangent2})}{=} \left[\frac{ay + b}{1 - y} - \left(1 - e^{-\Delta_k}\right) \cdot (b + y)\right]^2 \cdot \ff_{\tan}\left(\frac{a + b}{1 - y}\right) \geq 0$ for all $y \in (0, 1)$.
\end{itemize}
Putting these together indicates the monotonicity of $\Upsilon(y)$, and thus settles the lemma.
\end{proof}

\subsection{Proof of Lemma~\ref{lem:regular}}
\label{subapp:upper3-cstr:regular}

\paragraph{[Lemma~\ref{lem:regular}].}
\emph{$\rr_k(q) \leq \partial_+ \rr_k(\overline{x}) \cdot (q - \overline{x}) + \rr_k(\overline{x})$ for all $\overline{x} \in (0, \qq_k)$ and $q \in [\overline{x}, \qq_k)$.}

\begin{proof}
W.l.o.g., we would assume $\Delta_k > 0$, and thus $\uu = \uu_k$ (by Lemma~\ref{lem:alg_facts}.1). To make things mimic, all involved revenue-quantile curves are shown in Figure~\ref{fig:lem:regular}. Recall Fact~\ref{fact1}, these revenue-quantile curves can be converted into the corresponding CDF's, and vice versa. Under our induction hypothesis that $F_k$ is regular, $r_k(q)$ is continuous and concave on $q \in [0, 1]$ (see Section~\ref{subsec:prelim-reg-vv}).

\begin{figure}[H]
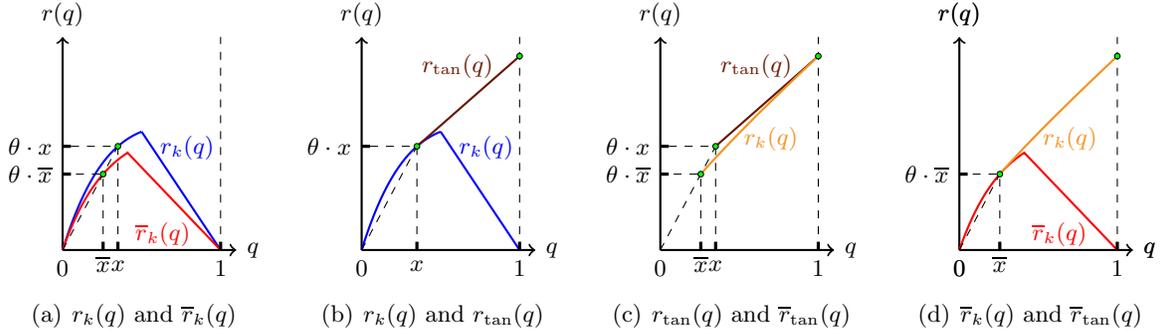

\centering
\subfigure[$r_k(q)$ and $\rr_k(q)$]{

\label{fig:lem:regular4}
}
\caption{Demonstrations for $r_k(q)$, $\rr_k(q)$, $r_{\tan}(q)$ and $\rr_{\tan}(q)$.}
\label{fig:lem:regular}
\end{figure}

As mentioned before, $r_k(0) = 0$. Since $r_k(q)$ is continuous and concave, and as Figure~\ref{fig:lem:regular1} suggests, $y = \frac{r_k(x)}{1 - x}$ is continuous and strictly increasing on $x \in [0, q_k]$. As per these, given $\Delta_k \in \left(0, \ln\left(1 + \frac{v_k q_k}{1 - q_k}\right)\right]$, there exists a unique $x^* \in (0, q_k]$ such that
\[
\ln\left(1 + \frac{r_k(x^*)}{1 - x^*}\right) = \Delta_k \quad\quad\quad\quad\quad\quad \frac{r_k(x^*)}{x^*} = \uu_k \xlongequal{Lemma~\ref{lem:alg_facts}.1} \uu.
\]
\begin{description}
\item [(a):] Given $x \in (x^*, q_k)$, let $\theta \eqdef \frac{r_k(x)}{x} \in [\vv_k, \uu_k] \subset [v_k, u_k]$, $a \eqdef \partial_+ r_k(x)$ and $b \eqdef r_k(x) - \partial_+ r_k(x) \cdot x$ for notational brevity. Consider the tangent to $r_k(q)$ at $\big(x, r_k(x)\big)$, that is,
    \[
    r_{\tan}(q) \eqdef a \cdot q + b \quad\quad \forall q \in [x, 1].
    \]
    Clearly, $r_{\tan}(q)$ corresponds to CDF $F_{\tan}(p) \eqdef
    \begin{cases}
    1 - \frac{b}{p - a} & \forall p \in (a + b, \theta] \\
    0 & \forall p \in (0, a + b]
    \end{cases}$. As Figure~\ref{fig:lem:regular2} illustrates, the concavity of $r_k(q)$ and tangency imply that
    \begin{description}
    \item [(1):] $r_k(q) \leq r_{\tan}(q)$ for all $q \in [x, 1]$, and thus $F_k(p) \geq F_{\tan}(p)$ for all $p \in (0, \theta]$;
    \item [(2):] $r_k(x) = r_{\tan}(x)$ and $\partial_+ r_k(x) = \partial_+ r_{\tan}(x)$. Thus, $F_k(\theta) = F_{\tan}(\theta)$ and $f_k(\theta) = f_{\tan}(\theta)$;
    \item [(3):] $a = \partial_+ r_k(x) = \Phi_k(\theta) \geq \Phi_k(v_k^+) \geq 1$ and $b \geq 0$, where the two inequalities respectively follows from our induction hypotheses that $F_k$ is regular (that is, $\Psi_k(p)$ is increasing when $p \geq \vv_k = v_k$), and that $F_k$ satisfies constraint~(\ref{cstr:value}).
    \end{description}
\item [(b):]  Parameter $\overline{x}$ can be determined by $\theta$ and $x$ as $\ln\left(1 + \frac{\theta \cdot \overline{x}}{1 - \overline{x}}\right) = \ln\left(1 + \frac{\theta \cdot x}{1 - x}\right) - \Delta_k \geq 0$.
    \begin{description}
    \item [(1):] As Figure~\ref{fig:lem:regular1} suggests, $\overline{x}$ ranges from $0$ to $\qq_k$, when $x$ increases from $x^*$ to $q_k$.
    \end{description}
    We further define $\rr_{\tan}(q)$ from $r_{\tan}(q)$ and $\Delta_k$: In terms of CDF's,
    \[
    \Psi\big(p, \{\FF_{\tan}\}\big) \equiv \Psi\big(p, \{F_{\tan}\}\big) - \Delta_k \quad\quad \forall p \in (0, \theta].
    \]
    In the range of $p \in (0, \theta] \subset (0, \uu_k] = (0, \uu]$, recall line~(\ref{alg3:FF1})/line~(\ref{alg3:FF2}) of Algorithm~\ref{alg3},
    \[
    \Psi\big(p, \{\FF_k\}\big) \equiv \Psi\big(p, \{F_k\}\big) - \Delta_k.
    \]
    Respectively, $\FF_k$ and $\FF_{\tan}$ are constructed from $F_k$ and $F_{\tan}$ in same fashions. It follows from fact~\textbf{(a.1)} that $\Psi\big(p, \{F_k\}\big) \leq \Psi\big(p, \{\FF_k\}\big)$ for all $p \in (0, \theta]$, which means
    \[
    \Psi\big(p, \{\FF_k\}\big) = \Psi\big(p, \{F_k\}\big) - \Delta_k \leq \Psi\big(p, \{F_{\tan}\}\big) - \Delta_k = \Psi\big(p, \{\FF_{\tan}\}\big),
    \]
    for all $p \in (0, \theta]$. In other words,
    \begin{description}
    \item [(2):] $\FF_k(p) \geq \FF_{\tan}(p)$ for all $p \in (0, \theta]$. Hence, $\rr_k(q) \leq \rr_{\tan}(q)$ for all $q \in [\overline{x}, 1]$.
    \end{description}
    Similarly, as Figure~\ref{fig:lem:regular3} suggests, we can also infer from fact~\textbf{(a.2)} that
    \begin{description}
    \item [(3):] $\FF_k(\theta) = \FF_{\tan}(\theta)$ and $\ff_k(\theta) = \ff_{\tan}(\theta)$, which means that $\rr_k(\overline{x}) = \rr_{\tan}(\overline{x})$ and $\partial_+ \rr_k(\overline{x}) = \partial_+ \rr_{\tan}(\overline{x})$.
    \end{description}
\item [(c):] $\FF_{\tan}$ turns out to be regular on $p \in (0, \theta]$, which is formalized as Lemma~\ref{lem:tangent} in the below (for this, fact~\textbf{(a.3)} ensures the lemma is applicable). That is, $\rr_{\tan}(q)$ is concave on $q \in [\overline{x}, 1]$. By equivalent condition for concave functions (see Fact~\ref{fact:concave}),
    \begin{description}
    \item [(1):] $\rr_{\tan}(q) \leq \partial_+ \rr_{\tan}(\overline{x}) \cdot (q - \overline{x}) + \rr_{\tan}(\overline{x})$ for all $q \in [\overline{x}, 1]$.
    \end{description}
\end{description}
To sum up, and as Figure~\ref{fig:lem:regular4} conveys,
\[
\rr_k(q) \overset{(\textbf{b.2})}{\leq} \rr_{\tan}(q) \overset{(\textbf{c.1})}{\leq} \partial_+ \rr_{\tan}(\overline{x}) \cdot (q - \overline{x}) + \rr_{\tan}(\overline{x}) \overset{(\textbf{b.3})}{=} \partial_+ \rr_k(\overline{x}) \cdot (q - \overline{x}) + \rr_k(\overline{x}),
\]
for all $q \in [\overline{x}, 1]$. Combining this with fact~\textbf{(b.1)}, we complete the proof of the lemma.
\end{proof}

\subsection{Proof of Lemma~\ref{lem:upper3-cstr2}}
\label{subapp:upper3-cstr:value}

\paragraph{[Lemma~\ref{lem:upper3-cstr2}].}
\emph{$\PPhi_k(p) \geq \Phi_k(p)$ for all $p  \in (\vv_k, \infty) = (v_k, \infty)$ and $k \in [n]$.}

\begin{proof}
First of all, when $p \in (\uu_k, \infty)$, the lemma certainly holds in that $\PPhi_k(p) = p \geq \Phi_k(p)$. In the range of $p \in (\vv_k, \uu_k]$, under our construction in line~(\ref{alg3:FF1})/line~(\ref{alg3:FF2}) of Algorithm~\ref{alg3},
\begin{align*}
& \FF_k(p) = \frac{F_k(p)}{1 - \left(1 - e^{-\Delta_k}\right) \cdot \left[1 - \left(1 - \frac{1}{p}\right) \cdot F_k(p)\right]}, \\
& \ff_k(p) \eqdef \FF_k'(p) = \frac{f_k(p) - \left(1 - e^{-\Delta_k}\right) \cdot \left(f_k(p) + \frac{F_k(p)^2}{p^2}\right)}{\left\{1 - \left(1 - e^{-\Delta_k}\right) \cdot \left[1 - \left(1 - \frac{1}{p}\right) \cdot F_k(p)\right]\right\}^2}.
\end{align*}
$\PPhi_k(p) \geq \Phi_k(p)$ is equivalent to $p - \frac{1 - \FF_k(p)}{\ff_k(p)} \geq p - \frac{1 - F_k(p)}{f_k(p)}$. To reveal this, we first plug the above two formulas into the left hand side of this inequality, and then rearrange the intermediate inequality. Afterwards, it remains to deal with $f_k(p) \geq \frac{1 - F_k(p)}{p - 1 + e^{-\Delta_k} \cdot \left(1 + p \cdot \frac{1 - F_k(p)}{F_k(p)}\right)^2}$, or equivalently,
\begin{equation}
\label{eq:upper3-cstr3}
\Phi_k(p) = p - \frac{1 - F_k(p)}{f_k(p)} \geq 1 - e^{-\Delta_k} \cdot \left(1 + p \cdot \frac{1 - F_k(p)}{F_k(p)}\right)^2.
\end{equation}
Recall our induction hypotheses, $F_k$ is regular (which indicates the monotonicity of $\Phi_k(p)$), and satisfies constraint~(\ref{cstr:value}). As per these, inequality~(\ref{eq:upper3-cstr3}) surely holds in that
\[
\text{LHS of~(\ref{eq:upper3-cstr3})} \geq \Phi_k(v_k^+) \geq 1 \geq \text{RHS of~(\ref{eq:upper3-cstr3})},
\]
for all $p \in (\vv_k, \uu_k] \subset (v_k, u_k]$. This completes the proof of the lemma.
\end{proof}

\section{Proof of Lemma~\ref{lem:main}}
\label{app:upper3-lem}
\fcolorbox{white}{lightgray}{\begin{minipage}{\textwidth}
Assume w.l.o.g. $\Delta_k > 0$, then Lemma~\ref{lem:alg_facts}.1 ensures $\uu_k = \uu$: While $\{\FF_i\}_{i = 1}^{k - 1} \cup \{F_i\}_{i = k + 1}^n$ remains,
\begin{itemize}
\item $F_k$ becomes $\FF_k$, where $\FF_k$ is defined as $\Psi\big(p, \{\FF_k\}\big) \equiv \Psi\big(p, \{F_k\}\big) - \Delta_k$ for all $p \in (0, \uu_k]$;
\item $\cont(\gamma_k)$ becomes $\cont(\ggamma_k)$, where $\ggamma_k$ is defined by letting $\R(\ggamma_k) = \R(\gamma_k) + \Delta_k$;
\item $\gamma^* \geq \gamma_k > \ggamma_k \geq u_k \geq \uu_k \geq v_k = \vv_k \geq v^* > 1$.
\end{itemize}
\end{minipage}}

\subsection{Auxiliary Lemma~\ref{lem:main1}}
\label{subapp:upper3-lem1}

\begin{lemma}
\label{lem:main1}
Given $p \in [\vv_k, \uu_k]$,
\[
(p - 1) \cdot \frac{\FF_k(p) - F_k(p)}{e^{\Q(\ggamma_k) - \Q(\gamma_k)} - 1}
\leq 2\gamma^{*2} \cdot \Delta^* + \gamma_k \cdot \Big[F_k(p) + p \cdot \big(1 - F_k(p)\big)\Big] \cdot \left(1 - \frac{1}{p}\right) \cdot F_k(p).
\]
\end{lemma}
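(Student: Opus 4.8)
The plan is to bound the left-hand side of Lemma~\ref{lem:main1} by controlling three separate quantities: the \emph{potential-increase} $\FF_k(p)-F_k(p)$, the \emph{continuous-component shift} $e^{\Q(\ggamma_k)-\Q(\gamma_k)}-1$, and the ambient factor $p-1$. First I would convert the potential statement $\Psi(p,\{\FF_k\})\equiv\Psi(p,\{F_k\})-\Delta_k$ into a pointwise relation between $\FF_k(p)$ and $F_k(p)$. Writing $s\eqdef p\cdot\frac{1-F_k(p)}{F_k(p)}$ so that $\Psi(p,\{F_k\})=\ln(1+s)$, the construction in line~(\ref{alg3:FF1})/line~(\ref{alg3:FF2}) of \textsf{DIMINISH} gives $p\cdot\frac{1-\FF_k(p)}{\FF_k(p)}=e^{-\Delta_k}(1+s)-1$, which after rearranging yields an explicit formula for $\FF_k(p)-F_k(p)$ in terms of $F_k(p)$, $p$ and $\Delta_k$. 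The key structural fact is that $\FF_k(p)-F_k(p)=\big(1-e^{-\Delta_k}\big)\cdot\frac{\big[F_k(p)+p(1-F_k(p))\big]\cdot(1-\tfrac1p)\cdot F_k(p)}{\text{(denominator $\geq$ something bounded below)}}$; I expect the denominator, which is $1-(1-e^{-\Delta_k})\big[1-(1-\tfrac1p)F_k(p)\big]$, to be at least $e^{-\Delta_k}\geq e^{-\delta^*}$ or more simply bounded below by a universal constant, so that $\FF_k(p)-F_k(p)\leq C\cdot(1-e^{-\Delta_k})\cdot\big[F_k(p)+p(1-F_k(p))\big]\cdot(1-\tfrac1p)\cdot F_k(p)$ for a modest constant $C$.

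Next I would relate the numerator $(p-1)\cdot(1-e^{-\Delta_k})$ — modulo the extra bracketed and $F_k(p)$-factors already peeled off — to the denominator $e^{\Q(\ggamma_k)-\Q(\gamma_k)}-1$. Since $\R(\ggamma_k)=\R(\gamma_k)+\Delta_k$ (equation~(\ref{eq:upper3-cont})), Lemma~\ref{lem:RQ}.1 ($\R'(p)=p\cdot\Q'(p)$) implies $\Q(\ggamma_k)-\Q(\gamma_k)$ and $\R(\ggamma_k)-\R(\gamma_k)=\Delta_k$ are comparable up to the factor $\gamma_k$ or $\ggamma_k$. Concretely, Lemma~\ref{lem:ineq}.2 already provides $\tfrac1{\gamma_k}\big[1-e^{-\Delta_k}\big]\leq 1-e^{-(\Q(\ggamma_k)-\Q(\gamma_k))}$, and hence after inverting, $e^{\Q(\ggamma_k)-\Q(\gamma_k)}-1 \geq \tfrac1{\gamma_k}(1-e^{-\Delta_k})$ (using $\Delta_k=\R(\gamma_k)-\R(\ggamma_k)$ with $\gamma_k > \ggamma_k$; note the direction of the inequalities matches because $\ggamma_k\leq\gamma_k$). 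This gives $\frac{1-e^{-\Delta_k}}{e^{\Q(\ggamma_k)-\Q(\gamma_k)}-1}\leq\gamma_k\leq\gamma^*$. Combined with the bound from the first paragraph and with $p-1<p\leq\uu_k\leq\gamma^*$, the left-hand side is then at most $C\cdot\gamma^{*}\cdot\gamma^{*}\cdot\big[F_k(p)+p(1-F_k(p))\big]\cdot(1-\tfrac1p)\cdot F_k(p)$ — but this is not yet the claimed form, which has an \emph{additive} $2\gamma^{*2}\Delta^*$ plus only a single $\gamma_k$ multiplying the bracket.

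To reconcile this, I would split according to whether the quantity $(1-e^{-\Delta_k})$ absorbed in the bound is dominated by its linearization $\Delta_k$ or whether the second-order correction matters. Using $1-e^{-\Delta_k}\leq\Delta_k$, the portion of the estimate proportional to $\big[F_k(p)+p(1-F_k(p))\big]\cdot(1-\tfrac1p)\cdot F_k(p)$ picks up a coefficient $\gamma_k$ (one power of $\gamma^*$ from the $\Q$-to-$\R$ conversion and the other power of $\Delta_k$ being folded into the additive term via $\Delta_k\leq\Delta^*$ when needed). The residual — coming from the gap between $\FF_k(p)-F_k(p)$ and its leading term, i.e. the denominator being strictly below $1$, and from $(p-1)$ versus $p$ — is then a genuinely smaller term that I expect to bound by $2\gamma^{*2}\Delta^*$, using $F_k(p)\leq1$, $p\leq\gamma^*$, and $\Delta_k\leq\Delta^*$. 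This bookkeeping — keeping exactly one factor $\gamma_k$ on the ``main'' term while sweeping all quadratic-in-$\Delta$ and lower-order slack into the additive $\gamma^{*2}\Delta^*$ — is the main obstacle: the constants have been tuned (in line~(\ref{alg1:Delta}) of \textsf{MAIN}, $\Delta^*=\frac{v^*-1}{3\gamma^{*2}}\kappa^*\delta^*$) precisely so that this split goes through, and getting the inequality chain to land in the stated form rather than something off by a constant will require care with which term each $(1-e^{-\Delta_k})$ is charged to. The rest — the algebraic identity for $\FF_k(p)-F_k(p)$, the monotonicity facts from Lemma~\ref{lem:RQ}, and the bounds from Lemma~\ref{lem:ineq} — is routine once the decomposition is fixed.
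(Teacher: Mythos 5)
Your overall strategy is the paper's: write $\FF_k(p)-F_k(p)$ explicitly from the potential relation, lower-bound the denominator $e^{\Q(\ggamma_k)-\Q(\gamma_k)}-1$ by (essentially) $\Delta_k/\gamma_k$ so that one factor of $\gamma_k$ survives on the main term, and sweep the quadratic-in-$\Delta_k$ residue into the additive $2\gamma^{*2}\Delta^*$. Your denominator bound via Lemma~\ref{lem:ineq}.2 is valid and interchangeable with the paper's direct estimate $e^{z}-1\geq z$ together with $\Q(\ggamma_k)-\Q(\gamma_k)=\int_{\ggamma_k}^{\gamma_k}\frac{|\R'(x)|}{x}\,dx\geq\Delta_k/\gamma_k$.

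The genuine problem is the ``key structural fact'' in your first paragraph, and everything downstream of it. Writing $t\eqdef\frac{F_k(p)+p(1-F_k(p))}{p}=1-\bigl(1-\tfrac1p\bigr)F_k(p)$, the correct identity is
\[
\FF_k(p)-F_k(p)\;=\;\bigl(1-e^{-\Delta_k}\bigr)\cdot\frac{t\cdot F_k(p)}{1-t\bigl(1-e^{-\Delta_k}\bigr)},
\]
i.e.\ the numerator carries $\bigl[F_k(p)+p(1-F_k(p))\bigr]\cdot\tfrac1p$, not $\bigl[F_k(p)+p(1-F_k(p))\bigr]\cdot\bigl(1-\tfrac1p\bigr)$ as you wrote. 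With the correct $\tfrac1p$, the outer factor $(p-1)$ on the left-hand side combines with it to give exactly the $\bigl(1-\tfrac1p\bigr)$ in the target, so the main term acquires precisely one power of $\gamma_k$ and there is no surplus $\gamma^*$ to dispose of. Your version is off by a factor of $(p-1)$, which is why you arrive at ``$C\cdot\gamma^*\cdot\gamma^*$ times the bracket'' and then spend the final paragraph trying to reconcile a discrepancy that does not exist; that reconciliation (``the other power of $\Delta_k$ being folded into the additive term'') cannot work as stated, because the main term contains no factor of $\Delta_k$ to trade against $\gamma^*$. Once the formula is fixed, the only thing left for the additive slack is the second-order correction $\frac{1}{1-t(1-e^{-\Delta_k})}\leq 1+2t\Delta_k$ (valid since $t\Delta_k\leq\tfrac12$), whose contribution is at most $2(p-1)\gamma_k\Delta_k\leq 2\gamma^{*2}\Delta^*$ --- this is exactly the paper's Lemma~\ref{lem:main1.1} plus the final line of its proof of Lemma~\ref{lem:main1}. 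So the route is right, but as written the central algebraic identity is wrong and the compensating argument is not valid.
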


\begin{proof}
Under premise $\ggamma_k \leq \gamma_k$, clearly $\Q(\ggamma_k) \geq \Q(\gamma_k)$. Since $e^x - 1 \geq x$ for all $x \geq 0$,
\[
e^{\Q(\ggamma_k) - \Q(\gamma_k)} - 1
\geq \Q(\ggamma_k) - \Q(\gamma_k)
\xlongequal{Lemma~\ref{lem:RQ}.1} \displaystyle{\int_{\ggamma_k}^{\gamma_k}} \frac{|\R'(x)|}{x} dx
\geq \displaystyle{\int_{\ggamma_k}^{\gamma_k}} \frac{|\R'(x)|}{\gamma_k} dx
\overset{(\ref{eq:upper3-cont})}{=} \frac{\Delta_k}{\gamma_k}.
\]
Together with Lemma~\ref{lem:main1.1} in the below, we can infer from above
\[
(p - 1) \cdot \frac{\FF_k(p) - F_k(p)}{e^{\Q(\ggamma_k) - \Q(\gamma_k)} - 1}
\leq 2(p - 1) \cdot \gamma_k \cdot \Delta_k + \gamma_k \cdot \Big[F_k(p) + p \cdot \big(1 - F_k(p)\big)\Big] \cdot \left(1 - \frac{1}{p}\right) \cdot F_k(p).
\]
Due to Algorithm~\ref{alg3} and line~(\ref{alg2:Delta}) of Algorithm~\ref{alg2}, we know $\Delta_k \leq \sum\limits_{i = 1}^n \Delta_i = \Delta \leq \Delta^*$. Under premise $\gamma^* \geq \gamma_k \geq \uu_k \geq p \geq \vv_k > 1$, we settle the lemma by observing $2(p - 1) \cdot \gamma_k \cdot \Delta_k \leq 2\gamma^{*2} \cdot \Delta^*$.
\end{proof}

\begin{lemma}
\label{lem:main1.1}
Given $p \in [\vv_k, \uu_k]$, $\FF_k(p) - F_k(p) \leq 2\Delta_k^2 + \Big[F_k(p) + p \cdot \big(1 - F_k(p)\big)\Big] \cdot \frac{\Delta_k}{p} \cdot F_k(p)$.
\end{lemma}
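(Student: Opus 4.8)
The plan is to make the potential-function construction of $\FF_k$ completely explicit, reducing the statement to a one-variable elementary inequality in $\Delta_k$. \emph{Step 1: a closed form for $\FF_k(p)$.} By the definition of $\FF_k$ in \textsf{DIMINISH} (line~(\ref{alg3:FF1})/line~(\ref{alg3:FF2})), on $p\in(0,\uu_k]$ we have $\Psi\big(p,\{\FF_k\}\big)=\Psi\big(p,\{F_k\}\big)-\Delta_k$, i.e. $1+p\cdot\frac{1-\FF_k(p)}{\FF_k(p)}=e^{-\Delta_k}\left(1+p\cdot\frac{1-F_k(p)}{F_k(p)}\right)$. Abbreviate $F=F_k(p)$ and $A=F+p(1-F)$; note $p-A=(p-1)F\ge 0$ since $p\ge v_k>1$, so $0<A\le p$. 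If $F=0$ the inequality of the lemma is trivial (then $\FF_k(p)=0$ as well and the right-hand side is nonnegative), so assume $F\in(0,1)$, which is the case here since $q_k<1$ (otherwise $\Psi(v_k,\{F_k\})=\infty$ would violate constraint~(\ref{cstr:ap2})). A short rearrangement of the displayed identity gives
\begin{equation}
\FF_k(p)=\frac{pF}{(p-1)F+e^{-\Delta_k}A}=\frac{pF}{p-(1-e^{-\Delta_k})A},\qquad\text{hence}\qquad\FF_k(p)-F_k(p)=\frac{F\,(1-e^{-\Delta_k})\,A}{p-(1-e^{-\Delta_k})A}.
\end{equation}

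\emph{Step 2: the elementary inequality.} By Lemma~\ref{lem:constant} together with line~(\ref{alg2:Delta}) of \textsf{SUBROUTINE} one has $\Delta_k\le\Delta\le\Delta^*<\delta^*\le\frac12\epsilon<\frac12$, and since $1-e^{-\Delta_k}\le\Delta_k$ this gives $(1-e^{-\Delta_k})A\le\Delta_k p<\frac12 p$, so the denominator above is at least $\frac12 p>0$. Writing $u=1-e^{-\Delta_k}\in[0,\Delta_k]$ and $t=A/p\in(0,1]$, the identity becomes $\FF_k(p)-F_k(p)=\frac{Fut}{1-ut}$, and the target inequality $\FF_k(p)-F_k(p)\le 2\Delta_k^2+\frac{\Delta_k}{p}FA=2\Delta_k^2+\Delta_k Ft$ is equivalent to $Ft\left(\frac{u}{1-ut}-\Delta_k\right)\le 2\Delta_k^2$. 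Since $Ft\le t\le1$, and since the left-hand side is negative when $\frac{u}{1-ut}<\Delta_k$, it suffices to show $\frac{u}{1-ut}-\Delta_k\le 2\Delta_k^2$, i.e. $u\left(1+(\Delta_k+2\Delta_k^2)t\right)\le\Delta_k+2\Delta_k^2$; using $u\le\Delta_k$ and $t\le1$ this reduces to $\Delta_k^2+2\Delta_k^3\le 2\Delta_k^2$, that is $\Delta_k\le\frac12$, already known.

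I do not expect a real obstacle: the work is in deriving the closed form of Step~1 correctly and in tracking the constants. The one subtlety worth flagging is that the cruder bound obtained by discarding $(p-1)F$ from the denominator — which would give only $\FF_k(p)-F_k(p)\le F\big(e^{\Delta_k}-1\big)$ — is too lossy to be absorbed into $\frac{\Delta_k}{p}F[F+p(1-F)]$; one has to retain the denominator and use $\Delta_k\le\frac12$ to bound it below by $\frac p2$, so the smallness of the step size $\Delta^*$ (hence of $\Delta_k$), guaranteed by \textsf{PREPROCESSING} and Lemma~\ref{lem:constant}, is exactly what is needed.
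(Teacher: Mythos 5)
Your proposal is correct and follows essentially the same route as the paper: your closed form $\FF_k(p)=\frac{pF}{p-(1-e^{-\Delta_k})A}$ is exactly the paper's identity $\FF_k(p)=\frac{F_k(p)}{1-t\,(1-e^{-\Delta_k})}$ with $t=A/p$, and your final reduction to $\Delta_k\le\frac12$ is equivalent to the paper's use of $\frac{x}{1-x}\le 2x^2+x$ on $[0,\frac12]$ together with $t\le 1$ and $F_k(p)\le 1$ (the paper's Lemma~\ref{lem:main1.2}). No gaps.
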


\begin{proof}
Let $t \eqdef \Big[F_k(p) + p \cdot \big(1 - F_k(p)\big)\Big] \cdot \frac{1}{p}$ for notational simplicity. We shall prove
\begin{equation}
\label{eq:lem:main2}
\FF_k(p) - F_k(p) \leq 2\Delta_k^2 + t \cdot \Delta_k \cdot F_k(p),
\end{equation}
for all $p \in [\vv_k, \uu_k]$. Recall $\Psi\big(p, \{F_k\}\big) = \ln\left(1 + p \cdot \frac{1 - F_k(p)}{F_k(p)}\right)$, rearranging Eq.~(\ref{eq:upper3-F}) results in
\[
\FF_k(p) \overset{(\ref{eq:upper3-F})}{=} \frac{F_k(p)}{1 - t \cdot \left(1 - e^{-\Delta_k}\right)} \overset{(\dagger)}{\leq} \frac{F_k(p)}{1 - t \cdot \Delta_k},
\]
where $(\dagger)$ follows as $0 \leq t \cdot \Delta_k \leq \frac{1}{2} < 1$ (see Lemma~\ref{lem:main1.2}), and $1 - e^{-x} \leq x$ for all $x \geq 0$. Since then,
\[
\text{LHS of~(\ref{eq:lem:main2})}
\leq \frac{t \cdot \Delta_k}{1 - t \cdot \Delta_k} \cdot F_k(p)
\overset{(\ddagger)}{\leq} \big(2t^2 \cdot \Delta_k^2 + t \cdot \Delta_k\big) \cdot F_k(p)
\overset{(\star)}{\leq} \text{RHS of~(\ref{eq:lem:main2})},
\]
where $(\ddagger)$ follows as $\frac{x}{1 - x} \leq 2x^2 + x$ for all $x \in \left[0, \frac{1}{2}\right]$, and $(\star)$ follows as $F_k(p) \leq 1$ and $t \leq 1$ (see Lemma~\ref{lem:main1.2}). This completes the proof of the lemma.
\end{proof}

\begin{lemma}
\label{lem:main1.2}
$0 \leq \Delta_k \leq \frac{1}{2}$, and $0 \leq \Big[F_k(p) + p \cdot \big(1 - F_k(p)\big)\Big] \cdot \frac{1}{p} \leq 1$ for all $p \in [\vv_k, \uu_k]$.
\end{lemma}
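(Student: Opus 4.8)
The plan is to verify the two assertions separately, each by direct estimation; neither requires anything beyond facts already established in the excerpt.

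For the bound $0 \le \Delta_k \le \frac{1}{2}$: nonnegativity is immediate from the definition of the $\Delta_i$'s inside \textsf{DIMINISH} (line~(\ref{alg3:Delta_no_mass}) and the line following it, where each is either $\Psi(\uu,\{F_i\})\ge 0$ or a minimum of nonnegative quantities), hence $\Delta_k\ge 0$. For the upper bound I would chain the inequalities already on hand: by the choice of $\Delta$ in line~(\ref{alg2:Delta}) of \textsf{SUBROUTINE} and the way \textsf{DIMINISH} splits it, $\Delta_k \le \sum_{i=1}^n \Delta_i = \Delta \le \Delta^*$; then Lemma~\ref{lem:constant} gives $\Delta^* < \delta^* \le \frac{1}{2}\epsilon$, and since $\epsilon \in (0,1)$ this is strictly below $\frac{1}{2}$. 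Thus $\Delta_k \le \Delta^* < \frac{1}{2}$.

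For the second assertion, I would rewrite the quantity in a convenient form. Setting $t(p) \eqdef \frac{F_k(p) + p\,(1-F_k(p))}{p} = 1 - F_k(p)\bigl(1 - \frac{1}{p}\bigr)$, the claim is exactly $0 \le t(p) \le 1$ on $[\vv_k,\uu_k]$. By Lemma~\ref{lem:alg_facts}.2, together with the fixed parameter $v^* > 1$ from Lemma~\ref{lem:constant}, we have $p \ge \vv_k = v_k = v_k^* \ge v^* > 1$ throughout this interval, so $0 < 1 - \frac{1}{p} < 1$. Combined with $0 \le F_k(p) \le 1$ (as $F_k$ is a CDF), this yields $0 \le F_k(p)\bigl(1-\frac{1}{p}\bigr) < 1$, hence $0 < t(p) \le 1$, which is what is needed.

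There is essentially no obstacle: the first part is a chain of inequalities supplied verbatim by Lemma~\ref{lem:constant} and the algorithm's definitions, and the second is an elementary rearrangement once one observes that $p > 1$ on the relevant interval. The only point requiring a moment's care is citing the correct earlier statement guaranteeing $\vv_k > 1$ — namely Lemma~\ref{lem:alg_facts}.2 coupled with $v^* > 1$ — so that the factor $1 - \frac{1}{p}$ is pinned strictly inside $(0,1)$; with that in place both bounds fall out immediately.
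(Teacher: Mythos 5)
Your proof is correct and follows essentially the same route as the paper: the first bound via the chain $\Delta_k \leq \Delta \leq \Delta^* \leq \delta^* \leq \frac{1}{2}\epsilon \leq \frac{1}{2}$ from Lemma~\ref{lem:constant} and the algorithms' definitions, and the second via the rewrite $1 - \left(1 - \frac{1}{p}\right)\cdot F_k(p)$ together with $p \geq \vv_k > 1$ (the paper cites inequality~(\ref{eq:upper3-p}) for this, but your citation of Lemma~\ref{lem:alg_facts}.2 with $v^* > 1$ is an equally valid source).
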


\begin{proof}
For the first claim, recall line~(\ref{alg2:Delta}) of Algorithm~\ref{alg2}, Algorithm~\ref{alg3} and Lemma~\ref{lem:constant},
\[
0 \leq \Delta_k \leq \sum\limits_{i = 1}^n \Delta_i = \Delta \leq \Delta^* \leq \delta^* \leq \frac{1}{2}\epsilon \leq \frac{1}{2}.
\]
Furthermore, under premise $p \geq \vv_k > 1$ (see inequality~(\ref{eq:upper3-p})), the second claim follows as
\[
0 \leq \Big[F_k(p) + p \cdot \big(1 - F_k(p)\big)\Big] \cdot \frac{1}{p} = 1 - \left(1 - \frac{1}{p}\right) \cdot F_k(p) \leq 1.
\]
This completes the proof of the lemma.
\end{proof}

\subsection{Auxiliary Lemma~\ref{lem:main2}}
\label{subapp:upper3-lem2}

The following mathematical fact is obtained in Appendix~\ref{subapp:math_facts:ineq}.
\vspace{5.5pt} \\
\fbox{\begin{minipage}{\textwidth}
\paragraph{[Lemma~\ref{lem:ineq}.1].}
\emph{For all $y \geq x > 1$,
\begin{enumerate}
\item $1 - e^{-\big(\R(x) - \R(y)\big)} \leq \left(1 - \frac{x}{y}\right) \cdot \frac{x + e^{\R(x) - \R(y)} - 1}{(x - 1)^2}$.
\end{enumerate}}
\end{minipage}}

\begin{lemma}
\label{lem:main2}
Given $p \in [\vv_k, \uu_k]$, define $\hp$ by letting $\R(\hp) = \R(p) - \Psi\big(p, \{F_k\}\big)$, then
\[
\frac{1}{\hp} \leq \frac{1}{F_k(p) + p \cdot (1 - F_k(p))} - \left(1 - \frac{1}{p}\right) \cdot F_k(p).
\]
\end{lemma}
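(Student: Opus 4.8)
The plan is to reduce the claimed inequality, by an explicit change of variables, to the already-established Lemma~\ref{lem:ineq}.1 applied with the pair $(p,\hp)$. First I would check that $\hp$ is well-defined and satisfies $\hp\ge p>1$. Under the premises of Main Lemma~\ref{lem:subroutine} (reproduced in the box above), the constraint-slack bound gives, for $p\in[\vv_k,\uu_k]\subseteq(1,u]$,
\[
0 \;<\; \R(\gamma)+\delta^* \;\le\; \R(p)-\Psi\big(p,\{F_i\}_{i=1}^n\big) \;\le\; \R(p)-\Psi\big(p,\{F_k\}\big),
\]
the last step because $\Psi\big(p,\{F_i\}_{i=1}^n\big)=\sum_i\Psi\big(p,\{F_i\}\big)$ with every summand nonnegative. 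Hence $\R(p)-\Psi\big(p,\{F_k\}\big)$ lies in $(0,\infty)$, and since $\R$ is continuous and strictly decreasing from $+\infty$ to $0$ on $(1,\infty)$ (Lemma~\ref{lem:RQ}), there is a unique $\hp\in(1,\infty)$ with $\R(\hp)=\R(p)-\Psi\big(p,\{F_k\}\big)$; moreover $\Psi\big(p,\{F_k\}\big)\ge0$ forces $\R(\hp)\le\R(p)$, i.e. $\hp\ge p$, and $p\ge\vv_k>v^*>1$ by inequality~(\ref{eq:upper3-p}). So Lemma~\ref{lem:ineq}.1 is applicable to $(p,\hp)$.

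Next I would pass to the variable $E:=e^{\R(p)-\R(\hp)}=e^{\Psi(p,\{F_k\})}\ge1$ and write $F:=F_k(p)$. By the definition of the potential, $E=1+p\cdot\frac{1-F}{F}$; solving this identity gives $F=\frac{p}{p-1+E}$ and therefore $F+p\,(1-F)=FE=\frac{pE}{p-1+E}$. Substituting both expressions into the right-hand side of the target inequality turns it into a rational function of $p$ and $E$ only:
\[
\frac{1}{F_k(p)+p\,(1-F_k(p))}-\left(1-\frac{1}{p}\right)F_k(p)
\;=\;\frac{p-1+E}{pE}-\frac{p-1}{p-1+E}.
\]

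Finally I would do the elementary algebra. Multiplying the inequality $\tfrac1{\hp}\le\tfrac{p-1+E}{pE}-\tfrac{p-1}{p-1+E}$ by $p$, using $\tfrac{p-1+E}{E}=1+\tfrac{p-1}{E}$, and isolating the terms not involving $\hp$, one finds (via $p-1+E-pE=(p-1)(1-E)$) that the inequality is equivalent to
\[
1-\frac{1}{E}\;\le\;\left(1-\frac{p}{\hp}\right)\cdot\frac{p-1+E}{(p-1)^2},
\]
which, since $\tfrac1E=e^{-(\R(p)-\R(\hp))}$ and $E=e^{\R(p)-\R(\hp)}$, is exactly Lemma~\ref{lem:ineq}.1 for the pair $(p,\hp)$. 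All the rearrangements are reversible because $E>0$, $p-1>0$ and $p-1+E>0$, so the chain of equivalences closes and the lemma follows. There is no deep obstacle: the only points that genuinely use the ambient hypotheses are the well-definedness of $\hp$ and the bounds $\hp\ge p>1$ required to invoke Lemma~\ref{lem:ineq}.1; everything else is bookkeeping around the identity $e^{\Psi(p,\{F_k\})}=1+p\,\frac{1-F_k(p)}{F_k(p)}$.
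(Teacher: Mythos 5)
Your proposal is correct and follows essentially the same route as the paper's proof: substitute $F_k(p)=\frac{p}{p+e^{\R(p)-\R(\hp)}-1}$ (equivalently your change of variable $E=e^{\Psi(p,\{F_k\})}$), rearrange the resulting rational inequality, and reduce it exactly to Lemma~\ref{lem:ineq}.1 applied to the pair $(p,\hp)$ with $\hp\ge p>1$. The only difference is that you spell out the well-definedness of $\hp$ via the constraint-slack, which the paper leaves implicit; this is a harmless (indeed welcome) addition, not a divergence.
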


\begin{proof}
Since $\R(p) - \R(\hp) = \Psi\big(p, \{F_k\}\big) = \ln\left(1 + p \cdot \frac{1 - F_k(p)}{F_k(p)}\right)$, we surely have $F_k(p) = \frac{p}{p + e^{\R(p) - \R(\hp)} - 1}$. Plug this into the inequality in the lemma, it remains to show
\[
\frac{1}{\hp} \leq \frac{p + e^{\R(p) - \R(\hp)} - 1}{p} \cdot \frac{1}{e^{\R(p) - \R(\hp)}} - \frac{p - 1}{p + e^{\R(p) - \R(\hp)} - 1}.
\]
After rearranging, we are left with
\[
1 - e^{-\big(\R(p) - \R(\hp)\big)} \leq \left(1 - \frac{p}{\hp}\right) \cdot \frac{p + e^{\R(p) - \R(\hp)} - 1}{(p - 1)^2}.
\]
Clearly, $\R(\hp) \leq \R(p)$ means $\hp \geq p \geq \vv_k > 1$. Since then, the above inequality can be inferred from Lemma~\ref{lem:ineq}.1. This completes the proof of the lemma.
\end{proof}

\subsection{Auxiliary Lemma~\ref{lem:main3}}
\label{subapp:upper3-lem3}

\begin{lemma}
\label{lem:main3}
Given $p \in [\vv_k, \uu_k]$, define $\hp$ by letting $\R(\hp) = \R(p) - \Psi\big(p, \{F_k\}\big)$, then
\[
\gamma_k - \hp \geq \frac{1}{\left|\R'(\hp)\right|} \cdot \delta^* \geq (\hp^2 - \hp) \cdot \delta^*.
\]
\end{lemma}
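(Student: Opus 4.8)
The second inequality is the easy one: by Lemma~\ref{lem:RQ}.3 we have $\left|\R'(\hp)\right| \leq \frac{1}{\hp^2 - \hp}$ for $\hp \in (1,\infty)$, hence $\frac{1}{\left|\R'(\hp)\right|} \geq \hp^2 - \hp$; multiplying through by $\delta^* > 0$ (which is positive by Lemma~\ref{lem:constant}) gives $\frac{1}{\left|\R'(\hp)\right|}\cdot\delta^* \geq (\hp^2 - \hp)\cdot\delta^*$. So the work is all in the first inequality $\gamma_k - \hp \geq \frac{1}{\left|\R'(\hp)\right|}\cdot\delta^*$, and the plan is to reduce it, via convexity of $\R$, to the single estimate $\R(\hp) - \R(\gamma_k) \geq \delta^*$, which in turn follows from the constraint-slack invariant maintained by \textsf{SUBROUTINE}.

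First I would record that $\hp$ is well-defined and satisfies $1 < p \leq \hp < \gamma_k$. Indeed, once we have shown $\R(\hp) = \R(p) - \Psi\big(p,\{F_k\}\big) \geq \R(\gamma_k) + \delta^* > 0$ (see the next paragraph), the fact that $\R$ is a strictly decreasing bijection of $(1,\infty)$ onto $(0,\infty)$ (Lemma~\ref{lem:RQ}.1,2) pins down $\hp \in (1,\infty)$ with $\hp < \gamma_k$; and $\Psi\big(p,\{F_k\}\big)\geq 0$ forces $\R(\hp)\leq\R(p)$, i.e. $\hp \geq p > 1$. Then, since $\R$ is convex and $\R'$ is negative on $(1,\infty)$ (Lemma~\ref{lem:RQ}.1), the function $|\R'|=-\R'$ is decreasing, so
\[
\R(\hp) - \R(\gamma_k) \;=\; \int_{\hp}^{\gamma_k} \left|\R'(x)\right| dx \;\leq\; \left|\R'(\hp)\right|\cdot(\gamma_k - \hp).
\]
Combining this with $\R(\hp) - \R(\gamma_k) \geq \delta^*$ yields $\gamma_k - \hp \geq \frac{\R(\hp)-\R(\gamma_k)}{\left|\R'(\hp)\right|} \geq \frac{\delta^*}{\left|\R'(\hp)\right|}$, which is the first inequality.

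It remains to prove $\R(\hp) - \R(\gamma_k) \geq \delta^*$, equivalently $\R(\gamma_k) + \Psi\big(p,\{F_k\}\big) \leq \R(p) - \delta^*$ for $p \in [\vv_k, \uu_k]$. Here I would invoke the constraint-slack for the instance $\cont(\gamma_k)\cup\{\FF_i\}_{i=1}^{k-1}\cup\{F_i\}_{i=k}^n$ current at the start of the $k$-th process, namely (as displayed in Analysis III, Section~\ref{subsec:upper3-opt}) that for all $p \in (1,\uu]$,
\[
\R(\gamma_k) + \sum_{i=1}^{k-1}\Psi\big(p,\{\FF_i\}\big) + \sum_{i=k}^{n}\Psi\big(p,\{F_i\}\big) \;=\; \R(\gamma) + \Psi\big(p,\{F_i\}_{i=1}^n\big) \;\leq\; \R(p) - \delta^*,
\]
the first equality coming from applying Lemma~\ref{lem:alg_facts}.3 across processes $1,\dots,k-1$ (each shifts $\R(\gamma)$ up by $\Delta_i$ and the discrete potential down by $\Delta_i$ on the relevant range, and $\R(\max\{p,\gamma_k\})=\R(\gamma_k)$ there since $p\leq\uu_k\leq\ggamma_k<\gamma_k$ by~(\ref{eq:upper3-p})), and the inequality being the base-case slack carried forward by induction in Main Lemma~\ref{lem:subroutine}. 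Since every $\Psi(p,\{\FF_i\})$ and $\Psi(p,\{F_i\})$ is nonnegative (Lemma~\ref{lem:potential}), dropping all terms but $\Psi\big(p,\{F_k\}\big)$ gives $\R(\gamma_k) + \Psi\big(p,\{F_k\}\big)\leq \R(p)-\delta^*$, as needed, noting $[\vv_k,\uu_k]\subseteq(1,\uu]$. The main obstacle is bookkeeping rather than conceptual: one must make sure the slack identity genuinely holds on all of $(1,\uu]$ (in particular over $[\vv_k,\uu_k]$) after the partial sequence of processes, which requires tracking the nonincreasing support-suprema and the range on which Lemma~\ref{lem:alg_facts}.3 applies; everything else is a one-line convexity estimate plus Lemma~\ref{lem:RQ}.3.
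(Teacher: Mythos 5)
Your proposal is correct and follows essentially the same route as the paper: both reduce the first inequality to the slack estimate $\R(\hp)-\R(\gamma_k)\geq\delta^*$ by combining the identity $\R(\gamma_k)=\R(\gamma)+\sum_{i=1}^{k-1}\Delta_i$ with the telescoping property $\Psi\big(p,\{\FF_i\}\big)=\Psi\big(p,\{F_i\}\big)-\Delta_i$ and the inductive constraint-slack $\R(\gamma)+\Psi\big(p,\{F_i\}_{i=1}^n\big)\leq\R(p)-\delta^*$, then convert to $\gamma_k-\hp\geq\delta^*/\left|\R'(\hp)\right|$ via convexity and monotonicity of $\R$, and handle the second inequality by Lemma~\ref{lem:RQ}.3. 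Your reading of Lemma~\ref{lem:RQ}.3 (namely $\left|\R'(\hp)\right|\leq\frac{1}{\hp^2-\hp}$) is in fact the correct one, fixing a small typo in the paper's one-line justification of the second inequality.
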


\begin{proof}
The second inequality follows from Lemma~\ref{lem:RQ}.3 that $|\R'(\hp)| \leq (\hp^2 - \hp)$. As for the first one, recall the definition of $\gamma_k$ that $\R(\gamma_k) = \R(\gamma) + \sum\limits_{i = 1}^{k - 1} \Delta_i$. In the range of $p \in [\vv_k, \uu_k] \subset (1, \uu] \subset (1, u]$,
\[
\begin{aligned}
\R(\gamma_k) - \R(\hp)
= & \R(\gamma) + \sum\limits_{i = 1}^{k - 1} \Delta_i - \R(p) + \Psi\big(p, \{F_k\}\big) \\
\leq & \R(\gamma) + \sum\limits_{i = 1}^{k - 1} \Delta_i - \R(p) + \sum\limits_{i = 1}^{k - 1} \Psi\big(p, \{\FF_i\}\big) + \sum\limits_{i = k}^n \Psi\big(p, \{F_i\}\big) \\
\overset{(\ast)}{=} & \R(\gamma) - \R(p) + \sum\limits_{i = 1}^n \Psi\big(p, \{F_i\}\big)
= \R(\gamma) - \R(p) + \Psi\big(p, \{F_i\}_{i = 1}^n\big)
\overset{(\diamond)}{\leq} -\delta^*,
\end{aligned}
\]
where $(\ast)$ follows from line~(\ref{alg3:FF1})/line~(\ref{alg3:FF2}) of Algorithm~\ref{alg3}, and $(\diamond)$ follows from induction hypothesis $\R(\gamma) + \Psi\big(p, \{F_i\}_{i = 1}^n\big) \leq \R(p) - \delta^*$ for all $p \in (1, u]$. Noting that $\R(p)$ is decreasing and convex on $p \in (1, \infty)$ (see Lemma~\ref{lem:RQ}.1), we can infer from above $\gamma_k - \hp \geq \frac{1}{\left|\R'(\hp)\right|} \cdot \delta^*$, hence settling the lemma.
\end{proof}

\subsection{Proof of Lemma~\ref{lem:main}}

\paragraph{[Lemma~\ref{lem:main}].}
\emph{For all $p \in [\vv_k, \uu_k] \subset [v_k, u_k]$,
\begin{equation}
\tag{\ref{eq:lem:main}}
\gamma_k - \Big[F_k(p) + p \cdot \big(1 - F_k(p)\big)\Big] - (p - 1) \cdot \frac{\FF_k(p) - F_k(p)}{e^{\Q(\ggamma_k) - \Q(\gamma_k)} - 1}
\geq \frac{1}{\kappa^*} \cdot \gamma^{*2} \cdot \Delta^*.
\end{equation}}

\begin{proof}
The proof of Lemma~\ref{lem:main} consists of the following three parts.

\paragraph{Part I.}
Based on definitions of $\FF_k$ and $\ggamma_k$, we obtain Lemma~\ref{lem:main1} in Appendix~\ref{subapp:upper3-lem1}, which indicates
\begin{flalign*}
\bullet\quad \text{LHS of~(\ref{eq:lem:main})}
\geq & -2\gamma^{*2} \cdot \Delta^* + \gamma_k \cdot \Big[F_k(p) + p \cdot \big(1 - F_k(p)\big)\Big] & \\
& \phantom{-2\gamma^{*2} \cdot \Delta^* + \gamma_k\,\,} \cdot \left[-\frac{1}{\gamma_k} + \frac{1}{F_k(p) + p \cdot \big(1 - F_k(p)\big)} - \left(1 - \frac{1}{p}\right) \cdot F_k(p)\right]; &
\end{flalign*}
\fbox{\begin{minipage}{\textwidth}
\paragraph{[Lemma~\ref{lem:main1}].}
\emph{Given $p \in [\vv_k, \uu_k]$,}
\[
(p - 1) \cdot \frac{\FF_k(p) - F_k(p)}{e^{\Q(\ggamma_k) - \Q(\gamma_k)} - 1}
\leq 2\gamma^{*2} \cdot \Delta^* + \gamma_k \cdot \Big[F_k(p) + p \cdot \big(1 - F_k(p)\big)\Big] \cdot \left(1 - \frac{1}{p}\right) \cdot F_k(p).
\]
\end{minipage}}

\paragraph{Part II.}
Define $\hp$ by letting $\R(\hp) = \R(p) - \Psi\big(p, \{F_k\}\big)$. In Lemma~\ref{lem:main2} (justified in Appendix~\ref{subapp:upper3-lem2}), we get a measurement of $\hp$. Applying the lemma to the conclusion of~\textbf{Part I}, we know
\begin{flalign*}
\bullet\quad \text{LHS of~(\ref{eq:lem:main})}
\geq & -2\gamma^{*2} \cdot \Delta^* + \gamma_k \cdot \Big[F_k(p) + p \cdot \big(1 - F_k(p)\big)\Big] \cdot \left(\frac{1}{\hp} - \frac{1}{\gamma_k}\right) \\
\overset{(\ast)}{\geq} & -2\gamma^{*2} \cdot \Delta^* + \gamma_k \cdot \left(\frac{1}{\hp} - \frac{1}{\gamma_k}\right)
= -2\gamma^{*2} \cdot \Delta^* + \frac{1}{\hp} \cdot \left(\gamma_k - \hp\right); &
\end{flalign*}
where $(\ast)$ follows as $\hp \leq \gamma_k$ (see Lemma~\ref{lem:main3}) and $F_k(p) + p \cdot \big(1 - F_k(p)\big) = 1 + (p - 1) \cdot \big(1 - F_k(p)\big) \geq 1$.
\vspace{5.5pt} \\
\fbox{\begin{minipage}{\textwidth}
\paragraph{[Lemma~\ref{lem:main2}].}
\emph{Given $p \in [\vv_k, \uu_k]$, define $\hp$ by letting $\R(\hp) = \R(p) - \Psi\big(p, \{F_k\}\big)$, then}
\[
\frac{1}{\hp} \leq \frac{1}{F_k(p) + p \cdot \big(1 - F_k(p)\big)} - \left(1 - \frac{1}{p}\right) \cdot F_k(p).
\]
\end{minipage}}

\paragraph{Part III.}
We know $\hp \geq p$, in that $\R(\hp) = \R(p) - \Psi\big(p, \{F_k\}\big) \leq \R(p)$. Under premise $p \geq \vv_k \geq v^* > 1$, clearly $\hp \geq v^* > 1$. Since then, and recall line~(\ref{alg1:Delta}) of Algorithm~\ref{alg1} that $\Delta^* = \frac{v^* - 1}{3\gamma^{*2}} \cdot \kappa^* \cdot \delta^*$,
\begin{equation}
\label{eq:lem:main1}
\delta^* = \frac{3}{\kappa^*} \cdot \frac{\gamma^{*2} \cdot \Delta^*}{v^* - 1} \geq \frac{3}{\kappa^*} \cdot \frac{\gamma^{*2} \cdot \Delta^*}{\hp - 1}
\end{equation}
Applying Lemma~\ref{lem:main3} (to be acquired in Appendix~\ref{subapp:upper3-lem3}) to the conclusion of \textbf{Part II} results in
\begin{flalign*}
\bullet\quad \text{LHS of~(\ref{eq:lem:main})}
\geq & -2\gamma^{*2} \cdot \Delta^* + (\hp - 1) \cdot \delta^* \\
\overset{(\ref{eq:lem:main1})}{\geq} & \left(\frac{3}{\kappa^*} - 2\right) \cdot \gamma^{*2} \cdot \Delta^*
\geq \text{RHS of~(\ref{eq:lem:main})}. & \text{(By Lemma~\ref{lem:constant} that $\kappa^* \in (0, 1]$)}
\end{flalign*}
\fbox{\begin{minipage}{\textwidth}
\paragraph{[Lemma~\ref{lem:main3}].}
\emph{Given $p \in [\vv_k, \uu_k]$, define $\hp$ by letting $\R(\hp) = \R(p) - \Psi\big(p, \{F_k\}\big)$, then}
\[
\gamma_k - \hp \geq \frac{1}{\left|\R'(\hp)\right|} \cdot \delta^* \geq (\hp^2 - \hp) \cdot \delta^*.
\]
\end{minipage}}
\vspace{5.5pt}

This completes the proof of Lemma~\ref{lem:main}.
\end{proof}

\section{Proof of Lemma~\ref{lem:abel_ineq1}}
\label{app:upper3-opt}
\fcolorbox{white}{lightgray}{\begin{minipage}{\textwidth}
Assume w.l.o.g. $\Delta_k > 0$, then Lemma~\ref{lem:alg_facts}.1 ensures $\uu_k = \uu$: While $\{\FF_i\}_{i = 1}^{k - 1} \cup \{F_i\}_{i = k + 1}^n$ remains,
\begin{itemize}
\item $F_k$ becomes $\FF_k$, where $\FF_k$ is defined as $\Psi\big(p, \{\FF_k\}\big) \equiv \Psi\big(p, \{F_k\}\big) - \Delta_k$ for all $p \in (0, \uu_k]$;
\item $\cont(\gamma_k)$ becomes $\cont(\ggamma_k)$, where $\ggamma_k$ is defined by letting $\R(\ggamma_k) = \R(\gamma_k) + \Delta_k$;
\item $\gamma^* \geq \gamma_k > \ggamma_k \geq u_k \geq \uu_k \geq v_k = \vv_k \geq v^* > 1$;
\end{itemize}
\end{minipage}}

\paragraph{Analysis 0: Auxiliary Facts.}
The following mathematical facts are certified in Appendix~\ref{subapp:math_facts:ineq}.
\vspace{5.5pt} \\
\fbox{\begin{minipage}{\textwidth}
\paragraph{[Lemma~\ref{lem:ineq}.2,3,4].}
\emph{For all $y \geq x > 1$,
\begin{enumerate}
\setcounter{enumi}{1}
\item $1 - e^{-\big(\R(x) - \R(y)\big)} \leq \left(1 - \frac{x}{y}\right) \cdot \frac{x + e^{\R(x) - \R(y)} - 1}{(x - 1)^2}$;
\item $\frac{e^{\R(x) - \R(y)}}{x + e^{\R(x) - \R(y)} - 1} \geq \frac{1}{y}$;
\item $1 - \frac{1}{y} \geq (x - 1) \cdot \frac{e^{\R(x) - \R(y)}}{x + e^{\R(x) - \R(y)} - 1}$.
\end{enumerate}}
\end{minipage}}
\vspace{5.5pt}

To justify Lemma~\ref{lem:abel_ineq1}, the following facts are also required:
\vspace{5.5pt} \\
\fbox{\begin{minipage}{\textwidth}
\begin{description}
\item [(a):] $\gamma_k \geq \Phi_k(\uu_k^+) \geq 1$. This is confirmed as follows:
    \begin{itemize}
    \item We know $\Phi_k(\uu_k^+) \leq \uu_k \leq \gamma_k$, according to the definition of $\Phi_k$;
    \item $\Phi_k(\uu_k^+) \geq \Phi_k(\vv_k^+) = \Phi_k(v_k^+) \geq 1$, which follows from induction hypotheses that $F_k$ is regular (that is, $\Phi_k$ is increasing when $p \geq \vv_k = v_k$), and that $\Phi_k(v_k^+) \geq 1$;
    \end{itemize}
\item [(b):] $F_k(\uu_k^+) \geq \frac{\uu_k}{\uu_k + e^{\R(\ggamma_k) - \R(\gamma_k)} - 1}$. This can be inferred from Lemma~\ref{lem:abel_ineq4} in the below;
\item [(c):] $\FF_k(p) \geq e^{\Q(\ggamma_k) - \Q(\gamma_k)} \cdot F_k(p)$ for all $p \in [\vv_k, \uu_k]$. This will be attested soon in Lemma~\ref{lem:abel_ineq3}.
\end{description}
\end{minipage}}

\begin{lemma}
\label{lem:abel_ineq4}
Suppose $\Delta_k > 0$, then $\ln\left(1 + \uu_k \cdot \frac{1 - F_k(\uu_k^+)}{F_k(\uu_k^+)}\right) = \Psi\big(\uu_k^+, \{F_k\}\big) \leq \Delta_k = \R(\ggamma_k) - \R(\gamma_k)$.
\end{lemma}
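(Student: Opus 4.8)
The plan is to split the statement into its two equalities and the single inequality $\Psi\big(\uu_k^+, \{F_k\}\big) \leq \Delta_k$, and to observe that only the inequality carries content. The left-hand equality is just the definition $\Psi\big(p, \{F_k\}\big) = \ln\big(1 + p \cdot \tfrac{1 - F_k(p)}{F_k(p)}\big)$ evaluated in the limit $p \to \uu_k^+$: continuity of $\ln$ and of the rational expression, together with $F_k(p) \to F_k(\uu_k^+)$, give $\Psi\big(\uu_k^+, \{F_k\}\big) = \ln\big(1 + \uu_k \cdot \tfrac{1 - F_k(\uu_k^+)}{F_k(\uu_k^+)}\big)$. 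The right-hand equality is exactly Eq.~(\ref{eq:upper3-cont}) (equivalently line~(\ref{alg2:continuous}) of \textsf{SUBROUTINE}), which defines $\ggamma_k$ via $\R(\ggamma_k) = \R(\gamma_k) + \Delta_k$, so $\Delta_k = \R(\ggamma_k) - \R(\gamma_k)$. Hence everything reduces to bounding the $k$-th distribution's potential at $\uu_k^+$ by the per-distribution potential-decrease $\Delta_k$ that \textsf{DIMINISH} assigns to it.

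For the inequality I would split into two cases according to whether $k$ belongs to the set $W = \{i \in [n]: F_i \text{ has a probability-mass at } \uu\}$ defined at the start of \textsf{DIMINISH}; recall $\uu_k = \uu$ because $\Delta_k > 0$ (Lemma~\ref{lem:alg_facts}.1). In the case $k \notin W$, the distribution $F_k$ carries no mass at $\uu = \uu_k$, hence is right-continuous there, so $F_k(\uu_k^+) = F_k(\uu_k)$ and $\Psi\big(\uu_k^+, \{F_k\}\big) = \Psi\big(\uu_k, \{F_k\}\big)$; but line~(\ref{alg3:Delta_no_mass}) of \textsf{DIMINISH} sets $\Delta_k = \Psi\big(\uu, \{F_k\}\big) = \Psi\big(\uu_k, \{F_k\}\big)$, so the inequality actually holds with equality. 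In the case $k \in W$, the mass of $F_k$ at $\uu$ must sit at $F_k$'s support-supremum $u_k$, since a regular distribution has at most one probability-mass and it lies at the support-supremum (Section~\ref{subsec:prelim-reg-vv}, fact~\textbf{(b)}); thus $\uu = u_k$, so $\uu_k = u_k$, $F_k(\uu_k^+) = 1$, and $\Psi\big(\uu_k^+, \{F_k\}\big) = \ln\big(1 + u_k \cdot \tfrac{0}{1}\big) = 0 \leq \Delta_k$. Combining the two cases yields the inequality, and together with the two equalities this is the lemma.

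I do not expect any genuine obstacle here — the lemma is essentially a bookkeeping statement about Algorithms~\ref{alg2} and~\ref{alg3}. The two places that need care are: (i) identifying the $\Delta_k$ of the lemma with the $i=k$ value computed inside \textsf{DIMINISH}, so that line~(\ref{alg3:Delta_no_mass}) can be invoked when $k \notin W$; and (ii) invoking the structural fact that the lone probability-mass of a regular distribution lives at its support-supremum, which is what forces $\uu = u_k$ (and hence $F_k(\uu_k^+) = 1$) in the case $k \in W$ — without it, $F_k(\uu_k^+)$ need not equal $1$ and the clean conclusion $\Psi\big(\uu_k^+,\{F_k\}\big) = 0$ would fail. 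The only mildly delicate manipulation is keeping the right-limit $F_k(\uu_k^+)$ distinct from the value $F_k(\uu_k)$, but this distinction disappears precisely along the probability-mass dichotomy above.
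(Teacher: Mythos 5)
Your proposal is correct and follows essentially the same route as the paper's proof: the same case split on whether $k$ belongs to $W$, the same use of line~(\ref{alg3:Delta_no_mass}) to get equality when $k \notin W$, and the same appeal to the single-probability-mass-at-the-support-supremum property of regular distributions to conclude $F_k(\uu_k^+) = 1$ and hence $\Psi\big(\uu_k^+, \{F_k\}\big) = 0$ when $k \in W$. The extra care you take with the two definitional equalities and the right-limit versus value distinction is consistent with (and slightly more explicit than) the paper's argument.
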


\begin{proof}
For notational convenience, we would reuse $W = \{i \in [n]: \,\, \text{$F_i$ has probability-mass at $\uu$}\}$ defined in Algorithm~\ref{alg3}. By Lemma~\ref{lem:alg_facts}.1, $\Delta_k > 0$ means $\uu_k = \uu$. Suppose $k \in W$, under induction hypothesis that $F_k$ is regular, $\uu_k = \uu$ is $F_k$'s support-supremum (see Section~\ref{subsec:prelim-reg-vv}). Hence,
\[
F_k(\uu_k^+) = 1 \quad\quad\quad\quad\quad\quad \Psi\big(\uu_k^+, \{F_k\}\big) = 0 \leq \Delta_k.
\]
Otherwise, i.e., when $k \in [n] \setminus W$, it follows from line~(\ref{alg3:Delta_no_mass}) of Algorithm~\ref{alg3} that
\[
\Psi\big(\uu_k^+, \{F_k\}\big) \xlongequal{\uu_k = \uu} \Psi\big(\uu^+, \{F_k\}\big) \xlongequal{continuity} \Psi\big(\uu, \{F_k\}\big) \xlongequal{line~(\ref{alg3:Delta_no_mass})} \Delta_k.
\]
This completes the proof of the lemma.
\end{proof}

\begin{lemma}
\label{lem:abel_ineq3}
$\FF_k(p) \geq e^{\Q(\ggamma_k) - \Q(\gamma_k)} \cdot F_k(p)$ for all $p \in [\vv_k, \uu_k]$.
\end{lemma}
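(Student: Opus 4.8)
\textbf{Proof plan for Lemma~\ref{lem:abel_ineq3}.}

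The statement $\FF_k(p) \geq e^{\Q(\ggamma_k) - \Q(\gamma_k)} \cdot F_k(p)$ on $[\vv_k, \uu_k]$ is the last auxiliary ingredient before Lemma~\ref{lem:abel_ineq1}, so the plan is to reduce everything to an inequality between explicit algebraic expressions in a single auxiliary variable, and then invoke the already-established Lemma~\ref{lem:ineq}. First I would recall the potential-based construction from line~(\ref{alg3:FF1})/line~(\ref{alg3:FF2}) of \textsf{DIMINISH}: for $p \in (0, \uu_k] = (0, \uu]$ we have $\Psi\big(p, \{\FF_k\}\big) = \Psi\big(p, \{F_k\}\big) - \Delta_k$, i.e.\ $1 + p \cdot \frac{1 - \FF_k(p)}{\FF_k(p)} = e^{-\Delta_k}\left(1 + p \cdot \frac{1 - F_k(p)}{F_k(p)}\right)$. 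Solving this for $\FF_k(p)$ in terms of $F_k(p)$ gives the closed form already used in the proof of Lemma~\ref{lem:main1.1} and Lemma~\ref{lem:upper3-cstr2}:
\[
\FF_k(p) = \frac{F_k(p)}{1 - \left(1 - e^{-\Delta_k}\right) \cdot \Big[1 - \left(1 - \tfrac{1}{p}\right) F_k(p)\Big]}.
\]
Hence the claim $\FF_k(p) \geq e^{\Q(\ggamma_k) - \Q(\gamma_k)} F_k(p)$ is equivalent to
\[
1 - \left(1 - e^{-\Delta_k}\right) \cdot \Big[1 - \left(1 - \tfrac{1}{p}\right) F_k(p)\Big] \leq e^{-\big(\Q(\ggamma_k) - \Q(\gamma_k)\big)},
\]
after cancelling the positive factor $F_k(p)$; note $e^{\Q(\ggamma_k) - \Q(\gamma_k)} \geq 1$ since $\ggamma_k \leq \gamma_k$ and $\Q$ is decreasing (Lemma~\ref{lem:RQ}.1), so the inequality really does have content.

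Next I would change variables to the natural parameter. Set $\hp$ by $\R(\hp) = \R(p) - \Psi\big(p, \{F_k\}\big)$, exactly as in Lemma~\ref{lem:main2} and Lemma~\ref{lem:main3}; then $\R(p) - \R(\hp) = \Psi\big(p, \{F_k\}\big) = \ln\!\big(1 + p\cdot\frac{1-F_k(p)}{F_k(p)}\big)$, which gives $F_k(p) = \frac{p}{p + e^{\R(p) - \R(\hp)} - 1}$ and therefore $\big(1 - \tfrac1p\big) F_k(p) = \frac{p - 1}{p + e^{\R(p) - \R(\hp)} - 1}$. Substituting this into the bracket and using $1 - e^{-\Delta_k} = 1 - e^{-(\R(\ggamma_k) - \R(\gamma_k))}$ reduces the target to a relation purely between $p$, $\hp$, $\gamma_k$ and $\ggamma_k$, with $\R(\ggamma_k) - \R(\gamma_k) = \Delta_k \geq \Psi\big(p, \{F_k\}\big) = \R(p) - \R(\hp)$ by Lemma~\ref{lem:abel_ineq4} (applied at $p = \uu_k$) together with monotonicity of $\Psi\big(\cdot, \{F_k\}\big)$ on $[\vv_k, \uu_k]$ from Lemma~\ref{lem:potential}. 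The cleanest route is then to compare $1 - e^{-(\Q(\ggamma_k) - \Q(\gamma_k))}$ with $1 - e^{-(\R(\ggamma_k) - \R(\gamma_k))}$ via Lemma~\ref{lem:ineq}.2 (which bounds the $\Q$-difference below by $\tfrac{1}{\gamma_k}$ times the $\R$-difference), and to bound the remaining algebraic term $\left(1 - e^{-\Delta_k}\right)\big(1 - \tfrac1p\big)F_k(p)$ using Lemma~\ref{lem:ineq}.3 and Lemma~\ref{lem:ineq}.4, which were tailor-made (with $x \leftarrow p$ or $x \leftarrow \hp$, $y \leftarrow \gamma_k$ or $\ggamma_k$) to control exactly expressions of the form $\frac{e^{\R(x)-\R(y)}}{x + e^{\R(x)-\R(y)} - 1}$ and $(x-1)\cdot\frac{e^{\R(x)-\R(y)}}{x + e^{\R(x)-\R(y)} - 1}$.

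The main obstacle I anticipate is bookkeeping rather than conceptual: matching up \emph{which} instantiation of Lemma~\ref{lem:ineq}.2--4 (and with which of the four parameters $p, \hp, \gamma_k, \ggamma_k$ playing the roles of $x$ and $y$) produces a chain of inequalities whose middle terms actually telescope to the desired bound, and verifying the hypotheses $y \geq x > 1$ at each application — these follow from the ordering $\gamma^* \geq \gamma_k > \ggamma_k \geq u_k \geq \uu_k \geq p \geq \vv_k \geq v^* > 1$ recorded above and from $\hp \geq p$ (since $\R(\hp) \leq \R(p)$ and $\R$ is decreasing), so they are routine but must be stated. A secondary subtlety is the boundary case $\Delta_k = 0$, where $\FF_k \equiv F_k$ and $\ggamma_k = \gamma_k$ so the inequality is an equality; and the case $k \in W$ where $F_k(\uu_k^+) = 1$, which is handled by Lemma~\ref{lem:abel_ineq4} exactly as in its proof. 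Once the algebraic inequality is assembled, no calculus beyond the monotonicity/convexity facts of Lemma~\ref{lem:RQ} is needed, so the proof should be short modulo the substitutions.
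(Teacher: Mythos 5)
Your opening reduction is exactly the paper's: write $\FF_k(p) = F_k(p)\cdot\left\{1 - \left(1 - e^{-\Delta_k}\right)\left[1 - \left(1-\tfrac1p\right)F_k(p)\right]\right\}^{-1}$, cancel $F_k(p)$, and observe that the claim becomes $1 - e^{-(\Q(\ggamma_k)-\Q(\gamma_k))} \leq \left(1-e^{-\Delta_k}\right)\left[1 - \left(1-\tfrac1p\right)F_k(p)\right]$, to be closed by Lemma~\ref{lem:ineq}.2. However, two things go wrong after that. First, you invoke the wrong half of Lemma~\ref{lem:ineq}.2: you describe it as bounding the $\Q$-difference \emph{below} by $\tfrac{1}{\gamma_k}$ times the $\R$-difference, but what the proof needs is the \emph{upper} bound, $1 - e^{-(\Q(\ggamma_k)-\Q(\gamma_k))} \leq \tfrac{1}{\ggamma_k}\left[1 - e^{-(\R(\ggamma_k)-\R(\gamma_k))}\right] = \tfrac{1}{\ggamma_k}\left(1-e^{-\Delta_k}\right)$, i.e.\ the $\tfrac1x$ side with $(x,y)=(\ggamma_k,\gamma_k)$. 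The $\tfrac{1}{\gamma_k}$ lower bound points in the useless direction here.

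Second, the detour through $\hp$ and Lemmas~\ref{lem:ineq}.3--4 is both unnecessary and under-justified. The paper finishes in two elementary steps: $1 - \left(1-\tfrac1p\right)F_k(p) \geq \tfrac1p$ because $F_k(p)\leq 1$ and $p > 1$, and $\tfrac1p \geq \tfrac{1}{\ggamma_k}$ because $p \leq \uu_k \leq \ggamma_k$ (Lemma~\ref{lem:alg_facts}.1); combined with the upper-bound half of Lemma~\ref{lem:ineq}.2 this is the entire proof. If instead you substitute $F_k(p) = \tfrac{p}{p + e^{\R(p)-\R(\hp)}-1}$ and apply Lemma~\ref{lem:ineq}.3 to get $1 - \left(1-\tfrac1p\right)F_k(p) \geq \tfrac{1}{\hp}$, you then need $\hp \leq \ggamma_k$ to conclude, and that does \emph{not} follow from the ordering chain you list: it requires the constraint-slack induction hypothesis $\R(\gamma) + \Psi\big(p,\{F_i\}_{i=1}^n\big) \leq \R(p) - \delta^*$ exactly as in the proof of Lemma~\ref{lem:main3}. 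So the "bookkeeping" you defer is not routine along your route; along the paper's route there is nothing to defer. (Lemma~\ref{lem:ineq}.4 plays no role in this lemma at all; it is used later, in Analysis~I of Lemma~\ref{lem:abel_ineq1}.)
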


\begin{proof}
Given $p \in [\vv_k, \uu_k]$, under our construction that $\Psi\big(p, \{\FF_k\}\big) = \Psi\big(p, \{F_k\}\big) - \Delta_k$,
\begin{flalign*}
\FF_k(p)
\overset{(\ref{eq:upper3-F})}{=} & \left\{1 - \left(1 - e^{-\Delta_k}\right) \cdot \left[1 - \left(1 - \frac{1}{p}\right) \cdot F_k(p)\right]\right\}^{-1} \cdot F_k(p) \\
\geq & \left[1 - \left(1 - e^{-\Delta_k}\right) \cdot \frac{1}{p}\right]^{-1} \cdot F_k(p) & \text{(Since $F_k(p) \leq 1$ and $p \geq \vv_k > 1$)} \\
\geq & \left[1 - \left(1 - e^{-\Delta_k}\right) \cdot \frac{1}{\ggamma_k}\right]^{-1} \cdot F_k(p). & \text{(Since $p \leq \uu_k \leq \ggamma_k$)}
\end{flalign*}
To settle the lemma, we shall prove $\left[1 - \left(1 - e^{-\Delta_k}\right) \cdot \frac{1}{\ggamma_k}\right]^{-1} \geq e^{\Q(\ggamma_k) - \Q(\gamma_k)}$, or equivalently,
\[
1 - e^{-\big(\Q(\ggamma_k) - \Q(\gamma_k)\big)} \leq \frac{1}{\ggamma_k} \cdot \left(1 - e^{-\Delta_k}\right) \overset{(\ref{eq:upper3-cont})}{=} \frac{1}{\ggamma_k} \cdot \left[1 - e^{-\big(\R(\ggamma_k) - \R(\gamma_k)\big)}\right],
\]
which follows from Lemma~\ref{lem:ineq}.2 immediately. This completes the proof of the lemma.
\end{proof}

Depending on whether $p \geq \Phi_k(\uu_k^+)$ or not, we would handle Lemma~\ref{lem:abel_ineq1} in different manners.

\paragraph{Analysis I: When $p \in \big[\Phi_k(\uu_k^+), \gamma_k\big]$.}
In this case, Lemma~\ref{lem:abel_ineq1} is established on fact~\textbf{(a)}, fact~\textbf{(b)}, Main Lemma~\ref{lem:virtual_value} and the above mathematical facts.

\paragraph{[Lemma~\ref{lem:abel_ineq1}.1].}
\emph{$\displaystyle{\int_p^{\gamma_k}} D_k(x) \cdot e^{-\Q(\gamma_k)} dx \geq \displaystyle{\int_p^{\gamma_k}} \DD_k(x) \cdot e^{-\Q(\ggamma_k)} dx$ for all $p \in \big[\Phi_k(\uu_k^+), \gamma_k\big]$.}

\begin{proof}
Since $\DD_k(p) \leq 1$ for all $p \in (0, \infty)$, it suffices to show $\displaystyle{\int_p^{\gamma_k}} D_k(x) \cdot e^{-\Q(\gamma_k)} dx \geq \displaystyle{\int_p^{\gamma_k}} e^{-\Q(\ggamma_k)} dx$, or equivalently,
\[
\displaystyle{\int_p^{\gamma_k}} \left[D_k(x) - e^{-\big(\Q(\ggamma_k) - \Q(\gamma_k)\big)}\right] dx \geq 0.
\]
Note that $D_k(p)$ is increasing on $p \in (0, \infty)$, the left hand side of this inequality can be viewed as a concave function of $p$. Thus, we only need to deal with the case that $p = \Phi_k(\uu_k^+)$, in which the above inequality is equivalent to $\displaystyle{\int_{\Phi_k(\uu_k^+)}^{\gamma_k}} \left[1 - e^{-\big(\Q(\ggamma_k) - \Q(\gamma_k)\big)}\right] \geq \displaystyle{\int_{\Phi_k(\uu_k^+)}^{\gamma_k}} \big(1 - D_k(x)\big) dx$. For this,
\begin{itemize}
\item The left hand side surely equals to $\big(\gamma_k - \Phi_k(\uu_k^+)\big) \cdot \left[1 - e^{-\big(\Q(\ggamma_k) - \Q(\gamma_k)\big)}\right]$;
\item The right hand side equals to $\displaystyle{\int_{\Phi_k(\uu_k^+)}^{\infty}} \big(1 - D_k(x)\big) dx$, since $D_k(p) = 1$ when $p > \gamma_k \geq u_k$.
\end{itemize}
Further applying Main Lemma~\ref{lem:virtual_value} to the right hand side, we are left with
\begin{equation}
\label{eq:abel_ineq1}
\big(\gamma_k - \Phi_k(\uu_k^+)\big) \cdot \left[1 - e^{-\big(\Q(\ggamma_k) - \Q(\gamma_k)\big)}\right]
\geq \big(\uu_k - \Phi_k(\uu_k^+)\big) \cdot \big(1 - F_k(\uu_k^+)\big).
\end{equation}
We continue to relax both hand sides of inequality~(\ref{eq:abel_ineq1}). Firstly, due to Lemma~\ref{lem:ineq}.2,
\[
\text{LHS of~(\ref{eq:abel_ineq1})} \geq \big(\gamma_k - \Phi_k(\uu_k^+)\big) \cdot \frac{1}{\gamma_k} \cdot \left[1 - e^{-\big(\R(\ggamma_k) - \R(\gamma_k)\big)}\right];
\]
On the other hand, applying aforementioned fact~\textbf{(b)} results in
\[
\text{RHS of~(\ref{eq:abel_ineq1})} \leq \big(\uu_k - \Phi_k(\uu_k^+)\big) \cdot \frac{e^{\R(\ggamma_k) - \R(\gamma_k)} - 1}{\uu_k + e^{\R(\ggamma_k) - \R(\gamma_k)} - 1}.
\]
After rearranging, instead of inequality~(\ref{eq:abel_ineq1}), we safely turn to deal with
\begin{equation}
\label{eq:abel_ineq2}
\left(1 - \frac{\Phi_k(\uu_k^+)}{\gamma_k}\right) - \big(\uu_k - \Phi_k(\uu_k^+)\big) \cdot \frac{e^{\R(\ggamma_k) - \R(\gamma_k)}}{\uu_k + e^{\R(\ggamma_k) - \R(\gamma_k)} - 1} \geq 0.
\end{equation}
Suppose $\uu_k$ and $\Phi_k(\uu_k^+)$ are independent variables, then the left hand side of inequality~(\ref{eq:abel_ineq1}) can be viewed as an increasing function of $\Phi_k(\uu_k^+)$, in that
\[
\frac{1}{\gamma_k} \overset{(\ast)}{\leq} \frac{e^{\R(\ggamma_k) - \R(\gamma_k)}}{\ggamma_k + e^{\R(\ggamma_k) - \R(\gamma_k)} - 1} \overset{(\star)}{\leq} \frac{e^{\R(\ggamma_k) - \R(\gamma_k)}}{\uu_k + e^{\R(\ggamma_k) - \R(\gamma_k)} - 1},
\]
where $(\ast)$ follows from Lemma~\ref{lem:ineq}.3, and $(\star)$ follows as $\ggamma_k \geq \uu_k$ (see inequality~(\ref{eq:upper3-p})). Moreover, recall fact~\textbf{(a)} that $\Phi_k(\uu_k^+) \geq 1$. After assigning $\Phi_k(\uu_k^+) \leftarrow 1$ in inequality~(\ref{eq:abel_ineq2}), we are left with
\[
\left(1 - \frac{1}{\gamma_k}\right) - (\uu_k - 1) \cdot \frac{e^{\R(\ggamma_k) - \R(\gamma_k)}}{\uu_k + e^{\R(\ggamma_k) - \R(\gamma_k)} - 1} \geq 0.
\]
Apparently, here the left hand side can be viewed as decreasing function of $\uu_k$. Recall inequality~(\ref{eq:upper3-p}) that $\uu_k \leq \ggamma_k$, assigning $\uu_k \leftarrow \ggamma_k$ results in
\[
\left(1 - \frac{1}{\gamma_k}\right) - (\ggamma_k - 1) \cdot \frac{e^{\R(\ggamma_k) - \R(\gamma_k)}}{\ggamma_k + e^{\R(\ggamma_k) - \R(\gamma_k)} - 1} \geq 0,
\]
which follows from Lemma~\ref{lem:ineq}.4 immediately. This settles Lemma~\ref{lem:abel_ineq1} when $p \in \big[\Phi_k(\uu_k^+), \gamma_k\big]$.
\end{proof}

\paragraph{Analysis II: When $p \in \big[1, \Phi_k(\uu_k^+)\big)$.}
In this case, Lemma~\ref{lem:abel_ineq1} is established on aforementioned fact~\textbf{(c)}, Main Lemma~\ref{lem:virtual_value} and Lemma~\ref{lem:main}.

\paragraph{[Lemma~\ref{lem:abel_ineq1}.2].}
\emph{$\displaystyle{\int_p^{\gamma_k}} D_k(x) \cdot e^{-\Q(\gamma_k)} dx \geq \displaystyle{\int_p^{\gamma_k}} \DD_k(x) \cdot e^{-\Q(\ggamma_k)} dx$ for all $p \in \big[1, \Phi_k(\uu_k^+)\big)$.}

\begin{proof}
Given $p \in \big[1, \Phi_k(\uu_k^+)\big)$, for notational simplicity, let
\begin{equation}
\label{eq:abel_ineq2.1}
\hp \eqdef \Phi_k^{-1}(p^+) \in [\vv_k, \uu_k) \quad\quad\quad\quad\quad\quad F_k(\hp) = D_k\big(\Phi_k^{-1}(\hp)\big) = D_k(p^+).
\end{equation}
It is noteworthy that $\hp$ cannot reach $\uu_k$, since $p$ cannot reach $\Phi_k(\uu_k^+)$. We claim
\begin{itemize}
\item $\displaystyle{\int_p^{\gamma_k}} D_k(x) dx = (\gamma_k - p) - \big(\hp - p\big) \cdot \big(1 - F_k(\hp)\big)$;
\item $\displaystyle{\int_p^{\gamma_k}} \DD_k(x) dx \leq \gamma_k - \FF_k(\hp^+) - \hp \cdot \big(1 - \FF_k(\hp^+)\big) - (p - 1) \cdot F_k(\hp^+) \cdot e^{-\big(\Q(\gamma_k) - \Q(\ggamma_k)\big)}$.
\end{itemize}
On the one hand, the first claim follows as
\begin{flalign*}
\displaystyle{\int_p^{\gamma_k}} D_k(x) dx
= & (\gamma_k - p) - \displaystyle{\int_p^{\gamma_k}} \big(1 - D_k(x)\big) dx \\
= & (\gamma_k - p) - \displaystyle{\int_p^{\infty}} \big(1 - D_k(x)\big) dx & \text{(Since $D_k(x) = 1$ when $x \geq \gamma_k \geq u_k$)} \\
= & (\gamma_k - p) - \big(\Phi_k^{-1}(p^+) - p\big) \cdot \big(1 - D_k(p^+)\big) & \text{(By Main Lemma~\ref{lem:virtual_value})} \\
\overset{(\ref{eq:abel_ineq2.1})}{=} & (\gamma_k - p) - \big(\hp - p\big) \cdot \big(1 - F_k(\hp)\big). &
\end{flalign*}
Since $\PPhi_k(\hp^+) \leq \hp \overset{(\ref{eq:abel_ineq2.1})}{<} \uu_k \overset{(\ref{eq:upper3-p})}{\leq} \gamma_k$, and $\DD_k(x) = 1$ when $x \geq \gamma_k \geq \uu_k$, following similar steps results in
\begin{equation}
\label{eq:abel_ineq2.3}
\displaystyle{\int_{\PPhi_k(\hp^+)}^{\gamma_k}} \DD_k(x) dx = \big(\gamma_k - \PPhi_k(\hp^+)\big) - \big(\hp - \PPhi_k(\hp^+)\big) \cdot \big(1 - \FF_k(\hp^+)\big).
\end{equation}
By Lemma~\ref{lem:upper3-cstr2} and the monotonicity of $\Phi_k$, we know $\PPhi_k(\hp^+) \geq \Phi_k(\hp^+) \geq \Phi_k(\hp) \overset{(\ref{eq:abel_ineq2.1})}{=} p$ and thus,
\[
\begin{aligned}
\displaystyle{\int_p^{\gamma_k}} \DD_k(x) dx
\overset{(\uplus)}{\leq} & \displaystyle{\int_p^{\PPhi_k(\hp^+)}} \DD_k\big(\PPhi_k(\hp^+)\big) dx + \displaystyle{\int_{\PPhi_k(\hp^+)}^{\gamma_k}} \DD_k(x) dx \\
\overset{(\ref{eq:abel_ineq2.1})}{=} & \big(\PPhi_k(\hp^+) - p\big) \cdot \FF_k(\hp^+) + \displaystyle{\int_{\PPhi_k(\hp^+)}^{\gamma_k}} \DD_k(x) dx \\
\overset{(\ref{eq:abel_ineq2.3})}{=} & \gamma_k - \FF_k(\hp^+) - \hp \cdot \big(1 - \FF_k(\hp^+)\big) - (p - 1) \cdot \FF_k(\hp^+),
\end{aligned}
\]
where $(\uplus)$ follows as $\DD_k(x)$ is increasing when $x > 0$. Since $\hp$ belongs to half-open interval $[\vv_k, \uu_k)$, apply fact~\textbf{(c)} to the above inequality, our second claim can be inferred easily.

Plug the two claims to the inequality in the lemma. After rearranging, it remains to certify $e^{-\Q(\gamma_k)} \cdot \Big[\gamma_k - F_k(\hp^+) - \hp \cdot \big(1 - F_k(\hp^+)\big)\Big]
\geq e^{-\Q(\ggamma_k)} \cdot \Big[\gamma_k - \FF_k(\hp^+) - \hp \cdot \big(1 - \FF_k(\hp^+)\big)\Big]$, or equivalently,
\begin{equation}
\label{eq:abel_ineq3}
\gamma_k - \Big[F_k(\hp^+) + \hp \cdot \big(1 - F_k(\hp^+)\big)\Big] - (\hp - 1) \cdot \frac{\FF_k(\hp^+) - F_k(\hp^+)}{e^{\Q(\ggamma_k) - \Q(\gamma_k)} - 1} \geq 0.
\end{equation}
Again, notice that $\hp$ belongs to half-open interval $[\vv_k, \uu_k)$, Lemma~\ref{lem:main} is applicable here. After assigning $p \leftarrow \hp$ in Lemma~\ref{lem:main}, we have
\[
\text{LHS of~(\ref{eq:abel_ineq3})} = \text{LHS of~(\ref{eq:lem:main})} \geq \frac{1}{\kappa^*} \cdot \gamma^{*2} \cdot \Delta^* \geq 0.
\]
This completes the proof of Lemma~\ref{lem:abel_ineq1} when $p \in \big[1, \Phi_k(\uu_k^+)\big)$.
\end{proof}

\end{document}